\documentclass[manuscript,screen]{acmart}

%% Bibliography style
\bibliographystyle{ACM-Reference-Format}
%% Citation style
%% Note: author/year citations are required for papers published as an
%% issue of PACMPL.
\citestyle{acmauthoryear}   %% For author/year citations

\usepackage{tikz}
\usepackage{coeffects} 
\usepackage{stmaryrd}
\usepackage{listings}
\usepackage[capitalize]{cleveref}
\usepackage{proof}
% !TEX root =main.tex

%%%%%%%%%%%%%%%%%%%%
%%  Editing macros 
%%%%%%%%%%%%%%%%%%%%
%% Default 
  \newcommand{\EZComm}[1]{} 
  \newcommand{\FDComm}[1]{} 
  \newcommand{\RBComm}[1]{} 
  \newcommand{\PGComm}[1]{} 
  \def\bez{} \def\eez{} 
  \def\bfd{} \def\efd{} 
  \def\brb{} \def\erb{} 
  \def\bpa{} \def\epa{} 
%%%%%%%%%%
\definecolor{violet}{rgb}{0.62, 0.0, 1.0}
\newif\ifsubmit
\newif\ifrev
%\submitfalse   
\submittrue
\revfalse 
%\revtrue 
\ifsubmit
  \ifrev 
    \def\bez{\begin{color}{magenta}} 
    \def\eez{\end{color}} 
    \def\bfd{\bez} 
    \def\efd{\end{color}} 
    \def\brb{\bez} 
    \def\erb{\end{color}} 
    \def\bpa{\bez}  
    \def\epa{\end{color}} 
  \fi 
\else 
  \def\bez{\begin{color}{blue}} 
  \def\eez{\end{color}} 
  \def\bfd{\begin{color}{teal}} 
  \def\efd{\end{color}} 
  \def\brb{\begin{color}{red}} 
  \def\erb{\end{color}} 
  \def\bpa{\begin{color}{violet}} 
  \def\epa{\end{color}} 
  \renewcommand{\PGComm}[1]{{\scriptsize \textcolor{violet}{[Paola{:} #1]}}}
  \renewcommand{\RBComm}[1]{{\scriptsize \textcolor{red}{[Riccardo{:} #1]}}}
  \renewcommand{\EZComm}[1]{{\scriptsize \textcolor{blue}{[Elena{:} #1]}}}
  \renewcommand{\FDComm}[1]{{\scriptsize \textcolor{teal}{[Francesco{:} #1]}}}
\fi

\lstset{%language=Java,
  basicstyle=\ttfamily\small,%\scriptsize,
  mathescape=true,
}

% generali e tipografiche  
\newcommand{\refToExample}[1]{Example~\ref{ex:#1}}
\newcommand{\refToFigure}[1]{Fig.~\ref{fig:#1}}

\newcommand{\refToProposition}[1]{Prop.~\ref{prop:#1}}
\newcommand{\refToDefinition}[1]{Def.~\ref{def:#1}}

\newcommand{\refToTheorem}[1]{Theorem~\ref{thm:#1}}

\newcommand{\refToSection}[1]{Sect.~\ref{sect:#1}}
\newcommand{\refToRule}[1]{\textsc{\small (#1)}}

\newcommand{\meta}[1]{\colorbox{lightgray}{$#1$}}
\newcommand{\Space}{\hskip 0.7em}
\newcommand{\BigSpace}{\hskip 1.5em}

%matematiche generali
\newcommand{\Tuple}[1]    {({#1})}
\newcommand{\Pair}[2]     {\Tuple{{#1},{#2}}}
\newcommand{\fv}[1]{\aux{fv}(#1)}
\newcommand{\Subst}[3]   {#1[#2/#3]}
\newcommand{\dom}[1]{\aux{dom}(#1)}
\newcommand{\LabelledRule}[3][\vdots]{\scriptstyle{\textsc{(#3)}}\Space\displaystyle\frac{#1}{#2}}
\newcommand{\NamedRule}[4]{\scriptstyle{\textsc{(#1)}}\Space
\displaystyle                  %  #1 = nome regola
\frac{#2}{#3}         %  #2 = premesse (modo math) %  #3 = conseguenza (modo math)
\begin{array}{l}
#4     %  #4 = side conditions (modo math)
\end{array}
}
\newcommand{\der}{{\cal D}} 

%%Framework
% semiring structure
%spostate qui da coeffects.sty per avere quadro completo
\newcommand{\rord}{\preceq}
\newcommand{\rjoin}{\curlyvee} 
\newcommand{\rsum}{+}
\newcommand{\rmul}{\times}
\newcommand{\rzero}{0} 
\newcommand{\rone}{1} 
%module structure - coeffect contexts
\newcommand{\mord}{\preceq}
\newcommand{\msum}{+}
\newcommand{\mmul}{\times} 
\newcommand{\CCtx}{\aux{CCtx}} %coeffect contexts
\newcommand{\cctx}{\gamma}
\newcommand{\EmptyCtx}{\emptyset}
\newcommand{\VarType}[2]{#1:#2}
\newcommand{\VarCoeffect}[2]{#1:#2}
\newcommand{\VarTypeCoeffect}[3]{#1:_{#3}#2}
\newcommand{\coeff}{\metavariable{c}}
%operazioni liftate ai type-and-coeffect contexts
\newcommand{\ctxord}{\preceq}
\newcommand{\ctxsum}{+}
\newcommand{\ctxmul}{\times}

%sintassi
\newenvironment{grammatica}{$\begin{array}{lcll}}{\end{array}$}
\newcommand{\produzione}[3]{#1&::=&#2&\mbox{#3}}
\newcommand{\seguitoproduzione}[2]{&&#1&\mbox{#2}}
\newcommand{\terminale}[1]{\texttt{#1}}
\newcommand{\aux}[1]{\mathsf{#1}}%stile per funzioni ausiliarie
\newcommand{\metavariable}[1]{\mathit{#1}}

%Lambda calcolo
\newcommand{\AppExp}[2]{#1\, #2}
\newcommand{\LambdaExpWithType}[3]{\lambda #1{:}#2.#3}
\newcommand{\x}{\metavariable{x}}
\newcommand{\y}{\metavariable{y}} 
\newcommand{\z}{\metavariable{z}}

\newcommand{\num}{\metavariable{n}}%metavariabile costanti intere
\newcommand{\te} {\mathit{t}} %metavariabile termini
\newcommand{\T}{\metavariable{T}}
\newcommand{\INT}{\aux{int}}
\newcommand{\funType}[2]{#1 \rightarrow #2}
\newcommand{\funTypeCoeffect}[3]{#1\xrightarrow{#2}#3}
\newcommand{\ev}{\rightarrow}
\newcommand{\evstar}{\ev^\star}
\newcommand{\reduce}[2]{#1\ev#2}
\newcommand{\reducestar}[2]{#1\evstar#2}
\newcommand{\Vars}{\metavariable{V}} %set or variables

%Java-like calculus
%nomi
%\newcommand{\FJ}{\textsc{FJ}}

%syntax

\renewcommand{\C}{\metavariable{C}}
\newcommand{\D}{\metavariable{D}}
\newcommand{\const}{\metavariable{k}}
\newcommand{\m}{\metavariable{m}}
\newcommand{\f}{\metavariable{f}} %field name
\newcommand{\e}{\metavariable{e}}
\newcommand{\es}{\metavariable{es}}
\newcommand{\loc} {\x} %metavariabile locazioni
\newcommand{\this}{\terminale{this}}

%compound expressions
\newcommand{\ConstrCall}[2]{\terminale{new}\; #1\terminale{(}#2\terminale{)}}
\newcommand{\ConstrCallTuple}[3]{{\terminale{new}\, #1(#2_1,  \ldots, #2_{#3})}}
\newcommand{\FieldAccess}[2]{#1\terminale{.}#2}
\newcommand{\Block}[4]{\{ #1 \, #2 =#3 \terminale{;}\, #4 \}}
\newcommand{\MethCall}[3]{#1{\terminale{.}}#2\terminale{(}#3\terminale{)}}
\newcommand{\MethCallTuple}[4]{#1.#2(#3_1,  \ldots, #3_{#4})}
\newcommand{\Field}[2]{#1\ #2\terminale{;}}
\newcommand{\FieldAssign}[3]{#1\terminale{.}#2 \terminale{=}\, #3}
\newcommand{\Seq}[2]{#1\terminale{;}\, #2}

%operational semantics
\newcommand{\val}{\metavariable{v}}%metavariabile valori
\newcommand{\vs}{\metavariable{vs}}
\newcommand{\Ctx}[1]{\ctx[#1]}
\newcommand{\ctx}{{\cal{E}}}
\newcommand{\emptyctx}{[\ ]}
\newcommand{\fields}[1]{\aux{fields}(#1)}
\newcommand{\mem} {\mu} %metavariabile memorie
\newcommand{\ExpMem}[2]{#1{\mid}#2}
\newcommand{\UpdateMem}[4]{#1^{#2.#3=#4}}
\newcommand{\mbody}[2]{{\aux{mbody}(#1,#2)}}
\newcommand{\Object}[2]{[#2]^{#1}}

\newcommand{\sharingRelSymbol}[1]{ \bowtie_{#1}} 
\newcommand{\SharingRel}[4]{#1 \sharingRelSymbol{#2}{#3} #4} 
\newcommand{\reachableSymbol}[1]{ \triangleright_{#1}} 
\newcommand{\reachable}[4]{#1 \reachableSymbol{#2}{#3} #4}

%standard type system
\newcommand{\PT}{\metavariable{P}}%primitive type
\newcommand{\intType}{\terminale{int}}
\newcommand{\IsWFExp}[3]{#1\vdash#2:#3}
\newcommand{\IsWFObject}[3]{#1\vdash#2:#3}
\newcommand{\IsWFInMem}[3]{#1\Vdash#2:#3}
\newcommand{\IsWFConf}[4]{#1\vdash\ExpMem{#2}{#3}:#4}
\newcommand{\IsWFMem}[2]{#1\vdash#2}
\newcommand{\mtype}[2]{{\aux{mtype}(#1,#2)}}
\newcommand{\prom}{\blacktriangleleft}
\newcommand{\Deriv}{{\cal D}}

%Coeffects for sharing

\newcommand{\link}{\ell} 
\newcommand{\Lnk}{\aux{Lnk}}
\newcommand{\res}{\aux{res}}
\newcommand{\LnkSet}{\textit{L}}
\newcommand{\X}{\metavariable{X}}%metavariabile coffetti = insiemi di link
\newcommand{\Y}{\metavariable{Y}} %idem
\newcommand{\Z}{\metavariable{Z}}%idem
\newcommand{\shCCtx}{\CCtx^\LnkSet} 
\newcommand{\closure}[1]{{#1}^\star}
\newcommand{\SharingScalar}{\mathcal{L}}
\newcommand{\MulLnk}{\triangleleft} %prodotto scalare
\newcommand{\clo}{\star}
\newcommand{\shord}{\mathrel{\hat{\subseteq}}}
\newcommand{\shsum}{+}
\newcommand{\shmul}{\mathrel{\triangleleft}}
\newcommand{\hatshmul}{\mathrel{\hat{\triangleleft}}}

%Type and coeffect system for sharing
\newcommand{\coeffectType}[2]{#1^{#2}}
\newcommand{\links}[1]{\aux{links}(#1)}
\newcommand{\getCoeff}[2]{\aux{coeff}(#1,#2)}

\newcommand{\Restr}[2]{#1{\upharpoonright}#2}
\newcommand{\Capsule}[1]{\aux{capsule}(#1)}
\newcommand{\eqfresh}{\equiv^\aux{fr}}

%%type system con modificatori
\newcommand{\TypeMod}[2]{#1^{#2}}
\newcommand{\modif}{\textsc{m}}
\newcommand{\mut}{\aux{mut}}
\newcommand{\capsule}{\aux{caps}}
\newcommand{\readonly}{\aux{read}}
\newcommand{\imm}{\terminale{imm}}
\newcommand{\seal}{\sigma}

\newcommand{\ctxlinsum}{\oplus}
\newcommand{\Modif}[2]{#1[#2]}
\newcommand{\ModifComb}[2]{#1[#2]}
\newcommand{\getModif}[2]{\aux{modif}(#1,#2)}
\newcommand{\modord}{\le} 
\newcommand{\Sealed}[1]{#1[\seal]}

\newcommand{\Erase}[1]{\aux{erase}(#1)}
\newcommand{\IsoOrImm}[1]{\aux{IsoOrImm}(#1)}

%esempi
\newcommand{\Att}{{\tt A}}
\newcommand{\Ctt}{{\tt C}}
\newcommand{\Btt}{{\tt B}}

\newcommand{\xtt}{{\tt x}}
\newcommand{\xttOne}{{\tt x1}}
\newcommand{\ytt}{{\tt y}}
\newcommand{\ftt}{{\tt f}}
\newcommand{\fttOne}{{\tt f1}}
\newcommand{\ztt}{{\tt z}}
\newcommand{\zttOne}{{\tt z1}}
\newcommand{\zttTwo}{{\tt z2}}
\newcommand{\wtt}{{\tt w}}

%solo nelle prove

\newcommand{\amem} {\mu'} %metavariabile memorie

\newcommand{\Gordon}{Gordon et al.}

\begin{document}

%% Title information
\title{Coeffects for sharing and mutation}         %% [Short Title] is optional;

%% Author information
%% Contents and number of authors suppressed with 'anonymous'.

\author{Riccardo Bianchini}
%\authornote{with author1 note}          %% \authornote is optional;
                                        %% can be repeated if necessary
\orcid{0000-0003-0491-7652
} 
\affiliation{
 % \position{Position1}
  \department{DIBRIS}              %% \department is recommended
  \institution{Universit\`a di Genova}            %% \institution is required
 % \streetaddress{Street1 Address1}
 % \city{City1}
 % \state{State1}
  %\postcode{Post-Code1}
  \country{Italy}
}
\email{riccardo.bianchini@edu.unige.it}          %% \email is recommended

%% Author with single affiliation.
\author{Francesco Dagnino}
%\authornote{with author1 note}          %% \authornote is optional;
\email{francesco.dagnino@dibris.unige.it} 
\orcid{0000-0003-3599-3535} 
\affiliation{
 % \position{Position1}
  \department{DIBRIS}              %% \department is recommended
  \institution{Universit\`a di Genova}            %% \institution is required
 % \streetaddress{Street1 Address1}
 % \city{City1}
 % \state{State1}
  %\postcode{Post-Code1}
  \country{Italy}
}

%% Author with single affiliation.
\author{Paola Giannini}
%\authornote{with author1 note}          %% \authornote is optional;
                                        %% can be repeated if necessary
%\orcid{nnnn-nnnn-nnnn-nnnn}             %% \orcid is optional
\affiliation{
 % \position{Position1}
  \department{DiSSTE}              %% \department is recommended
  \institution{Universit\`a del Piemonte Orientale}
 % \streetaddress{Street1 Address1}
 % \city{City1}
 % \state{State1}
 % \postcode{Post-Code1}
  \country{Italy}
}
\email{paola.giannini@uniupo.it}          %% \email is recommended
\orcid{0000-0003-2239-9529}
%% Author with single affiliation.
\author{Elena Zucca}
%\authornote{with author1 note}          %% \authornote is optional;
                                        %% can be repeated if necessary
%\orcid{nnnn-nnnn-nnnn-nnnn}             %% \orcid is optional
\email{elena.zucca@unige.it} 
\orcid{0000-0002-6833-6470} 
\affiliation{
 % \position{Position1}
  \department{DIBRIS}              %% \department is recommended
  \institution{Universit\`a di Genova}            %% \institution is required
 % \streetaddress{Street1 Address1}
 % \city{City1}
 % \state{State1}
 % \postcode{Post-Code1}
  \country{Italy}
}
\email{elena.zucca@unige.it}          %% \email is recommended

%% Author with single affiliation.
\author{Marco Servetto}
%\authornote{with author1 note}          %% \authornote is optional;
                                        %% can be repeated if necessary
%\orcid{nnnn-nnnn-nnnn-nnnn}             %% \orcid is optional
\affiliation{
 % \position{Position1}
  \department{ECS}              %% \department is recommended
  \institution{Victoria University of Wellington}            %% \institution is required
 % \streetaddress{Street1 Address1}
 % \city{City1}
 % \state{State1}
 % \postcode{Post-Code1}
  \country{New Zealand}
}
\email{marco.servetto@vuw.ac.nz}          %% \email is recommended
\orcid{0000-0003-1458-2868}

\thanks{This work was partially funded by the MUR project ``T-LADIES'' (PRIN 2020TL3X8X)}

%% Paper note
%% The \thanks command may be used to create a "paper note" ---
%% similar to a title note or an author note, but not explicitly
%% associated with a particular element.  It will appear immediately
%% above the permission/copyright statement.
%\thanks{with paper note}                %% \thanks is optional
                                        %% can be repeated if necesary
                                        %% contents suppressed with 'anonymous'

%% Abstract
%% Note: \begin{abstract}...\end{abstract} environment must come
%% before \maketitle command
\begin{abstract}

In \emph{type-and-coeffect systems}, contexts are enriched by \emph{coeffects} modeling how they are actually used,  typically through annotations on single variables. Coeffects are computed bottom-up, combining, for each term, the coeffects of its subterms, through a fixed set of algebraic operators. We show that this principled approach can be adopted to track \emph{sharing} in the imperative paradigm, that is, links among variables possibly introduced by the execution. This provides a significant example of non-structural coeffects, which cannot be computed by-variable, since the way a given variable is used can affect the coeffects of other variables. 
To illustrate the effectiveness of the approach, we  enhance the type system tracking sharing to model a sophisticated set of features related to uniqueness and immutability. 
Thanks to the coeffect-based approach, we can express such features in a simple way and prove related properties with standard techniques.
\end{abstract}

%% 2012 ACM Computing Classification System (CSS) concepts
%% Generate at 'http://dl.acm.org/ccs/ccs.cfm'.

\begin{CCSXML}
<ccs2012>
<concept>
<concept_id>10003752.10010124.10010138.10010143</concept_id>
<concept_desc>Theory of computation~Program analysis</concept_desc>
<concept_significance>500</concept_significance>
</concept>
<concept>
<concept_id>10003752.10010124.10010125.10010130</concept_id>
<concept_desc>Theory of computation~Type structures</concept_desc>
<concept_significance>500</concept_significance>
</concept>
</ccs2012>
\end{CCSXML}

\ccsdesc[500]{Theory of computation~Program analysis}
\ccsdesc[500]{Theory of computation~Type structures} 

\keywords{Coeffect systems, sharing, Java} %TODO mandatory; please add comma-separated list of keywords

% !TEX root =main.tex

\maketitle

% !TEX root =main.tex

\section{Introduction}
Recently, \emph{coeffect systems} have received considerable interest as a mechanism to reason about resource usage \cite{PetricekOM14,BrunelGMZ14,Atkey18,GaboardiKOBU16,GhicaS14,OrchardLE19,ChoudhuryEEW21,DalLagoG22}.  They are, in a sense, the dual of effect systems:  given  a generic type judgment $\IsWFExp{\Gamma}{\e}{\T}$, effects can be seen as an enrichment of the type $\T$ (modeling side effects of the execution), whereas coeffects can be seen as an enrichment of the context $\Gamma$ (modeling how execution needs to use external resources).  In the typical case when $\Gamma$ is a map from variables to types, the type judgment takes shape $\IsWFExp{\VarTypeCoeffect{\x_1}{\T_1}{\coeff_1},\ldots,\VarTypeCoeffect{\x_n}{\T_n}{\coeff_n}}{\e}{\T}$, where the \emph{scalar} coeffect\footnote{Also called \emph{grade}, using the terminology \emph{graded type system rather than coeffect system}. }  $\coeff_i$ models how variable $\x_i$ is used in $\e$. Such annotations on variables are an output of the typing process rather than an input: they are computed bottom-up, as a linear combination, for each term, of those of  its  subterms.  

 The fact that a program introduces sharing between two variables, say $\x$ and $\y$, for instance through  a field  assignment $\FieldAssign{\x}{\f}{\y}$ in an object-oriented language,  clearly has  a coeffect nature (indeed, it is a particular way to use the resources $\x$ and $\y$). However,  to the best of  our knowledge no attempt has been made to use coeffects to track this information statically. 

 A likely reason  is that  this  kind of coeffect does not fit in the framework of \emph{structural} coeffect systems,  which predominate  in  the  literature, and where the coeffect of each single variable  is   computed independently. Clearly, this is not the case for sharing,  since a program introducing sharing between $\x$ and $\y$, and between $\y$ and $\z$, introduces sharing between $\x$ and $\z$ as well.  

In this paper, we show that sharing can be naturally modeled by coeffects following a more general schema distilling their fundamental ingredients; namely, a \emph{semiring} for scalar coeffects, and a \emph{module} structure for coeffect contexts, providing \emph{sum} and \emph{scalar multiplication}.
  Whereas  scalars are regularly assumed in  the  literature to form (some variant of) a semiring, the fact that coeffect contexts form a module has only  been noted,   to the best of  our knowledge, by \citet{McBride16} \bfd and subsequently by \citet{WoodA22}. \efd   In these papers,   however, only structural coeffects are considered, that is, modules where sum and scalar multiplication are defined pointwise.  Sharing coeffects provide a significant non-structural instance of the framework, motivating its generality. 
  
%   In this paper,  we define  a coeffect system formalizing \emph{sharing}   in the imperative programming paradigm, that is, the situation when a
%portion of memory can be accessed/modified through more than one reference. Preventing errors due to unwanted sharing is a key feature, even more in distributed applications; for this reason,  there is a huge amount of literature on type systems for controlling sharing and mutability (see \refToSection{related}). 

 To define the sharing coeffect system, we take as a reference language an imperative variant of Featherweight Java \cite{IgarashiPW99}, and we extend the standard type system of the language by adding coeffects which track sharing introduced by the execution of a program.  Following the guiding principle of coeffect systems, they are computed bottom up, starting from the rule for variables and constructing more complex coeffects by sums and scalar multiplications, where coefficients are determined by each language construct. Hence,  the typing rules  can be easily turned into an algorithm.  In the resulting type-and-coeffect system, we are able to detect, in a simple and static way, some relevant notions in  the  literature, notably that 
 %a variable is used as \emph{lent} in an expression, that is, will not be connected to its final result, and that 
 an expression is a \emph{capsule}\footnote{ We adopt the terminology of \citet{GianniniSZC19,GianniniRSZ19}; in  the  literature there  are  many variants of this notion with different names \cite{ClarkeWrigstad03,Almeida97,ServettoEtAl13a,Hogg91,DietlEtAl07,GordonEtAl12}.}, that is, evaluates to the \emph{unique entry point} for a portion of memory.

To illustrate the effectiveness of this approach, we  enhance the type system tracking sharing to model a sophisticated set   of features related to uniqueness and immutability.  Notably, we integrate and formalize the language designs proposed in \cite{GianniniSZC19,GianniniRSZ19}, which have two common key ideas. The first is 
to use \emph{modifiers} ($\readonly$, $\capsule$, and $\imm$ for read-only, uniqueness, and immutability, respectively), allowing the programmer to specify the corresponding constraints/properties in variable/parameter declarations and method return types.
The second is that uniqueness and immutability ($\capsule$ and $\imm$ tags) are not \emph{imposed}, but \emph{detected} by the type system, supporting \emph{ownership transfer} rather than \mbox{\emph{ownership invariants} (see \refToSection{related}).}

 Because it is built  on top of the sharing coeffects, the type-and-coeffect system we design to formalize the above features  significantly  improves  the  previous work \cite{GianniniSZC19,GianniniRSZ19}\footnote{A more detailed comparison is provided in \refToSection{conclu}.}.  Notably, features from  both  papers are integrated; inference of $\capsule$ and $\imm$ types is \emph{straightforward} from the coeffects, through a simple \emph{promotion} rule; the design of the type system is guided, and rules can be easily turned into an algorithm. Finally, the coeffect system uniformly computes both sharing introduced by a program \emph{existing in current memory}, allowing us to express and prove relevant properties in a clean way and with standard techniques. 

 In summary,  the contributions of the paper are the following:
\begin{itemize}
\item A schema distilling the ingredients of coeffect systems, mentioned by \citet{McBride16} and  \citet{WoodA22}, but  never  used in its generality, that is,  beyond structural instances.  
\item The first,  to the best of  our knowledge, formalization of sharing by a coeffect system.  We prove subject reduction, stating that not only is type preserved, but also sharing.
%As a consequence, we can statically detect uniqueness. 
\item On top of  such a  coeffect system, an enhanced type system supporting sophisticated features related to uniqueness and immutability. 
We prove subject reduction, stating that type, sharing, and modifiers are preserved, so that we can statically detect uniqueness and immutability. 
\end{itemize}
 We stress that the aim of the paper is \emph{not} to propose a novel design of memory-management features, but to provide,  via  a complex example, a proof-of concept that coeffects can be the basis for modeling such features, which could be fruitfully employed in other cases. In particular, we demonstrate the following:
\begin{itemize}
\item  The paradigm of
coeffects can be applied for useful purposes in an imperative/OO setting, whereas so far in the literature it has only been used in functional calculi and languages.
\item The views of ownership as  a  substructural capability and as  a  graph-theoretic property of heaps can be unified.
\item The expressive power of the complicated and ad-hoc type systems \bez in \cite{GianniniSZC19,GianniniRSZ19} \eez can be achieved in a much more elegant and principled way, using only  simple algebraic operations. 
%(coeffects are essentially computed as linear combinations). 
Moreover,  the coeffect approach allows one to reuse existing general results regarding algorithms/implementations, like, e.g., those used in Granule \cite{OrchardLE19}.
\end{itemize}
In \refToSection{calculus} we present the reference language, and illustrate the properties we want to guarantee.
In \refToSection{coeffects} we illustrate the ingredients of coeffect systems through a classical example, and we define their general algebraic structure.   In  \refToSection{sharing} and \refToSection{extended} we describe the two type systems outlined above with the related results.  In \refToSection{comparison} we discuss the expressive power of our system, compared with closely related proposals.  In \refToSection{related} we outline other related work, and in \refToSection{conclu} we summarize our contribution and discuss future work. Omitted proofs are in the \bez Appendix. \eez

% !TEX root =main.tex

\section{Sharing and mutation in a Java-like calculus}\label{sect:calculus}
We illustrate the properties we want to guarantee with the coeffect system on a simple reference language, an imperative variant of Featherweight Java \cite{IgarashiPW99}.
\subsection{The language}
For  the reader's  convenience, syntax, reduction rules, and the standard type system are reported  in \refToFigure{calculus}. We write $\es$ as  a  metavariable for $\e_1, \ldots, \e_n$, $n\geq 0$, and analogously for other sequences. Expressions of primitive types, unspecified, include  constants  $\const$. We assume a unique set of \emph{variables} $\x,\y,\z,\ldots$ which occur both in source code (method parameters, including the special variable $\this$, and local variables in blocks) and as references in memory. Moreover, we assume sets of \emph{class names} $\C$, \emph{field names} $\f$, and \emph{method names} $\m$. 
In addition to the standard constructs of imperative object-oriented languages (field access, field assignment, and object creation), we have a block expression, consisting of a local  variable declaration, and the body in which this variable can be used. We will sometimes abbreviate $\Block{\T}{\x}{\e}{\e'}$ by $\Seq{\e}{\e'}$ when $\x$ does not occur \mbox{free in $\e$.}
 
To be concise, the class table is abstractly modeled as follows, omitting its (standard) syntax:
\begin{itemize}
%\item $\leqT$ is the reflexive and transitive closure of the \texttt{extends} relation;
\item $\fields{\C}$ gives, for each class $\C$, the sequence $\Field{\T_1}{\f_1}\ldots\Field{\T_n}{\f_n}$ of its fields with their types;
\item $\mbody{\C}{\m}$ gives, for each method $\m$ of class $\C$, its parameters and body
\item $\mtype{\C}{\m}$ gives, for each method $\m$ of class $\C$, its parameter types and return type.
\end{itemize}
For simplicity, we do not consider subtyping (inheritance), which is  an  orthogonal feature. 

Method bodies are expected to be well-typed with respect to method types.
Formally, $\mbody{\C}{\m}$ and $\mtype{\C}{\m}$ are either both defined or both undefined; in the first case $\mbody{\C}{\m}=\Pair{\x_1\dots\x_n}{\e}$, $\mtype{\C}{\m}=\funType{\T_1\ldots\T_n}{\T}$, and  
\begin{quote}
$\IsWFExp{\VarType{\this}{\Ctt},\VarType{\x_1}{\T_1}\ldots,\VarType{\x_n}{\T_n}}{\e}{\T}$
\end{quote}
holds. 
\begin{figure}
%\begin{small}
\begin{grammatica}
\produzione{\e}{\x\mid\const\mid\FieldAccess{\e}{\f}\mid\FieldAssign{\e}{\f}{\e'}\mid\ConstrCall{\C}{\es}\mid\MethCall{\e}{\m}{\es}\mid\Block{\T}{\x}{\e}{\e'}\mid\ldots}{expression}\\
%\produzione{\dec}{\Dec{\C}{\x}{\e}}{\text{ declaration}}\\ 
\produzione{\T}{\C\mid\PT}{type}\\
\\
\produzione{\val}{\loc\mid\const}{value}\\
\produzione{\ctx}{\emptyctx\mid\FieldAccess{\ctx}{\f}\mid\FieldAssign{\ctx}{\f}{\e'}\mid\FieldAssign{\loc}{\f}{\ctx}\mid\ConstrCall{\C}{\vs,\ctx,\es}}{evaluation context}\\
\seguitoproduzione{\mid\MethCall{\ctx}{\m}{\es}\mid\MethCall{\loc}{\m}{\vs,\ctx,\es}\mid\Block{\T}{\x}{\ctx}{\e}\mid\ldots}{}\\[2ex]
\end{grammatica}

\hrule

\begin{math}
\begin{array}{l}
\NamedRule{ctx}{\reduce{\ExpMem{\e}{\mem}}{\ExpMem{\e'}{\mem'}}}{\reduce{\ExpMem{\Ctx{\e}}{\mem}}{\ExpMem{\Ctx{\e'}}{\mem'}}}{}
\BigSpace
\NamedRule{field-access}{}{\reduce{\ExpMem{\FieldAccess{\loc}{\f_i}}{\mem}}{\ExpMem{\val_i}{\mem}}}
{\mem(\loc)=\ConstrCall{\C}{\val_1,\dots,\val_n}\\
\fields{\C}=\Field{\T_1}{\f_1}\dots\Field{\T_n}{\f_n}\\
i\in 1..n}\\[4ex]
\NamedRule{field-assign}{}{\reduce{\ExpMem{\FieldAssign{\loc}{\f_i}{\val}}{\mem}}{\ExpMem{\val}{\UpdateMem{\mem}{\loc}{i}{\val}}}}{
\mem(\loc)=\Object{\C}{\val_1,\ldots,\val_n}\\
\fields{\C}=\Field{\T_1}{\f_1}\dots\Field{\T_n}{\f_n}\\
i\in 1..n}\\[4ex]
\NamedRule{new}{}{\reduce{\ExpMem{\ConstrCall{\C}{\vs}}{\mem}}{\ExpMem{\loc}{\Subst{\mem}{\ConstrCall{\C}{\vs}}{\loc}}}}{\loc\not\in\dom{\mem}}\\[4ex]
\NamedRule{invk}{}{\reduce{\ExpMem{\MethCall{\loc}{\m}{\val_1,\dots,\val_n}}{\mem}}\ExpMem{\Subst{\Subst{\e}{\loc}{\this}}{\val_1}{\x_1} \ldots [\val_n / \x_n]}{\mem}}
{\mem(\loc)=\ConstrCall{\C}{\vs}\\
\mbody{\C}{\m}=\Pair{\x_1\dots\x_n}{\e}}\\[4ex]
\NamedRule{block}{}{\reduce{\ExpMem{\Block{\T}{\loc}{\val}{\e}}{\mem}}{\ExpMem{\Subst{\e}{\val}{\loc}}{\mem}}}{}
\\[3ex]
\end{array}
\end{math}

\hrule

\begin{math}
\begin{array}{l}
\\
\NamedRule{t-var}{}{\IsWFExp{\Gamma}{\x}{\T}}{\Gamma(\x)=\T
}
\BigSpace
\NamedRule{t-const}{}{\IsWFExp{\Gamma}{\const}{\PT_\const}}{}\\[5ex] 
\NamedRule{t-field-access}{\IsWFExp{\Gamma}{\e}{\C}}{\IsWFExp{\Gamma}{\FieldAccess{\e}{\f_i}}{\T_i}}
{
\fields{\C}=\Field{\T_1}{\f_1} \ldots \Field{\T_n}{\f_n} \\
i\in 1..n\\
}
\\[5ex]
\NamedRule{t-field-assign}{\IsWFExp{\Gamma}{\e}{\C} \Space \IsWFExp{\Gamma}{\e'}{\T_i}}{\IsWFExp{\Gamma}{\FieldAssign{\e}{\f_i}{\e'}
}{\T_i}}
{
\fields{\C}=\Field{\T_1}{\f_1} \ldots \Field{\T_n}{\f_n}\\
  i\in 1..n\\ 
}
\\[5ex]
\NamedRule{t-new}{\IsWFExp{\Gamma}{\e_i}{\T_i}\Space \forall i\in 1..n}{\IsWFExp{\Gamma}{\ConstrCallTuple{\C}{\e}{n}}{\C}}
{\fields{\C}=\Field{\T_1}{\f_1} \ldots \Field{\T_n}{\f_n}}
\\[5ex]
\NamedRule{t-invk}{\IsWFExp{\Gamma}{\e_0}{\C} \BigSpace \IsWFExp{\Gamma}{\e_i}{\T_i}\Space \forall i\in 1..n}{\IsWFExp{\Gamma}{\MethCallTuple{\e_0}{\m}{\e}{n}}{\T}}{
\mtype{\C}{\m}=\funType{\T_1\ldots\T_n}{\T}
}
\\[5ex]
\NamedRule{t-block}{
\IsWFExp{\Gamma}{\e}{\T} \BigSpace
\IsWFExp{\Gamma, \VarType{\x}{\T}}{\e'}{\T'}
}
{
\IsWFExp{\Gamma}{\Block{\T}{\x}{\e}{\e'}}{\T'}
}
{}\\[3ex]
\end{array}
\end{math}

\hrule

\begin{math}
\begin{array}{l}
\\
\NamedRule{t-conf}{\IsWFExp{\Gamma}{\e}{\T} \BigSpace \IsWFMem{\Gamma}{\mem}}
{\IsWFConf{\Gamma}{\e}{\mem}{\T}} 
{}
\BigSpace
\NamedRule{t-obj}{\IsWFExp{\Gamma}{\val_i}{\T_i}\Space \forall i\in 1..n}{\IsWFObject{\Gamma}{\Object{\C}{\val_1,\ldots,\val_n}}{\C}}{\fields{\C}=\Field{\T_1}{\f_1} \ldots \Field{\T_n}{\f_n}}

\\[4ex]
\NamedRule{t-mem}{
\IsWFExp{\Gamma}{\mem(\x_i)}{\C_i\Space\forall i\in 1...n}
}
{\IsWFMem{\Gamma}{\mem}} 
{
\Gamma=\VarType{\x_1}{\C_1},\ldots,\VarType{\x_n}{\C_n}\\
\dom{\Gamma}=\dom{\mem}
}
\end{array}
\end{math}
%\end{small}
\caption{Syntax, reduction rules, and standard type system of the Java-like calculus}\label{fig:calculus}
\end{figure}
Reduction is defined over \emph{configurations} of shape $\ExpMem{\e}{\mem}$, where a \emph{memory} $\mem$ is a map from references to \emph{objects} of shape $\Object{\C}{\val_1,\ldots,\val_n}$, and we assume free variables in $\e$ to be  in $\dom{\mem}$.   We denote by $\UpdateMem{\mem}{\loc}{i}{\val}$ the memory obtained from $\mem$ by updating the $i$-th field of the object associated to $\loc$ by $\val$, and by $\Subst{\e}{\val}{\loc}$ the usual capture-avoiding substitution.

Reduction and typing rules are straightforward. In rule \refToRule{t-conf}, a configuration is well-typed if the expression is well-typed, and the memory is well-formed in the same context (recall that free variables in the expression are bound in the domain of the memory). In rule \refToRule{t-mem}, a memory is well-formed in a context assigning a type to all and only references in memory, provided that, for each reference, the associated object has the same type. 

\subsection{Sharing and mutation}
In languages with state and mutations, keeping control of sharing is a key issue for correctness. This is exacerbated by concurrency mechanisms, since side-effects in one thread can affect the behaviour of another, hence unpredicted sharing can induce unplanned/unsafe communication.

\emph{Sharing} means that some portion of the memory can be reached through more than one reference, say through $\x$ and $\y$, so that manipulating the memory through $\x$ can affect $\y$ as well. 
\begin{definition}[Sharing in memory]\label{def:sharing-rel}
The \emph{sharing relation} in memory $\mem$, denoted by $\sharingRelSymbol{\mem}$, is the smallest equivalence relation on $\dom{\mem}$ such that:
\begin{quote}
$\SharingRel{\x}{\mem}{\y}$ if $\mem(\x)=\Object{\C}{\val_1,\ldots,\val_n}$ and $\y = \val_i$ for some  $i\in 1..n$
\end{quote}
\end{definition}

 Note that $\y = \val_i$ above means that $\y$ and $\val_i$ \emph{are the same reference}, that is, \bez it \eez corresponds to what is sometimes called pointer equality. 

It is important for a programmer to be able to rely on \emph{capsule} and \emph{immutability} properties. 
Informally, an expression has the capsule property if its result will be the \emph{unique entry point} for a portion of store. For instance, we expect the result of a \lstinline{clone}  method to be a capsule, see  \refToExample{ex2} below. This allows programmers to identify state
that can be safely 
%handled by a thread. 
 used by a thread since no other thread can access/modify it. 
A reference has the immutability property if its reachable object graph will be never modified. As a consequence,
an immutable reference can be safely shared 
 by threads. 
%in a multithreaded environment.

The following simple example illustrates the capsule property.

\begin{example}\label{ex:ex1}
Assume the following class table:
\begin{lstlisting}
class B {int f;}
class C {B f1; B f2;}
\end{lstlisting}
and consider the expression $\e={\small \Block{\Btt}{\ztt}{\ConstrCall{\Btt}{2}}{\Seq{\FieldAssign{\xtt}{\fttOne}{\ytt}}{\ConstrCall{\Ctt}{\ztt,\ztt}}}}$. 
This expression has two free variables (in other words, uses two external resources) $\x$ and $\y$. We expect such free variables to be bound to an outer declaration, if the expression occurs as  a  subterm of a program, or to represent references in current memory. This expression is a \emph{capsule}. Indeed, even though it has free variables (uses external resources) \lstinline{x} and \lstinline{y}, such variables will \emph{not} be connected to the final result. We say that they are \emph{lent} in $\e$. In other words, lent references can be manipulated during the evaluation, but cannot be permanently saved. So, we have the guarantee that the result of evaluating $\e$, regardless of the initial memory, will be a reference pointing to a \emph{fresh} portion of memory. For instance, evaluating $\e$ in $\mem=\{\xtt\mapsto\Object{\Ctt}{\xttOne,\xttOne},\xttOne\mapsto\Object{\Btt}{0},\ytt\mapsto\Object{\Btt}{1}\}$, the final result is  a fresh reference  $\wtt$, in the memory $\mem'=\{\xtt\mapsto\Object{\Ctt}{\ytt,\xttOne},\xttOne\mapsto\Object{\Btt}{0},\ytt\mapsto\Object{\Btt}{1},\ztt\mapsto\Object{\Btt}{2},\wtt\mapsto\Object{\Ctt}{\ztt,\ztt}\}$.
\end{example}

Lent and capsule properties are formally defined below.

\begin{definition}[Lent reference]\label{def:lent} 
For $\x\in\fv{\e}$,  $\x$ is lent in $\e$  if, for all $\mu$, $\reducestar{\ExpMem{\e}{\mem}}{\ExpMem{\y}{\mem'}}$ implies $\SharingRel{\x}{\mem'}{\y}$ does not hold.
\end{definition}

An expression $\e$ is a capsule if all its free variables are lent in $\e$.

\begin{definition}[Capsule expression]\label{def:caps}
An expression $\e$ is a \emph{capsule} if, for all $\mem$, $\reducestar{\ExpMem{\e}{\mem}}{\ExpMem{\y}{\mem'}}$ implies that, for all $\x\in\fv{\e}$, $\SharingRel{\x}{\mem'}{\y}$ does not hold.

% For $\y\in\dom{\mem}$, $\Var$ set of references, $\y$ is a capsule with respect to $\Var$ if, for all $\x\in\Var$, $\SharingRel{\y}{\mem}{\x}$ does not hold. 
%
%An expression $\e$ is a \emph{capsule} if, for all $\mem$, $\reducestar{\ExpMem{\e}{\mem}}{\ExpMem{\y}{\mem'}}$ implies that $\y$ is a capsule with respect to $\fv{\e}$, that is, for all $\x\in\fv{\e}$, $\SharingRel{\x}{\mem'}{\y}$ does not hold.
\end{definition}

 The capsule property can be easily detected in simple situations, such as using a primitive deep clone operator, or a closed expression. However, the property  also  holds in many other cases, 
 which are {\em not} easily   detected (statically) since they depend on \emph{the way variables are used}.  To see this, we consider a more involved example,  adapted from \cite{GianniniSZC19}. 
\begin{example}\label{ex:ex2}\
\begin{lstlisting}
class B {int f; B clone() {new B(this.f)}
class A { B f;
  A mix (A a) {this.f=a.f; a} // this, a and result linked
  A clone () {new A(this.f.clone())} // this and result not linked
}
A a1 = new A(new B(0));
A mycaps = {A a2 = new A(new B(1));
  a1.mix(a2).clone() // (1) 
  // a1.mix(a2).clone().mix(a2) // (2)
}
\end{lstlisting}

The result of \lstinline{mix}, as the name suggests,  will be connected to both the receiver and the argument, whereas the result of \lstinline{clone}, as expected for such a method, will be a reference to a \emph{fresh} portion of memory
 which is  not connected to the receiver. 

 Now let us consider  the code after the class definition, where the programmer wants the guarantee that \lstinline{mycaps} will be initialized with a capsule, that is, an expression which evaluates to the entry point of a fresh portion of memory.  
\begin{figure}[ht]
\begin{center}
\includegraphics
[width=.8\textwidth]
{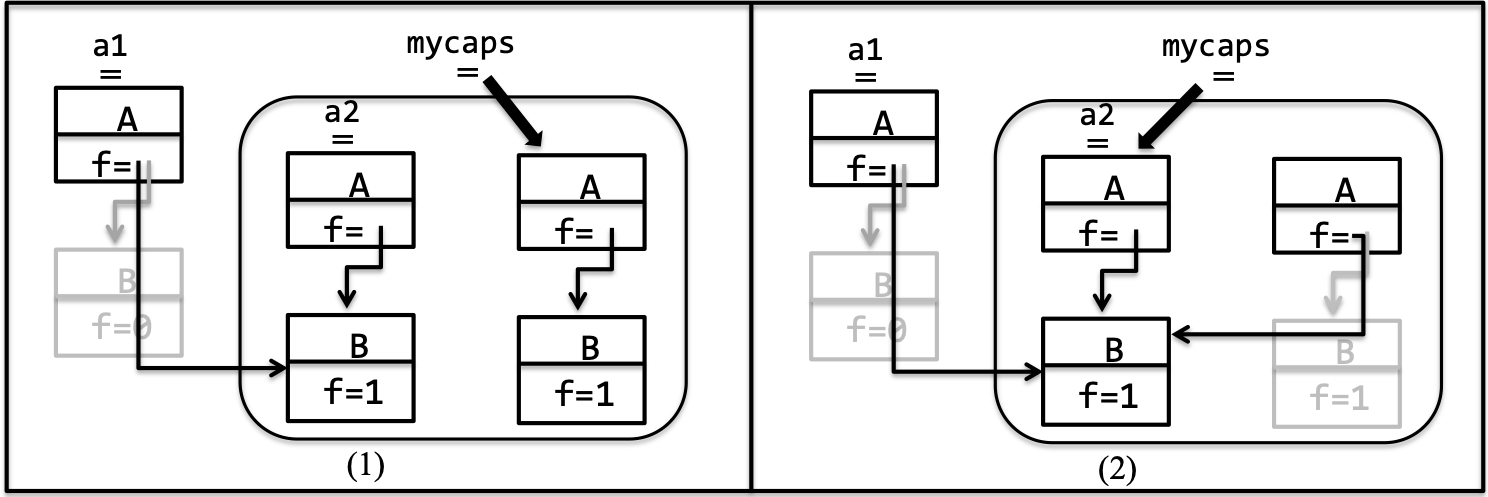}
\end{center}
\caption{Graphical representation of the store for \refToExample{ex2}} \label{fig:ex2}
\end{figure}
 \refToFigure{ex2} shows a graphical representation of the store after the evaluation of such code.
Side \bez (1) \eez shows the resulting store if we evaluate line (1) but not line (2), while side \bez (2) \eez shows the resulting store 
if we evaluate line (2) but not line (1). The 
thick arrow points to the result of the evaluation of the block and \lstinline{a2} is a local variable. 
In side \bez (1) \eez \lstinline{a1} is not in sharing with \lstinline{mycaps}, whereas
in side \bez (2) \eez \lstinline{a1} is in sharing with  \lstinline{a2} which is in sharing with \lstinline{mycaps} and so
\lstinline{a1} is in sharing with \lstinline{mycaps} as well.
 Set
 \begin{itemize}
\item $\e_1$ = \lstinline@{A a2 = new A(new B(1)); a1.mix(a2).clone()}@
\item  $\e_2$ = \lstinline@{A a2 = new A(new B(1)); a1.mix(a2).clone().mix(a2)}@
\end{itemize}
We can see that \lstinline{a1} is lent in $\e_1$, since its evaluation produces the object pointed \bez to \eez by the thick arrow which is not in sharing with \lstinline{a1}, 
whereas \lstinline{a1} is not lent in $\e_2$. Hence, $\e_1$ is a capsule, since its  free variable,  \lstinline{a1}, is not in sharing with the result of its
evaluation, whereas  \lstinline{a2} is not.
\end{example}

We consider now immutability. A reference $\x$ has the immutability property if the portion of memory reachable from $\x$ will never change during execution, as formally stated below.

\begin{definition}\label{def:reach}
The \emph{reachability relation} in memory $\mem$, denoted by $\reachableSymbol{\mem}$, is the reflexive and transitive closure of the relation on $\dom{\mem}$ such that:
\begin{quote}
$\reachable{\x}{\mem}{\y}$ if $\mem(\x)=\Object{\C}{\val_1,\ldots,\val_n}$ and $\y = \val_i$ for some  $i\in 1..n$
\end{quote}
\end{definition}

\begin{definition}[Immutable reference]\label{def:imm} 
For $\x\in\fv{\e}$, $\x$ is \emph{immutable} in $\e$ if $\reducestar{\ExpMem{\e}{\mem}}{\ExpMem{\e'}{\mem'}}$ and 
$\reachable{\x}{\mem}{\y}$ implies $\mem(\y)=\mem'(\y)$.
\end{definition}

A typical way to prevent mutation, as we will show in \refToSection{extended}, is by a type modifier $\readonly$, so that an expression with type tagged in this way   cannot occur  as the left-hand side of a field assignment.  However, to have the guarantee that a certain portion of memory is actually immutable, a type system should be able to detect that it cannot be modified through \emph{any} possibile reference. For instance, consider a variant of \refToExample{ex2} with the same classes \lstinline{A} and \lstinline{B}. 

\begin{example}\label{ex:ex3}\
\begin{lstlisting}
A a1 = new A(new B(0));
read A mycaps = {A a2 = new A(new B(1));
  a1.mix(a2).clone() // (1) 
  // a1.mix(a2).clone().mix(a2) // (2)
}
// mycaps.f.f=3 // (3)
a1.f.f=3 // (4)
\end{lstlisting}
\end{example}
The reference \lstinline{mycaps} is now declared  as  a $\readonly$ type, hence we cannot modify its reachable object graph through \lstinline{mycaps}. For instance, line (3) is ill-typed. 
However, if we replace line (1) with line (2), since in this case \lstinline{mycaps} and \lstinline{a1} share their \lstinline{f} field, the same effect of line (3) can be obtained by line (4). 
This example shows that the immutability property is, roughly, a conjunction of the $\readonly$ restriction and the capsule property.

% !TEX root =main.tex

\section{Coeffect systems}\label{sect:coeffects}
In \refToSection{informal} we illustrate the fundamental ingredients of coeffect systems through a classical example, and in \refToSection{sring-mod} we formally define their general algebraic structure.  
\subsection{An example}\label{sect:informal}

In \refToFigure{coeff-lambda} we  show  the example  which is generally used to illustrate how a coeffect system works\footnote{ More precisely, the structure of coeffects is that of most papers cited in the Introduction, and the calculus a variant/combination of examples in  those  papers.}.  Namely, a simple coeffect system for the  call-by-name  $\lambda$-calculus where we trace when a variable is either not used, or used linearly (that is, exactly once), or used in an unrestricted way, as expressed by assigning to the variable a \emph{scalar coeffect} $\coeff$. 

\begin{figure}[h]
\begin{grammatica}
\produzione{\te}{\num\mid\x\mid\LambdaExpWithType{\x}{\T}{\te}\mid\AppExp{\te_1}{\te_2}}{}\\
\produzione{\coeff}{\rzero\mid\rone\mid\omega}{}\\
\produzione{\T}{\INT\mid\funTypeCoeffect{\T_1}{\coeff}{\T_2}}{}\\
\produzione{\gamma}{\VarCoeffect{\x_1}{\coeff_1}, \ldots, \VarCoeffect{\x_n}{\coeff_n}}{}\\
\produzione{\Gamma,\Delta}{\VarTypeCoeffect{\x_1}{\T_1}{\coeff_1},\ldots,\VarTypeCoeffect{\x_n}{\T_n}{\coeff_n}}{}
\end{grammatica}
$
\LabelledRule[]{\reduce{\AppExp{(\LambdaExpWithType{\x}{\T}{\te})}{\te'}}{\Subst{\te}{\te'}{\x}}}{AppAbs}$\BigSpace
$\LabelledRule[\te_1\ev\te'_1]{\AppExp{\te_1}{\te_2}\ev\AppExp{\te'_1}{\te_2}  }{App}
$
\\[1ex]
\hrule
\begin{math}
\begin{array}{l}
\\
\NamedRule{t-const}{}
{\IsWFExp{\emptyset}{\num}{\INT}}
{}{}
\BigSpace
\NamedRule{t-var}{}
{\IsWFExp{\rzero\mmul\Gamma\msum\VarTypeCoeffect{\x}{\T}{\rone}}{\x}{\T}}
{}{}
\BigSpace
\NamedRule{t-sub}{\IsWFExp{\Gamma}{\te}{\T}}{\IsWFExp{\Gamma'}{\te}{\T}}{\Gamma'\ctxord\Gamma}
\\[3ex]
\NamedRule{t-abs}{\IsWFExp{\Gamma,\VarTypeCoeffect{\x}{\T_1}{\coeff}}{\te}{\T_2}}
{\IsWFExp{\Gamma}{\LambdaExpWithType{\x}{\T_1}{\te}}{\funTypeCoeffect{\T_1}{\coeff}{\T_2}}}
{}{}
\BigSpace
\NamedRule{t-app}{\IsWFExp{\Gamma_1}{\te_1}{\funTypeCoeffect{\T_2}{\coeff}{\T_1}} \qquad \IsWFExp{\Gamma_2}{\te_2}{\T_2}}
{\IsWFExp{\Gamma_1 \msum ((\coeff\meta{\rjoin\rone}) \mmul \Gamma_2)}{\AppExp{\te_1}{\te_2}}{\T_1}}
{}{}
\end{array}
\end{math}
\caption{A simple structural coeffect system}\label{fig:coeff-lambda}
\end{figure}
A \emph{coeffect context}, of shape $\cctx=\VarCoeffect{\x_1}{\coeff_1}, \ldots, \VarCoeffect{\x_n}{\coeff_n}$, where order is immaterial and $\x_i\neq\x_j$ for $i\neq j$, represents a map from variables  to  scalar coeffects where only a finite number of variables have non-zero coeffect. 
A \emph{(type-and-coeffect) context}, of shape $\Gamma=\VarTypeCoeffect{\x_1}{\T_1}{\coeff_1},\ldots,\VarTypeCoeffect{\x_n}{\T_n}{\coeff_n}$, with analogous conventions, represents the pair of the standard type context $\VarType{\x_1}{\T_1}\ldots,\VarType{\x_n}{\T_n}$, and the coeffect context $\VarCoeffect{\x_1}{\coeff_1}, \ldots, \VarCoeffect{\x_n}{\coeff_n}$. 
We write $\dom{\Gamma}$ for $\{\x_1,\ldots,\x_n\}$.

 Scalar coeffects usually form a preordered semiring \cite{BrunelGMZ14,GhicaS14,McBride16,Atkey18,GaboardiKOBU16,AbelB20,OrchardLE19,ChoudhuryEEW21,WoodA22}, that is, they are equipped with a \emph{preorder} $\rord$ (with binary join $\rjoin$), a \emph{sum} $\rsum$, and a \emph{multiplication} $\rmul$, satisfying  some  natural axioms, see \refToDefinition{semiring} in \refToSection{sring-mod}. 
In the example, the (pretty intuitive) definition of such  a  structure is given below.  

\begin{small}
\begin{center}
 $\rzero\rord\omega$, $\rone\rord\omega$ 
\BigSpace
\begin{tabular}{|c|c|c|c|}
\hline
$\rsum$ & $\rzero$ & $\rone$ & $\omega$ \\
\hline
$\rzero$ & $\rzero$ & $\rone$ & $\omega$ \\
\hline
$\rone$ & $\rone$ & $\omega$ & $\omega$ \\
\hline
$\omega$ & $\omega$ & $\omega$ & $\omega$ \\
\hline
\end{tabular}
\BigSpace
\begin{tabular}{|c|c|c|c|}
\hline
$\rmul$ & $\rzero$ & $\rone$ & $\omega$ \\
\hline
$\rzero$ & $\rzero$ & $\rzero$ & $\rzero$ \\
\hline
$\rone$ & $\rzero$ & $\rone$ & $\omega$ \\
\hline
$\omega$ & $\rzero$ & $\omega$ & $\omega$ \\
\hline
\end{tabular}
\end{center}
\end{small}

The typing rules use three operators on contexts: \emph{preorder} $\ctxord$,  \emph{sum} $\ctxsum$ and \emph{multiplication} $\ctxmul$ of a scalar coeffect with a context.  In the example, these operators are defined by first taking, 
on coeffect contexts, the pointwise application of the corresponding scalar operator, with the warning that the inverse preorder on scalars is  used,  see rule \refToRule{t-sub} below. Then, they are lifted  to type-and-coeffect contexts,  resulting in  the following definitions: 

\begin{itemize}
\item $\Gamma\ctxord\Delta$ is the preorder defined by
\begin{quote}
$(\rzero \ctxmul \Delta),\Gamma\ctxord\Gamma$\BigSpace\BigSpace\BigSpace
$(\VarTypeCoeffect{\x}{\T}{\coeff}, \Gamma) \ctxord(\VarTypeCoeffect{\x}{\T}{\coeff'}, \Delta)$ if $\coeff'\rord\coeff$  and $\Gamma\ctxord\Delta$ 
\end{quote}
\item $ \coeff \ctxmul \Gamma$ is the context defined by
\begin{quote}
$ \coeff \ctxmul \EmptyCtx = \EmptyCtx$ \BigSpace\BigSpace\BigSpace $\coeff \ctxmul (\VarTypeCoeffect{\x}{\T}{\coeff'},\Gamma) =  \VarTypeCoeffect{\x}{\T}{\coeff \rmul \coeff'}, (\coeff \ctxmul \Gamma)$
\end{quote}
\item $\Gamma\ctxsum\Delta$ is the context defined by
\begin{quote}
$\EmptyCtx \ctxsum\Gamma = \Gamma \BigSpace\BigSpace\BigSpace
(\VarTypeCoeffect{\x}{\T}{\coeff}, \Gamma) \ctxsum\Delta = \VarTypeCoeffect{\x}{\T}{\coeff}, (\Gamma \ctxsum\Delta)$ if $\x \notin \dom{\Delta} \\
(\VarTypeCoeffect{\x}{\T}{\coeff}, \Gamma) \ctxsum ( \VarTypeCoeffect{\x}{\T}{\coeff'}, \Delta) = \VarTypeCoeffect{\x}{\T}{\coeff \msum \coeff'}, (\Gamma \ctxsum\Delta)
$
\end{quote}
\end{itemize}
Note that when lifted to type-and-coeffect contexts the sum becomes partial, since we require a common variable to have the same type. 

In rule \refToRule{t-const} no variable is used. In rule \refToRule{t-var}, the coeffect context is one of those representing the map where the given variable is used exactly once, and no other is used.  Indeed, $\rzero\mmul\Gamma$ is a context where all variables have $\rzero$ coeffect.   
We include, to show the role of $\ctxord$, a standard subsumption rule \refToRule{t-sub}, allowing a well-typed expression to be typed in a more specific context, where coeffects are overapproximated.\footnote{Note that this rule partly overlaps with \refToRule{t-var}.}
This rule becomes useful, e.g., in  the presence of a conditional construct,  as its typing rule usually requires the two branches to be typed in the same context (that is, to use resources in the same way) and subsumption relaxes this condition. 
In rule \refToRule{t-abs}, the type of a lambda expression is decorated with the coeffect assigned to the binder when typechecking the body.
%In rule \refToRule{t-app}, the coeffects of an application are the sum of the coeffects of the first subterm, which is expected to have a functional type decorated with a coeffect, with those obtained by using such coeffect to modify the coeffects of the argument through multiplication. 
In rule  \refToRule{t-app}, the coeffects of an application are obtained  by summing  the coeffects of the first subterm, which is expected to have a functional type decorated with a coeffect,  and the coeffects of the argument multiplied by the decoration of the functional type. 
The part emphasized in  gray,  which shows the use of the join operator, needs to be added in a call-by-value strategy. For instance, without this addition, the judgment $\IsWFExp{\VarTypeCoeffect{\y}{\INT}{\rzero}}{\AppExp{(\LambdaExpWithType{\x}{\INT}{\num})}{\y}}{\INT}$ holds, meaning that $\y$ is not actually used, whereas it is used in call-by-value.

Extrapolating from the example, we can distill the following ingredients of a coeffect system:
\begin{itemize}
\item The typing rules use three operators on contexts (preorder, sum, and scalar multiplication) defined on top of the corresponding scalar operators.
\item Coeffects are computed bottom-up, starting from the rule for variable.
\item As exemplified in \refToRule{t-app}, the coeffects of a compound term are computed by a \emph{linear combination} (through sum and scalar multiplication) of those of the subterms. 
The coefficients are determined by the specific language construct considered in the typing rule.
%(in other words, the behaviour of the specific construct with respect to coeffects is formally expressed by such coefficients). 
\item The preorder is used for overapproximation.
\end{itemize}
%The fact that in the example the three operators on coeffect contexts are the pointwise extension of the scalar ones means exactly that the coeffect system is \emph{structural}, that is, coeffects of each variable can be computed independently.  However, more general definitions are possible. 
Note also that, by just changing the semiring of scalars, we obtain a different coeffect system. For instance, an easy variant is to consider the natural numbers (with the usual preorder, sum, and multiplication) as scalar coeffects, tracking \emph{exactly how many times} a variable is used.    
The definition of contexts and their operations, and the typing rules, can be kept exactly the same. 

In the following section, we will provide a formal account of the ingredients described above.

% !TEX root =main.tex

\subsection{The algebra of coeffects}
\label{sect:sring-mod}

% In this section we describe the general algebraic machinery formalizing coeffects. 
 As illustrated in the previous section, the first ingredient is a \emph{(preordered) semiring}, whose elements abstract a ``measure'' of resource usage.   

\begin{definition}[Semiring] \label{def:semiring}
A \emph{(preordered) semiring} is a tuple 
$\RR = \ple{\RSet,\rord,\rsum,\rmul,\rzero,\rone}$ where 
\begin{itemize}
\item \ple{\RSet,\rord} is a preordered set 
\item \ple{\RSet,\rsum,\rzero} is an ordered commutative monoid 
\item \ple{\RSet,\rmul,\rone} is an ordered monoid
\end{itemize}
such that the following equalities hold for all $\rel,\arel,\brel\in\RSet$
\begin{align*}
(\rel\rsum\arel)\rmul\brel &= (\rel\rmul\brel)\rsum(\arel\rmul\brel) 
&
\rel\rmul(\arel\rsum\brel) &= (\rel\rmul\arel)\rsum(\rel\rmul\brel) \\ 
\rel\rmul\rzero &= \rzero 
& 
\rzero\rmul\rel &= \rzero 
\end{align*}
\end{definition}
Spelling out the definition, this means that 
both $\rsum$ and $\rmul$ are associative and monotone with respect to $\rord$, and $\rsum$ is also commutative. 
In the following we will adopt the usual precedence rules for addition and multiplication. 

Let us assume a semiring $\RR = \ple{\RSet,\rord,\rsum,\rmul,\rzero,\rone}$ throughout this section. 
 Again, as exemplified in  the  previous section,  coeffect contexts have a preorder, a sum with a neutral element and a multiplication by elements of the semiring. 
Formally, they form a \emph{module over the semiring}, 
as already observed by \citet{McBride16} and \citet{WoodA22}.   

\begin{definition}[\RR-module] \label{def:module}
A \emph{(preordered) \RR-module} \MM is a tuple \ple{\MSet,\mord,\msum,
%\msumP,\msumPP,
\mzero,\mmul} where 
\begin{itemize}
\item \ple{\MSet,\mord} is a preordered set 
\item \ple{\MSet,\msum,\mzero} is a commutative monoid 
\item \fun{\mmul}{\RSet\times\MSet}{\MSet} is a function, called \emph{scalar multiplication}, which is monotone in both arguments and satisfies  the following equalities:
\begin{align*}
(\rel\rsum\arel)\mmul\mel &= (\rel\mmul\mel)\msum(\arel\mmul\mel) 
  & \rel\mmul(\mel\msum\amel) &= (\rel\mmul\mel)\msum(\rel\mmul\amel) 
    & (\rel\rmul\arel)\mmul\mel &= \rel\mmul(\arel\mmul\mel)  \\ 
\rzero \mmul \mel &= \mzero 
  & \rel \mmul \mzero &= \mzero 
    & \rone \mmul \mel &= \mel 
\end{align*}
\end{itemize}
Given \RR-modules \MM and \aMM, a \emph{(lax) homomorphism} \fun{f}{\MM}{\aMM} is a monotone function \fun{f}{\MSet}{\aMSet} such that the following hold for all $\mel,\amel\in\MSet$ and $\rel \in \RSet$: 
\[  f(\mel)\msum f(\amel) \mord f(\mel\msum\amel) 
\qquad 
f(\rel\mmul\mel) = \rel \mmul f(\mel) \] 
\end{definition}
From the second equality it follows that 
$\mzero = f(\mzero)$ as 
$\mzero = \rzero \mmul f(\mzero) = f(\rzero\mmul\mzero) = f(\mzero)$. 
It is also easy to see that \RR-modules and their homomorphisms form a category, denoted by \RMod\RR. 
 Note that \citet{WoodA22} use a different notion of homomorphism built on relations. 
Here we prefered to stick to a more standard functional notion of homomorphism.
The comparison between these two notions is an interesting topic for future work.  

We show that the coeffects of the example in  the  previous section, and in general any structural coeffects, form an \RR-module. 

Let $X$ be a set and \fun{\alpha}{X}{\RSet} be a function. 
The \emph{support} of $\alpha$ is the set 
${\supp\alpha = \{ x \in X \mid \alpha(x) \ne \rzero \}}$. 
 Denote by $\RSet^X$ the set of functions \fun{\alpha}{X}{\RSet} with finite support, then we can define the \RR-module 
$\RR^X = \ple{\RSet^X,\hat\rord,\hat\rsum,\hat\rzero,\hat\rmul}$ where 
$\hat\rord$ and $\hat\rsum$ are the pointwise extension of $\rord$ and $\rsum$ to $\RSet^X$, 
$\hat\rzero$ is the constant function equal to $\rzero$ and 
$\rel\hat\rmul\alpha = x\mapsto \rel\rmul\alpha(x)$, for all $\rel\in\RSet$ and $\alpha\in\RSet^X$. 
Note that $\hat\rzero$, $\hat\rsum$ and $\hat\rmul$  are well-defined because 
$\supp{\hat\rzero} = \emptyset$,  
$\supp{\alpha \hat\rsum \beta} \subseteq \supp\alpha \cup \supp\beta$ and 
$\supp{\rel\hat\rmul\alpha}\subseteq \supp\alpha$. 
When $X$ is the set of variables, $\RR^X$  (with the inverse preorder)  is precisely the module of coeffect contexts in the structural case: 
they assign to each variable an element of the semiring and the requirement of finite support ensures that only finitely many variables have non-zero coeffect.  

% A nice consequence of having identified modules over semirings as the algebraic structure distilling ingredients of coeffect systems is that,
%to modularly combine two coeffect systems, we can rely on a product construction, defined below. 
%
%Let \RR and \aRR be semirings. 
%We define a functor \fun{\mmul}{\RMod\RR \times \RMod\aRR}{\RMod{(\RR\times\aRR)}}, where $\RR\times\aRR$ is the product semiring,  as follows: 
%let \MM be an \RR-module and \aMM and \aRR-module 
%\begin{itemize}
%\item $\MSet\mmul\aMSet = \MSet\times\aMSet$ 
%\item $\ple{\mel,\amel}\mord_{\MM\mmul\aMM} \ple{\mel',\amel'}$ iff $\mel\mord_\MM\mel'$ and $\amel\mord_\aMM\amel'$
%\item $\ple{\mel,\amel}\msum_{\MM\mmul\aMM}\ple{\mel',\amel'} = \ple{\mel\msum_\MM\mel',\amel\msum_\aMM\amel'}$
%\item $\mzero_{\MM\mmul\aMM} = \ple{\mzero_\MM,\mzero_\aMM}$ 
%\item $\ple{\rel,\arel}\mmul_{\MM\mmul\aMM}\ple{\mel,\amel} = \ple{\rel\mmul_\MM\mel,\arel\mmul_\aMM\amel}$ 
%\end{itemize}
%moreover, given homomorphisms $f$ and $g$ in \RMod\RR and \RMod\aRR, respectively, set 
%$f\mmul g = f\times g$. 
%It is easy to check that $\mmul$ is indeed a functor. 
%
%\begin{proposition}\label{prop:mprod-free}
%Let $X$ be a set, then 
%$\RR^X\mmul\aRR^X \simeq (\RR\times\aRR)^X$ in \RMod{\RR\times\aRR}. 
%\end{proposition}
 
Finally, the coeffect systems considered in this paper additionally assume that the preordered semiring, hence the associated module,  has binary joins. 
Since this assumption is completely orthogonal to the development in this section, we have omitted it. 
However, all definitions and results  also work  in presence of binary joins, hence they can be  added without issues.

% !TEX root =main.tex

\section{Coeffects for sharing}\label{sect:sharing}

Introducing sharing, e.g. by  a field  assignment $\FieldAssign{\x}{\f}{\y}$, can be clearly seen as adding an arc between $\x$ and $\y$ in an undirected graph where nodes are variables. However, such  a graphical   representation  would be a  \emph{global} one, whereas the representation we  are looking for must  be \emph{per variable}, and, moreover,  must support sum and scalar multiplication operators.
To achieve this, we introduce auxiliary entities called \emph{links}, and attach to each variable a set of them, so that an arc between $\x$ and $\y$ is represented by the fact that they have a common link.\footnote{This roughly corresponds to the well-known representation of a (hyper)graph by a bipartite graph.}  Moreover, there is a special link $\res$ which denotes a connection with the final result of the expression. 

For instance, considering again the classes of \refToExample{ex1}: 
\begin{lstlisting}
class B {int f;}
class C {B f1; B f2;}
\end{lstlisting}

and the program $\Seq{\FieldAssign{\xtt}{\fttOne}{\ytt}}{\ConstrCall{\Ctt}{\zttOne,\zttTwo}}$, the following typing judgment will be derivable: 
\begin{small}
\begin{quote}
$(\ast)\ \IsWFExp{\VarTypeCoeffect{\xtt}{\Ctt}{\{ \link\}}, \VarTypeCoeffect{\ytt}{\Btt}{\{ \link\}},\VarTypeCoeffect{\zttOne}{\Btt}{\{\res\}},\VarTypeCoeffect{\zttTwo}{\Btt}{\{ \res \}}}{\Seq{\FieldAssign{\xtt}{\fttOne}{\ytt}}{\ConstrCall{\Ctt}{\zttOne,\zttTwo}}}{\Ctt}$ \Space with $\link\neq\res$
\end{quote}
\end{small}
meaning that the program's execution introduces sharing between $\xtt$ and $\ytt$, as expressed by their common link $\link$, and between $\zttOne$, $\zttTwo$, and the final result, as expressed by their common link $\res$. 
The derivation for this judgment is shown later (\refToFigure{derivation}).

Formally, we  assume a countable set $\Lnk$, ranged over by $\link$, with a distinguished element $\res$. In the coeffect system for sharing, scalar coeffects $\X,\Y,$ and $\Z$ will be finite sets of links.  Let  $\LnkSet$ be  the finite powerset of $\Lnk$,  that is,  the set of scalar coeffects, and  let $\shCCtx$ be  the set of the corresponding coeffect contexts $\cctx$, that is (representations of) maps in  $\LnkSet^\Vars$, with $\Vars$ the set of variables.
Given $\cctx=\VarCoeffect{\x_1}{\X_1}, \ldots, \VarCoeffect{\x_n}{\X_n}$, the \emph{(transitive) closure} of $\cctx$, denoted $\closure{\cctx}$, is ${\VarCoeffect{\x_1}{\closure{\X}_1}, \ldots, \VarCoeffect{\x_n}{\closure{\X}_n}}$ where $\closure{\X}_1, \ldots,\closure{\X}_n$ are the smallest sets such that:
\begin{quote}
$\link\in\X_i$ implies $\link\in\closure{\X}_i$\\
$\link,\link'\in\closure{\X}_i$, $\link'\in\closure{\X}_j$ implies $\link\in\closure{\X}_j$
\end{quote}
For instance, if $\cctx=\VarCoeffect{\x}{\{\link\}},\VarCoeffect{\y}{\{\link,\link'\}},\VarCoeffect{\z}{\{\link'\}}$, then $\closure{\cctx}=\VarCoeffect{\x}{\{\link,\link'\}},\VarCoeffect{\y}{\{\link,\link'\}},\VarCoeffect{\z}{\{\link,\link'\}}$. 
That is, since $\x$ and $\y$ are connected by $\link$, and $\y$ and $\z$ are connected by $\link'$, then $\x$ and $\z$ are connected as well. 
Note that, if $\cctx$ is \emph{closed} ($\closure{\cctx}=\cctx$), then two variables have either the same, or disjoint coeffects. 

To sum two closed coeffect contexts, obtaining in turn a closed one, we need to apply the transitive closure after pointwise union. For instance, the above coeffect context $\cctx$ could have been obtained as pointwise union of $\VarCoeffect{\x}{\{\link\}},\VarCoeffect{\y}{\{\link\}}$ and $\VarCoeffect{\y}{\{\link'\}},\VarCoeffect{\z}{\{\link'\}}$.

Multiplication of a closed coeffect context with a scalar is defined in terms of an operator $\shmul$ on sharing coeffects, which replaces the $\res$ link (if any) in the second argument with the first:
\begin{quote} 
$\X\shmul\Y=\begin{cases} 
\emptyset & \mbox{if}\ \X=\emptyset  \\ 
\Y & \mbox{if}\ \X\ne\emptyset\ \mbox{and}\ \res\not\in\Y\\
(\Y\setminus\{ \res\})\cup\X&\mbox{if}\ \X\ne\emptyset\ \mbox{and}\ \res\in\Y 
\end{cases}$ 
\end{quote}

Similarly to sum, to  multiply  a coeffect context with a scalar $\X$, we need to apply the transitive closure after pointwise application of the operation $\shmul$. 
For instance, ${\{\link''\}\ctxmul(\VarCoeffect{\x}{\{\link,\res\}},\VarCoeffect{\y}{\{\link'\}})}=\VarCoeffect{\x}{\{\link,\link''\}},\VarCoeffect{\y}{\{\link'\}}$. To see that transitive closure can be necessary, consider, for instance, ${\{\link''\}\ctxmul(\VarCoeffect{\x}{\{\link,\res\}},\VarCoeffect{\y}{\{\link''\}})}=\VarCoeffect{\x}{\{\link,\link''\}},\VarCoeffect{\y}{\{\link,\link''\}}$.

When an expression $\e$, typechecked with context $\Gamma$, replaces a variable with coeffect $\X$ 
in an expression  $\e'$, the product $\X\ctxmul\Gamma$ computes the sharing 
introduced by the resulting expression on the variables in $\Gamma$.
For instance,  set $\e={\Seq{\FieldAssign{\xtt}{\fttOne}{\ytt}}{\ConstrCall{\Ctt}{\zttOne,\zttTwo}}}$ of $(\ast)$ and assume that $\e$ replaces  $\ztt$  in $\FieldAssign{\ztt}{\fttOne}{\wtt}$, for which the judgment
$\IsWFExp{\VarTypeCoeffect{\ztt}{\Ctt}{\{\res\}}, \VarTypeCoeffect{\wtt}{\Btt}{\{\res\}}}{\FieldAssign{\ztt}{\fttOne}{\wtt}}{\Btt}$ is derivable.
We expect that $\zttOne$ and $\zttTwo$, being connected to the result of $\e$,  are  connected to whatever
$\ztt$ is connected  to  ($\wtt$  and the result of $\FieldAssign{\ztt}{\fttOne}{\wtt}$), whereas the sharing of $\xtt$ and $\ytt$ would not be changed. 
In our example, we have $\{ \res\}\shmul\{ \link\}=\{ \link\}$ and  $\{ \res\}\shmul\{ \res\}=\{ \res\}$. Altogether we have the following formal definition:

\begin{definition}\label[Sharing coeffects]{def:sharing-coeff}
The \emph{sharing coeffect system} is defined by:
\begin{itemize}
\item
the semiring $\SharingScalar=\Tuple{\LnkSet,\subseteq,\cup,\shmul,\emptyset,\{\res\}}$

\item the $\SharingScalar$-module \ple{\shCCtx_\clo,\shord,\shsum,\emptyset,\ctxmul} where:
\begin{itemize}
\item $\shCCtx_\clo$ are the fixpoints of $\clo$, that is, the closed coeffect contexts
\item $\shord$ is the pointwise extension of  $\subseteq$  to $\shCCtx_\clo$ 
\item $\Gamma \shsum \Gamma'=\closure{(\Gamma \ \hat{\cup} \ \Gamma')}$, where $\hat{\cup}$ is the pointwise extension of $\cup$ to $\shCCtx_\clo$ 
\item $\X \ctxmul \Gamma = \closure{(\X \hatshmul \Gamma)}$, where $\hatshmul$ is the pointwise extension of $\shmul$ to $\shCCtx_\clo$.
\end{itemize}
\end{itemize}
\end{definition}

Operations on closed coeffect contexts can be lifted to type-and-coeffect contexts, exactly as we did in the introductory example in \refToSection{informal}.

It is easy to check that $\SharingScalar=\Tuple{\LnkSet,\subseteq,\cup,\shmul,\emptyset,\{\res\}}$ is actually a semiring with $\emptyset$ neutral element of $\ \cup \ $ and $\{\res\}$ neutral element of $\shmul$.  The fact that $\ple{\shCCtx_\clo,\shord,\shsum,\emptyset,\ctxmul}$ is actually an $\SharingScalar$-module can be proved as follows: first of all, $\SharingScalar^\Vars=\ple{\shCCtx,\shord,\hat{\cup},\emptyset,\hatshmul}$ is an $\SharingScalar$-module, notably, the structural one (all operations are pointwise); it is easy to see that $\closure{\_}$ is an idempotent homomorphism on $\SharingScalar^\Vars$; then, the thesis follows from \bez \refToProposition{module-fix} in the Appendix\eez, stating that an idempotent homomorphism on a module induces a module structure on the set of its fixpoints. 

In a judgment $\IsWFExp{\Gamma}{\e}{\T}$, the coeffects in $\Gamma$ describe an equivalence relation on $\dom{\Gamma}\cup\{\res\}$ where each coeffect corresponds to an equivalence class. 
Two variables, say $\x$ and $\y$, have the same coeffect if the evaluation of $\e$ possibly introduces sharing between $\x$ and $\y$. 
Moreover, $\res$ in the coeffect of $\x$ models possible sharing with the final result of $\e$. 
Intuitively, sharing only happens among variables of reference types (classes), since a variable $\x$ of a primitive type $\PT$ denotes an immutable value rather than a reference in memory. To have a uniform treatment, a judgment $\IsWFExp{\VarTypeCoeffect{\x}{\PT}{\{\link\}}}{\x}{\PT}$ with $\link$ fresh is derivable (by rules \refToRule{t-var} and \refToRule{t-prim}, as detailed below\footnote{ Alternatively, variables of primitive types could be in a separate context, with no sharing coeffects.}). 

The typing rules are given in \refToFigure{typing-sharing}. 
\begin{figure}
\begin{small}
\begin{grammatica}
\produzione{\Gamma,\Delta}{\VarTypeCoeffect{\x_1}{\T_1}{\X_1},\ldots,\VarTypeCoeffect{\x_n}{\T_n}{\X_n}}{context}\\
\produzione{\X}{\{\link_1,\ldots,\link_n\}}{coeffect (set of links)}
\\[1ex]
\end{grammatica}

\hrule

\begin{math}
\begin{array}{l}
\\[0.5ex]
 \NamedRule{t-var}{}{ \IsWFExp{ \emptyset\ctxmul \Gamma \shsum\VarTypeCoeffect{\x}{\T}{\{\res\}}}{\x}{\T}}{
}
\BigSpace
\NamedRule{t-const}{}{\IsWFExp{\emptyset \ctxmul \Gamma}{\const}{\PT_\const}}{}

\\[4ex] 
\NamedRule{t-field-access}{\IsWFExp{\Gamma}{\e}{\C}}{\IsWFExp{\Gamma}{\FieldAccess{\e}{\f_i}}{\T_i}}
{
\fields{\C}=\Field{\T_1}{\f_1} \ldots \Field{\T_n}{\f_n} \\
i\in 1..n\\
}
\\[4ex]
\NamedRule{t-field-assign}{\IsWFExp{\Gamma}{\e}{\C} \Space \IsWFExp{\Delta}{\e'}{\T_i}}{\IsWFExp{\Gamma \shsum \Delta}{\FieldAssign{\e}{\f_i}{\e'}
}{\T_i}}
{
\fields{\C}=\Field{\T_1}{\f_1} \ldots \Field{\T_n}{\f_n}\\
  i\in 1..n\\ 
}
\\[4ex]
\NamedRule{t-new}{\IsWFExp{\Gamma_i}{\e_i}{\T_i}\Space \forall i\in 1..n}{\IsWFExp{\Gamma_1 \shsum \ldots \shsum \Gamma_n}{\ConstrCallTuple{\C}{\e}{n}}{\C}}
{\fields{\C}=\Field{\T_1}{\f_1} \ldots \Field{\T_n}{\f_n}}
\\[4ex]

\NamedRule{t-invk}{\IsWFExp{\Gamma_0}{\e_0}{\C} \BigSpace \IsWFExp{\Gamma_i}{\e_i}{\T_i}\Space \forall i\in 1..n}{\IsWFExp{(\X_0\cup\{\link_0\}) \ctxmul \Gamma_0)  \shsum \ldots \shsum (\X_n\cup\{\link_n\})\ctxmul \Gamma_n)}{\MethCallTuple{\e_0}{\m}{\e}{n}}{\T}}{
\mtype{\C}{\m}\eqfresh\funType{\X_0,\coeffectType{\T_1}{\X_1} \ldots \coeffectType{\T_n}{\X_n}}{\T}\\
\link_0,\ldots,\link_n\ \mbox{fresh}
}

\\[4ex]
\NamedRule{t-block}{
\IsWFExp{\Gamma}{\e}{\T} \BigSpace
\IsWFExp{\Gamma', \VarTypeCoeffect{\x}{\T}{\X}}{\e'}{\T'}
}
{
\IsWFExp{(\X \cup \{ \link \})\ctxmul\Gamma \shsum \Gamma'}{\Block{\T}{\x}{\e}{\e'}}{\T'}
}
{\link \ \text{fresh}}
\BigSpace
\NamedRule{t-prim}{\IsWFExp{\Gamma}{\e}{\PT}}{\IsWFExp{\{\link\}\ctxmul\Gamma}{\e}{\PT}}
{
\link \ \text{fresh}
}
\\[3ex]
\end{array}
\end{math}

\hrule

\begin{math}
\begin{array}{l}
\\
\NamedRule{t-conf}{\IsWFExp{\Delta}{\e}{\T} \BigSpace \IsWFMem{\Gamma}{\mem}}
{\IsWFConf{\Delta\shsum\Gamma}{\e}{\mem}{\T}} 
{\dom{\Delta}\subseteq\dom{\Gamma}}
\\[4ex]

\NamedRule{t-obj}{\IsWFExp{\Gamma_i}{\val_i}{\T_i}\Space \forall i\in 1..n}{\IsWFObject{\Gamma_1 \shsum \cdots \shsum \Gamma_n}{\Object{\C}{\val_1,\ldots,\val_n}}{\C}}{\fields{\C}=\Field{\T_1}{\f_1} \ldots \Field{\T_n}{\f_n}}

\\[4ex]
\NamedRule{t-mem}{\IsWFExp{\Gamma_i}{\mem(\x_i)}{\C_i\Space\forall i\in 1...n}
}
{\IsWFMem{\Gamma_{\!\mem}\shsum\Gamma}{\mem}} 
{\Gamma_{\!\mem}=\VarTypeCoeffect{\x_1}{\C_1}{\{\link_1\}},\ldots,\VarTypeCoeffect{\x_n}{\C_n}{\{\link_n\}}\\
\dom{\Gamma_{\!\mem}}=\dom{\mem}\\
\Gamma=(\{\link_1\}\ctxmul\Gamma_1) \shsum \ldots \shsum (\{\link_n\}\ctxmul\Gamma_n)\\
\link_1,\ldots,\link_n\ \mbox{fresh}}
\end{array}
\end{math}
\end{small}
\caption{Coeffect system for sharing}\label{fig:typing-sharing}
\end{figure}
 In the rule for variable, the variable is obviously linked with the result (they coincide), hence its coeffect  is $\{\res\}$. 
In rule \refToRule{t-const}, no variable is used. 

In rule \refToRule{t-field-access}, the coeffects are those of the receiver expression. In rule \refToRule{t-field-assign}, the coffects of the two arguments are summed. 
In particular, the result of the receiver expression, of the right-side expression, and the final result, will be in  sharing.  For instance, we derive ${\IsWFExp{\VarTypeCoeffect{\xtt}{\Ctt}{\{\res\}}, \VarTypeCoeffect{\ytt}{\Btt}{\{ \res\}}}{\FieldAssign{\xtt}{\fttOne}{\ytt}}{\Btt}}$. 
In rule \refToRule{t-new}, analogously, the coeffects of the arguments of the constructor are summed. In particular, the results of the argument expressions and the final result will be in sharing.  For instance, we derive $\IsWFExp{\VarTypeCoeffect{\zttOne}{\Btt}{\{\res\}},\VarTypeCoeffect{\zttTwo}{\Btt}{\{\res\}}}{\ConstrCall{\Ctt}{\zttOne,\zttTwo}}{\Ctt}$. 

In rule \refToRule{t-invk}, the coeffects of the arguments are summed, after multiplying each of them with the coeffect of the corresponding parameter, where, to avoid clashes, we assume that links different from $\res$ are freshly renamed,  as indicated   by the notation $\eqfresh$. Moreover, a fresh link $\link_i$ is added\footnote{ Analogously to the rule \refToRule{t-app} in \refToFigure{coeff-lambda} in the call-by-value case.}, since otherwise, if the parameter is not used in the body (hence has empty coeffect), the links of the argument would be lost in the final context, see the example for rule \refToRule{t-block} below. 

The auxiliary function $\aux{mtype}$  now  returns an enriched method type,  where the parameter types are decorated with their coeffects, including the implicit parameter $\this$. 
 The condition that method bodies should be well-typed with respect to method types is extended by requiring that coeffects computed by typechecking the method body express no more sharing than those in the method type, formally: if $\mbody{\C}{\m}$ and $\mtype{\C}{\m}$ are defined, then  $\mbody{\C}{\m}=\Pair{\x_1\dots\x_n}{\e}$, ${\mtype{\C}{\m}=\funType{\X_0,\coeffectType{\T_1}{\X_1} \ldots \coeffectType{\T_n}{\X_n}}{\T}}$, and  
\begin{quote}
$\IsWFExp{\VarTypeCoeffect{\this}{\Ctt}{\X'_0},\VarTypeCoeffect{\x_1}{\T_1}{\X'_1},\ldots,\VarTypeCoeffect{\x_n}{\T_n}{\X'_n}}{\e}{\T}$\\[1ex]
$\X'_i=\X'_j\neq\emptyset$ implies $\X_i=\X_j\neq\emptyset$
\end{quote}
holds. As an example, consider the following method:
\begin{lstlisting}
class B {int f;}
class C {B f1; B f2;
  C m(B y, B z1, B z2) {this.f1=y; new C(z1,z2)}
}
\end{lstlisting}
where $\mtype{\Ctt}{\texttt{m}}{=}\funType{\{\link\},\coeffectType{\Btt}{\{\link\}}, \coeffectType{\Btt}{\{\res\}},\coeffectType{\Btt}{\{\res\}}}{\Ctt}$, with $\link{\neq}\res$.
The method body is well-typed, since we derive $\IsWFExp{\VarTypeCoeffect{\this}{\Ctt}{\{\link\}}, \VarTypeCoeffect{\ytt}{\Btt}{\{ \link\}},\VarTypeCoeffect{\zttOne}{\Btt}{\{\res\}},\VarTypeCoeffect{\zttTwo}{\Btt}{\{\res\}}}{\Seq{\FieldAssign{\xtt}{\fttOne}{\ytt}}{\ConstrCall{\Ctt}{\zttOne,\zttTwo}}}{\Ctt}$\mbox{, with $\link{\neq}\res$. }
 
Consider now the method call \lstinline{x.m(z,y1,y2)}. We get the following derivation:

{\footnotesize
\[
\NamedRule{t-invk}
{
	\NamedRule{t-var}
	{}
	{\IsWFExp{\VarTypeCoeffect{\xtt}{\Ctt}{\{\res\}}}{\xtt}{\Ctt}}
	{}
	\quad
	\NamedRule{t-var}
	{}
	{\IsWFExp{\VarTypeCoeffect{\ztt}{\Btt}{\{\res\}}}{\ztt}{\Btt}}
	{}
	\quad
	\NamedRule{t-var}
	{}
	{\IsWFExp{\VarTypeCoeffect{\ytt_1}{\Btt}{\{\res\}}}{\ytt_1}{\Btt}}
	{}
	\quad
	\NamedRule{t-var}
	{}
	{\IsWFExp{\VarTypeCoeffect{\ytt_2}{\Btt}{\{\res\}}}{\ytt_2}{\Btt}}
	{}
}
{\IsWFExp{\VarTypeCoeffect{\xtt}{\Ctt}{\X},\VarTypeCoeffect{\ztt}{\Btt}{\X},\VarTypeCoeffect{\ytt_1}{\Btt}{\Y},\VarTypeCoeffect{\ytt_2}{\Btt}{\Y}}{\MethCall{\xtt}{\texttt{m}}{\ztt,\ytt_1,\ytt_2}}{\Ctt}}
{}
\]
}
where $\X=\{\link',\link_{0},\link_{1}\}$ and $\Y=\{\res,\link_{2},\link_{3}\}$.
%, obtained by multiplying each argument coeffect ($\{\res\}$ for all) by that of the corresponding 
%parameter, after a fresh renaming ($\ell$ to $\ell')$, and adding a fresh link 
%($\ell_i$, $0\leq i\leq 3$). That is, $\xtt$ and $\ztt$ are in sharing and $\ytt_1$, $\ytt_2$ and the result are in sharing.

 The context of the call is obtained as follows
{\small\[
\begin{array}{lcl}
&&\{\link',\link_{0}\}\ctxmul(\VarTypeCoeffect{\xtt}{\Ctt}{\{\res\}})\ctxsum
\{\link',\link_{1}\}\ctxmul(\VarTypeCoeffect{\ztt}{\Btt}{\{\res\}})\ctxsum
\{\res,\link_{2}\}\ctxmul(\VarTypeCoeffect{\ytt_1}{\Btt}{\{\res\}})\ctxsum
\{\res,\link_{3}\}\ctxmul(\VarTypeCoeffect{\ytt_2}{\Btt}{\{\res\}})\\
&=&(\VarTypeCoeffect{\xtt}{\Ctt}{\{\link',\link_{0}\}})\ctxsum
(\VarTypeCoeffect{\ztt}{\Btt}{\{\link',\link_{1}\}})\ctxsum
(\VarTypeCoeffect{\ytt_1}{\Btt}{\{\res,\link_{2}\}})\ctxsum
(\VarTypeCoeffect{\ytt_2}{\Btt}{\{\res,\link_{3}\}})\\
&=&\VarTypeCoeffect{\xtt}{\Ctt}{\X},\VarTypeCoeffect{\ztt}{\Btt}{\X},\VarTypeCoeffect{\ytt_1}{\Btt}{\Y},\VarTypeCoeffect{\ytt_2}{\Btt}{\Y}
\end{array}
\]
}
where $\ell'$ is a fresh renaming of the (method) link $\ell$, and $\ell_i$, $0\leq i\leq 3$, are fresh links.

For a call \lstinline{x.m(z,z,y)}, instead, we get the following derivation:
{\footnotesize
\[
\NamedRule{t-invk}
{
	\NamedRule{t-var}
	{}
	{\IsWFExp{\VarTypeCoeffect{\xtt}{\Ctt}{\{\res\}}}{\xtt}{\Ctt}}
	{}
	\quad
	\NamedRule{t-var}
	{}
	{\IsWFExp{\VarTypeCoeffect{\ztt}{\Btt}{\{\res\}}}{\ztt}{\Btt}}
	{}
	\quad
	\NamedRule{t-var}
	{}
	{\IsWFExp{\VarTypeCoeffect{\ztt}{\Btt}{\{\res\}}}{\ztt}{\Btt}}
	{}
	\quad
	\NamedRule{t-var}
	{}
	{\IsWFExp{\VarTypeCoeffect{\ytt}{\Btt}{\{\res\}}}{\ytt}{\Btt}}
	{}
}
{\IsWFExp{\VarTypeCoeffect{\xtt}{\Ctt}{\X},\VarTypeCoeffect{\ztt}{\Btt}{\X},\VarTypeCoeffect{\ytt}{\Btt}{\X}}{\MethCall{\xtt}{\texttt{m}}{\ztt,\ztt,\ytt}}{\Ctt}}
{}
\]
}
where $\X=\{\link',\link_{0},\link_{1},\link_{2},\link_{3},\res\}$. That is, $\xtt$, $\ytt$, $\ztt$, and the result, are in sharing (note the role of the transitive closure here).
\\
In the examples that follow we will omit the fresh links unless necessary.

In rule \refToRule{t-block}, the coeffects of the expression in the declaration are multiplied by  the join (that is, the union) of those of the local variable in the body and the singleton of a fresh link, and then summed with those of the body.  The union with the fresh singleton is needed  when the variable is not used in the body (hence has empty coeffect), since otherwise its links, that is, the information about its sharing in $\e$, would be lost in the final context.  For instance,  consider the body of method \lstinline{m} above, which is an abbrevation for \lstinline{B unused = (this.f1=y); new (z1, z2)}. Without  the join with the fresh singleton, we could derive the judgment $\IsWFExp{\VarTypeCoeffect{\this}{\Ctt}{\emptyset}, \VarTypeCoeffect{\ytt}{\Btt}{\emptyset},\VarTypeCoeffect{\zttOne}{\Btt}{\{\res\}},\VarTypeCoeffect{\zttTwo}{\Btt}{\{\res\}}}{\texttt{B unused = (this.f1=y); new (z1, z2)}}{\Ctt}$, where the information that after the execution of the field assignment $\this$ and $\ytt$ are in sharing is lost. 
 
 Rule  \refToRule{t-prim}  allows the coeffects of an expression of primitive type to be changed  by removing the links with the result, as formally modeled by the product of the context with a fresh singleton coeffect. For instance, the following derivable judgment 
\begin{quote}
$\IsWFExp{\VarTypeCoeffect{\zttOne}{\Btt}{\{\link\}},\VarTypeCoeffect{\zttTwo}{\Btt}{\{ \link\}}}{\FieldAccess{\FieldAccess{\ConstrCall{\Ctt}{\zttOne,\zttTwo}}{\fttOne}}{\ftt}}{\intType}$, with $\link\neq\res$
\end{quote}
shows that there is no longer a link between the result and $\zttOne,\zttTwo$.

In rule \refToRule{t-conf}, the coeffects of the expression and those of the memory are summed. 
In rule \refToRule{t-mem}, a memory is well-formed in a context which is the sum of two parts.
The former assigns a type to all and only references in memory, as in the standard rule in \refToFigure{calculus}, and a fresh singleton coeffect. 
The latter sums the coeffects of the objects in memory, after multiplying each of them with that of the corresponding reference. 
For instance, \label{example-t-mem} for $\xtt\mapsto\Object{\Att}{\ytt},\ytt\mapsto\Object{\Btt}{0},\ztt\mapsto\Object{\Att}{\ytt}$, the former context is $\VarTypeCoeffect{\xtt}{\Att}{\{\link_\xtt\}},\VarTypeCoeffect{\ytt}{\Btt}{\{\link_\ytt\}},\VarTypeCoeffect{\ztt}{\Att}{\{\link_\ztt\}}$, the latter is the sum of the three contexts $\VarTypeCoeffect{\ytt}{\Att}{\{\link_\xtt\}}$, $\emptyset$, and $\VarTypeCoeffect{\ytt}{\Att}{\{\link_\ztt\}}$.
Altogether, we get $\VarTypeCoeffect{\xtt}{\Att}{\{\link_\xtt,\link_\ytt,\link_\ztt\}},\VarTypeCoeffect{\ytt}{\Btt}{\{\link_\xtt,\link_\ytt,\link_\ztt\}},\VarTypeCoeffect{\ztt}{\Att}{\{\link_\xtt,\link_\ytt,\link_\ztt\}}$, expressing that the three references are connected. Note that no $\res$ link occurs in memory; indeed, there is no final result.

As an example of a more involved derivation, consider the judgment
\[
\IsWFExp{\VarTypeCoeffect{\xtt}{\Ctt}{\{\link\}}, \VarTypeCoeffect{\ytt}{\Btt}{\{ \link\}}}{\Block{\Btt}{\ztt}{\ConstrCall{\Btt}{2}}{\Seq{\FieldAssign{\xtt}{\fttOne}{\ytt}}{\ConstrCall{\Ctt}{\ztt,\ztt}}}}{\Ctt}\mbox{ where $\link\neq\res$}.
\]
Here $\Seq{\FieldAssign{\xtt}{\fttOne}{\ytt}}{\ConstrCall{\Ctt}{\ztt,\ztt}}$ is shorthand for $\Block{\Btt}{\wtt}{(\FieldAssign{\xtt}{\fttOne}{\ytt})}{\ConstrCall{\Ctt}{\ztt,\ztt}}$. 
The derivation is in \refToFigure{derivation}, where the subderivations $\der_1$ and $\der_2$ are  given below for space reasons.

\begin{figure}
\begin{small}
$
\NamedRule{t-block}
{
	\NamedRule{t-new}
	{
		\NamedRule{t-const}
		{}
		{\IsWFExp{\emptyset}{2}{\intType}}
		{}
	}
	{\IsWFExp{\emptyset}{\ConstrCall{\Btt}{2}}{\Btt}}
	{}
\BigSpace\NamedRule{t-block}
	{
		 \der_1	
\BigSpace\BigSpace	  	\der_2 
	}
	{\IsWFExp{\Gamma}{\Block{\Btt}{\wtt}{(\FieldAssign{\xtt}{\fttOne}{\ytt})}{\ConstrCall{\Ctt}{\ztt,\ztt}}}{\Ctt}}
	{}
}
{\IsWFExp{\VarTypeCoeffect{\xtt}{\Ctt}{\{\link\}},\VarTypeCoeffect{\ytt}{\Btt}{\{\link\}}}{\Block{\Btt}{\ztt}{\ConstrCall{\Btt}{2}}{\Seq{\FieldAssign{\xtt}{\fttOne}{\ytt}}{\ConstrCall{\Ctt}{\ztt,\ztt}}}}{\Ctt}}
{}
$

\bigskip

\begin{tabular}{l}
$\link,\link'$ fresh\\
$ \VarTypeCoeffect{\xtt}{\Ctt}{\{\link\}},\VarTypeCoeffect{\ytt}{\Btt}{\{\link\}}= (\{\res\}\ctxsum\{\link'\})\ctxmul \emptyset\ctxsum\VarTypeCoeffect{\xtt}{\Ctt}{\{\link\}},\VarTypeCoeffect{\ytt}{\Btt}{\{\link\}}
$\\
$\Gamma = (\emptyset \ctxsum \{\link\})\ctxmul (\VarTypeCoeffect{\xtt}{\Ctt}{\{ \res\}},\VarTypeCoeffect{\ytt}{\Btt}{\{\res\}}) \ctxsum \VarTypeCoeffect{\ztt}{\Btt}{\{\res\}} = \VarTypeCoeffect{\xtt}{\Ctt}{\{ \link\}},\VarTypeCoeffect{\ytt}{\Btt}{\{\link\}},\VarTypeCoeffect{\ztt}{\Btt}{\{\res\}}$
\end{tabular}

\bigskip

$\der_1= \ 
\NamedRule{t-field-assign}
{
	\NamedRule{t-var}
	{}
	{\IsWFExp{\VarTypeCoeffect{\xtt}{\Ctt}{\{\res\}}}{\xtt}{\Ctt}}	
	{}
	\qquad
	\NamedRule{t-var}
	{}
	{\IsWFExp{\VarTypeCoeffect{\ytt}{\Btt}{\{\res\}}}{\ytt}{\Btt}}	
	{}
}
{\IsWFExp{\VarTypeCoeffect{\xtt}{\Ctt}{\{ \res\}},\VarTypeCoeffect{\ytt}{\Btt}{\{\res\}}}{\FieldAssign{\xtt}{\fttOne}{\ytt}}{\Btt}}
{}	
$

\bigskip

$
\der_2=   \ 
\NamedRule{t-new}
{
	\NamedRule{t-var}
	{}
	{\IsWFExp{\VarTypeCoeffect{\wtt}{\Btt}{\emptyset},\VarTypeCoeffect{\ztt}{\Btt}{\{\res\}}}{\ztt}{\Btt}}	
	{}
	\quad
	\NamedRule{t-var}
	{}
	{\IsWFExp{\VarTypeCoeffect{\wtt}{\Btt}{\emptyset},\VarTypeCoeffect{\ztt}{\Btt}{\{\res\}}}{\ztt}{\Btt}}	
	{}
}
{\IsWFExp{\VarTypeCoeffect{\wtt}{\Btt}{\emptyset},\VarTypeCoeffect{\ztt}{\Btt}{\{\res\}}}{\ConstrCall{\Ctt}{\ztt,\ztt}}{\Ctt}}
{}
$
\end{small}
\caption{Example  of derivation}\label{fig:derivation}
\end{figure}

The rules in \refToFigure{typing-sharing} immediately lead to an algorithm which inductively computes the coeffects of an expression. Indeed, all the rules except \refToRule{t-prim} are  syntax-directed, that is, the coeffects of the expression in the consequence are computed as a linear combination of those of the subexpressions,  where the basis is  the rule for variables. 
Rule \refToRule{t-prim} is assumed to be \emph{always} used in the algorithm, just once,  for expressions of primitive types. 

We assume  there are  coeffect annotations in method parameters to handle (mutual) recursion; for non-recursive methods, such coeffects can be computed (that is, in the coherency condition above, the $\X_i$s in $\aux{mtype}$ are exactly the $\X'_i$s). We leave to future work the investigation of a global fixed-point inference to compute coeffects across mutually recursive methods.
% (some care is needed to ensure termination since we generate fresh links for the calls).

Considering again \refToExample{ex2}:
\begin{lstlisting}
class B {int f; B clone $\meta{[^{\{\link\}}]}$() {new B(this.f)} // $\link\neq\res$
class A { B f;
  A mix $\meta{[^{\{\res\}}]}$(A$\meta{^{\{\res\}}}$a) {this.f=a.f; a} // this, a and result linked
  A clone $\meta{[^{\{\link\}}]}$ () {new A(this.f.clone()) } // $\link\neq\res$
}
A a1 = new A(new B(0));
A mycaps = {A a2 = new A(new B(1));
  a1.mix(a2).clone()
  // a1.mix(a2).clone().mix(a2) 
}
\end{lstlisting}
 The parts emphasized in  gray  are the coeffects which can be computed for the parameters by typechecking the body (the coeffect for $\this$ is  in square brackets).
 In a real language, such coeffects  would be  declared by some concrete syntax, as part of the type information available
to clients. 
%In this and all the following examples  we assume $\link\neq\res$. 
From such coeffects, a client knows that the result of \lstinline{mix} will be connected to both the receiver and the argument, whereas the result of \lstinline{clone} will be a reference to a \emph{fresh} portion of memory, not connected to the receiver. 
 
 Using  the sharing coeffects, we can discriminate \lstinline{a2.mix(a1).clone()} and  \lstinline{a1.mix(a2).clone().mix(a2)}, as desired. Indeed, for the first \lstinline{mix} call, the judgment $\IsWFExp{\VarTypeCoeffect{\text{\lstinline{a1}}}{\Att}{\{\res\}},\VarTypeCoeffect{\text{\lstinline{a2}}}{\Att}{\{\res\}}}{\text{\lstinline{a1.mix(a2)}}}{\text{\Att}}$ holds.
Then, the expression \lstinline{a1.mix(a2).clone()} returns a fresh result, hence $\IsWFExp{\VarTypeCoeffect{\text{\lstinline{a1}}}{\Att}{\{\link\}},\VarTypeCoeffect{\text{\lstinline{a2}}}{\Att}{\{\link\}}}{\text{\lstinline{a1.mix(a2).clone()}}}{\text{\Att}}$ holds, with $\link\neq\res$.
After the final call to \lstinline{mix}, since \lstinline{a1} and \lstinline{a2} have a link in common, the operation $\shsum$ adds to the coeffect of \lstinline{a1} the links of \lstinline{a2}, including $\res$, hence we get:
\begin{quote}
${\IsWFExp{\VarTypeCoeffect{\text{\lstinline{a1}}}{\Att}{\{ \link,\res\}}}{\text{\lstinline{\{A a2 = new A(new B(1));a1.mix(a2).clone().mix(a2)\}}}}{\Att}}$
\end{quote}
 expressing that \lstinline{a1} is linked to the result.
 
 We  now  state the properties of the coeffect system for sharing.
 
Given $\Gamma=\VarTypeCoeffect{\x_1}{\T_1}{\X_1},\ldots, \VarTypeCoeffect{\x_n}{\T_n}{\X_n}$, set $\getCoeff{\Gamma}{\x_i}=\X_i$  and  $\links{\Gamma}=\bigcup_{i\in 1..n}\X_i\cup\{\res\}$.
Finally, the \emph{restriction} of a context $\Gamma=\VarTypeCoeffect{\x_1}{\T_1}{\X_1},\ldots, \VarTypeCoeffect{\x_n}{\T_n}{\X_n}$ to the set of variables  $\Vars=\{\x_1,\ldots,\x_m\}$, with $m\leq n$, and the set of  links $\X$, denoted $\Restr{\Gamma}{\Pair{\Vars}{\X}}$, 
is the context ${\VarTypeCoeffect{\x_1}{\T_1}{\Y_1},\ldots, \VarTypeCoeffect{\x_m}{\T_m}{\Y_m}}$ where, for each $i\in 1..m$, $\Y_i=\X_i\cap\X$.
In the following, $\Restr{\Gamma}{\Delta}$ abbreviates $\Restr{\Gamma}{\Pair{\dom{\Delta}}{\links{\Delta}}}$.  

Recall that $\SharingRel{}{\mem}{}$ denotes the sharing relation in memory $\mem$ (\refToDefinition{sharing-rel}). 
The following result shows that the typing of the memory precisely captures the sharing relation. 

\begin{lemma}\label{lem:mem-ctx-sharing}
If $\IsWFMem{\Gamma}{\mem}$, then $\SharingRel{\x}{\mem}{\y}$ if and only if $\getCoeff{\Gamma}{\x}=\getCoeff{\Gamma}{\y}$.
\end{lemma}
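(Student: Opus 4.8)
The plan is to reduce the statement to a purely combinatorial fact about the transitive closure $\closure{\_}$ and then to match that structure with the inductive definition of the sharing relation (\refToDefinition{sharing-rel}).

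First I would invert rule \refToRule{t-mem} on the hypothesis $\IsWFMem{\Gamma}{\mem}$. Writing $\dom{\mem}=\{\x_1,\ldots,\x_n\}$, this yields $\Gamma=\Gamma_{\!\mem}\shsum(\{\link_1\}\ctxmul\Gamma_1)\shsum\cdots\shsum(\{\link_n\}\ctxmul\Gamma_n)$, where the $\link_i$ are pairwise distinct fresh links, $\Gamma_{\!\mem}$ assigns the home link $\{\link_i\}$ to each $\x_i$, and $\IsWFExp{\Gamma_i}{\mem(\x_i)}{\C_i}$. Next I would analyse each $\Gamma_i$ by inverting rule \refToRule{t-obj} and the rules for values: since every field value of $\mem(\x_i)=\Object{\C_i}{\val_1,\ldots,\val_k}$ is either a reference (which, having class type, is typed by \refToRule{t-var} and contributes coeffect $\{\res\}$) or a constant (contributing the empty coeffect), I obtain that $\res\in\getCoeff{\Gamma_i}{\x}$ if and only if $\x$ occurs as a field value of $\mem(\x_i)$. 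Applying $\{\link_i\}\hatshmul(\cdot)$ then replaces $\res$ by $\link_i$, so in the pointwise union underlying $\Gamma$ a reference $\x$ carries link $\link_i$ exactly when $i$ is its own index or $\x$ is a field value of $\mem(\x_i)$; in particular no $\res$ survives. Because $\closure{\_}$ is an idempotent homomorphism (\refToProposition{module-fix}) and both $\shsum$ and $\ctxmul$ are defined as its application to a pointwise operation, all the nested closures collapse into a single outer one, giving $\Gamma=\closure{\cctx_0}$, where $\cctx_0$ is the \emph{raw} assignment just described.

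I would then read $\cctx_0$ as a bipartite graph $G$ whose nodes are the references $\x_1,\ldots,\x_n$ and the fresh links $\link_1,\ldots,\link_n$, with an edge between $\x$ and $\link_i$ whenever $\link_i\in\getCoeff{\cctx_0}{\x}$. Freshness guarantees that the only edges are the home edges $\x_i\!-\!\link_i$ and the field edges $\x\!-\!\link_i$ for $\x$ a field value of $\mem(\x_i)$, so no two references share a link by accident. Note that every reference carries at least its home link, hence all coeffects of $\Gamma$ are nonempty. Using the paper's observation that in a closed context two variables have either equal or disjoint coeffects, together with the fact that $\closure{\_}$ saturates each coeffect to the full set of links in the connected component of $G$ through which they are reachable, I obtain the key chain of equivalences: $\getCoeff{\Gamma}{\x}=\getCoeff{\Gamma}{\y}$ iff $\getCoeff{\Gamma}{\x}\cap\getCoeff{\Gamma}{\y}\neq\emptyset$ iff $\x$ and $\y$ lie in the same connected component of $G$.

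It remains to identify connectivity in $G$ with the sharing relation. Each link $\link_i$ is adjacent exactly to the references in $\{\x_i\}\cup\{\val\mid\val\text{ a field value of }\mem(\x_i)\}$, and by \refToDefinition{sharing-rel} all of these are pairwise related by $\sharingRelSymbol{\mem}$ (the receiver $\x_i$ shares with each of its fields, and two fields share through $\x_i$); conversely every generating pair of $\sharingRelSymbol{\mem}$ is witnessed by a common link. Hence a path in $G$ between $\x$ and $\y$ induces a chain of sharing steps and vice versa, so connectivity in $G$ coincides with $\x\sharingRelSymbol{\mem}\y$. Chaining this with the previous equivalences yields the claim. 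I expect the main obstacle to be the bookkeeping of the first paragraph: proving that the composite context produced by \refToRule{t-mem}, with its interleaved closures inside every $\shsum$ and $\ctxmul$, is exactly the single closure $\closure{\cctx_0}$ of the raw assignment, and that no spurious links are created. This is where the module laws and the idempotent-homomorphism property of $\closure{\_}$ must be used carefully; once $\Gamma=\closure{\cctx_0}$ is in hand, the graph-theoretic argument is routine.
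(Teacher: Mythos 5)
Your proof is correct, but there is in fact nothing in the paper to compare it against: \cref{lem:mem-ctx-sharing} is stated in \refToSection{sharing} without argument, and the appendix contains only the machinery for \refToTheorem{subj-red}, \refToTheorem{subj-red-extended}, and \refToProposition{module-fix}. So your attempt supplies a missing proof rather than an alternative one, and it does so with exactly the ingredients the paper makes available: inversion of \refToRule{t-mem}, \refToRule{t-obj}, \refToRule{t-var} and \refToRule{t-const} shows $\Gamma=\closure{\cctx_0}$, where $\cctx_0$ assigns $\link_i$ to $\x$ precisely when $\x=\x_i$ or $\x$ is a field value of $\mem(\x_i)$ (this formalizes the worked example following rule \refToRule{t-mem}, including the observation that no $\res$ survives); \cref{lem:closed-ctx} gives ``equal or disjoint''; your remark that every reference keeps its home link is exactly what makes ``equal iff intersecting'' legitimate; and identifying closure classes with connected components of the bipartite reference--link graph, hence with $\sharingRelSymbol{\mem}$-classes, is the intended content of the paper's footnote on bipartite representations.

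One technical point should be made explicit, and you rightly single it out as the delicate one: collapsing the nested closures is \emph{not} a consequence of \refToProposition{module-fix} alone. The lax-homomorphism property together with idempotence only yields the inequality $\closure{(\closure{\alpha}\,\hat{\cup}\,\closure{\beta})}\shord\closure{(\alpha\,\hat{\cup}\,\beta)}$; the converse inequality needs \emph{extensivity}, $\alpha\shord\closure{\alpha}$, which does hold for this particular operator by its very definition (each $\closure{\X}_i$ contains $\X_i$) but is not an axiom of the abstract module-of-fixpoints setting. For the products $\{\link_i\}\ctxmul\Gamma_i$ you can sidestep the issue entirely by noting, as you do, that $\{\link_i\}\hatshmul\Gamma_i$ is already closed, since all of its nonempty coeffects equal $\{\link_i\}$. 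With extensivity stated once, the bookkeeping in your first paragraph goes through, and the graph-theoretic identification of components with sharing classes is sound.
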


Subject reduction states that not only type but also sharing  is preserved. 
More precisely, a reduction step may introduce new variables and new links, 
but the sharing between previous variables must be preserved,  as expressed by the following theorem.

\begin{theorem}[Subject reduction]\label{thm:subj-red}
If $\IsWFConf{\Gamma}{\e}{\mem}{\T}$ and $\reduce{\Pair{\e}{\mem}}{\Pair{\e'}{\mem'}}$, then 
$\IsWFConf{\Delta}{\e'}{\mem'}{\T}$, for some $\Delta$ such that 
$\Restr{(\Gamma\shsum\Delta)}{\Gamma}= \Gamma$. 
\end{theorem}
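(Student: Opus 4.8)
The plan is to proceed by cases on the reduction rule applied in $\reduce{\Pair{\e}{\mem}}{\Pair{\e'}{\mem'}}$, treating the contextual rule \refToRule{ctx} by induction on the reduction derivation. Before the case analysis I would set up the standard auxiliary machinery: (i) inversion (generation) lemmas reading off, from a derivation of $\IsWFExp{\Gamma}{\e}{\T}$, the shapes of the subderivations and how $\Gamma$ is assembled from the subcontexts by the module operations $\shsum$ and $\ctxmul$; (ii) a \emph{substitution lemma} stating that if $\IsWFExp{\Gamma', \VarTypeCoeffect{\x}{\T}{\X}}{\e'}{\T'}$ and $\IsWFExp{\Gamma}{\val}{\T}$, then $\IsWFExp{(\X\ctxmul\Gamma)\shsum\Gamma'}{\Subst{\e'}{\val}{\x}}{\T'}$, matching the coefficient used in \refToRule{t-block} and \refToRule{t-invk}; and (iii) elementary algebraic facts about the $\SharingScalar$-module of \refToDefinition{sharing-coeff}, in particular how restriction $\Restr{\cdot}{\cdot}$ interacts with $\shsum$, $\ctxmul$, and the closure $\closure{\cdot}$, together with the observation that $\Gamma\shord\Restr{(\Gamma\shsum\Delta)}{\Gamma}$ always holds (since $\Gamma(\x)\subseteq\links{\Gamma}$ and summing only adds links), so only the reverse inclusion is ever at issue.

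For the substitution-driven rules \refToRule{block} and \refToRule{invk} I would invert the typing of the redex, apply the substitution lemma (iteratively, for $\this$ and each argument, in the \refToRule{invk} case, using the coherence condition relating $\mtype{\C}{\m}$ to the body typing and renaming the fresh links $\link_0,\ldots,\link_n$ to match), and take the resulting context as $\Delta$; since neither rule changes the memory, $\IsWFMem{\cdot}{\mem}$ is reused and $\Delta$ differs from $\Gamma$ only by fresh links, so the restriction condition reduces to the closure/restriction bookkeeping discussed below. For \refToRule{field-access}, where $\FieldAccess{\loc}{\f_i}$ rewrites to $\val_i$ with $\mem(\loc)=\ConstrCall{\C}{\val_1,\dots,\val_n}$, the point is that \refToRule{t-mem} already puts $\loc$ and $\val_i$ in sharing, so by \refToLemma{mem-ctx-sharing} $\getCoeff{\Gamma}{\loc}=\getCoeff{\Gamma}{\val_i}$; replacing the field access (whose coeffect equals that of the receiver $\loc$ by \refToRule{t-field-access}) by $\val_i$ therefore leaves the equivalence classes unchanged. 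The genuinely memory-changing cases are \refToRule{field-assign} and \refToRule{new}: in \refToRule{field-assign} the arc between $\loc$ and $\val$ that the expression $\FieldAssign{\loc}{\f_i}{\val}$ records in $\Gamma$ (via the $\shsum$ of \refToRule{t-field-assign}) becomes an actual arc in $\amem=\UpdateMem{\mem}{\loc}{i}{\val}$, so the updated memory typing exactly absorbs the sharing the expression had predicted; in \refToRule{new} a fresh variable $\loc$ is added to $\Delta$ carrying the coeffect formerly attached to the result of $\ConstrCall{\C}{\vs}$, and freshness of $\loc$ together with \refToRule{t-mem} yields the new memory typing. In both cases $\Delta$ introduces only fresh variables and links, or reuses old ones with unchanged mutual connection, whence $\Restr{(\Gamma\shsum\Delta)}{\Gamma}=\Gamma$.

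The contextual rule \refToRule{ctx} is handled by a \emph{replacement lemma}: a derivation of $\IsWFExp{\Gamma}{\Ctx{\e}}{\T}$ factors through a derivation $\IsWFExp{\Gamma_0}{\e}{\T_0}$ of the hole's filler, with $\Gamma$ obtained from $\Gamma_0$ by a \emph{framing} operation $F_\ctx$ built out of $\shsum$ and $\ctxmul$ with coefficients determined by $\ctx$; replacing $\e$ by a retyped $\e'$ reuses the same frame, giving $\IsWFExp{F_\ctx(\Gamma_0')}{\Ctx{\e'}}{\T}$. Since $F_\ctx$ is a monotone combination of the module operations, and the induction hypothesis supplies $\Delta_0$ with $\Restr{(\Gamma_0\shsum\Delta_0)}{\Gamma_0}=\Gamma_0$ for the subterm, I would show that framing commutes suitably with restriction, so that the invariant lifts to $\Restr{(\Gamma\shsum\Delta)}{\Gamma}=\Gamma$ for the whole term with $\Delta=F_\ctx(\Delta_0)$.

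I expect the main obstacle to be controlling the \emph{transitive closure} against the \emph{restriction} to old links. Summing $\Gamma\shsum\Delta=\closure{(\Gamma\mathbin{\hat{\cup}}\Delta)}$ can connect two old variables through a chain that passes via new links and new variables contributed by $\Delta$; the crux is to show that intersecting the resulting coeffects with $\links{\Gamma}$ severs exactly these spurious chains and recovers $\Gamma$, i.e.\ that every old link reachable from an old variable in the closure was already present in $\Gamma$. Establishing this requires that each $\Delta$ constructed above shares old links with $\Gamma$ only in the pattern already recorded by $\Gamma$, new sharing being carried solely by fresh links; this is precisely the formal content of ``sharing between previous variables is preserved''. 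This bookkeeping, rather than any single typing case, is where the real work lies.
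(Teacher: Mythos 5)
Your proposal matches the paper's own proof in all essentials: induction on the reduction rules, inversion lemmas, a substitution lemma with exactly the coefficient pattern of \refToRule{t-block}/\refToRule{t-invk}, a factorization of evaluation-context typings into a frame applied to the filler's context (the paper's inversion-for-contexts plus context-substitution lemmas), and the identification of the closure-versus-restriction bookkeeping (new sharing carried only by fresh links) as the real crux, which the paper discharges via its lemmas on restriction against $\shsum$ and $\ctxmul$. The only minor deviations are that the paper's substitution lemma yields a context $\shord$-below the linear combination (after fresh renaming of links) rather than equal to it, and that the paper must also thread the non-syntax-directed rule \refToRule{t-prim} through every inversion via an auxiliary promotion relation, but neither changes the structure of the argument.
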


\begin{corollary}
If $\IsWFConf{\Gamma}{\e}{\mem}{\T}$ and $\reducestar{\Pair{\e}{\mem}}{\Pair{\e'}{\mem'}}$, then 
$\IsWFConf{\Delta}{\e'}{\mem'}{\T}$ for some $\Delta$ such that 
$\Restr{(\Gamma\shsum\Delta)}{\Gamma}= \Gamma$. 
\end{corollary}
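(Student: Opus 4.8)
The plan is to prove the Corollary by induction on the number of steps in $\reducestar{\Pair{\e}{\mem}}{\Pair{\e'}{\mem'}}$, feeding each single step to \refToTheorem{subj-red}. For the base case of zero steps we have $\e'=\e$ and $\mem'=\mem$, and we take $\Delta=\Gamma$. Since $\Gamma$ is closed, pointwise union gives $\Gamma\mathbin{\hat{\cup}}\Gamma=\Gamma$, so $\Gamma\shsum\Gamma=\closure{\Gamma}=\Gamma$; moreover restricting $\Gamma$ to $\Pair{\dom{\Gamma}}{\links{\Gamma}}$ leaves it unchanged because $\getCoeff{\Gamma}{\x}\subseteq\links{\Gamma}$ for every $\x\in\dom{\Gamma}$. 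Hence $\Restr{(\Gamma\shsum\Gamma)}{\Gamma}=\Gamma$, as required.

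For the inductive step, I would split the reduction as one step $\reduce{\Pair{\e}{\mem}}{\Pair{\e_1}{\mem_1}}$ followed by $\reducestar{\Pair{\e_1}{\mem_1}}{\Pair{\e'}{\mem'}}$. Applying \refToTheorem{subj-red} to the first step yields $\Delta_1$ with $\IsWFConf{\Delta_1}{\e_1}{\mem_1}{\T}$ and $\Restr{(\Gamma\shsum\Delta_1)}{\Gamma}=\Gamma$; applying the induction hypothesis to the remaining steps yields $\Delta$ with $\IsWFConf{\Delta}{\e'}{\mem'}{\T}$ and $\Restr{(\Delta_1\shsum\Delta)}{\Delta_1}=\Delta_1$. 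Thus $\Delta$ is the witness we want, and it only remains to glue the two preservation facts into the single conclusion $\Restr{(\Gamma\shsum\Delta)}{\Gamma}=\Gamma$.

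I would isolate this gluing as a lemma asserting that the relation ``$\Gamma\sqsubseteq\Delta$ iff $\Restr{(\Gamma\shsum\Delta)}{\Gamma}=\Gamma$'' is transitive along a chain of reductions. One inclusion is free: since $\Gamma\shord\Gamma\shsum\Delta$ and restriction fixes $\Gamma$, we always have $\Gamma\shord\Restr{(\Gamma\shsum\Delta)}{\Gamma}$, so it suffices to prove the reverse inclusion, i.e.\ that adding $\Delta$ creates no extra $\Gamma$-visible link. The useful reformulation of $\Gamma\sqsubseteq\Delta_1$ is ``no merging'': if $\getCoeff{\Gamma}{\x}\neq\getCoeff{\Gamma}{\y}$ then $\x$ and $\y$ cannot come to share a link in the closed context $\Gamma\shsum\Delta_1$, because sharing a link there would force equal coeffects, hence equal intersections with $\links{\Gamma}$, contradicting preservation. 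Together with the fact that reduction only enlarges the variable and link sets (so $\dom{\Gamma}\subseteq\dom{\Delta_1}\subseteq\dom{\Delta}$ and $\links{\Gamma}\subseteq\links{\Delta_1}$), this shows that $\Gamma$'s partition is exactly the $\Pair{\dom{\Gamma}}{\links{\Gamma}}$-restriction of $\Delta_1$'s partition. Then, using monotonicity of both $\shsum$ and the restriction operator with respect to $\shord$, from $\Gamma\shsum\Delta\shord\Gamma\shsum\Delta_1\shsum\Delta$ I restrict to $\Pair{\dom{\Gamma}}{\links{\Gamma}}$ and apply $\Restr{(\Delta_1\shsum\Delta)}{\Delta_1}=\Delta_1$ (transferred to $\Gamma$ via $\links{\Gamma}\subseteq\links{\Delta_1}$) followed by $\Restr{(\Gamma\shsum\Delta_1)}{\Gamma}=\Gamma$, pinning the restriction back to $\Gamma$.

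The main obstacle is exactly this transitivity lemma, and the delicate point is the transitive closure implicit in $\shsum$: I must check that the closure taken when forming $\Delta_1\shsum\Delta$ introduces no new connection among $\Gamma$'s variables that is detectable through $\Gamma$'s own links. The ``no-merging'' reading of the preservation condition, combined with the link-inclusion $\links{\Gamma}\subseteq\links{\Delta_1}$ guaranteed by the shape of the contexts produced by \refToTheorem{subj-red}, is precisely what tames the closure; once this lemma is in place, the induction itself is routine.
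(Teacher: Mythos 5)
Your top-level plan---induction on the length of the reduction sequence, applying \refToTheorem{subj-red} to the first step and the induction hypothesis to the remaining ones---is exactly the argument the paper intends (the corollary is stated without proof, as an iteration of the theorem), and your base case is fine. The genuine gap is in the gluing step: the relation you call $\sqsubseteq$ is \emph{not} transitive, and no monotonicity argument can make it so, because \refToTheorem{subj-red} is a bare existential that does not tie the links of the witness $\Delta$ to those of $\Gamma$. Concretely, take (types irrelevant, all links distinct and different from $\res$)
\[
\Gamma=\VarTypeCoeffect{\x}{\C}{\{\link_1\}},\VarTypeCoeffect{\y}{\C}{\{\link_2\}}\qquad
\Delta_1=\VarTypeCoeffect{\x}{\C}{\{\link_3\}},\VarTypeCoeffect{\y}{\C}{\{\link_4\}}\qquad
\Delta_2=\VarTypeCoeffect{\x}{\C}{\{\link_2\}},\VarTypeCoeffect{\y}{\C}{\{\link_5\}}.
\]
Then $\Restr{(\Gamma\shsum\Delta_1)}{\Gamma}=\Gamma$ and $\Restr{(\Delta_1\shsum\Delta_2)}{\Delta_1}=\Delta_1$ both hold, yet in $\Gamma\shsum\Delta_2$ the variables $\x$ and $\y$ are merged through the reused link $\link_2$, so $\getCoeff{\Gamma\shsum\Delta_2}{\x}\cap\links{\Gamma}=\{\link_1,\link_2\}\neq\{\link_1\}$: the conclusion fails. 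The failure mode is precisely what your lemma must exclude: $\Delta_1$ ``forgets'' the $\Gamma$-link $\link_2$, after which $\Delta_2$ may reuse it to reconnect variables that $\Gamma$ keeps apart, invisibly to the hypothesis relating $\Delta_1$ and $\Delta_2$.

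Your patch---assuming $\dom{\Gamma}\subseteq\dom{\Delta_1}$ and $\links{\Gamma}\subseteq\links{\Delta_1}$---addresses this, but you justify it by ``the shape of the contexts produced by \refToTheorem{subj-red}'', which you are not entitled to: the theorem asserts only the existence of \emph{some} $\Delta$ with the restriction property, and rule \refToRule{t-mem} generates its links fresh in every derivation, so nothing in the statement forces a witness to carry $\Gamma$'s links at all. To use these inclusions you must strengthen what is being inducted on, i.e.\ restate \refToTheorem{subj-red} so that it also asserts the witness can be chosen with $\dom{\Gamma}\subseteq\dom{\Delta}$ and $\links{\Gamma}\subseteq\links{\Delta}$ (its proof does deliver such witnesses, since they are built by summing pieces of $\Gamma$ with fresh material); that is a modification of the theorem, not a corollary-level observation. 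Moreover, even granting the inclusions, the final step needs more than ``monotonicity of $\shsum$ and restriction'': the first hypothesis together with $\links{\Gamma}\subseteq\links{\Delta_1}$ still allows a $\Gamma$-link to be re-anchored in $\Delta_1$ at a variable outside $\dom{\Gamma}$, so one must argue that every chain in the closure of $\Gamma\mathbin{\hat{\cup}}\Delta_2$ connecting two distinct $\Gamma$-classes induces a chain in the closure of $\Delta_1\mathbin{\hat{\cup}}\Delta_2$ connecting the $\Delta_1$-anchors of those classes, contradicting the second hypothesis. That anchoring/path-transfer argument is the real content of the lemma; as written, your proposal asserts it rather than proves it.
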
 

Indeed, coeffects in $\Gamma\shsum\Delta$ model the combined sharing before and after the computation step, hence 
the requirement $\Restr{(\Gamma\shsum\Delta)}{\Gamma} = \Gamma$ ensures that, on variables in $\Gamma$, the sharing remains the same. 
That is, the context $\Delta$ cannot connect variables that  were  disconnected  in $\Gamma$. 

Thanks to the fact that reduction preserves (initial) sharing, we can \emph{statically detect} lent references (\refToDefinition{lent}) and capsule expressions (\refToDefinition{caps}) just looking at coeffects, as stated below.
%  
%We first show that variables in a configuration with no $\res$ link in their coeffects are lent (that is, will not be connected with the final result).

\begin{theorem}[Lent reference]\label{thm:lent} 
If $\IsWFConf{\Gamma}{\e}{\mem}{\C}$, $\x\in\dom{\Gamma}$ with $\res\not\in\getCoeff{\Gamma}{\x}$, and $\reducestar{\ExpMem{\e}{\mem}}{\ExpMem{\y}{\mem'}}$, then $\SharingRel{\x}{\mem'}{\y}$ does not hold.
\end{theorem}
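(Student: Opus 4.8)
The plan is to reduce the dynamic statement about sharing to a purely static statement about whether the result link $\res$ occurs in the coeffect of $\x$, and then to transport the hypothesis $\res\notin\getCoeff{\Gamma}{\x}$ along the reduction using the invariant supplied by subject reduction. First I would apply the corollary to \refToTheorem{subj-red} to the reduction $\reducestar{\ExpMem{\e}{\mem}}{\ExpMem{\y}{\mem'}}$, obtaining a context $\Delta$ with $\IsWFConf{\Delta}{\y}{\mem'}{\C}$ and $\Restr{(\Gamma\shsum\Delta)}{\Gamma}=\Gamma$. Note also that $\x$, being a reference in $\mem$, persists in $\mem'$, so $\x\in\dom{\Delta}$ and $\getCoeff{\Delta}{\x}$ is meaningful. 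The rest of the argument then takes place entirely in the terminal configuration $\ExpMem{\y}{\mem'}$, whose typing I would analyze by inversion.

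Next I would decompose the typing of the terminal configuration. By inversion on \refToRule{t-conf}, $\Delta=\Delta'\shsum\Gamma'$ for some $\Delta',\Gamma'$ with $\IsWFExp{\Delta'}{\y}{\C}$, $\IsWFMem{\Gamma'}{\mem'}$ and $\dom{\Delta'}\subseteq\dom{\Gamma'}$. Since $\y$ is a reference of class type and there is no subsumption rule in \refToFigure{typing-sharing}, its typing must be an instance of \refToRule{t-var}, so $\getCoeff{\Delta'}{\y}=\{\res\}$ and $\getCoeff{\Delta'}{\z}=\emptyset$ for every $\z\neq\y$. Moreover, by inversion on \refToRule{t-mem}, the memory context $\Gamma'$ is closed and contains no occurrence of $\res$, as already observed, since there is no final result in memory.

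The crux of the argument is the equivalence: $\x$ is in sharing with $\y$ in $\mem'$ if and only if $\res\in\getCoeff{\Delta}{\x}$. To see this, observe that in the pointwise union $\Delta'\mathbin{\hat{\cup}}\Gamma'$ the link $\res$ occurs only in the entry for $\y$, while every other link comes from $\Gamma'$, which is already closed. Hence taking the closure in $\Delta=\closure{(\Delta'\mathbin{\hat{\cup}}\Gamma')}$ neither merges nor splits the sharing classes determined by $\Gamma'$: it merely adds $\res$ to exactly those variables lying in the same class as $\y$. Therefore $\res\in\getCoeff{\Delta}{\x}$ holds precisely when $\getCoeff{\Gamma'}{\x}=\getCoeff{\Gamma'}{\y}$, which by \refToLemma{mem-ctx-sharing} applied to $\IsWFMem{\Gamma'}{\mem'}$ is exactly the condition that $\x$ and $\y$ are in sharing in $\mem'$. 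This is the step I expect to be the main obstacle, since it requires careful bookkeeping of how the fresh link $\res$ introduced on $\y$ propagates through sum and transitive closure, together with a clean matching against the equivalence classes described by \refToLemma{mem-ctx-sharing} (for the theorem only the direction \emph{sharing implies} $\res\in\getCoeff{\Delta}{\x}$ is strictly needed).

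Finally I would transport the hypothesis. Since $\res\in\links{\Gamma}$ always holds by definition of $\links{\cdot}$, the invariant $\Restr{(\Gamma\shsum\Delta)}{\Gamma}=\Gamma$ yields, on $\x\in\dom{\Gamma}$, that $\getCoeff{\Gamma\shsum\Delta}{\x}\cap\links{\Gamma}=\getCoeff{\Gamma}{\x}$, whence $\res\in\getCoeff{\Gamma\shsum\Delta}{\x}$ iff $\res\in\getCoeff{\Gamma}{\x}$; the hypothesis $\res\notin\getCoeff{\Gamma}{\x}$ thus gives $\res\notin\getCoeff{\Gamma\shsum\Delta}{\x}$. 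Because pointwise union and closure are extensive, $\getCoeff{\Delta}{\x}\subseteq\getCoeff{\Gamma\shsum\Delta}{\x}$, so $\res\notin\getCoeff{\Delta}{\x}$. By the crux equivalence, $\x$ and $\y$ are not in sharing in $\mem'$, which is exactly the claim.
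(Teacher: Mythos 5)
Your proof is correct and follows essentially the same route as the paper's: the subject-reduction corollary to obtain $\Delta$ with $\IsWFConf{\Delta}{\y}{\mem'}{\C}$ and $\Restr{(\Gamma\shsum\Delta)}{\Gamma}=\Gamma$, inversion to get $\res\in\getCoeff{\Delta}{\y}$, \refToLemma{mem-ctx-sharing} to identify sharing in $\mem'$ with equality of coeffects, and the restriction invariant to transport $\res\not\in\getCoeff{\Gamma}{\x}$ to $\res\not\in\getCoeff{\Delta}{\x}$ (the paper phrases this last step as a contradiction, you phrase it contrapositively, which is equivalent). The one point where you are more careful than the paper is the "crux": the paper applies \refToLemma{mem-ctx-sharing} directly to the configuration context $\Delta$, whereas you decompose $\Delta=\Delta'\shsum\Gamma'$, apply the lemma to the memory context $\Gamma'$ (the reading that literally matches the hypothesis $\IsWFMem{\Gamma'}{\mem'}$), and then track how $\res$ propagates through the sum and closure — a small gap in the paper's write-up that your bookkeeping fills.
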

\begin{proof}
By \cref{thm:subj-red} we have $\IsWFConf{\Delta}{\y}{\mem'}{\C}$, for some $\Delta$ such that 
 $\Restr{(\Gamma\shsum\Delta)}{\Gamma}= \Gamma$.  
By inversion, we have 
$\IsWFExp{\VarTypeCoeffect{\y}{\C}{\{\res\}}}{\y}{\C}$ with $\Delta = \Delta' \shsum \VarTypeCoeffect{\y}{\C}{\{\res\}}$, hence 
$\res\in\getCoeff{\Delta}{\y}$. 
Assume $\SharingRel{\x}{\mem'}{\y}$.
By \cref{lem:mem-ctx-sharing}, we have $\getCoeff{\Delta}{\x} = \getCoeff{\Delta}{\y}$, thus 
$\res\in\getCoeff{\Delta}{\x}$.
Since  $\Restr{(\Gamma\shsum\Delta)}{\Gamma}= \Gamma$, $\res\in\getCoeff{\Gamma\shsum\Delta}{\x}$ and $\x\in\dom{\Gamma}$, 
we also have $\res\in\getCoeff{\Gamma}{\x}$, contradicting the hypothesis. 
\end{proof}

We write $\Capsule{\Gamma}$ if, for each $\x\in\dom{\Gamma}$, $\res\not\in\getCoeff{\Gamma}{\x}$, that is, $\x$ is lent.
The theorem above immediately implies that an expression  which is  typable in such a context is a capsule.
%, that is, its result is a reference which is not in sharing with any reference initially accessibile in the expression. 

\begin{corollary}[Capsule expression]\label{cor:capsule}
If $\IsWFConf{\Gamma}{\e}{\mem}{\C}$, with $\Capsule{\Gamma}$, and $\reducestar{\ExpMem{\e}{\mem}}{\ExpMem{\y}{\mem'}}$, then, 
for all $\x\in\dom{\Gamma}$, $\SharingRel{\x}{\mem'}{\y}$ does not hold.
\end{corollary}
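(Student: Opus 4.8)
The plan is to derive this directly from \cref{thm:lent}, since the corollary is nothing more than the uniform, pointwise restatement of that theorem. First I would recall that, by definition, $\Capsule{\Gamma}$ means precisely that $\res\not\in\getCoeff{\Gamma}{\x}$ for every $\x\in\dom{\Gamma}$; in other words, the capsule hypothesis packages exactly the per-variable premise required by \cref{thm:lent}. So the whole content of the proof is to recognize that ``$\e$ is a capsule'' is the universal quantification, over all free variables, of ``$\x$ is lent''.

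Then I would fix an arbitrary $\x\in\dom{\Gamma}$ and check that the three hypotheses of \cref{thm:lent} are met for it: the typing judgment $\IsWFConf{\Gamma}{\e}{\mem}{\C}$ is assumed verbatim, the reduction $\reducestar{\ExpMem{\e}{\mem}}{\ExpMem{\y}{\mem'}}$ is the same, and the side condition $\res\not\in\getCoeff{\Gamma}{\x}$ follows from $\Capsule{\Gamma}$ as just unfolded. Applying \cref{thm:lent} to this $\x$ then yields that $\SharingRel{\x}{\mem'}{\y}$ does not hold. Since $\x$ was arbitrary in $\dom{\Gamma}$, this gives the conclusion for all such $\x$ simultaneously, which is exactly the statement of the corollary.

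There is essentially no obstacle here, and I would expect the write-up to be a single short paragraph: the only genuine step is the unfolding of the definition of $\Capsule{\Gamma}$ into its per-variable form, which is what converts the global capsule hypothesis into the individual side condition that \cref{thm:lent} consumes. All the substantive work -- relating the coeffect computed after reduction back to the initial sharing, via \cref{thm:subj-red} and \cref{lem:mem-ctx-sharing} -- has already been discharged inside the proof of \cref{thm:lent}, so nothing further needs to be re-proved.
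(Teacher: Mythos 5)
Your proposal is correct and matches the paper's proof exactly: both unfold $\Capsule{\Gamma}$ into the per-variable condition $\res\notin\getCoeff{\Gamma}{\x}$ and then apply \cref{thm:lent} to each $\x\in\dom{\Gamma}$. Nothing is missing; the paper's own proof is precisely this one-paragraph pointwise application.
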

\begin{proof}
Let $\x\in\dom{\Gamma}$.
The hypothesis $\Capsule{\Gamma}$ means that each variable in $\Gamma$ is lent that is, $\res\notin\getCoeff{\Gamma}{\x}$. Then, 
by \cref{thm:lent}, $\SharingRel{\x}{\mem'}{\y}$ does not hold.\end{proof}
\noindent Note that, in particular, \cref{cor:capsule} ensures that  no  free variable of $\e$ can access the reachable object graph of the final result $\y$.  
Notice also that assuming $\Capsule{\Gamma}$ is the same as assuming $\Capsule{\Delta}$ where $\Gamma = \Delta \shsum \Delta'$ and $\Delta$ is the context that types the expression $\e$, because no $\res$ link can occur in the context that types the memory.

% !TEX root =main.tex

\section{Case study: type modifiers for uniqueness and immutability}\label{sect:extended}

 The coeffect system in the previous section tracks sharing among variables possibly introduced by reduction. 
  In this section, we check the effectiveness of the approach to model specific language features related to sharing and mutation, taking as challenging case  study  those proposed by \citet{GianniniSZC19,GianniniRSZ19}, whose common key ideas are the following:
\begin{itemize}
\item types are decorated by \emph{modifiers} $\mut$ (default, omitted in code), $\readonly$, $\capsule$, and $\imm$ for read-only, capsule, and immutable, respectively, allowing the programmer to specify the corresponding contraints/properties for variables/parameters and method return types
\item  $\mut$ (resp. $\readonly$) expressions can be transparently \emph{promoted} to $\capsule$ (resp. $\imm$)
\item $\capsule$ expressions can be assigned to either mutable or immutable references.
\end{itemize}

 For instance, consider the following version of \refToExample{ex2} decorated with modifiers: 
\begin{example}\label{ex:ex2-decorated}\
\begin{lstlisting}
class B {int f; B clone $\meta{[\readonly^{\{\link\}}]}$() {new B(this.f)} // $\link\neq\res$
class A { B f;
  A mix $\meta{[^{\{\res\}}]}$(A$\meta{^{\{\res\}}}$a) {this.f=a.f; a} // this, a and the result linked
  A clone $\meta{[\readonly^{\{\link\}}]}$ () {new A(this.f.clone()) } // $\link\neq\res$
}
A a1=new A(new B(0));
read A mycaps = {A a2 = new A(new B(1));
  a1.mix(a2).clone()// (1) 
  // a1.mix(a2).clone().mix(a2) // (2)
}
// mycaps.f.f= 3 // (3)
a1.f.f=3 // (4)
\end{lstlisting}
 The modifier of $\this$  in \lstinline{mix} needs to be $\mut$, whereas in \lstinline{clone} it is $\readonly$ to allow invocations on arguments with any modifier. The result modifier in \lstinline{mix} is that of the parameter \lstinline{a}, chosen to be \lstinline{mut} since $\readonly$ would have made the result of the call less usable. The result modifier of \lstinline{clone} \emph{could} be $\capsule$, but even if it is $\mut$, the fact that there is no connection between the result and $\this$ is expressed by the coeffect. The difference is that with modifier $\capsule$ promotion takes place when typechecking the body of the method, whereas with modifier $\mut$ it takes place at the call site. 

As expected, an expression with type tagged $\readonly$  cannot occur as  the  left-hand side of  a field  assignment. To have the guarantee that a portion of memory is immutable, a type system should be able to detect that it cannot be modified through \emph{any} possibile reference. 
In the example, since \lstinline{mycaps} is declared $\readonly$, line (3) is ill-typed. 
However, if we replace line (1) with line (2), since in this case \lstinline{mycaps} and \lstinline{a1} share their \lstinline{f} field, the same effect of line (3) can be obtained by line (4). 
As previously illustrated, the sharing coeffect system 
detects that only in the version with line (1) does \lstinline{mycaps}  denote a capsule. 
Correspondingly, in the enhanced type system in this section, \lstinline{mycaps} can be correctly declared $\capsule$, hence $\imm$ as well, whereas this is not the case with line (2). 
By declaring \lstinline{mycaps} of an $\imm$ type, the programmer has the guarantee that the portion of memory denoted by \lstinline{mycaps} cannot be modified through another reference.  
That is, the immutability property is detected as a conjunction of the read-only restriction and the capsule property. 

Assume now that \lstinline{mycaps} is declared $\capsule$ rather than $\readonly$. Then, line (3) is well-typed. However, if \lstinline{mycaps} could be assigned to  both   a mutable and an immutable reference, e.g:
\begin{lstlisting}
A$^\imm$ imm = mycaps;
mycaps.f.f=3
\end{lstlisting}
the immutability guarantee for \lstinline{imm} would be broken.  For this reason, capsules can only be used linearly in the following type system.
\end{example}

We formalize the features illustrated above by a type-and-coeffect system built on top of that of the previous section, whose key  advantage  is that  detection of $\capsule$ and $\imm$ types is \emph{straightforward} from the coeffects, through a simple \emph{promotion}\footnote{ This terminology is chosen to  emphasize  the analogy with promotion in linear logic.} rule, since they exactly express the desired properties.  

Type-and-coeffect contexts are, as before, of shape $\VarTypeCoeffect{\x_1}{\T_1}{\X_1}, \ldots, \VarTypeCoeffect{\x_n}{\T_n}{\X_n}$, where types are either primitive types or of shape $\TypeMod{\C}{\modif}$, with $\modif$ modifier. 
We assume that fields can be declared either $\imm$ or $\mut$, whereas the modifiers $\capsule$ and $\readonly$ are only used for local variables. 
Besides those, which are written by the programmer in source code, modifiers include a numerable set of \emph{seals} $\seal$ which are only internally used by the type system, as will be explained  later. 

\begin{figure}

\framebox{
\begin{minipage}{0.4\textwidth}
\begin{tikzpicture}
\node (caps) at (10ex,7ex) {$\capsule$};
\node (mut) at (0ex,12ex) {$\mut$};
\node (imm) at (20ex,12ex) {$\imm$};
\node (read) at (10ex,17ex) {$\readonly$};
\node (seal) at (5ex,0ex) {$\seal$};
\node (sealP) at (15ex,0ex) {$\seal'$};
\draw [arrows={-latex}] (caps) -- (imm);
\draw [arrows={-latex}] (caps) -- (mut);
\draw [arrows={-latex}] (mut) -- (read);
\draw [arrows={-latex}] (imm) -- (read);
\draw [arrows={-latex}] (seal) -- (caps) ;
\draw [arrows={-latex}] (sealP) -- (caps);
\draw[->,double] (mut) to[bend right=20] (caps);
\draw[->,double](read) to[bend right=40] (imm); 
%%Versione frecce arrotondate
%\draw [arrows={-latex},bend right] (seal) edge (sealP);
%\draw [arrows={-latex},bend right] (sealP) edge (seal);
%%Versione frecce dritte
\node (a1) at (6ex,-0.3ex) {};
\node (a2) at (14ex,-0.3ex) {};
\node (b1) at (6ex,0.5ex) {};
\node (b2) at (14ex,0.5ex) {};
\draw [arrows={-latex}] (a1) -- (a2);
\draw [arrows={-latex}] (b2) -- (b1) ;
\end{tikzpicture}
\end{minipage}
\begin{minipage}{0.3\textwidth}
$
\begin{array}{|l}
\mbox{Arrows:}
\\
\begin{array}{l l}
\begin{tikzpicture}
\draw [arrows={-latex}] (0,0) -- (1,0);
\end{tikzpicture}
&\mbox{Subtype}
\\
\begin{tikzpicture}
\draw[->,double] (0,0) to (1,0);
\end{tikzpicture}
&\mbox{{Promotion}}
\end{array}
\end{array}
$
\end{minipage}
}
\caption{Type modifiers and their relationships}
\label{fig:hierarchy}

\end{figure}
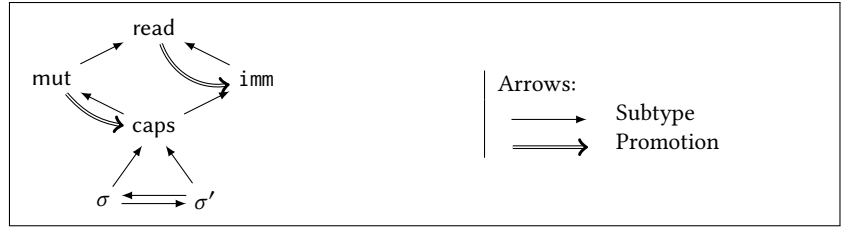

 Operations on coeffect contexts are lifted to type-and-coeffect contexts as in the previous case.  However, there are some novelties:
 \begin{itemize}
\item  The preorder must take into account subtyping as well, defined by
 \begin{quote}
$\T\leq\T'$ if either $\T=\T'$ primitive type, or $\T=\TypeMod{\C}{\modif}$, $\T'=\TypeMod{\C}{\modif'}$, and $\modif\leq\modif'$  induced by 
\mbox{$\seal\leq\seal'$, $\seal\leq\capsule$, $\capsule\leq\mut$, $\capsule\leq\imm$, $\mut\leq\readonly$, $\imm\leq\readonly$, see \refToFigure{hierarchy}.}
\end{quote}
\item In the sum of two contexts, denoted $\Gamma\ctxlinsum\Delta$, variables of a $\capsule$ or $\seal$ type cannot occur in both; that is, they are handled \emph{linearly}.
\end{itemize}

Combination of modifiers, denoted $\ModifComb{\modif}{\modif'}$, is the following operation:
\begin{quote}
$\ModifComb{\modif}{\modif'}=\modif$ if $\modif\leq \imm$\BigSpace
$\ModifComb{\mut}{\modif}=\modif$\BigSpace
$
\ModifComb{\readonly}{\modif}=
\begin{cases}
\imm \BigSpace \text{if } \modif=\imm\ \mbox{or}\ \modif=\capsule\\
\text{undefined}\BigSpace\text{if}\ \modif=\seal\\
\readonly \BigSpace \text{if } \mut\leq\modif
\end{cases}
$
\end{quote}

 Combination of modifiers is used in \refToRule{t-field-access}  to propagate the modifier of the receiver, and in \refToRule{t-prom} to promote the type and \emph{seal} mutable variables connected to the result, see below.

The typing rules are given in \refToFigure{typing-modifiers}. We only comment  on  the novelties with respect to \refToSection{sharing}. 
\begin{figure}
\begin{grammatica}
\produzione{\T}{\TypeMod{\C}{\modif}\mid\PT\mid\ldots}{type}\\
\produzione{\modif}{\mut\mid\readonly\mid\imm\mid\capsule\mid\seal}{modifier}\\
\end{grammatica}

\hrule

\begin{small}
\begin{math}
\begin{array}{l}
\\
\NamedRule{t-sub}{\IsWFExp{\Gamma}{\e}{\T'}}{\IsWFExp{\Gamma}{\e}{\T}}{%\Gamma\ctxord\Gamma'\\
\T'\leq\T}
\BigSpace

\NamedRule{t-var}{}{ \IsWFExp{{ \emptyset\ctxmul \Gamma \ctxlinsum \VarTypeCoeffect{\x}{\T}{\{\res\}}}}{\x}{\T}}{
}
\BigSpace
\NamedRule{t-const}{}{\IsWFExp{\emptyset\shmul \Gamma}{\const}{\PT_\const}}{}
\BigSpace\\[4ex] 
\NamedRule{t-field-access}{\IsWFExp{\Gamma}{\e}{\TypeMod{\C}{\modif}}}{\IsWFExp{\Gamma}{\FieldAccess{\e}{\f_i}}{\Modif{\T_i}}{\modif}}
{
\fields{\C}=\Field{\T_1}{\f_1} \ldots \Field{\T_n}{\f_n} \\
i\in 1..n\\
}
\\[4ex]
\NamedRule{t-field-assign}{\IsWFExp{\Gamma}{\e}{\TypeMod{\C}{\mut}} \Space \IsWFExp{\Delta}{\e'}{\T_i}}{\IsWFExp{\Gamma\ctxlinsum \Delta}{\FieldAssign{\e}{\f_i}{\e'}
}{\T_i}}
{
\fields{\C}=\Field{\T_1}{\f_1} \ldots \Field{\T_n}{\f_n}\\
  i\in 1..n\\ 
}
\\[4ex]
\NamedRule{t-new}{\IsWFExp{\Gamma_i}{\e_i}{\T_i}\Space \forall i\in 1..n}{\IsWFExp{\Gamma_1 \ctxlinsum\ldots \ctxlinsum\Gamma_n}{\ConstrCallTuple{\C}{\e}{n}}{\TypeMod{\C}{\mut}}}
{\fields{\C}=\Field{\T_1}{\f_1} \ldots \Field{\T_n}{\f_n}}
\\[4ex]

\NamedRule{t-invk}{\IsWFExp{\Gamma_0}{\e_0}{\TypeMod{\C}{\modif}} \BigSpace \IsWFExp{\Gamma_i}{\e_i}{\T_i}\Space \forall i\in 1..n}{\IsWFExp{(\X_0\cup\{\link_0\} \ctxmul\Gamma_0) \ctxlinsum \ldots \ctxlinsum (\X_n\cup\{\link_n\} \ctxmul\Gamma_n)}{\MethCallTuple{\e_0}{\m}{\e}{n}}{\T}}{
\mtype{\C}{\m}\eqfresh\funType{\coeffectType{\modif}{\X_0},\coeffectType{\T_1}{\X_1} \ldots \coeffectType{\T_n}{\X_n}}{\T}\\
\link_0,\ldots,\link_n\ \mbox{fresh}
}

\\[4ex]
\NamedRule{t-block}{
\IsWFExp{\Gamma}{\e}{\T} \BigSpace
\IsWFExp{\Gamma', \VarTypeCoeffect{\x}{\T}{\X}}{\e'}{\T'}
}
{
\IsWFExp{(\X \cup\{ \link \}) \ctxmul\Gamma \ctxlinsum \Gamma'}{\Block{\T}{\x}{\e}{\e'}}{\T'}
}
{\link \ \text{fresh}}
\\[3ex]

\NamedRule{t-imm}{\IsWFExp{\Gamma}{\e}{\T}}{\IsWFExp{\{\link\}\ctxmul\Gamma}{\e}{\T}}
{
\link \ \text{fresh}\\
\T=\PT\ \mbox{or}\ \T=\TypeMod{\C}{\imm}
}

\BigSpace

\NamedRule{t-prom}{\IsWFExp{\Gamma}{\e}{\TypeMod{\C}{\modif}}}{\IsWFExp{\Sealed{\Gamma}}{\e}{\TypeMod{\C}{\ModifComb{\modif}{\capsule}}}}
{\mut\leq\modif\\
\seal\ \mbox{fresh}}

\\[3ex]
\end{array}
\end{math}

\hrule

\begin{math}
\begin{array}{l}
\\
\NamedRule{t-conf}{\IsWFExp{\Delta}{\e}{\T} \BigSpace \IsWFMem{\Gamma}{\mem}}
{\IsWFConf{\Delta\ctxsum\Gamma}{\e}{\mem}{\T}} 
{  \dom{\Delta}\subseteq\dom{\Gamma}
}
\\[4ex]
\NamedRule{ t-ref}{}{\IsWFInMem{\VarTypeCoeffect{\x}{\TypeMod{\C}{\modif}}{\{\res\}}}{\x}{\TypeMod{\C}{\modif}}}{\modif=\mut\ \mbox{or}\ \modif=\seal}
\BigSpace
\NamedRule{ t-imm-ref}{}{\IsWFInMem{\VarTypeCoeffect{\x}{\TypeMod{\C}{\imm}}{\{\link\}}}{\x}{\TypeMod{\C}{\imm}}}{\link\text{ fresh}} 
\\[4ex]
\NamedRule{t-mem-const}{}{\IsWFInMem{\EmptyCtx}{\const}{\PT_\const}}{}
\BigSpace
\NamedRule{ t-obj}{\IsWFInMem{\Gamma_i}{\val_i}{\Modif{\T_i}{\modif}}\Space \forall i\in 1..n}{\IsWFInMem{\Gamma_1\shsum\cdots\shsum\Gamma_n}{\Object{\C}{\val_1,\ldots,\val_n}}{\TypeMod{\C}{\modif}}}{\fields{\C}=\Field{\T_1}{\f_1} \ldots \Field{\T_n}{\f_n}}
\\[4ex]
\NamedRule{t-mem}{\IsWFInMem{\Gamma_i}{\mem(\x_i)}{\TypeMod{\C_i}{\modif_i}\Space\forall i\in 1...n}
}
{\IsWFMem{\Gamma_{\!\mem}\shsum\Gamma}{\mem}} 
{\Gamma_{\!\mem}=\VarTypeCoeffect{\x_1}{\TypeMod{\C_1}{\modif_1}}{\{\link_1\}},\ldots,\VarTypeCoeffect{\x_n}{\TypeMod{\C_n}{\modif_n}}{\{\link_n\}}\\
\dom{\Gamma_{\!\mem}}=\dom{\mem}\\
\Gamma=(\{\link_1\} \ctxmul\Gamma_1) \shsum \ldots \shsum (\{\link_n\}\ctxmul\Gamma_n)\\
\link_1,\ldots,\link_n\ \mbox{fresh}}
\end{array}
\end{math}
\end{small}
\caption{Adding modifiers and immutability}\label{fig:typing-modifiers}
\end{figure}

 Rule \refToRule{t-sub} uses the subtyping relation 
 defined above. For instance, an expression of type $\TypeMod{\C}{\capsule}$ has the types $\TypeMod{\C}{\mut}$ and $\TypeMod{\C}{\imm}$ as well.
 In rule \refToRule{t-field-access}, the notation $\Modif{\T}{\modif}$ denotes $\TypeMod{\C}{\ModifComb{\modif'}{\modif}}$ if $\T=\TypeMod{\C}{\modif'}$, and $\T$ otherwise, that is, if $\T$ is a primitive type. For instance, mutable fields referred  to  through an $\imm$ reference are $\imm$ as well. In other words, modifiers are \emph{deep}. 

In rule \refToRule{t-field-assign}, only a $\mut$ expression can occur as  the  left-hand side of a field assignment. 
In rule \refToRule{t-new}, a constructor invocation is $\mut$, hence $\mut$ is the default modifier of expressions of reference types. 
Note that the $\readonly$ modifier can only be introduced by variable/method declaration. The $\capsule$ and $\imm$ modifiers, on the other hand, in addition to variable/method declaration, can be introduced by the \emph{promotion} rule \refToRule{t-prom}.

 As in the previous type system, the auxiliary function $\aux{mtype}$ returns  an enriched method type where the parameter types are decorated with coeffects, including the implicit parameter $\this$.  The condition that method bodies should be well-typed with respect to method types is exactly as in the previous type system, with  only the difference that types have modifiers. 

Rule \refToRule{t-imm} generalizes rule \refToRule{t-prim} of the previous type system,  allowing the links with the result to be removed,   to immutable types. For instance, assuming  the following variant of \refToExample{ex1}  (recall that the default modifier $\mut$ can be omitted):
\begin{lstlisting}
class B {int f;}
class C {imm B f1; B f2;}
\end{lstlisting}
the following derivable judgment 
\begin{quote}
$\IsWFExp{\VarTypeCoeffect{\zttOne}{\TypeMod{\Btt}{\imm}}{\emptyset},\VarTypeCoeffect{\zttTwo}{\Btt}{\{ \link\}}}{\FieldAccess{\ConstrCall{\Ctt}{\zttOne,\zttTwo}}{\fttOne}}{\TypeMod{\Btt}{\imm}}$, with $\link\neq\res$
\end{quote}
shows that there is no longer a link between the result and $\zttOne$.

The new rule \refToRule{ t-prom} plays a key role, since, as already mentioned, it detects that an expression is a capsule thanks to  its  coeffects, and \emph{promotes} its type accordingly.  
 The basic idea is that a $\mut$ (resp. $\readonly$) expression can be promoted to $\capsule$ (resp. $\imm$) provided that there are no free variables connected to the result  with modifier  $\readonly$ or $\mut$. 
However, to guarantee that type preservation holds, the same promotion should be possible for runtime expressions, which may contain free variables which actually are $\mut$ references generated during reduction. To this end, the rule allows $\mut$ variables connected to the result\footnote{ Whereas $\readonly$ variables are still not allowed, as expressed by the fact that $\ModifComb{\readonly}{\seal}$ is undefined. }. Such variables become \emph{sealed} as an effect of the promotion, leading to the context
$\Sealed{\Gamma}$, obtained from $\Gamma$ by combining modifiers of variables connected to the result with $\seal$. 
%Formally, if $\Gamma = \VarTypeCoeffect{\x_1}{\T_1}{\X_1}, \ldots,  \VarTypeCoeffect{\x_n}{\T_n}{\X_n}$,    $\Sealed{\Gamma}= \VarTypeCoeffect{\x_1}{\T'_1}{\X_1}, \ldots,  \VarTypeCoeffect{\x_n}{\T'_n}{\X_n}$
%where $\T'_i=\Modif{\T_i}{\seal}$ if $\res\in\X_i$, $\T'_i=\T_i$ otherwise. The notation $\Modif{\T}{\seal}$ is the same used in rule \refToRule{t-field-access}.     
Formally, if $\Gamma = \VarTypeCoeffect{\x_1}{\T_1}{\X_1}, \ldots,  \VarTypeCoeffect{\x_n}{\T_n}{\X_n}$,    
\[
\Sealed{\Gamma}=  \VarTypeCoeffect{\x_1}{\T'_1}{\X_1}, \ldots,  \VarTypeCoeffect{\x_n}{\T'_n}{\X_n}\quad \mbox{where $\T'_i=\Modif{\T_i}{\seal}$ if $\res\in\X_i$, $\T'_i=\T_i$ otherwise}
\]
The notation $\Modif{\T}{\seal}$ is the same used in rule \refToRule{t-field-access}.     

This highlights once again the analogy with the promotion rule for the (graded) bang modality of linear logic \cite{BreuvartP15}, where, in order to introduce a modality on the  right-hand side  of a sequent, one has to  modify the  left-hand side  accordingly.\footnote{ This is just an analogy, making it precise is an interesting direction for future work.}
We detail in the following how sealed variables are internally used by the type system to guarantee type preservation. 

Rule \refToRule{t-conf} is as in \cref{fig:typing-sharing}. Note that we use  the  sum of contexts $\shsum$ from the previous type system, since the linear treatment of $\capsule$ and $\seal$ variables is only required in source code.

Rule \refToRule{t-mem} is also analogous to that in \cref{fig:typing-sharing}. However, typechecking objects is modeled by an ad-hoc judgment $\Vdash$, where references can only be $\mut$, $\imm$, or $\seal$ ($\readonly$ and $\capsule$ are source-only notions), and subsumption is not included. As a consequence, rule \refToRule{t-obj} imposes that a reference reachable from an $\imm$ reference or field should be tagged $\imm$ as well, and analogously for seals. 

 As in the previous type system, the rules in \refToFigure{typing-modifiers} lead to an algorithm which inductively computes the coeffects of an expression. The only relevant novelty is  rule  \refToRule{ t-prom}, assumed to be applied \emph{only when needed}, that is, when we typecheck the initialization expression of a local variable declared $\capsule$, or the argument of a method call where the corresponding parameter is declared $\capsule$. Rule \refToRule{t-imm} is applied, as \refToRule{t-prim} before,  only once, whenever an expression has either a primitive or an immutable type. Subsumption rule \refToRule{t-sub} only handles types, and can be replaced by a more verbose version  of the rules with subtyping conditions where needed.  In the other cases, rules are  syntax-directed,  that is, the coeffects of the expression in the consequence are computed as a linear combination of those of the subexpressions,   where the basis is  the rule for variables.

We illustrate now the use of seals to preserve types during reduction.
For instance, consider again \refToExample{ex1}: 
\begin{lstlisting}
class B {int f;}
class C {B f1; B f2;}
\end{lstlisting}
\begin{quote}
$\e_0=\Block{\TypeMod{\Btt}{}}{\ztt}{\ConstrCall{\Btt}{2}}{\Seq{\FieldAssign{\xtt}{\fttOne}{\ytt}}{\ConstrCall{\Ctt}{\ztt,\ztt}}}$\\
 $\mem_0=\{\xtt\mapsto\Object{\Ctt}{\xttOne,\xttOne},\xttOne\mapsto\Object{\Btt}{0},\ytt\mapsto\Object{\Btt}{1}\}$
\end{quote}
Expression $\e_0$ is a \emph{capsule} since its free variables (external resources) $\xtt$ and $\ytt$ will not be connected to the final result. 
Formally, set $\Delta=\VarTypeCoeffect{\xtt}{\TypeMod{\Ctt}{}}{\{\link\}},\VarTypeCoeffect{\ytt}{\Btt}{\{\link\}}$, with $\link\neq\res$, we can derive the judgment $\IsWFExp{\Delta}{\e_0}{\TypeMod{\Ctt}{\mut}}$, and then apply the promotion rule \refToRule{ t-prom}, as shown below.
\begin{quote}
$\NamedRule{t-conf}{
{\NamedRule{ t-prom}{
\IsWFExp{\Delta}{\e_0}{\TypeMod{\Ctt}{\mut}}}{\IsWFExp{\Delta}{\e_0}{\TypeMod{\Ctt}{\capsule}}}{}
}
\BigSpace \IsWFMem{\Gamma}{\mem_0}
}{\IsWFConf{\Delta\shsum\Gamma}{\e_0}{\mem_0}{\TypeMod{\Ctt}{\capsule}}}{}$\BigSpace$\Gamma=\VarTypeCoeffect{\xtt}{\TypeMod{\Ctt}{}}{\{\link_\xtt\}},\VarTypeCoeffect{\xttOne}{\TypeMod{\Ctt}{}}{\{\link_\xtt\}},\VarTypeCoeffect{\ytt}{\Btt}{\{\link_\ytt\}}$
\end{quote}
where promotion does not affect the context $\Delta$ as there are no mutable variables connected to $\res$.

The first steps of the reduction of $\ExpMem{\e_0}{\mem_0}$ are as follows:
\begin{quote}
$\begin{array}{l@{}l} 
\ExpMem{\e_0}{\mem_0} 
  &\ev \ExpMem{\e_1}{\mem_1}=\ExpMem{\Block{\TypeMod{\Btt}{}}{\ztt}{\wtt}{\Seq{\FieldAssign{\xtt}{\fttOne}{\ytt}}{\ConstrCall{\Ctt}{\ztt,\ztt}}}}{\mem\cup\{\wtt\mapsto\Object{\Btt}{2}\}}\\
  &\ev \ExpMem{\e_2}{\mem_1}=\ExpMem{\Seq{\FieldAssign{\xtt}{\fttOne}{\ytt}}{\ConstrCall{\Ctt}{\wtt,\wtt}}}{\mem\cup\{\wtt\mapsto\Object{\Btt}{2}\}} 
\end{array}$
\end{quote} 
Whereas sharing preservation, in the sense of \refToTheorem{subj-red}, clearly still holds, to preserve the $\capsule$ type of the initial expression the \refToRule{ t-prom} promotion rule should be applicable to $\e_1$ and $\e_2$ as well.  However, in  the  next steps $\wtt$ is a free variable connected to the result; for instance for $\e_1$, we derive:
\begin{quote}
$\IsWFExp{\Delta,\VarTypeCoeffect{\wtt}{\Btt}{\{\res\}}}{\e_1}{\TypeMod{\Ctt}{\mut}}$
\end{quote}
Intuitively, $\e_1$ is still a capsule, since $\wtt$ is a fresh reference denoting a closed object in memory.
%, so we can safely assume that:
%\begin{enumerate}
%\item $\wtt$ is not visible from other points of the program
%\item $\wtt$ is not linked in memory with references visibile from other points of the program
%\end{enumerate}
Formally, the promotion rule can still be applied, but variable $\wtt$ becomes \emph{sealed}:
\begin{quote}
$\NamedRule{t-conf}{
{\NamedRule{ t-prom}{
\IsWFExp{\Delta,\VarTypeCoeffect{\wtt}{\Btt}{\{\res\}}}{\e_1}{\TypeMod{\Ctt}{\mut}}}{\IsWFExp{\Delta,\VarTypeCoeffect{\wtt}{\TypeMod{\Btt}{\seal}}{\{\res\}}}{\e_1}{\TypeMod{\Ctt}{\capsule}}}{}
}
\BigSpace \IsWFMem{\Gamma,\VarTypeCoeffect{\wtt}{\TypeMod{\Btt}{\seal}}{\{\link_\wtt\}}}{\mem_1}
}{\IsWFConf{\Delta,\VarTypeCoeffect{\wtt}{\TypeMod{\Btt}{\seal}\shsum\Gamma}{\{\res\}}}{\e_1}{\mem_1}{\TypeMod{\Ctt}{\capsule}}}{}$
\end{quote}

Capsule guarantee is preserved since a sealed reference is handled linearly, and the typing rules for memory (judgment $\Vdash$) 
ensure that it can only be in sharing with another one with the same seal. Moreover, the relation $\seal\leq\seal'$ ensures type preservation in case a group of sealed references collapses during reduction in another one, as happens with a nested promotion.  
\bigskip

Let us denote by $\Erase{\Gamma}$ the context obtained from $\Gamma$ by erasing modifiers (hence, a context of the previous type-and-coeffect system). Subject reduction includes sharing preservation, as in the previous type system; in this case modifiers are preserved as well.  More precisely, they can decrease in the type of the expression, and increase in the type of references in the context.  We write $\Gamma \modord \Delta$ when, for all $\x\in\dom\Gamma$, we have $\getModif{\Gamma}{\x} \le \getModif{\Delta}{\x}$. 

\begin{theorem}[Subject Reduction]\label{thm:subj-red-extended}
 If $\IsWFConf{\Gamma}{\e}{\mem}{\T}$ and $\reduce{\Pair{\e}{\mem}}{\Pair{\e'}{\mem'}}$ then $\IsWFConf{\Delta}{\e'}{\mem'}{\T}$ for some $\Delta$ such that 
\begin{itemize}
\item
$\Restr{(\Gamma'\shsum\Delta')}{\Gamma'} = \Gamma'$, for $\Gamma'=\Erase{\Gamma}$ and $\Delta'=\Erase{\Delta}$; 
\item
 $\Gamma \modord \Delta$.  
\end{itemize}
\end{theorem}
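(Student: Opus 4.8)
The plan is to prove the statement by induction on the derivation of $\reduce{\Pair{\e}{\mem}}{\Pair{\e'}{\mem'}}$, proceeding by case analysis on the last applied reduction rule, and to establish both bullets simultaneously. For the sharing bullet I would exploit the fact that it is, up to erasure of modifiers, exactly the statement of \refToTheorem{subj-red}: I would first prove an \emph{erasure lemma} stating that any derivation $\IsWFExp{\Gamma}{\e}{\T}$ in the system of \refToFigure{typing-modifiers} yields a derivation $\IsWFExp{\Erase{\Gamma}}{\e}{\Erase{\T}}$ in the system of \refToFigure{typing-sharing}, using that \refToRule{t-prom}, \refToRule{t-sub}, and \refToRule{t-imm} all leave coeffects unchanged (they act only on modifiers, and $\Sealed{\Gamma}$ has the same coeffects as $\Gamma$) and that $\ctxlinsum$ and $\shsum$ compute the same coeffect context. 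Since the witness $\Delta$ constructed here must erase to the one used in the sharing proof, it is cleanest to carry both invariants through a single induction rather than to invoke the previous theorem as a black box.

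The workhorse is a \emph{substitution lemma}: if $\IsWFExp{\Gamma, \VarTypeCoeffect{\x}{\T}{\X}}{\e}{\T'}$ and $\val$ is typable as a value with context $\Gamma_\val$ and a compatible type, then $\Subst{\e}{\val}{\x}$ is typable in a context obtained from $\Gamma \shsum (\X \ctxmul \Gamma_\val)$, with the modifier of $\val$ combined via modifier combination against that recorded for $\x$. Proving this requires the module laws of \refToDefinition{module} for $\shsum$ and $\ctxmul$ (associativity, distributivity, behaviour of closure) to realign the coeffects, together with the monotonicity of modifier combination and of the subtyping order. I would also need standard inversion lemmas, in particular one for \refToRule{t-prom} (a $\capsule$/$\imm$ judgment decomposes as a promotion of a $\mut$/$\readonly$ judgment over a sealed context) and one for references in memory (judgment $\Vdash$, via \refToRule{t-ref}, \refToRule{t-imm-ref}, \refToRule{t-obj}). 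The congruence rule \refToRule{ctx} is then handled by the induction hypothesis plus a compositionality argument, using that $\shsum$, $\ctxmul$, and the modifier lifting are monotone with respect to $\shord$ and $\modord$, so that replacing the redex by its reduct preserves both $\Restr{(\Gamma'\shsum\Delta')}{\Gamma'}=\Gamma'$ and $\Gamma\modord\Delta$.

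For the computation rules I would argue as follows. Case \refToRule{field-access} uses inversion on \refToRule{t-obj} to recover the type, coeffect, and (deep) modifier of the selected field, matching the combination $\Modif{\T_i}{\modif}$ prescribed by \refToRule{t-field-access}. Case \refToRule{field-assign} updates $\mem$ at $\loc$; the new sharing between $\loc$ and $\val$ is precisely the coeffect already summed by \refToRule{t-field-assign}, so by \refToLemma{mem-ctx-sharing} the updated memory is typable with a context whose restriction to the old one is unchanged, while the requirement that $\loc$ be $\mut$ guarantees that no immutable or sealed object is mutated. Cases \refToRule{invk} and \refToRule{block} are direct applications of the substitution lemma, where the coeffect annotations on parameters (resp. on the bound variable) supply the scalar $\X$ by which the argument context is multiplied. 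Case \refToRule{new} extends $\mem$ with a fresh reference, and re-establishing \refToRule{t-mem} is routine for the type and coeffect components.

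The main obstacle is the interaction of promotion with the evolution of memory, i.e.\ the bookkeeping of \emph{seals}. When a configuration whose expression was promoted to $\capsule$ reduces, typically by \refToRule{new} or \refToRule{block}, fresh $\mut$ references may appear as free variables connected to the result, as with $\wtt$ in the running example. To preserve the $\capsule$ type I must re-apply \refToRule{t-prom}, which forces those references to become sealed; I then have to show that (i) the sharing coeffects guarantee these references are connected only to the result and to one another, never to an external $\mut$ or $\readonly$ variable (so that $\ModifComb{\readonly}{\seal}$, which is undefined, never blocks the promotion), (ii) the memory judgment $\Vdash$ can be re-derived assigning the same fresh seal consistently, since \refToRule{t-obj} propagates seals along reachability exactly as it does for $\imm$, and (iii) the linear discipline of $\ctxlinsum$ on $\capsule$ and $\seal$ variables is maintained, with $\seal\leq\seal'$ absorbing the collapse of two seal groups that occurs under nested promotions. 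Verifying that these three conditions always hold, and in particular that the freshly created portion of store is genuinely closed, which is the semantic content captured statically by the coeffects, is the technically delicate heart of the proof, and it is precisely where the coeffect-based formulation pays off, since closedness is read off directly from the absence of non-$\res$ links shared with external variables.
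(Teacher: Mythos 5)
Your overall architecture for the modifier bullet is essentially the paper's: the paper likewise proceeds by induction on the reduction rules, uses a substitution lemma for \refToRule{block} (and \refToRule{invk}), inversion through the non-syntax-directed rules (including promotion), a case analysis on modifiers in \refToRule{field-access}/\refToRule{field-assign}, and auxiliary lemmas that re-type the memory when sealed groups are replaced or collapse (\refToLemma{replace-mod}, \refToLemma{sameModif}); your last paragraph on seal bookkeeping matches the technically delicate part of that proof.

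However, there is a genuine gap in how you plan to discharge the sharing bullet. Your erasure lemma is false, and the specific claim supporting it --- that \refToRule{t-imm} ``leaves coeffects unchanged'' --- is incorrect. Rule \refToRule{t-imm} concludes $\IsWFExp{\{\link\}\ctxmul\Gamma}{\e}{\T}$ with $\link$ fresh, i.e.\ it multiplies the context by a fresh singleton and thereby replaces all $\res$ links; it acts on coeffects, not on modifiers. For $\T=\PT$ this is mirrored by \refToRule{t-prim} of \refToFigure{typing-sharing}, but for $\T=\TypeMod{\C}{\imm}$ there is no counterpart in the plain sharing system, whose \refToRule{t-prim} applies only to primitive types, so the erased judgment $\IsWFExp{\{\link\}\ctxmul\Erase{\Gamma}}{\e}{\C}$ is in general not derivable there. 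The same failure occurs for memory: in the extended system \refToRule{t-imm-ref} gives an $\imm$ reference occurring in an object a fresh-link coeffect, so it is \emph{not} put in sharing with the enclosing object, whereas \refToRule{t-mem}/\refToRule{t-obj} of the plain system force exactly that sharing. This discrepancy is not an accident but the point of the extended system (cf.\ \refToDefinition{sharing-rel-with-imm}, where sharing does not propagate through $\imm$ references), and it means the first bullet cannot be inherited from \refToTheorem{subj-red} by erasure: it has to be re-proved directly for the extended rules, which is what the paper does --- it re-runs the argument of \refToTheorem{subj-red} with extended versions of the auxiliary lemmas, and only then concentrates on the modifier condition. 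Your closing remark about carrying both invariants through a single induction is the right instinct, but as written that induction still leans on the false erasure lemma for its coeffect component, so the sharing half of your proof does not go through.
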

%\noindent As before, this statement easily extends to multistep reductions. 
We now focus on properties of the memory ensured by this extended type system. 
First of all, we prove two lemmas characterising how the typing of memory propagates type modifiers. 
Recall that $\reachableSymbol{\mem}$ denotes the reachability relation in memory $\mem$ (\cref{def:reach}). 

\begin{lemma}\label{lem:mod-deep} 
If $\IsWFMem{\Gamma}{\mem}$ and $\reachable{\x}{\mem}{\y}$, then 
\begin{itemize}
\item $\getModif{\Gamma}{\x} = \mut$  implies $\getModif{\Gamma}{\y} = \mut$ or $\getModif{\Gamma}{\y} = \imm$, 
\item $\getModif{\Gamma}{\x} = \seal$ implies $\getModif{\Gamma}{\y} = \seal$ or $\getModif{\Gamma}{\y} = \imm$, 
\item $\getModif{\Gamma}{\x} = \imm$  implies $\getModif{\Gamma}{\y} = \imm$. 
\end{itemize}
\end{lemma}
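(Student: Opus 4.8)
The plan is to reduce the statement to its one-step version and then propagate it along $\reachableSymbol{\mem}$. By \refToDefinition{reach}, $\reachableSymbol{\mem}$ is the reflexive and transitive closure of the field relation defined by $\reachable{\x}{\mem}{\y}$ whenever $\y$ is a (reference) field of $\mem(\x)$. So it suffices to (i) establish the three implications when $\y$ is directly a field of $\mem(\x)$, and (ii) remark that these implications, viewed as a relation on modifiers, form a preorder, so that they compose along chains.

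For the one-step case, I would invert the derivation of $\IsWFMem{\Gamma}{\mem}$. Rule \refToRule{t-mem} gives $\Gamma = \Gamma_{\!\mem}\shsum\Gamma_o$, where $\Gamma_{\!\mem}$ assigns to each reference $\x_i$ its declared type $\TypeMod{\C_i}{\modif_i}$ and a fresh singleton link, each object $\mem(\x_i)=\Object{\C_i}{\val_{i,1},\ldots,\val_{i,n_i}}$ is typed by $\IsWFInMem{\Gamma_i}{\mem(\x_i)}{\TypeMod{\C_i}{\modif_i}}$, and $\Gamma_o=(\{\link_1\}\ctxmul\Gamma_1)\shsum\cdots\shsum(\{\link_n\}\ctxmul\Gamma_n)$. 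Inverting the object judgment with \refToRule{t-obj} yields, for each field $k$, a subcontext $\Gamma_{i,k}$ with $\IsWFInMem{\Gamma_{i,k}}{\val_{i,k}}{\Modif{\T_{i,k}}{\modif_i}}$, where $\T_{i,k}$ is the declared type of the $k$-th field of $\C_i$. Suppose now $\x=\x_i$ and $\y=\val_{i,k}$ is a reference, so that $\reachable{\x}{\mem}{\y}$ holds in one step. Since $\Vdash$ has no subsumption rule, $\y$ is typed by \refToRule{t-ref} or \refToRule{t-imm-ref}, so the modifier that $\Gamma_{i,k}$ assigns to $\y$ is \emph{exactly} that of $\Modif{\T_{i,k}}{\modif_i}$. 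Scalar multiplication by $\{\link_i\}$ and the sums $\shsum$ affect only links, not types, so $\y$ keeps this type in $\Gamma_o$; and since $\shsum$ is defined only when common variables carry equal types, this type must coincide with the declared type $\TypeMod{\C_j}{\modif_j}$ that $\Gamma_{\!\mem}$ gives to $\y=\x_j$. Hence $\getModif{\Gamma}{\y}$ is the modifier of $\Modif{\T_{i,k}}{\modif_i}$, while $\getModif{\Gamma}{\x}=\modif_i$.

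It remains to compute $\Modif{\T_{i,k}}{\modif_i}$, using that fields are declared either $\imm$ or $\mut$. If $\f_k$ is declared $\imm$ then $\ModifComb{\imm}{\modif_i}=\imm$, giving $\getModif{\Gamma}{\y}=\imm$ for every $\modif_i$; if $\f_k$ is declared $\mut$ then $\ModifComb{\mut}{\modif_i}=\modif_i$, giving $\getModif{\Gamma}{\y}=\modif_i$. Instantiating $\modif_i$ as $\mut$, $\seal$, $\imm$ yields precisely the three one-step implications: from $\mut$ one reaches $\mut$ or $\imm$; from $\seal$, $\seal$ or $\imm$; from $\imm$, only $\imm$.

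Finally I would lift to the closure. Read the three clauses as a relation associating to each source modifier its admissible targets ($\mut\mapsto\{\mut,\imm\}$, $\seal\mapsto\{\seal,\imm\}$, $\imm\mapsto\{\imm\}$). This relation is reflexive, since each modifier lies in its own target set, and transitive, since composing a $\mut$- or $\seal$-step with any admissible follow-up stays inside $\{\mut,\imm\}$ resp.\ $\{\seal,\imm\}$, while $\imm$ only reaches $\imm$. Thus, given a chain $\x=z_0,\ldots,z_k=\y$ witnessing $\reachable{\x}{\mem}{\y}$, induction on $k$ (base case reflexivity, inductive step the one-step result composed with the induction hypothesis through transitivity) proves the statement. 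The main obstacle is the one-step argument: the delicate point is showing $\getModif{\Gamma}{\y}$ is forced to equal the combined modifier $\ModifComb{\modif_{\f_k}}{\modif_i}$, which crucially uses both the absence of subsumption in $\Vdash$ and the type-matching side condition built into $\shsum$ to exclude any weakening of $\y$'s modifier.
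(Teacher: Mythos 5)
Your proposal is correct and follows essentially the same route as the paper's proof: induction over the reflexive--transitive closure of reachability, inversion of \refToRule{t-mem}, \refToRule{t-obj} and \refToRule{t-ref}/\refToRule{t-imm-ref}, and a case split on the declared field modifier ($\imm$ versus $\mut$) computed through the combination operator; your separation into a one-step lemma plus composition along the preorder on modifiers is just a repackaging of the paper's direct induction. If anything, you are more explicit than the paper on one point it leaves implicit, namely why the modifier recorded in the object's subcontext survives into the global context $\Gamma$ (scalar multiplication and $\shsum$ act only on links, and $\shsum$ forces equal types on shared variables, with no subsumption available in the $\Vdash$ judgment).
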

\begin{proof}
By induction on the definition of $\reachableSymbol{\mem}$.
\begin{description} 
\item [Case $\y=\x$] The thesis trivially holds. 
\item [Case $\mem(\x)=\Object{\C}{\val_1,\ldots,\val_n}$, $\z = \val_i$ for some  $i\in 1..n$ and $\reachable{\z}{\mem}{\y}$].
Since $\IsWFMem{\Gamma}{\mem}$,  inverting rule \refToRule{t-mem},   we have 
$\IsWFInMem{\Delta}{\Object{\C}{\val_1,\ldots,\val_n}}{\TypeMod{\C}{\modif}}$ for some $\Delta$ with $\Gamma = \Gamma' \shsum \{\link\}\ctxmul\Delta$ and $\modif = \getModif{\Gamma}{\x}$. 
 Inverting rule \refToRule{t-obj} and either \refToRule{t-ref} or \refToRule{t-imm-ref},  we have that $\IsWFInMem{\VarTypeCoeffect{\z}{\Modif{\T_i}{\modif}}{\X}}{\z}{\Modif{\T_i}{\modif}}$, with 
$\Delta = \Delta', \VarTypeCoeffect{\z}{\Modif{\T_i}{\modif}}{\X}$ and $\fields{\C}=\Field{\T_1}{\f_1} \ldots \Field{\T_n}{\f_n}$. 
Since $\z\in\dom{\mem}$, $\T_i$ is of shape $\TypeMod{\C'}{\modif'}$. 
We split cases on $\modif'$. 
\begin{itemize}
\item If $\modif' = \imm$, then $\Modif{\T_i}{\modif} = \imm$, hence $\getModif{\Gamma}{\z} = \imm$ and so, by induction hypothesis, we get $\getModif{\Gamma}{\y} = \imm$ as needed. 
\item If $\modif' = \mut$, then $\Modif{\T_i}{\modif} = \modif$, hence $\getModif{\Gamma}{\z} = \modif$ and so the thesis follows by induction hypothesis. 
\end{itemize}
\end{description} 
\end{proof}

\begin{lemma} \label{lem:sh-mem}
If $\IsWFMem{\Gamma}{\mem}$, then, 
for all $\x,\y{\in}\dom{\mem}$, 
$\getCoeff{\Gamma}{\x} {=} \getCoeff{\Gamma}{\y}$  implies ${\getModif{\Gamma}{\x} = \getModif{\Gamma}{\y}}$. 
\end{lemma}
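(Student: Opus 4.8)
The plan is to reduce the statement to the \emph{constancy of the modifier along the equivalence classes induced by the coeffects}, and then to establish this constancy edge by edge using \cref{lem:mod-deep}. First I would fix $\x,\y\in\dom{\mem}$ with $\getCoeff{\Gamma}{\x}=\getCoeff{\Gamma}{\y}$ and analyse how such an equality can arise from the memory typing rule \refToRule{t-mem}. Unfolding that rule, $\Gamma$ is $\Gamma_{\!\mem}\shsum\Gamma'$, where $\Gamma_{\!\mem}$ assigns to each reference $\x_i$ a \emph{private} fresh link $\link_i$, and $\Gamma'=\bigl(\{\link_1\}\ctxmul\Gamma_1\bigr)\shsum\cdots\shsum\bigl(\{\link_n\}\ctxmul\Gamma_n\bigr)$ with each $\Gamma_i$ typing the object $\mem(\x_i)$ through the $\Vdash$ judgement. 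The crucial observation is that, by \refToRule{t-obj} together with \refToRule{t-ref} and \refToRule{t-imm-ref}, a field value of $\mem(\x_i)$ receives the coeffect $\{\res\}$ exactly when it is a $\mut$ or $\seal$ reference, and a fresh (hence isolated) link when it is $\imm$. Since $\{\link_i\}\ctxmul(-)$ replaces $\res$ by $\link_i$, the link $\link_i$ ends up shared precisely between $\x_i$ and its $\mut$/$\seal$ fields, whereas $\imm$ fields stay separated.

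Consequently, after the transitive closure implicit in $\shsum$, two references have the same coeffect if and only if they are connected by a chain of direct field references each of whose traversed field is $\mut$ or $\seal$; this is the analogue of \cref{lem:mem-ctx-sharing} for the extended system, with the difference that here immutable references form isolated singleton classes. I would state and justify this characterisation as the main technical step. The second step is then purely local: if $\y$ is a field of $\mem(\x)$ with $\getModif{\Gamma}{\y}\in\{\mut,\seal\}$, then \cref{lem:mod-deep}, applied to the one-step reachability from $\x$ to $\y$, forces $\getModif{\Gamma}{\x}\ne\imm$ (otherwise $\y$ would be $\imm$); intersecting the constraint supplied by \cref{lem:mod-deep} with $\getModif{\Gamma}{\y}\in\{\mut,\seal\}$ then yields $\getModif{\Gamma}{\x}=\getModif{\Gamma}{\y}$ (both $\mut$, or both $\seal$). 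Hence every edge of the connectivity relation above preserves the modifier, and since equality is symmetric and transitive the modifier is constant on each coeffect class, which gives the thesis. The degenerate cases $\x=\y$ and isolated $\imm$ references are immediate.

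The main obstacle is the characterisation carried out in the first two paragraphs, namely making precise that $\getCoeff{\Gamma}{\x}=\getCoeff{\Gamma}{\y}$ is equivalent to $\mut$/$\seal$-connectivity in $\mem$. This requires carefully unfolding the module operations of \refToRule{t-mem} (the fresh-link multiplication $\{\link_i\}\ctxmul(-)$, the pointwise union, and the transitive closure $\closure{(-)}$), and checking that no $\res$ link survives in a typed memory, so that the only links shared \emph{between distinct references} are the $\link_i$. I do not expect the modifier-propagation step to be difficult, since it is a direct application of \cref{lem:mod-deep}; the bookkeeping of links through the closure is where the real care is needed.
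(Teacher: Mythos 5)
Your plan is sound, and in fact the paper never spells out a proof of this lemma (it is stated without proof and does not appear in the appendix), so there is no official argument to compare against; your reconstruction is a correct way to fill that gap using the paper's own ingredients. The two steps check out. For step one, inverting \refToRule{t-mem}, \refToRule{t-obj}, \refToRule{t-ref} and \refToRule{t-imm-ref} does show that, before closure, the only links incident to \emph{two distinct} references are the per-reference links $\link_i$, each shared exactly between $\x_i$ and the $\mut$/$\seal$-typed field values of $\mem(\x_i)$ (every $\res$ is replaced by $\link_i$ under $\{\link_i\}\ctxmul(-)$, and $\imm$ occurrences get private fresh links); since every reference's coeffect is nonempty (it contains its own $\link_i$), equality of closed coeffects for $\x\ne\y$ indeed forces a chain of such edges. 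For step two, your use of \cref{lem:mod-deep} is valid, including the delicate point that the propagated seal is the \emph{same} seal, because $\ModifComb{\mut}{\seal}=\seal$; together with the fact that references typed under $\Vdash$ can only carry $\mut$, $\imm$ or $\seal$, the case analysis closes. Two minor observations. First, you only need the ``only if'' half of your connectivity characterisation, so you can skip proving the converse. Second, step two can be obtained without \cref{lem:mod-deep} at all: an edge means $\y$ occurs in $\{\link_\x\}\ctxmul\Gamma_\x$ with type $\TypeMod{\C_\y}{\ModifComb{\mut}{\modif_\x}}=\TypeMod{\C_\y}{\modif_\x}$, and well-definedness of $\shsum$ (common variables must get the same type in all summands, in particular in $\Gamma_{\!\mem}$) already forces $\getModif{\Gamma}{\y}=\modif_\x=\getModif{\Gamma}{\x}$, which shortens the argument and keeps it purely local to the typing of memory.
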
 

In this refined setting, 
the definition of the sharing relation needs to take into account modifiers. 
Indeed, if intuitively two references are in sharing when a mutation of either of the two affects the other, 
then  no sharing should be propagated through immutable references.
To do so, we need to assume a well-typed memory in order to know modifiers of references.\footnote{Actually, we do not need the full typing information, having just modifiers would be enough. } 

\begin{definition}[Sharing in memory with modifiers]\label{def:sharing-rel-with-imm}
The \emph{sharing relation} in memory $\IsWFMem{\Gamma}{\mem}$, denoted by $\sharingRelSymbol{\Gamma,\mem}$, is the smallest equivalence relation on $\dom{\mem}$ such that:
\begin{quote}
$\SharingRel{\x}{\Gamma,\mem}{\y}$ if 
\mbox{$\mem(\x)=\Object{\C}{\val_1,\ldots,\val_n}$, 
%$\fields{\C}=\Field{\T_1}{\f_1} \ldots \Field{\T_n}{\f_n}$, and $\T_i=\TypeMod{\C_i}{\mut}$, 
$\getModif{\Gamma}{\x},\getModif{\Gamma}{\y} \le \mut$ and 
$\y = \val_i$ \mbox{for some  $i\in 1..n$}}
\end{quote}
\end{definition}
\noindent Again, 
for a well-typed memory, 
coeffects characterize the sharing relation exactly. 

\begin{proposition}\label{prop:sh-mem}
If $\IsWFMem{\Gamma}{\mem}$, then 
 $\SharingRel{\x}{\Gamma,\mem}{\y}$  iff $\getCoeff{\Gamma}{\x} = \getCoeff{\Gamma}{\y}$, for all $\x,\y\in\dom{\mem}$. 
\end{proposition}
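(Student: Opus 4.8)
The plan is to reduce the statement to a purely combinatorial fact about the links introduced by rule \refToRule{t-mem}, relating the transitive closure $\closure{\_}$ of the coeffect context to the equivalence relation $\sharingRelSymbol{\Gamma,\mem}$. First I would invert rule \refToRule{t-mem}: since $\IsWFMem{\Gamma}{\mem}$, we have $\Gamma = \Gamma_{\!\mem}\shsum(\{\link_1\}\ctxmul\Gamma_1)\shsum\cdots\shsum(\{\link_n\}\ctxmul\Gamma_n)$, where $\Gamma_{\!\mem}$ assigns to each reference $\x_k\in\dom\mem$ a fresh \emph{private} link $\link_k$, and $\Gamma_k$ types the object $\mem(\x_k)$ through the object-typing judgment (rule \refToRule{t-obj}). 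The crucial observation is a characterisation of the \emph{raw} (pre-closure) coeffects: the link $\link_k$ occurs in the coeffect of a variable $\z$ exactly when either $\z=\x_k$ (contributed by $\Gamma_{\!\mem}$) or $\z$ is a field value of $\mem(\x_k)$ whose modifier is $\le\mut$. Indeed, inverting \refToRule{t-obj} together with \refToRule{t-ref} and \refToRule{t-imm-ref}, a $\mut$ or $\seal$ field is typed with coeffect $\{\res\}$, whereas an $\imm$ field receives a fresh non-$\res$ link (and a primitive constant contributes the empty coeffect); since $\{\link_k\}\ctxmul(-)$ acts pointwise through $\shmul$, which replaces \emph{only} $\res$ by $\link_k$, the link $\link_k$ propagates precisely to the mutable/sealed fields and to no one else.

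With this characterisation in hand, I would prove both implications at once through the chain of equivalences: $\getCoeff{\Gamma}{\x}=\getCoeff{\Gamma}{\y}$ iff $\x$ and $\y$ are joined by a chain of variables in which consecutive elements share a raw link, iff $\x\sharingRelSymbol{\Gamma,\mem}\y$. The first equivalence is just the definition of $\closure{\_}$ (two variables acquire equal coeffects exactly when their raw link sets are connected), using that each reference has a nonempty raw coeffect containing its own $\link_k$, so the degenerate empty case never arises for $\x,\y\in\dom\mem$. For the second equivalence, a shared raw link $\link_k$ forces both variables to lie in $\{\x_k\}\cup\{\text{field values of }\mem(\x_k)\text{ with modifier }\le\mut\}$; conversely a base generating pair of $\sharingRelSymbol{\Gamma,\mem}$, namely $\x_k$ and a field value $\val_i$ with $\getModif{\Gamma}{\x_k},\getModif{\Gamma}{\val_i}\le\mut$, share the link $\link_k$. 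Reflexivity, symmetry and transitivity on the coeffect side are matched by the equivalence-closure on the sharing side.

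To close the correspondence I would check the modifier side-condition attached to the generating clause of \refToDefinition{sharing-rel-with-imm}. A field value $\val_i$ with modifier $\le\mut$ can occur in $\mem(\x_k)$ only if $\x_k$ itself has modifier $\le\mut$: an immutable object has only immutable fields, because in \refToRule{t-obj} a field of an $\imm$ object is tagged $\Modif{\T_i}{\imm}$, i.e.\ immutable (equivalently, this is \refToLemma{mod-deep}). Hence whenever $\link_k$ is shared between two variables, the relevant pairs $(\x_k,\val_i)$ satisfy the modifier condition, so a chain of shared links is realised by a chain of legitimate generating pairs of $\sharingRelSymbol{\Gamma,\mem}$ (two distinct mutable fields of the same object being linked through $\x_k$), and vice versa; \refToLemma{sh-mem} may be invoked to keep modifiers consistent along such a chain.

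The main obstacle I anticipate is the bookkeeping of the transitive closure: a single link $\link_k$ induces a \emph{clique} over $\{\x_k\}$ and its mutable fields, whereas \refToDefinition{sharing-rel-with-imm} generates only the \emph{star} edges between $\x_k$ and each $\val_i$, so I must argue that the two generate the same equivalence relation, and that immutable references — carrying only fresh, unshared links — are correctly left isolated, matching the fact that sharing is not propagated through immutable references. Carefully pairing these combinatorial facts with the inversion of the memory-typing rules, rather than any delicate algebraic computation, is where the content of the proof lies.
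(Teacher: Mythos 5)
Your proof is correct. A preliminary remark: the paper never proves \refToProposition{sh-mem} --- it is stated without an inline proof and does not appear in the appendix (which only covers \refToTheorem{subj-red}, \refToTheorem{subj-red-extended} and \refToProposition{module-fix}) --- so there is no ``paper proof'' to compare against; your argument has to stand on its own, and it does. The three ingredients you isolate are exactly the right ones: (i) inversion of \refToRule{t-mem}, \refToRule{t-obj}, \refToRule{t-ref}, \refToRule{t-imm-ref} yields the raw-link characterisation ($\link_k$ occurs precisely in the coeffect of $\x_k$ and of the $\mut$/$\seal$-typed field values of $\mem(\x_k)$, while $\imm$ occurrences and constants contribute only fresh, never-shared links); (ii) since every reference owns its private link from $\Gamma_{\!\mem}$, all raw coeffects are nonempty, so distinct connected components of the shared-link graph have disjoint, hence unequal, closed coeffects, giving ``$\getCoeff{\Gamma}{\x}=\getCoeff{\Gamma}{\y}$ iff $\x,\y$ are link-connected''; (iii) the shared-link graph and the generating pairs of \refToDefinition{sharing-rel-with-imm} generate the same equivalence relation, where \refToLemma{mod-deep} discharges the side condition $\getModif{\Gamma}{\x_k}\le\mut$ on the owner (an $\imm$ object has only $\imm$ fields, so its link is never shared), and your star-versus-clique observation correctly handles two mutable fields of the same object being connected through $\x_k$.

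Three minor points to tighten. First, the appeal to \refToLemma{sh-mem} is dispensable --- that lemma is itself unproved in the paper and is in fact an immediate corollary of your items (i)--(iii), so it is cleaner not to lean on it. Second, you work throughout with ``raw (pre-closure)'' coeffects and only close at the end, whereas $\Gamma$ is built by nested applications of $\shsum$ and $\ctxmul$, each of which closes; this is harmless because interleaved closures collapse to a single closure of the pointwise union ($\closure{\_}$ being an idempotent homomorphism, cf.\ \refToProposition{module-fix}), but that collapse deserves an explicit sentence. Third, your characterisation silently identifies the modifier at which a field occurrence $\val_i$ is typed inside \refToRule{t-obj} with $\getModif{\Gamma}{\val_i}$; what justifies this is that the sum of contexts is only defined when a common variable receives the same type in all summands, so the occurrence modifier is forced to agree with the modifier $\val_i$ carries in $\Gamma_{\!\mem}$. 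With these glosses added, the argument is complete.
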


 In the extended type system, we can detect capsule expressions from the  modifier, without looking at coeffects of free variables, proving that the result of a $\capsule$ expression is not in sharing with the initial  mutable  variables. 

\begin{theorem}[Capsule expression]\label{thm:caps} 
If $\IsWFConf{\Gamma}{\e}{\mem}{\TypeMod{\C}{\capsule}}$, and 
$\reducestar{\ExpMem{\e}{\mem}}{\ExpMem{\y}{\mem'}}$, then 
there exists $\Gamma'$ such that $\IsWFMem{\Gamma'}{\mem'}$, $\Gamma \modord \Gamma'$ and, for all  $\x\in\dom{\mem}$, 
$\SharingRel{\x}{\Gamma',\mem'}{\y}$ implies $\getModif{\Gamma}{\x} \le \getModif{\Gamma'}{\x} \ne \mut$.
\end{theorem}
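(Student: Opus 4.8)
The plan is to reduce the statement to the subject-reduction theorem \cref{thm:subj-red-extended} together with the two memory characterisations \cref{lem:sh-mem} and \cref{prop:sh-mem}. First I would iterate \cref{thm:subj-red-extended} along the reduction sequence $\reducestar{\ExpMem{\e}{\mem}}{\ExpMem{\y}{\mem'}}$, by a routine induction on its length mirroring the corollary that follows \cref{thm:subj-red} (both the erasure condition and $\modord$ compose, the latter by transitivity of $\le$). This yields a context $\Delta$ with $\IsWFConf{\Delta}{\y}{\mem'}{\TypeMod{\C}{\capsule}}$, $\Gamma \modord \Delta$, and $\Restr{(\Erase{\Gamma}\shsum\Erase{\Delta})}{\Erase{\Gamma}} = \Erase{\Gamma}$. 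Inverting rule \refToRule{t-conf} on this judgment produces the memory context I am after: there are $\Delta_e$ and $\Gamma'$ with $\Delta = \Delta_e\shsum\Gamma'$, $\IsWFExp{\Delta_e}{\y}{\TypeMod{\C}{\capsule}}$, $\IsWFMem{\Gamma'}{\mem'}$ and $\dom{\Delta_e}\subseteq\dom{\Gamma'}$. Thus $\Gamma'$ already witnesses the first requirement, $\IsWFMem{\Gamma'}{\mem'}$, and from $\IsWFMem{\Gamma'}{\mem'}$ we get $\dom{\Gamma'}=\dom{\mem'}$.

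Next I would check $\Gamma\modord\Gamma'$. For $\x\in\dom{\Gamma}$, noting $\dom{\Gamma}\subseteq\dom{\mem}\subseteq\dom{\mem'}=\dom{\Gamma'}$ since memory only grows under reduction, the definition of $\shsum$ forces the modifier of $\x$ in $\Delta = \Delta_e\shsum\Gamma'$ to coincide with the one in $\Gamma'$: common variables must carry the same type in a sum, and variables outside $\Delta_e$ keep their $\Gamma'$ entry unchanged. Hence $\getModif{\Gamma}{\x}\le\getModif{\Delta}{\x}=\getModif{\Gamma'}{\x}$, using $\Gamma\modord\Delta$ from the previous step. This establishes $\Gamma\modord\Gamma'$ and, at the same time, the inequality $\getModif{\Gamma}{\x}\le\getModif{\Gamma'}{\x}$ occurring in the conclusion.

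It remains to prove $\getModif{\Gamma'}{\x}\ne\mut$ for every $\x\in\dom{\mem}$ with $\SharingRel{\x}{\Gamma',\mem'}{\y}$. I would first discharge the dependence on $\x$: by \cref{prop:sh-mem} the hypothesis is equivalent to $\getCoeff{\Gamma'}{\x}=\getCoeff{\Gamma'}{\y}$, and then \cref{lem:sh-mem} gives $\getModif{\Gamma'}{\x}=\getModif{\Gamma'}{\y}$, so it suffices to show $\getModif{\Gamma'}{\y}\ne\mut$. For this I would invert the expression judgment $\IsWFExp{\Delta_e}{\y}{\TypeMod{\C}{\capsule}}$ for the reference $\y$. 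The only non-subsumption rules concluding a $\capsule$ type are \refToRule{t-var} and the promotion rule \refToRule{t-prom}; subsumption leaves the context unchanged and the only subtypes of $\capsule$ are $\capsule$ and seals. In the promotion case the premise types $\y$ with a $\mut$ type and coeffect $\{\res\}$, and since $\res\in\getCoeff{\Delta_e}{\y}$ sealing replaces that modifier by $\ModifComb{\mut}{\seal}=\seal$. In every case the entry for $\y$ in $\Delta_e$ has modifier in $\{\capsule,\seal\}$, hence $\ne\mut$. As $\y\in\dom{\Delta_e}\cap\dom{\Gamma'}$ and memory references may only be tagged $\mut$, $\imm$ or $\seal$ (rules \refToRule{t-ref}, \refToRule{t-imm-ref}), the matching-type condition of $\shsum$ forces $\getModif{\Gamma'}{\y}=\seal\ne\mut$, as desired.

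The main obstacle is exactly this last transfer of information from the expression context $\Delta_e$ to the memory context $\Gamma'$. Sealing during promotion acts on the free variables of the expression, whereas the property to establish concerns the modifier recorded for $\y$ in the memory typing; bridging the two relies on the precise bookkeeping of how $\shsum$ merges a common variable in \refToRule{t-conf}, and on the fact, obtained by inverting \refToRule{t-prom} against the memory rules, that a value carrying a $\capsule$ expression-type is necessarily recorded as $\seal$ in memory. Once $\getModif{\Gamma'}{\y}\ne\mut$ is secured, \cref{lem:sh-mem} propagates it to every reference sharing with $\y$ with no further work, and combining with $\Gamma\modord\Gamma'$ yields $\getModif{\Gamma}{\x}\le\getModif{\Gamma'}{\x}\ne\mut$.
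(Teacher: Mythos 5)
Your proposal is correct and follows essentially the same route as the paper's own proof: iterate \cref{thm:subj-red-extended} along the reduction, invert \refToRule{t-conf} to split off the memory context, argue that the capsule typing of the reference $\y$ can only arise via \refToRule{t-prom} or \refToRule{t-sub} so that $\y$ is recorded as $\seal$, and then propagate via \cref{prop:sh-mem} and \cref{lem:sh-mem}. The only differences are presentational — you make the multi-step induction and the $\shsum$-matching bookkeeping explicit, where the paper leaves them implicit.
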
  
\begin{proof}
By \cref{thm:subj-red-extended}, we get  
$\IsWFConf{\Delta}{\y}{\mem'}{\TypeMod{\C}{\capsule}}$ with $\Gamma \modord \Delta$. 
By inverting rule \refToRule{t-conf}, we get 
$\IsWFExp{\Delta_1}{\y}{\TypeMod{\C}{\capsule}}$ and $\IsWFMem{\Delta_2}{\mem'}$, with $\Delta = \Delta_1 \shsum \Delta_2$ and 
$\Gamma \modord \Delta_2$, as $\getModif{\Delta_2}{\z} = \getModif{\Delta}{\z}$ for all $\ \in \dom\Delta$. 
Since $\y \in\dom{\mem'}$, it cannot have modifier $\capsule$, hence 
$\IsWFExp{\Delta_1}{\y}{\TypeMod{\C}{\capsule}}$ holds by rule \refToRule{t-prom} or \refToRule{t-sub}. 
This implies $\Delta_1 = \emptyset \ctxmul\Delta' , \VarTypeCoeffect{\y}{\TypeMod{\C}{\seal}}{\{\res\}}$ and so 
$\getModif{\Delta}{\y} = \getModif{\Delta_2}{\y} = \getModif{\Delta_1}{\y} = \seal$. 
Set $\Gamma' = \Delta_2$. 
By \cref{prop:sh-mem,lem:sh-mem}, $\SharingRel{\x}{\Gamma',\mem'}{\y}$ implies $\getModif{\Gamma'}{x} = \seal$, thus we get 
$\getModif{\Gamma}{x} \le \getModif{\Delta}{\x} = \getModif{\Gamma'}{\x} \ne \mut$, hence the thesis. 
\end{proof}

It is important to notice that the notion of capsule expression in \cref{thm:caps} is different from the previous one  (\cref{def:caps}), as we now have $\imm$ references.
In particular, the previous notion prevented any access to the reachable object graph of the result from free variables, since, without modifiers, any access to a portion of memory can modify it. 
Here, instead, this is no longer true, hence the notion of capsule allows mutable references to access the reachable object graph of the result of a capsule expression, but only through $\imm$ references. 
Indeed, if two references access the same non-$\imm$ reference, they are necessarily in sharing, as shown below. 

\begin{proposition}\label{prop:non-imm-sharing}
Let $\IsWFMem{\Gamma}{\mem}$. % and $\x,\y,\z\in\dom{\mem}$. 
If $\reachable{\x}{\mem}{\z}$ and  $\reachable{\y}{\mem}{\z}$ and $\getModif{\Gamma}{\z} \ne \imm$, then 
$\SharingRel{\x}{\Gamma,\mem}{\y}$. 
\end{proposition}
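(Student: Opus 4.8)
The plan is to reduce the statement to a single reachability-to-target claim and then close under the equivalence properties of $\sharingRelSymbol{\Gamma,\mem}$. Since, by \cref{def:sharing-rel-with-imm}, $\sharingRelSymbol{\Gamma,\mem}$ is an equivalence relation on $\dom{\mem}$, it suffices to establish $\SharingRel{\x}{\Gamma,\mem}{\z}$ and $\SharingRel{\y}{\Gamma,\mem}{\z}$ separately: symmetry and transitivity then yield $\SharingRel{\x}{\Gamma,\mem}{\y}$ at once. Thus the whole argument rests on the auxiliary claim that, whenever $\IsWFMem{\Gamma}{\mem}$, $\reachable{\w}{\mem}{\z}$ and $\getModif{\Gamma}{\z}\neq\imm$, one has $\SharingRel{\w}{\Gamma,\mem}{\z}$.

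First I would prove this auxiliary claim by induction on the derivation of $\reachable{\w}{\mem}{\z}$, following exactly the case analysis used in the proof of \cref{lem:mod-deep}. In the base case $\w=\z$ the conclusion is just reflexivity of $\sharingRelSymbol{\Gamma,\mem}$. In the inductive case we have $\mem(\w)=\Object{\C}{\val_1,\ldots,\val_n}$, $\w'=\val_i$ for some $i\in 1..n$, and $\reachable{\w'}{\mem}{\z}$; the induction hypothesis gives $\SharingRel{\w'}{\Gamma,\mem}{\z}$, so by transitivity it is enough to produce the single generating step $\SharingRel{\w}{\Gamma,\mem}{\w'}$.

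The key step is to check the modifier side-conditions of that generator, namely $\getModif{\Gamma}{\w}\le\mut$ and $\getModif{\Gamma}{\w'}\le\mut$. Here I would use that in a well-typed memory every reference carries one of the modifiers $\mut$, $\imm$, or $\seal$ (this follows by inverting \refToRule{t-mem}, \refToRule{t-obj} and the object-reference rules \refToRule{t-ref}/\refToRule{t-imm-ref}), so that a reference fails to be $\le\mut$ exactly when it is $\imm$. Now both $\w$ and $\w'$ reach $\z$, and $\getModif{\Gamma}{\z}\neq\imm$; by the contrapositive of the $\imm$ case of \cref{lem:mod-deep}, if either $\w$ or $\w'$ were $\imm$ then $\z$ would be $\imm$ as well, a contradiction. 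Hence both modifiers are $\le\mut$, the generator $\SharingRel{\w}{\Gamma,\mem}{\w'}$ of \cref{def:sharing-rel-with-imm} applies, and combining it with the induction hypothesis completes the induction. Applying the auxiliary claim to $\reachable{\x}{\mem}{\z}$ and to $\reachable{\y}{\mem}{\z}$, and closing under symmetry and transitivity, then proves the proposition.

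The main obstacle is conceptual rather than computational: it is the observation that a \emph{non-immutable} reachable endpoint forces the \emph{entire} access path to be non-immutable, which is precisely what guarantees that the modifier conditions in the definition of $\sharingRelSymbol{\Gamma,\mem}$ are met at every link, so that reachability collapses into sharing step by step. The two supporting facts this relies on, the deepness of $\imm$ recorded in \cref{lem:mod-deep} and that memory references range only over $\mut$, $\imm$, $\seal$, must be stated explicitly; once they are in place, the induction itself is routine.
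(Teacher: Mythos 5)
Your proof is correct and follows essentially the same route as the paper's: induction on the reachability derivation, with reflexivity in the base case, the contrapositive of \cref{lem:mod-deep} to discharge the modifier side-conditions of \cref{def:sharing-rel-with-imm} in the inductive step, and transitivity to combine the two reachability chains. Your additional observation that memory references carry only $\mut$, $\imm$, or $\seal$ (so that failing $\le\mut$ is exactly being $\imm$) is left implicit in the paper's proof, and making it explicit is a reasonable refinement rather than a deviation.
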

\begin{proof}
We first show that $\SharingRel{\x}{\mem}{\z}$. 
The proof is by induction on the definition of $\reachableSymbol{\mem}$. 
\begin{description} 
\item [Case $\x=\z$] The thesis trivially holds by reflexivity of $\sharingRelSymbol{\Gamma,\mem}$. 
\item [Case $\mem(\x)=\Object{\C}{\val_1,\ldots,\val_n}$, $\x' = \val_i$ for some  $i\in 1..n$ and $\reachable{\x'}{\mem}{\z}$]
We know that $\getModif{\Gamma}{\x}$, \\$\getModif{\Gamma}{\x'} \le \mut$ because $\getModif{\Gamma}{\x'} = \imm$ (or $\getModif{\Gamma}{\x} = \imm$) would imply $\getModif{\Gamma}{\z} = \imm$, by \cref{lem:mod-deep}, which is a contradiction. 
Therefore, by \cref{def:sharing-rel-with-imm}, we have $\SharingRel{\x}{\Gamma,\mem}{\x'}$ and by induction hypothesis, we get $\SharingRel{\x'}{\Gamma,\mem}{\z}$; 
then we get $\SharingRel{\x}{\Gamma,\mem}{\z}$ by transitivity of $\sharingRelSymbol{\Gamma,\mem}$. 
\end{description}
By the same argument, we also get $\SharingRel{\y}{\Gamma,\mem}{\z}$. 
Then, by transitivity of $\sharingRelSymbol{\Gamma,\mem}$, we get the thesis. 
\end{proof}

\begin{corollary}\label{cor:mut-imm-caps}
If $\IsWFConf{\Gamma}{\e}{\mem}{\TypeMod{\C}{\capsule}}$, and 
$\reducestar{\ExpMem{\e}{\mem}}{\ExpMem{\y}{\mem'}}$, then 
there exists $\Gamma'$ such that $\IsWFMem{\Gamma'}{\mem'}$, $\Gamma \modord \Gamma'$ and, for all  $\x\in\dom{\mem}$, 
$\getModif{\Gamma'}{\x} = \mut$ and $\reachable{\x}{\mem'}{\z}$ and $\reachable{\y}{\mem'}{\z}$ imply $\getModif{\Gamma'}{\z} = \imm$. 
\end{corollary}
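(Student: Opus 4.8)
The plan is to obtain this corollary as a direct consequence of \cref{thm:caps} and \cref{prop:non-imm-sharing}, with essentially no new computation. First I would apply \cref{thm:caps} to the hypotheses $\IsWFConf{\Gamma}{\e}{\mem}{\TypeMod{\C}{\capsule}}$ and $\reducestar{\ExpMem{\e}{\mem}}{\ExpMem{\y}{\mem'}}$. This yields a context $\Gamma'$ with $\IsWFMem{\Gamma'}{\mem'}$, $\Gamma \modord \Gamma'$, and the sharing clause: for every $\x\in\dom{\mem}$, $\SharingRel{\x}{\Gamma',\mem'}{\y}$ implies $\getModif{\Gamma}{\x} \le \getModif{\Gamma'}{\x} \ne \mut$. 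The first two facts are exactly the first two conjuncts required by the corollary, and I would take this very $\Gamma'$ as the witness, so that it only remains to establish the reachability implication.

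To prove that implication, fix $\x\in\dom{\mem}$ with $\getModif{\Gamma'}{\x}=\mut$ and assume $\reachable{\x}{\mem'}{\z}$ and $\reachable{\y}{\mem'}{\z}$; I would argue by contradiction. Suppose $\getModif{\Gamma'}{\z}\neq\imm$. Since $\IsWFMem{\Gamma'}{\mem'}$ holds, \cref{prop:non-imm-sharing} applies to the two reachability facts together with $\getModif{\Gamma'}{\z}\neq\imm$, giving $\SharingRel{\x}{\Gamma',\mem'}{\y}$. Feeding this back into the sharing clause of \cref{thm:caps} (which applies because $\x\in\dom{\mem}$) yields $\getModif{\Gamma'}{\x}\neq\mut$, contradicting the assumption $\getModif{\Gamma'}{\x}=\mut$. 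Hence $\getModif{\Gamma'}{\z}=\imm$, which is the desired conclusion.

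In essence the argument is a contraposition of the capsule property: a mutable free variable of the original configuration and the final result $\y$ cannot both reach a non-immutable location, since that would place them in the sharing relation $\sharingRelSymbol{\Gamma',\mem'}$, which the capsule typing forbids for mutable variables. I do not expect any genuine obstacle here, as both ingredients are already proved; the only points requiring attention are bookkeeping ones, namely that the single context $\Gamma'$ delivered by \cref{thm:caps} is the same one over which \cref{prop:non-imm-sharing} and all modifier lookups are taken (so that $\sharingRelSymbol{\Gamma',\mem'}$ and $\getModif{\Gamma'}{\cdot}$ refer to a single well-typing of $\mem'$), and that $\x$ is quantified over $\dom{\mem}$ in both cited results while $\z$ is free to range over the possibly larger $\dom{\mem'}$.
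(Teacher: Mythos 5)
Your proposal is correct and follows essentially the same route as the paper's own proof: apply \cref{thm:caps} to obtain the witness $\Gamma'$ together with its sharing clause, then derive the reachability implication by contradiction via \cref{prop:non-imm-sharing}. The bookkeeping points you flag (a single $\Gamma'$ throughout, and the quantification of $\x$ over $\dom{\mem}$) are handled the same way, implicitly, in the paper.
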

\begin{proof}
By \cref{thm:caps}, we get $\IsWFMem{\Gamma'}{\mem'}$ and $\SharingRel{\x}{\Gamma',\mem'}{\y}$ imply $\getModif{\Gamma'}{\x} \ne \mut$. 
Suppose $\getModif{\Gamma'}{\z} \ne \imm$, then, by \cref{prop:non-imm-sharing}, we get $\SharingRel{\x}{\Gamma',\mem'}{\y}$, hence $\getModif{\Gamma'}{\x} \ne \mut$, which contradicts the hypothesis. 
Therefore, $\getModif{\Gamma'}{\z} = \imm$.
\end{proof}
   
In the extended type system, we can also nicely characterize the property guaranteed by the $\imm$ references. Notably, the reachable object graph of an $\imm$ modifier cannot be modified during the execution. 
We first show that fields of an $\imm$ reference cannot change in a single computation step. 
\begin{lemma} \label{lem:imm-step} 
If $\IsWFConf{\Gamma}{\e}{\mem}{\T}$, and $\getModif{\Gamma}{\x}=\imm$, and $\reduce{\ExpMem{\e}{\mem}}{\ExpMem{\e'}{\mem'}}$, then $\mem(\x)=\mem'(\x)$.
\end{lemma}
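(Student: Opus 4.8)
The plan is to proceed by case analysis on the rule used to derive $\reduce{\ExpMem{\e}{\mem}}{\ExpMem{\e'}{\mem'}}$, after decomposing the step through \refToRule{ctx} so that the actual redex is reduced by one of the base rules. Among the base rules, \refToRule{field-access}, \refToRule{invk}, and \refToRule{block} leave the memory untouched, so $\mem'=\mem$ and the thesis is immediate; \refToRule{new} only adds a fresh reference $\loc\notin\dom{\mem}$, hence $\mem'(\z)=\mem(\z)$ for every $\z\in\dom{\mem}$, in particular for $\x$ (which lies in $\dom{\mem}=\dom{\Gamma}$ by well-formedness of the configuration). The only remaining case is \refToRule{field-assign}, where $\e=\Ctx{\FieldAssign{\loc}{\f_i}{\val}}$ and $\mem'=\UpdateMem{\mem}{\loc}{i}{\val}$ agrees with $\mem$ everywhere except at $\loc$; so it suffices to prove $\loc\neq\x$.

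To show $\loc\neq\x$ I would argue that the receiver of a field assignment can never be an $\imm$ reference. Inverting \refToRule{t-conf}, write $\Gamma=\Delta\shsum\Gamma_\mem$ with $\IsWFExp{\Delta}{\e}{\T}$ and $\IsWFMem{\Gamma_\mem}{\mem}$; since the sum of contexts is defined only when common variables carry the same type, $\getModif{\Delta}{\x}=\getModif{\Gamma}{\x}=\imm$. Next, a standard generation lemma for evaluation contexts produces a sub-derivation $\IsWFExp{\Delta_0}{\FieldAssign{\loc}{\f_i}{\val}}{\T_0}$ inside the derivation of $\IsWFExp{\Delta}{\e}{\T}$. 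Inverting \refToRule{t-field-assign} on it, the receiver is typed $\IsWFExp{\Delta_0'}{\loc}{\TypeMod{\C}{\mut}}$; and since $\loc$ is a reference, this judgment must come from \refToRule{t-var} possibly followed by \refToRule{t-sub}, so the modifier assigned to $\loc$ by $\Delta_0'$ is some $\modif$ with $\modif\le\mut$, i.e.\ $\modif$ is one of $\mut$, $\capsule$, or a seal; in particular $\getModif{\Delta_0'}{\loc}\neq\imm$.

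The crux, and the step I expect to be the main obstacle, is to propagate this non-$\imm$ information from the redex judgment $\Delta_0'$ down to the root judgment $\Delta$, so as to conclude $\getModif{\Delta}{\loc}\neq\imm$ and hence $\loc\neq\x$. This rests on the invariant that the type (and thus the modifier) assigned to a free reference is preserved along the whole derivation: the operations $\shsum$, $\ctxlinsum$, and $\ctxmul$ act only on coeffects, \refToRule{t-sub} does not touch the context, and the sole rule that rewrites a variable's modifier in the context is promotion \refToRule{t-prom}, via sealing $\Sealed{\cdot}$. For a variable connected to the result, sealing replaces its modifier $\modif$ by $\ModifComb{\modif}{\seal}$, and inspecting the definition of combination shows that $\ModifComb{\modif}{\seal}$ equals $\imm$ exactly when $\modif=\imm$ (and is undefined for $\readonly$, so a $\readonly$ variable connected to the result is never sealed); variables not connected to the result keep their modifier unchanged. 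Hence sealing preserves both $\imm$-ness and non-$\imm$-ness, so $\getModif{\Delta_0'}{\loc}\neq\imm$ entails $\getModif{\Delta}{\loc}\neq\imm$. Since $\getModif{\Gamma}{\x}=\imm$, we obtain $\loc\neq\x$, which closes the field-assignment case and completes the proof. The delicate point to get right is precisely this modifier-invariance claim through nested promotions; once it is established, the remaining reasoning is routine inversion on the typing and reduction rules.
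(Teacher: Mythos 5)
Your proof is correct, and its core coincides with the paper's: the only dangerous rule is \refToRule{field-assign}, all other computational rules leave existing memory entries untouched, and the receiver of a field assignment can never be an $\imm$ reference because its type $\TypeMod{\C}{\mut}$ forces its context modifier to lie in $\{\mut,\capsule,\seal\}$. Where you genuinely diverge is in the scaffolding around evaluation contexts. The paper argues by induction on the reduction rules, so \refToRule{field-assign} is analyzed only at top level and the \refToRule{ctx} case is dismissed by the induction hypothesis; you instead extract the redex via a generation lemma for evaluation contexts and then propagate $\getModif{\Delta_0'}{\loc}\neq\imm$ from the redex sub-derivation up to the root, using the invariant that $\shsum$, $\ctxlinsum$ and $\ctxmul$ act only on coeffects and that sealing in \refToRule{t-prom} maps $\imm$ to $\imm$ and non-$\imm$ modifiers to non-$\imm$ modifiers (indeed $\ModifComb{\modif}{\seal}=\imm$ iff $\modif=\imm$, and $\ModifComb{\readonly}{\seal}$ is undefined). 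That invariance claim is exactly what the paper leaves implicit when it appeals to the induction hypothesis — to apply the IH one must know the sub-configuration is still well typed with $\x$ at $\imm$ — so your version is more explicit at the price of an auxiliary lemma, while the paper's is shorter but glosses over this point. One small imprecision on your side: you assert that the receiver judgment $\IsWFExp{\Delta_0'}{\loc}{\TypeMod{\C}{\mut}}$ ``must come from \refToRule{t-var} possibly followed by \refToRule{t-sub}''; in fact \refToRule{t-prom} can also intervene (declared $\mut$, promoted to $\capsule$ with sealing of the context, then subsumed back to $\mut$), exactly the case the paper's own analysis singles out. This does not damage your conclusion, since sealing turns $\mut$ into $\seal$ and never into $\imm$, and your own invariance argument already covers intervening promotions.
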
 
\begin{proof}
%By induction on reduction rules. 
%The key case is rule \refToRule{field-assign}. 
%We have $\e = \FieldAssign{\y}{\f}{\val}$ and, 
%since $\IsWFConf{\Gamma}{\FieldAssign{\y}{\f}{\val}}{\mem}{\T}$,  
%either rule \refToRule{T-field-assign} was the last rule applied and 
%$\getModif{\Gamma}{\y}= \mut$ or this rule was followed by a promotion rule and/or rule \refToRule{T-Sub}, hence $\getModif{\Gamma}{\y}\leq \mut$.  
By induction on reduction rules. 
The key case is rule \refToRule{field-assign}. 
We have $\e=\FieldAssign{\y}{\f}{\val}$ and $\IsWFConf{\Gamma}{\FieldAssign{\y}{\f}{\val}}{\mem}{\T}$.  Let $\getModif{\Gamma}{\y}=\modif$.
Either rule \refToRule{T-field-assign} was the last rule applied, or one of the non syntax-directed rules was applied after \refToRule{T-field-assign}. In the former case
$\modif=\mut$ or $\modif= \capsule$ if rule \refToRule{T-Sub} was applied before \refToRule{T-field-assign}.
In the latter case $\modif$ could only be equal to the previous modifier or $\modif=\seal$ if rule \refToRule{T-Prom} was applied and the previous modifier was $\mut$.  
Therefore, $\y\ne\x$ and so we have the thesis. 
For all other computational rules the thesis is immediate as they do not change the memory, and for \refToRule{ctx} the thesis immediately follows by induction hypothesis.  
\end{proof}
Thanks to \cref{lem:mod-deep}, we can show that the reachable object graph of an $\imm$ reference contains only $\imm$ references. 
Hence, by the above lemma we can characterise $\imm$ references as follows: 
\begin{theorem}[Immutable reference]
If $\IsWFConf{\Gamma}{\e}{\mem}{\T}$, $\getModif{\Gamma}{\x}=\imm$, and $\reducestar{\ExpMem{\e}{\mem}}{\ExpMem{\e'}{\mem'}}$, then
$\reachable{\x}{\mem}{\y}$ implies $\mem(\y)=\mem'(\y)$.
\end{theorem}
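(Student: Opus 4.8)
The plan is to prove the statement by induction on the length of the reduction sequence $\reducestar{\ExpMem{\e}{\mem}}{\ExpMem{\e'}{\mem'}}$, reducing the global claim to the single-step \cref{lem:imm-step} together with the deep-modifier \cref{lem:mod-deep}. The base case, where $\mem=\mem'$, is immediate. For the inductive step I would split the sequence into a first step $\reduce{\ExpMem{\e}{\mem}}{\ExpMem{\e_1}{\mem_1}}$ followed by a shorter sequence $\reducestar{\ExpMem{\e_1}{\mem_1}}{\ExpMem{\e'}{\mem'}}$, fix an arbitrary $\y$ with $\reachable{\x}{\mem}{\y}$, and aim at $\mem(\y)=\mem'(\y)$.

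First I would freeze the reachable object graph across the first step. Since $\getModif{\Gamma}{\x}=\imm$ and $\reachable{\x}{\mem}{\y}$, \cref{lem:mod-deep} gives $\getModif{\Gamma}{\y}=\imm$, and more generally $\getModif{\Gamma}{\z}=\imm$ for every $\z$ with $\reachable{\x}{\mem}{\z}$. Applying \cref{lem:imm-step} to the single step $\reduce{\ExpMem{\e}{\mem}}{\ExpMem{\e_1}{\mem_1}}$ once for each such $\z$ then yields $\mem(\z)=\mem_1(\z)$; in particular $\mem(\y)=\mem_1(\y)$. Because every object on every path from $\x$ is unchanged, the portion of memory reachable from $\x$ coincides in $\mem$ and $\mem_1$, so reachability is preserved: $\reachable{\x}{\mem}{\y}$ implies $\reachable{\x}{\mem_1}{\y}$.

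To invoke the induction hypothesis I need a typing for the intermediate configuration in which $\x$ is still immutable. By subject reduction (\cref{thm:subj-red-extended}) there is $\Delta$ with $\IsWFConf{\Delta}{\e_1}{\mem_1}{\T}$ and $\Gamma\modord\Delta$, so $\getModif{\Gamma}{\x}=\imm\le\getModif{\Delta}{\x}$. Here lies the one delicate point: a priori a modifier may strictly increase, and the only modifier above $\imm$ in \refToFigure{hierarchy} is $\readonly$. However $\x\in\dom{\mem}\subseteq\dom{\mem_1}$ is a reference, and a well-typed memory (judgment $\Vdash$, rules \refToRule{t-ref}, \refToRule{t-imm-ref}, \refToRule{t-obj}) only assigns references the modifiers $\mut$, $\imm$, or $\seal$; since \refToRule{t-conf} combines the memory context with the expression context by $\shsum$, which forces a shared variable to keep a single type, $\getModif{\Delta}{\x}\in\{\mut,\imm,\seal\}$. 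As neither $\mut$ nor $\seal$ is $\ge\imm$, we must have $\getModif{\Delta}{\x}=\imm$. The induction hypothesis applied to $\IsWFConf{\Delta}{\e_1}{\mem_1}{\T}$, $\getModif{\Delta}{\x}=\imm$, $\reachable{\x}{\mem_1}{\y}$ and the shorter reduction gives $\mem_1(\y)=\mem'(\y)$, and chaining with $\mem(\y)=\mem_1(\y)$ closes the step.

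I expect the main obstacle to be exactly this modifier-stability argument: ensuring that the immutable tag of $\x$ genuinely persists after a step rather than drifting up to $\readonly$. This is where one must use that memory references are never $\readonly$ and that $\shsum$ enforces type agreement between the expression and memory parts of the configuration context. The remainder is a routine induction resting on the already-available \cref{lem:mod-deep} and \cref{lem:imm-step}, plus the simple observation that freezing the objects on every path out of $\x$ preserves reachability from $\x$.
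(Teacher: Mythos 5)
Your proof follows essentially the same route as the paper's: induction on the reduction sequence, freezing the object graph reachable from $\x$ across the first step via \cref{lem:mod-deep} and \cref{lem:imm-step}, observing that reachability is therefore preserved, and using subject reduction (\cref{thm:subj-red-extended}) to re-establish the hypotheses for the induction hypothesis. The only difference is that the paper passes from $\imm \le \getModif{\Delta}{\x}$ to $\getModif{\Delta}{\x} = \imm$ without comment, whereas you explicitly rule out the drift to $\readonly$ by noting that memory references are only ever typed $\mut$, $\imm$, or $\seal$ under the $\Vdash$ judgment — a detail the paper leaves implicit, so your version is, if anything, slightly more complete.
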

\begin{proof}
By induction on the definition of $\evstar$
\begin{description} 
\item [Case $\ExpMem{\e}{\mem}=\ExpMem{\e'}{\mem'}$] The thesis trivially holds.
\item [Case $\reduce{\ExpMem{\e}{\mem}}{\ExpMem{\e_1}{\mem_1}}\evstar{\ExpMem{\e'}{\mem'}}$]
Since $\getModif{\Gamma}{\x} = \imm$, for all $\y$ such that $\reachable{\x}{\mem}{\y}$, 
by \cref{lem:mod-deep} $\getModif{\Gamma}{\y} = \imm$, hence, by \cref{lem:imm-step}, $\mem_1(\y) = \mem(\y)$. 
Therefore, it is easy to check that $\reachable{\x}{\mem}{\y}$ implies $\reachable{\x}{\mem_1}{\y}$. 
By \cref{thm:subj-red-extended}, $\IsWFConf{\Delta}{\e_1}{\mem_1}{\T}$ and $\getModif{\Gamma}{\x} \le \getModif{\Delta}{\x}$, hence $\getModif{\Delta}{\x} = \imm$. 
Then, by induction hypothesis, $\mem'(\y) = \mem_1(\y)$, hence the thesis. 
\qedhere 
\end{description}
\end{proof} 

%Note that, since the typing of memory forces the content of an immutable field to be an $\imm$ reference, 
%the above theorem also applies to such references. 

% !TEX root =main.tex

\section{Expressive power}\label{sect:comparison}

We discuss the expressive power of the type-and-coeffect system in \refToSection{extended}, \bez comparing it with the two most closely related proposals by  \citet{GordonEtAl12} and by \citet{ClebschEtAl15,Clebsch17}, abbreviated as \Gordon\  and Pony, respectively.  
The takeaway is that our promotion mechanism is much more powerful than their recovery, since sharing is taken into account; 
on the other hand, the expressive power allowed by some of their annotations on fields is beyond the scope of this paper. \eez
We assume a syntax enriched
by the  usual programming constructs. 
%of the illustrate first the language design we want to formalize by some examples, mostly adapted from \cite{GianniniSZC19,GianniniRSZ19}, to which we refer for additional examples and motivations. 
 
%\paragraph{From recovery to promotion} 

Before the work in \Gordon, the capsule property was only ensured in simple situations, such as using a primitive deep clone operator, or composing subexpressions with the same property.
The type system in \Gordon\ has been an important step, being the first to introduce \emph{recovery}. That is, this type system contains two typing rules which allow recovering isolated\footnote{Their terminology for capsule.} or immutable references from arbitrary code checked in contexts containing only isolated or immutable variables. Such rules are rephrased below in our style for better comparison.
\begin{small}
\begin{quote}
$\NamedRule{t-recov-iso}{\IsWFExp{\Gamma}{\e}{\TypeMod{\C}{\mut}}}{\IsWFExp{\Gamma}{\e}{\TypeMod{\C}{\capsule}}}
{\IsoOrImm{\Gamma}}$\BigSpace$\NamedRule{t-recov-imm}{\IsWFExp{\Gamma}{\e}{\TypeMod{\C}{\readonly}}}{\IsWFExp{\Gamma}{\e}{\TypeMod{\C}{\imm}}}
{\IsoOrImm{\Gamma}}$
\end{quote}
\end{small}
where $\IsoOrImm{\Gamma}$ means that, for all $\VarType{\x}{\TypeMod{\C}{\modif}}$ in $\Gamma$, $\modif\leq\imm$.

As the reader can note, this is exactly in the spirit of coeffects, since typechecking also takes into account the way the \emph{surrounding context} is used. By these rules \Gordon\ typechecks, e.g., the following examples, assuming  the language   has  threads with a parallel operator:
\begin{lstlisting}
isolated IntList l1 = ...
isolated IntList l2 = ...
l1.map(new Incrementor()); || l2.map(new Incrementor());
\end{lstlisting}
The two threads do not interfere, since they operate and can mutate disjoint object graphs.
\begin{lstlisting}
isolated IntBox increment(isolated IntBox b){
  b.value++;//b converted to mut by subtyping
  return b//convert b *back* to isolated by recovery
}
\end{lstlisting}
An isolated object can be mutated\footnote{We say that the capsule is \emph{opened}, see in the following.}, and then isolation can be recovered, since the context only contains isolated or immutable references.

In Pony, the ideas of \Gordon\ are extended to a richer set of modifiers. In their terminology \texttt{val} is immutable, \texttt{ref} is mutable, \texttt{box} is read-only. An ephemeral isolated reference \lstinline{iso^} is similar to a $\capsule$ reference in our calculus, whereas non ephemeral \texttt{iso} references are more
similar to  the  isolated fields discussed below. Finally, \texttt{tag} only allows object identity checks  and  asynchronous method invocation,   and \texttt{trn} (transition) is a subtype of \texttt{box} that can be converted to \texttt{val}, providing a way to create values without using isolated references. 
The last two modifiers have no equivalent in  \Gordon\ or our work.

The type-and-coeffect-system in \refToSection{extended} shares with \Gordon\ and Pony the modifiers  $\mut$, $\imm$, $\readonly$, and $\capsule$ with their subtyping relation, a similar operation to combine modifiers, and the  key role  of recovery.
However, rule \refToRule{t-prom} is much more powerful than the recovery rules reported above, which definitely forbid $\readonly$ and $\mut$ variables in the context. Rule \refToRule{t-prom}, instead, allows such variables when they are not connected to the final result, as smoothly derived from coeffects which compute sharing.  For instance, with the classes of \refToExample{ex1}, the following two examples would be ill-typed in \Gordon\ and Pony:

\begin{lstlisting}
caps C es1 = {B z = new B(2); x.f1=y; new C(z,z)}
caps C es2 = {B z = new B(y.f=y.f+1); new C(z,z) }
\end{lstlisting}
 Below  is the corresponding Pony code.

\begin{lstlisting}
class B
  var f: U64
  new create(ff:U64) => f=ff
class C
  var f1: B
  var f2: B 
  new create(ff1: B ref, ff2: B ref) =>  f1=ff1; f2=ff2    
var x: B ref = ...
var y: B ref = ...
var es1: C iso = recover iso var z = B(2); x.f1=y; C(z,z) end//ERROR
var es2: C iso = recover iso var z = B(y.f=y.f+1); C(z,z) end//ERROR
\end{lstlisting}

\bez For comparison on a more involved example, let us \eez add to class \lstinline{A}  of \refToExample{ex2-decorated}
the method  \lstinline{nonMix} that follows:
\begin{lstlisting}
  A nonMix $\meta{[^{\{\link\}}]}$(A$\meta{^{\{\res\}}}$a) {this.f.f=a.f.f; a} // $\link\neq\res$ \end{lstlisting}
Consider  the following code: 
\begin{lstlisting}
A a1= new A(new B(0));
caps A mycaps = {A a2 = new A(new B(1));
  a1.mix(a2).clone() // (1)
  // a1.mix(a2).clone().mix(a2) // (2)
  // a1.nonMix(a2) // (3)
}
\end{lstlisting}
The corresponding Pony code is as follows: 
\begin{lstlisting}
class B
  var f:U64
  new create(ff:U64) => f=ff
  fun box clone():B iso^ => recover B(f) end 
class A
  var f:B
  new create(ff:B) => f=ff
  fun ref mix(a:A):A => this.f=a.f; a
  fun ref nonMix(a:A):A => f.f=a.f.f; a
  fun box clone():A iso^ => var x:B iso = f.clone(); recover A(consume x) end
var a1 = A(B(0))
var a2 = A(B(1)); var l1:A iso = a1.mix(a2).clone() // (1) OK
var l2:A iso=recover var a2=A(B(1));a1.mix(a2).clone().mix(a2) end//(2) ERROR
var l3:A iso= recover var a2 = A(B(1)); a1.nonMix(a2) end // (3) ERROR
\end{lstlisting}

  As in our approach, Pony is able to discriminate line (1)  from line  (2), causing code to be well-typed and ill-typed, respectively. However, to do so,  Pony needs an explicit modifier \lstinline{iso^} in the return type of \lstinline{clone}, whereas, as  noted after the code of  \refToExample{ex2-decorated}, in our approach the return type of \lstinline{clone} can be $\mut$, since the fact that there is no connection between the result and $\this$ is expressed by the coeffect.
Moreover, in order to be able to obtain an \lstinline{iso} from the \lstinline{clone} method, Pony needs to insert explicit \lstinline{recover} instructions. In the case of class \lstinline{A} where the field is an  object, Pony needs to  explicitly use \lstinline{consume} to ensure uniqueness, whereas in our approach promotion takes place implicitly  and uniqueness is ensured by linearity.
Finally, Pony rejects line (3) as well, whereas, in our approach, this expression is correctly recognized to be a capsule, since the external variable \lstinline{a1} is modified, but not connected to the final result. 

\bez Moreover, our type system can prevent sharing of parameters, something which is not possibile in \Gordon\ and Pony. \eez
\label{ex:teams}
The following method takes  two teams, \lstinline{t1} and \lstinline{t2}, as parameters.  Both want to add a reserve player from their respective lists \lstinline{p1} and
\lstinline{p2}, sorted with best players first.  To keep
the game fair, the two reserve players can only be added if they have the same
skill level.
{\small
\begin{lstlisting}
static void addPlayer(Team$\meta{^{\{\link\}}}$t1, Team$\meta{^{\{\link'\}}}$t2, Players$\meta{^{\{\link\}}}$p1, Players$\meta{^{\{\link'\}}}$p2)
{/*$\link\neq\link'$*/} {while(true){
 if(p1.isEmpty()||p2.isEmpty()) {/*error*/}
 if(p1.top().skill==p2.top().skill){t1.add(p1.top());t2.add(p2.top());return;}
 else{removeMoreSkilled(p1,p2);}
}
\end{lstlisting}
}
\noindent The sharing coeffects express that each team can only  add players from its list 
of reserve players. 

\bez As mentioned at the beginning of the section, an important feature supported by \Gordon\ and Pony, and not by our type system, are isolated fields. To ensure that accessing an isolated field will not introduce aliasing, they use \eez an ad-hoc semantics, called \emph{destructive read}, see also \citet{Boyland10}.  In \Gordon,  an isolated field can only be read by a command \lstinline{x=consume(y.f)}, assigning  the value  to \lstinline{x} 
and updating the field to \lstinline{null}. Pony supports the command \lstinline{(consume x)}, with the semantics that the reference becomes empty. Since fields cannot be empty, they cannot be arguments of \lstinline{consume}. By relying on the fact that assignment returns the left-hand side value, in Pony  one writes  \lstinline{x=y.f=(consume z)}, with \lstinline{z} isolated. In this way, the field value is   assigned to  \lstinline{x}, and the field is updated to a new isolated reference. 

% Traditionally, borrowing/fractional permissions~\cite{NadenEtAl12} are related to uniqueness  in the opposite way: a unique reference can be borrowed,
%it is possible to track when all borrowed aliases are buried~\cite{Boyland01}, and then uniqueness can be recovered.
%These techniques offer a sophisticate alternative to destructive reads. 
We  prefer  to avoid destructive reads since they can cause subtle bugs, see \citet{GianniniSZC19} for a discussion. 
\bez We leave to future work the development of an alternative solution, notably investigating how to extend our affine handling of $\capsule$ variables to fields. \eez
%Isolated fields with destructive reads could be encoded by adding a primitive \lstinline{IsoBox<T>} type with \lstinline{set} and \lstinline{get} methods, the former allowing to
%store a $\capsule$ reference in the \lstinline{IsoBox}, and the latter retrieving the stored reference and marking the content \mbox{as \lstinline{removed}.}
\bez Concerning this point, another feature allowing more flexibility in \Gordon\ and Pony is that iso variables can be ``consumed'' only once, but accessed more than once. For example in Pony we can write \lstinline{var c: C iso=recover var z=B(2); C(z,z) end; c.f1=recover B(1) end;c.f2=recover B(1) end}. 
% This is achieved by relying on a fine-grained syntax (field assignment only of shape x.f = y).\eez
To achieve this in our type system, one needs to explicitly open the capsule by assigning it to a local mutable variable, modify it and finally apply the promotion to recover the capsule property. \eez 

%We can support the same behaviour by using more variables:
%caps C c1 = {B z = new B(2); x.f1=y; new C(z,z)}; caps C c2 = { mut C cMut = c1; cMut.f1 = new B(1); cMut.f2. = new B(1); cMut};\\ É..use c2 here :

% !TEX root =main.tex

\section{Related work}\label{sect:related}

\subsection{Coeffect systems}

Coeffects were  first  introduced by \citet{PetricekOM13} and further analyzed by \citet{PetricekOM14}. 
In particular, \citet{PetricekOM14} develops a  generic  coeffect system which augments the simply-typed $\lambda$-calculus with context annotations indexed by \emph{coeffect shapes}. 
The proposed framework is very abstract, and the authors focus only on two opposite instances: 
structural (per-variable) and flat (whole context) coeffects, identified by specific choices of context shapes.
%The authors present many examples (liveness,  dataflow, dynamic scoping).
%As a consequence, the (algebraic) structure of coeffects reamins a bit fuzzy. \FDComm{cercare un termine migliore} 

Most of the  subsequent  literature on coeffects focuses on structural ones, for which there is a clear algebraic description in terms of semirings. 
This was first noticed by \citet{BrunelGMZ14}, who developed a framework for structural coeffects 
for a functional language. 
This approach is inspired by a generalization of the exponential modality of linear logic, see, e.g., \citet{BreuvartP15}. 
That is, the distinction between linear and  unrestricted  variables of linear systems is generalized to have variables  
decorated by coeffects, or {\em grades}, that determine how much they can be used. 
In this setting, many advances have been made to combine coeffects with other programming features, such as 
computational effects \cite{GaboardiKOBU16,OrchardLE19,DalLagoG22}, 
dependent types \cite{Atkey18,ChoudhuryEEW21,McBride16}, and 
polymorphism \cite{AbelB20}. 
A fully-fledged functional programming language, called Granule, has been presented by \citet{OrchardLE19}, inspired by the principles of coeffect systems.
%\EZComm{da ripristinare se abbiamo spazio: Other graded type systems are explored in \cite{Atkey18,GhicaS14,AbelB20},  also combining effects and coeffects \cite{GaboardiKOBU16,OrchardLE19}.  
%In all these papers, the process of
%tracking usage through grades is a powerful method of instrumenting
%type systems with analyses of irrelevance and linearity that have practical benefits like erasure
%of irrelevant terms (resulting in speed-up) and compiler optimizations (such as in-place update). 
% A particularly interesting work is GraD \cite{ChoudhuryEEW21}, a graded dependent type system that includes functions,
% tensor products, additive sums, and a unit type. 
%Since standard operational semantics is resource-agnostic, the authors
%develop a heap-based operational semantics and prove a soundness theorem that shows correct accounting of
%resource usage. }

As already mentioned, \citet{McBride16} and \citet{WoodA22}  observed that contexts in a structural coeffect system form a module over the semiring of grades, event though they do not use this structure in its full generality, restricting themselves to free modules, that is, to structural coeffect systems.
This algebraic structure nicely describes operations needed in typing rules, and we believe  it  could be a clean framework for coeffect systems beyond structural ones. 
Indeed, the sharing coeffect system  in this paper  provides a non-structural instance. 

\subsection{Type systems controlling sharing and mutation}
 The literature  on type systems controlling sharing and mutation is  vast.   
 In \refToSection{comparison} we provided a  comparison with the most closely related approaches. We briefly discuss here other works.

%\paragraph{Capabilities}
The approach based on modifiers is extended in other proposals \cite{HallerOdersky10,CastegrenWrigstad16} to compositions of one or more \emph{capabilities}. The modes of the capabilities in a type control how resources of that
type can be aliased. The compositional aspect of capabilities is an important difference
from modifiers, as accessing different parts of an object through different capabilities in the same type gives different properties. 
By using capabilities it is possible to obtain an expressivity similar to our type system, although with different sharing notions and syntactic constructs. For instance, the \emph{full encapsulation} notion by \citet{HallerOdersky10}, apart from the fact that sharing of immutable objects is not allowed, is equivalent to our $\capsule$ guarantee.
Their model has a higher syntactic/logic overhead to explicitly  track regions.
As for all work  preceding~\citet{GordonEtAl12},  objects need to be born \lstinline{unique} and the type system 
permits manipulation of data preserving their uniqueness. With recovery/promotion,  instead,  we can use normal code designed to work on conventional shared data, and then
recover uniqueness.

An  alternative  approach to modifiers to restrict the usage of references is that of \emph{ownership}, based on \emph{enforcing invariants} rather than deriving properties. We refer to the recent work  of  \citet{MilanoMT22} for an up-to-date survey. The Rust language, considering its ``safe'' features \cite{JungJKD18}, belongs to this family as well, and uses ownership for memory management.
In Rust, all references which support mutation are required to be affine,  thus ensuring a unique entry point to a portion of mutable memory. 
This relies on a relationship between linearity and uniqueness recently clarified by \citet{MarshallVO22}, which proposes a linear calculus with modalities for non-linearity and uniqueness with a somewhat dual behaviour.    
%The uniqueness modality looks similar to our capsule modifier, but the precise relationship is still unclear. 
%\footnote{{Ownership invariant (owner-as-dominator):
%Object $o_1$ is owned by object  $o_2$ if all paths from the roots of the graph (the stack variables)
%to $o_1$ pass throw $o_2$.
%Ownership invariant (owner-as-modifier):
%Object $o_1$ is owned by object  $o_2$ if any field update over $o_1$
%is initiated by $o_2$, that is, a call of a method of $o_2$ is present
%in the stack trace.}}
%, expected to be guaranteed by defensive cloning, as explained below. 
In our approach, instead, the capsule concept models an efficient \emph{ownership transfer}. In other words, when an object $\x$ is ``owned'' by $\y$, it remains always true that $\y$  can only be accessed through  $\x$, whereas the capsule notion is dynamic: a capsule can be ``opened'', that is, assigned to a standard reference and modified, since we can always recover the capsule guarantee.\footnote{Other work in  the  literature supports ownership transfer, see, e.g., \citet{MullerRudich07} and \citet{ClarkeWrigstad03}, however not of {the whole} reachable object graph.}.
We also mention that, whereas in this paper all properties are \emph{deep}, that is, inherited by the reachable object graph, 
%see rule \refToRule{t-field-access}, 
most ownership approaches allows one to distinguish
subparts of the reachable object graph that are referred  to  but not logically owned. This viewpoint has some advantages, for example Rust uses ownership to control deallocation without a garbage collector.

\section{Conclusion}\label{sect:conclu}
 The main achievement of  this  paper is to show that sharing and mutation can be tracked by a coeffect system, thus reconciling two distinct views in  the  literature on type systems for mutability control: substructural type systems, and graph-theoretic properties on heaps.  Specifically,  the contributions of the paper  are  the following:
\begin{itemize}
\item a minimal framework formalizing ingredients of coeffect systems
\item a coeffect system, for an imperative Java-like calculus, where coeffects express \emph{links} among variables and with the final result introduced by the execution
\item an enhanced type system modeling complex features for uniqueness and immutability.
\end{itemize}

The enhanced type system (\refToSection{extended}) cleanly integrates and formalizes language designs by \citet{GianniniSZC19} and \citet{GianniniRSZ19}, as detailed below: 
\begin{itemize}
\item \citet{GianniniSZC19} supports promotion through a very complex type system; moreover, sharing of two variables/parameters cannot be prevented, as, e.g., in the example  on  page \pageref{ex:teams}.
\item \citet{GianniniRSZ19} has a more refined tracking of sharing allowing  us  to express  this   example, but does not handle immutability.
\item In both works, significant properties are expressed and proved with respect to a non-standard reduction model where memory is encoded in the language itself. 
%Here, we use the standard model, expressing properties in the intuitively expected way, and prove them by standard techniques, mainly since sharing in memory is uniformly modeled by coeffects.
\end{itemize} 
Each of the contributions of the paper opens interesting research directions. The minimal framework we defined, modeling coeffect contexts as modules, includes structural coeffect systems, where the coeffect of each variable can be computed independently (that is, the module operators are defined pointwise), and coeffect systems  such  as those in this paper, which can be considered \emph{quasi-structural}. Indeed, coeffects cannot be computed per-variable (notably, the sum and multiplication operator of the module are \emph{not} defined pointwise), but can still be expressed by annotations on single variables. 
 This also shows a difference with existing graded type systems explicitly derived from bounded linear logic, which generally consider purely structural (that is, computable per-variable) grades.\footnote{ This corresponds to a free module, while the module of sharing coeffects is not free (just a retraction of a free one).}. 
In future work we plan to develop the metatheory of the framework, and to investigate its appropriateness both for other quasi-structural cases, and for coeffects which are truly \emph{flat}, that is, cannot be expressed on a per-variable basis. 
This could be coupled with the  design of a $\lambda$-calculus with a generic module-based coeffect system, substantially \mbox{revising \cite{PetricekOM14}. }

In the type system in \refToSection{extended}, coeffects and modifiers are distinct, yet interacting, features. We will investigate a framework where modifiers, or, more generally, \emph{capabilities}  \cite{HallerOdersky10,GordonEtAl12,Gordon20}, are formalized as graded modal types, which are, roughly, types annotated with coeffects (grades) \cite{BrunelGMZ14,OrchardLE19,DalLagoG22}, thus providing a formal foundation for the ``capability'' notion in  the   literature. A related interesting question is  which kinds of modifier/capability can be expressed \emph{purely} as coeffects. The read-only property, for instance, could be expressed by enriching the sharing coeffect  with  a Read/Write component (Read by default), so that in a field assignment, variables connected to  \lstinline{res}  in the context of the left-hand expression are marked as Write.

%The coeffect system for sharing in \refToSection{sharing} could be fruitfully used as a basis for modeling different language features related to sharing and mutability, among those discussed in \refToSection{related}.

Concerning the specific type system in \refToSection{extended},  additional  features are necessary to have a more realistic language. 
 Two important examples are: suitable syntactic sugar for coeffect annotations in method types and
relaxation of coeffects when redefining methods in the presence of inheritance. 
%Precedente
%Among those, important ones are a suitable syntactic sugar for coeffect annotations in method types, and to deal with other language features; for instance, in presence of inheritance we expect coeffects to be possibily relaxed when redefining methods.

Finally, an interesting objective in the context of Java-like languages is to allow variables (notably, method parameters) to be annotated by user-defined coeffects, written by the programmer by extending a predefined abstract class,
%providing methods corresponding to their algebraic structure, 
in the same way user-defined exceptions extend the \lstinline{Exception} predefined class.  This approach would be partly similar to that of Granule \cite{OrchardLE19}, where, however, coeffects cannot be extended by the programmer.  A first step in this direction is presented by \citet{BianchiniDGZ22}.

%% Acknowledgments
% \begin{acks}                            %% acks environment is optional
%                                         %% contents suppressed with 'anonymous'
%   %% Commands \grantsponsor{<sponsorID>}{<name>}{<url>} and
%   %% \grantnum[<url>]{<sponsorID>}{<number>} should be used to
%   %% acknowledge financial support and will be used by metadata
%   %% extraction tools.
% \end{acks}

%% Bibliography
\bibliography{main}

\clearpage

%% Appendix
\appendix
%!TEX root =main.tex

\section{Proof of Theorem~\ref{thm:subj-red}}

\brb
\begin{definition}
$\Gamma\prom\Gamma'$ if  
\begin{enumerate}
\item $\Gamma'= {\Gamma}$ or
\item $\Gamma'=\{\link\}\ctxmul\Gamma$ with $\link$ fresh
\end{enumerate}
\end{definition}

\begin{lemma}\label[lemma]{lem:nonStruct}
If $\Deriv:\IsWFExp{\Gamma}{{\e}}{\T}$, then there is a subderivation $\Deriv':\IsWFExp{\Gamma'}{{\e}}{\T}$ of 
$\Deriv$ ending with a syntax-directed rule and $\Gamma'\prom\Gamma$
\end{lemma}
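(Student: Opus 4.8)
The plan is to prove the statement by induction on the structure of the derivation $\Deriv$, observing first that in the sharing coeffect system of \refToFigure{typing-sharing} the only rule that is \emph{not} syntax-directed is \refToRule{t-prim}; every other rule computes the conclusion's coeffects as a linear combination of those of its premises and is thus syntax-directed. The base case is immediate: if the last rule of $\Deriv$ is syntax-directed, I take $\Deriv' = \Deriv$ and $\Gamma' = \Gamma$, and then $\Gamma' \prom \Gamma$ holds by clause (1) of the definition (the identity case). So the whole content of the lemma lies in peeling off a trailing block of \refToRule{t-prim} applications.

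For the inductive case, suppose the last rule of $\Deriv$ is \refToRule{t-prim}. Then $\Deriv$ has an immediate subderivation $\Deriv_0 : \IsWFExp{\Gamma_0}{\e}{\PT}$, its conclusion is $\IsWFExp{\{\link\}\ctxmul\Gamma_0}{\e}{\PT}$ with $\link$ fresh, and hence $\T = \PT$ and $\Gamma = \{\link\}\ctxmul\Gamma_0$. Applying the induction hypothesis to $\Deriv_0$ yields a subderivation $\Deriv' : \IsWFExp{\Gamma'}{\e}{\PT}$ of $\Deriv_0$ (hence of $\Deriv$) ending in a syntax-directed rule with $\Gamma' \prom \Gamma_0$. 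I then conclude $\Gamma' \prom \Gamma$ by case analysis on $\Gamma' \prom \Gamma_0$. If $\Gamma_0 = \Gamma'$, then $\Gamma = \{\link\}\ctxmul\Gamma'$, which is exactly clause (2). If instead $\Gamma_0 = \{\link''\}\ctxmul\Gamma'$ with $\link''$ fresh, I must collapse the two nested multiplications: using the module axiom $(\rel\rmul\arel)\mmul\mel = \rel\mmul(\arel\mmul\mel)$ instantiated in the sharing semiring $\SharingScalar$ (where the semiring product is $\shmul$), I get $\{\link\}\ctxmul(\{\link''\}\ctxmul\Gamma') = (\{\link\}\shmul\{\link''\})\ctxmul\Gamma'$, and since $\link''$ is fresh, $\link'' \ne \res$, so $\res \notin \{\link''\}$ and the second clause of the definition of $\shmul$ gives $\{\link\}\shmul\{\link''\} = \{\link''\}$. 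Thus $\Gamma = \{\link''\}\ctxmul\Gamma'$, again clause (2) with the fresh link $\link''$.

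The only nontrivial point, and the step I expect to require the most care, is precisely this collapse computation: that iterating \refToRule{t-prim} never produces anything beyond a single scalar multiplication by a fresh singleton. This is what makes the relation $\prom$ closed under the \refToRule{t-prim} step and keeps the induction tight; it rests squarely on the shape of $\shmul$ on fresh (non-$\res$) singletons together with the associativity axiom of the $\SharingScalar$-module $\shCCtx_\clo$ established in \refToDefinition{sharing-coeff}. Everything else is routine case analysis on the last rule, and since \refToRule{t-prim} is the sole non-syntax-directed rule, no other inductive case needs to be examined.
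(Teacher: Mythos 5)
Your proof is correct and follows essentially the same route as the paper's: induction on the derivation, with the identity case for a syntax-directed last rule and, for a trailing \refToRule{t-prim}, a case split on the inductive hypothesis $\Gamma'\prom\Gamma_0$ followed by the collapse $\{\link\}\ctxmul(\{\link''\}\ctxmul\Gamma')=\{\link''\}\ctxmul\Gamma'$. The only difference is that you justify this collapse explicitly via the module associativity axiom and the definition of $\shmul$ on non-$\res$ singletons, whereas the paper asserts the equation directly; your version is slightly more careful on that point.
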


\begin{proof}
By induction on $\Deriv:\IsWFExp{\Gamma}{{\e}}{\T}$
\begin{description}
\item if $\Deriv$ ends with a syntax-directed rule then $\Gamma'=\Gamma$ and $\Deriv=\Deriv'$. If $\Deriv$ ends with rule \refToRule{t-prim} we have $\Gamma=\{\link\}\ctxmul\Delta$ with $\link$ fresh and so $\Delta\prom\Gamma$ where $\IsWFExp{\Delta}{{\e}}{\T}$. By induction hypothesis on the premise we have a subderivation $\Deriv':\IsWFExp{\Gamma'}{{\e}}{\T}$ ending with a syntax-directed rule and $\Gamma'\prom\Delta$. We have 2 cases: 
\begin{itemize}
\item $\Gamma'=\Delta$: by this and by $\Delta\prom\Gamma$ we obtain $\Gamma'\prom\Gamma$, that is, the thesis
\item $\Delta=\{\link'\}\ctxmul\Gamma'$ with $\link'$ fresh: by this and $\Gamma=\{\link\}\ctxmul\Delta$ we obtain $\Gamma=\{\link\}\ctxmul\{\link'\}\ctxmul\Gamma'=\{\link'\}\ctxmul\Gamma'$ so we have $\Gamma'\prom\Gamma$ and so the thesis
\end{itemize}
\end{description}
\end{proof}

\begin{lemma}[Inversion]\label[lemma]{lem:inversion}
If $\IsWFExp{\Gamma}{\e}{\T}$ then exists a context $\Gamma'$ such that $\Gamma'\prom\Gamma$ and $\IsWFExp{\Gamma'}{\e}{\T}$ and the following properties holds:
\begin{enumerate}

\item \label{lem:inversion:x} If $\e = \x$ then $\Gamma' = \emptyset\ctxmul\Gamma''\ctxsum \VarTypeCoeffect{\x}{\T}{\{\res\}}$
\item \label{lem:inversion:cost} If $\e = \const$, then $\T=\PT_\const$.
\item \label{lem:inversion:acc}  If $\e = \FieldAccess{\e}{\f_i}$, then $\IsWFExp{\Gamma'}{\e}{\C}$ and $\fields{\C}=\Field{\T_1}{\f_1} \ldots \Field{\T_n}{\f_n}$ and $\T_i=\T$.
\item \label{lem:inversion:ass}  If $\e = \FieldAssign{\e}{\f_i}{\e'}$ then $\IsWFExp{\Gamma_1}{\e}{\C}$ and $\fields{\C}=\Field{\T_1}{\f_1} \ldots \Field{\T_n}{\f_n}$ and $\T_i=\T$ and $\IsWFExp{\Gamma_2}{\e'}{\T_i}$ with $\Gamma'= \Gamma_1 \shsum \Gamma_2$.
\item \label{lem:inversion:new} If $\e = \ConstrCallTuple{\C}{\e}{n}{\C}$, then we have $\fields{\C}=\Field{\T_1}{\f_1} \ldots \Field{\T_n}{\f_n}$ and
$\IsWFExp{\Gamma_i}{\e_i}{\T_i}$ for all  $i\in 1..n$ and
$\Gamma=\Gamma_1\shsum \ldots \shsum\Gamma_n$.

\item \label{lem:inversion:invk} If $\e = \MethCallTuple{\e_0}{\m}{\e}{n}$, then $\IsWFExp{\Gamma_0}{\e_0}{\C}$ and
$\mtype{\C}{\m}=\funType{\X_{\this},\coeffectType{\T_1}{\X_1} \ldots \coeffectType{\T_n}{\X_n}}{\T}$
and
$\IsWFExp{\Gamma_i}{\e_i}{\T_i}$ for all  $i\in 1..n$ and
$\Gamma'=(\X_{\this} \ctxmul \Gamma_0) \shsum(\X_1 \ctxmul \Gamma_1)\shsum \ldots \shsum (\X_n \ctxmul \Gamma_n)$.
\item \label{lem:inversion:bl} If $\e = \Block{\T}{\x}{\e}{\e'}$ then $\Gamma'=(\X \shsum \{ \link \}) \ctxmul \Gamma_1 \shsum \Gamma_2$ where $ \link$ is fresh and $\IsWFExp{\Gamma_1}{\e}{\T}$
and $\IsWFExp{\Gamma_2, \VarTypeCoeffect{\x}{\T'}{\X}}{\e'}{\T'}$.
\end{enumerate}
\end{lemma}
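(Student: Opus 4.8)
The plan is to obtain this inversion principle almost directly from \refToLemma{nonStruct}, which has already isolated the only source of non-determinism in the system, namely the non-syntax-directed rule \refToRule{t-prim}. First I would apply \refToLemma{nonStruct} to the given derivation $\Deriv:\IsWFExp{\Gamma}{\e}{\T}$, obtaining a subderivation $\Deriv':\IsWFExp{\Gamma'}{\e}{\T}$ that ends with a syntax-directed rule and satisfies $\Gamma'\prom\Gamma$. This $\Gamma'$ is exactly the witness required by the statement, since \refToRule{t-prim} changes neither the subject nor the type of its premise, so $\Deriv'$ is a derivation of the \emph{same} $\e$ and $\T$. It then remains only to establish, case by case, the specific shape of $\Gamma'$ and of the premises of $\Deriv'$.

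Next I would case-split on the top-level constructor of $\e$. The key point is that each of the remaining rules (\refToRule{t-var}, \refToRule{t-const}, \refToRule{t-field-access}, \refToRule{t-field-assign}, \refToRule{t-new}, \refToRule{t-invk}, \refToRule{t-block}) has a conclusion whose subject is a distinct syntactic form, so the last rule of $\Deriv'$ is uniquely determined by $\e$. In each case the corresponding clause of the lemma is obtained by simply reading off the premises and side conditions of that rule: for $\e=\x$ the rule \refToRule{t-var} forces $\Gamma'=\emptyset\ctxmul\Gamma''\shsum\VarTypeCoeffect{\x}{\T}{\{\res\}}$; for $\e=\FieldAssign{\e}{\f_i}{\e'}$ the rule \refToRule{t-field-assign} forces a split $\Gamma'=\Gamma_1\shsum\Gamma_2$ with the two subexpressions typed in $\Gamma_1$ and $\Gamma_2$; and analogously for the remaining constructors, where the linear-combination shape of $\Gamma'$ mirrors the conclusion of the matching rule (for \refToRule{t-invk} and \refToRule{t-block} the fresh links introduced by the rule are absorbed into the scalar coefficients, which is routine). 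No further induction is needed at this stage, since \refToLemma{nonStruct} has already done the inductive work of pushing the analysis past \refToRule{t-prim}.

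The only genuinely delicate point, and the reason the statement cannot assert the inversion properties directly for $\Gamma$ itself, is precisely rule \refToRule{t-prim}: being applicable to any expression of primitive type without touching its subject, it means a derivation of $\IsWFExp{\Gamma}{\e}{\T}$ need not end with the syntax-directed rule for $\e$, and its final context may differ from the syntax-directed one by a scalar multiplication with a fresh singleton. This slack is exactly what the relation $\prom$ records, and it has already been absorbed in \refToLemma{nonStruct}, using the collapse $\{\link\}\ctxmul\{\link'\}\ctxmul\Gamma''=\{\link'\}\ctxmul\Gamma''$ for fresh $\link,\link'$, so that repeated applications of \refToRule{t-prim} never produce more than a single such multiplication. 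I therefore expect the whole argument to be short, with essentially all the difficulty front-loaded into \refToLemma{nonStruct}; the present lemma reduces to a routine case analysis on the syntax-directed rule forced by the shape of $\e$.
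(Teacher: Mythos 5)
Your proposal is correct and takes essentially the same route as the paper: the paper likewise invokes \refToLemma{nonStruct} to reduce to a derivation ending in a syntax-directed rule (with the $\prom$ slack recording possible \refToRule{t-prim} applications) and then reads off the premises of that rule case by case. The only micro-difference is that for clause 5 the paper observes that \refToRule{t-prim} can never follow \refToRule{t-new} (the type is a class type, not a primitive one), so the decomposition holds for $\Gamma$ itself rather than merely for $\Gamma'$ — an observation your uniform treatment recovers immediately.
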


\begin{proof}
\begin{enumerate}
	\item If $\Deriv:\IsWFExp{\Gamma}{\x}{\T}$ then, by \cref{lem:nonStruct}, we know that exists a derivation $\Deriv':\IsWFExp{\Gamma'}{{\x}}{\T}$ subderivation of $\Deriv$ ending with a syntax-directed rule and $\Gamma'\prom\Gamma$. Since the last applied rule in $\Deriv'$ must be \refToRule{t-var}, we have $\Gamma' = \emptyset\ctxmul\Gamma''\ctxsum\VarTypeCoeffect{\x}{\T}{\{\res\}}\shord\VarTypeCoeffect{\x}{\T}{\{\res\}}$
\item If $\IsWFExp{\Gamma}{\const}{\T}$ then the last applied rule can be \refToRule{t-const} or \refToRule{t-prim}. In both cases we have the thesis

\item If $\Deriv:\IsWFExp{\Gamma}{\FieldAccess{\e}{\f_i}}{\T}$  then, by \cref{lem:nonStruct}, we know that exists a derivation $\Deriv':\IsWFExp{\Gamma'}{{\FieldAccess{\e}{\f_i}}}{\T}$ subderivation of $\Deriv$ ending with a syntax-directed rule and $\Gamma'\prom\Gamma$. Since the last applied rule in $\Deriv'$ must be \refToRule{t-access} we have that $\IsWFExp{\Gamma'}{\e}{\C}$ with $\fields{\C}=\Field{\T_1}{\f_1} \ldots \Field{\T_n}{\f_n}$ and $\T_i=\T$

\item If $\Deriv:\IsWFExp{\Gamma}{\FieldAssign{\e}{\f_i}{\e'}}{\T}$  then, by \cref{lem:nonStruct}, we know that exists a derivation $\Deriv':\IsWFExp{\Gamma'}{{\FieldAssign{\e}{\f_i}{\e'}}}{\T}$ subderivation of $\Deriv$ ending with a syntax-directed rule and $\Gamma'\prom\Gamma$. Since the last applied rule in $\Deriv'$ must be \refToRule{t-assign}, we have that $\Gamma' = \Delta_1\ctxsum\Delta_2$ such that $\IsWFExp{\Delta_1}{\e}{\C}$ and $\IsWFExp{\Delta_2}{\e'}{\T_i}$ with $\fields{\C}=\Field{\T_1}{\f_1} \ldots \Field{\T_n}{\f_n}$ and $\T_i=\T$

\item If $\Deriv:\IsWFExp{\Gamma}{\ConstrCallTuple{\C}{\e}{n}}{\C}$  then, since the last applied rule in $
\Deriv$ must be \refToRule{t-new}, we have that $\Gamma = \Delta_1\ctxsum...\ctxsum\Delta_n$ such that $\fields{\C}=\Field{\T_1}{\f_1} \ldots \Field{\T_n}{\f_n}$ and $\IsWFExp{\Delta_i}{\e_i}{\T_i}$ for all  $i\in 1..n$.

\item If $\Deriv:\IsWFExp{\Gamma}{\MethCallTuple{\e_0}{\m}{\e}{n}}{\T}$  then, by \cref{lem:nonStruct}, we know that exists a derivation $\Deriv':
\IsWFExp{\Gamma'}{\MethCallTuple{\e_0}{\m}{\e}{n}}{\T}$ subderivation of $\Deriv$ ending with a syntax-directed rule and $\Gamma'\prom\Gamma$. Since the last applied rule in $
\Deriv'$ must be \refToRule{t-invk}, we have $\IsWFExp{\Delta_0}{\e_0}{\C}$,
$\mtype{\C}{\m}=\funType{\X_{\this},\coeffectType{\T_1}{\X_1} \ldots \coeffectType{\T_n}{\X_n}}{\T}$
and
$\IsWFExp{\Delta_i}{\e_i}{\T_i}$ for all  $i\in 1..n$ and
$\Gamma'=(\X_{\this} \ctxmul \Delta_0) \shsum(\X_1 \ctxmul \Delta_1)\shsum \ldots \shsum (\X_n \ctxmul \Delta_n)$.

\item If $\Deriv:\IsWFExp{\Gamma}{\Block{\T}{\x}{\e}{\e'}}{\T'}$  then, by \cref{lem:nonStruct}, we know that exists a derivation $\Deriv':
\IsWFExp{\Gamma'}{\Block{\T}{\x}{\e}{\e'}}{\T}$ subderivation of $\Deriv$ ending with a syntax-directed rule and $\Gamma'\prom\Gamma$. Since the last applied rule in $
\Deriv'$ must be \refToRule{t-block}, we have $\Gamma'=(\X \shsum \{ \link \}) \ctxmul \Delta' \shsum \Gamma''$ where $ \link$ is fresh $\IsWFExp{\Delta'}{\e}{\T}$
and $\IsWFExp{\Gamma'', \VarTypeCoeffect{\x}{\T'}{\X}}{\e'}{\T'}$.
\end{enumerate}
\end{proof} 
\erb

\begin{lemma}\label{lem:closed-ctx}
If $\Gamma$ is a closed context then, for all $\x,\y\in\dom{\Gamma}$, 
$\getCoeff{\Gamma}{\x}\cap\getCoeff{\Gamma}{\y} \ne \emptyset$ implies $\getCoeff{\Gamma}{\x} = \getCoeff{\Gamma}{\y}$. 
\end{lemma}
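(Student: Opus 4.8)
The plan is to unfold the definition of ``closed'' and then apply the closure condition twice, once in each direction. Write $\X = \getCoeff{\Gamma}{\x}$ and $\Y = \getCoeff{\Gamma}{\y}$. The hypothesis $\closure{\Gamma} = \Gamma$ says exactly that the two defining clauses of the closure already hold with the coeffect sets themselves in place of their closures. The first clause is then vacuous, while the non-trivial second clause becomes the following usable implication: for all variables $\z,\w \in \dom{\Gamma}$ and all links $\link,\link'$, if $\link,\link' \in \getCoeff{\Gamma}{\z}$ and $\link' \in \getCoeff{\Gamma}{\w}$, then $\link \in \getCoeff{\Gamma}{\w}$. Isolating this implication is the only preliminary observation needed.

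First I would use the hypothesis $\X \cap \Y \ne \emptyset$ to fix a common link $\link' \in \X \cap \Y$. To establish $\X \subseteq \Y$, take an arbitrary $\link \in \X$; then $\link,\link' \in \getCoeff{\Gamma}{\x}$ and $\link' \in \getCoeff{\Gamma}{\y}$, so the implication above (with $\z = \x$ and $\w = \y$) gives $\link \in \getCoeff{\Gamma}{\y} = \Y$. The reverse inclusion $\Y \subseteq \X$ follows by the symmetric argument: for $\link \in \Y$ we have $\link,\link' \in \getCoeff{\Gamma}{\y}$ and $\link' \in \getCoeff{\Gamma}{\x}$, so the implication (now with $\z = \y$ and $\w = \x$) yields $\link \in \X$. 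Combining the two inclusions gives $\X = \Y$, which is the claim.

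Since the whole argument is a two-line deduction from the definition, there is no substantive obstacle; the only point requiring care is reading the closure definition correctly, namely that for a \emph{closed} context the second defining clause holds directly for the coeffect sets $\getCoeff{\Gamma}{\x}$ rather than only for their closures $\closure{(\getCoeff{\Gamma}{\x})}$. Once that is noted, the symmetric double application is routine and completes the proof, formally justifying the remark made just after the definition of closure that in a closed context two variables have either equal or disjoint coeffects.
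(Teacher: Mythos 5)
Your proof is correct and matches the paper's approach: the paper dismisses this lemma as ``immediate from the definition of $\closure{\_}$'', and your argument simply spells out that immediacy, extracting the transitivity clause of the closure (which holds for the coeffect sets themselves in a closed context) and applying it symmetrically to get both inclusions.
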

\begin{proof}
Immediate from the definition of $\closure{\_}$. 
\end{proof}

\begin{lemma} \label{lem:links-closure}
Let $\Gamma$ be a closed context and $\Delta = \Gamma,\VarTypeCoeffect{\x}{\T}{\X}$. 
Then, for all $\y\in\dom{\Delta}$, we have 
\[ 
\getCoeff{\closure\Delta}{\y} =  \begin{cases}
\bigcup \{ \getCoeff{\Delta}{\z} \mid \getCoeff{\Delta}{\z}\cap\X\ne\emptyset \} & \getCoeff{\Delta}{\y}\cap\X\ne\emptyset \\ 
\getCoeff{\Delta}{\y} & \text{otherwise} 
\end{cases}\] 
\end{lemma}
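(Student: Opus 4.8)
The plan is to reduce the statement to a \emph{connected-component} description of the closure operator and then exploit closedness of $\Gamma$. First I would establish the following reformulation of $\closure{\_}$: for any context $\cctx$, put an (undirected) edge between $z, z' \in \dom\cctx$ whenever $\getCoeff{\cctx}{z}\cap\getCoeff{\cctx}{z'}\ne\emptyset$, and let $[\y]_\cctx$ denote the connected component of $\y$ in the resulting graph; then
\[
\getCoeff{\closure\cctx}{\y} \;=\; \bigcup_{z\in[\y]_\cctx}\getCoeff{\cctx}{z}.
\]
I would prove this by checking that the right-hand side, read as a context, (i) pointwise contains $\cctx$, (ii) satisfies the two defining clauses of $\closure{\_}$, so that it is closed, and (iii) is contained in $\closure\cctx$, because each link it assigns to $\y$ is forced into $\getCoeff{\closure\cctx}{\y}$ by applying the second closure clause along an edge-path from $\y$ to the relevant $z$. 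Minimality of $\closure\cctx$ then gives equality.

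With this description fixed, I would use the hypotheses on $\Gamma$. Since $\Gamma$ is closed, \cref{lem:closed-ctx} says its variables partition into blocks that pairwise have \emph{disjoint} coeffects, so $\Gamma$ has no edges between distinct blocks, and $\getCoeff{\Gamma}{z}$ is already the union over $z$'s block. Passing from $\Gamma$ to $\Delta = \Gamma,\VarTypeCoeffect{\x}{\T}{\X}$ leaves all coeffects of old variables unchanged and adds only edges incident to $\x$, namely to those $z$ with $\getCoeff{\Gamma}{z}\cap\X\ne\emptyset$. Hence, assuming $\X\ne\emptyset$, the component of $\x$ in $\Delta$ is exactly $S=\{\x\}\cup\{z\in\dom\Gamma : \getCoeff{\Gamma}{z}\cap\X\ne\emptyset\}$ (it is connected via $\x$, and no further block can attach because distinct $\Gamma$-blocks are disjoint), whereas every block of $\Gamma$ disjoint from $\X$ remains an untouched component.

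The case analysis then falls out. If $\getCoeff{\Delta}{\y}\cap\X=\emptyset$, then $\y\ne\x$ and $\y$'s block is not merged, so $[\y]_\Delta$ coincides with $\y$'s block in $\Gamma$ and the union over it is $\getCoeff{\Gamma}{\y}=\getCoeff{\Delta}{\y}$. If instead $\getCoeff{\Delta}{\y}\cap\X\ne\emptyset$, then $\y\in S$, so $\getCoeff{\closure\Delta}{\y}=\bigcup_{z\in S}\getCoeff{\Delta}{z}$; and since $\{z\in\dom\Delta : \getCoeff{\Delta}{z}\cap\X\ne\emptyset\}$ is precisely $S$ (it is $\x$, whose coeffect is $\X$, together with the merged blocks), this union equals the claimed right-hand side. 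The degenerate case $\X=\emptyset$ is immediate: no variable meets $\X$, so only the second clause of the statement applies, and it holds because closing $\Delta$ then changes nothing beyond what was already present in $\closure\Gamma=\Gamma$.

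The main obstacle is the first step. The defining clauses of $\closure{\_}$ are stated at the level of \emph{links} rather than variables, so turning them into the variable-level component picture requires the auxiliary observation that all links occurring in a single coeffect $\getCoeff{\cctx}{z}$ are closure-equivalent; this is what lets edge-connectivity of variables match link-propagation, and it is also the crux of the minimality inclusion (iii). Once that bridge is in place, the remainder is routine bookkeeping with unions, using only the disjointness of blocks guaranteed by \cref{lem:closed-ctx}.
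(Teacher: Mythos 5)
Your proof is correct, and it reaches the statement by a route that is organized differently from the paper's, even though the underlying mechanism is the same. The paper works directly on the specific context $\Delta$: it writes down the candidate answer $\Theta$ (exactly the right-hand side of the statement), observes $\Delta\subseteq\Theta$ and $\Theta\mathrel{\hat{\subseteq}}\closure\Delta$, and then shows $\Theta$ is closed by a two-case, link-level argument resting on \cref{lem:closed-ctx}; minimality of the closure then yields $\closure\Delta=\Theta$. You instead first prove a general characterization of $\closure{\_}$ for \emph{arbitrary} coeffect contexts --- the closure coeffect of $\y$ is the union of the coeffects over the connected component of $\y$ in the ``overlapping coeffects'' graph --- and only then specialize, using \cref{lem:closed-ctx} to see that the components of $\Gamma$ are its equal-coeffect blocks and that adjoining $\x$ with coeffect $\X$ merges precisely the blocks meeting $\X$ into the component of $\x$, leaving all other blocks untouched. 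What your route buys is a reusable, conceptually transparent description of the closure operator, and also more care on a point the paper glosses over: the inclusion $\Theta\mathrel{\hat{\subseteq}}\closure\Delta$, which the paper calls ``trivial by definition'', does require the link-propagation argument you single out as the crux (each application of the second closure clause pushes all links of one coeffect into any overlapping one, iterated along a path). What the paper's route buys is brevity: by treating only the one-variable extension it needs, it avoids proving closedness and minimality of the component-wise union in full generality. Your handling of the degenerate case $\X=\emptyset$ and of the identification of the merged component $S$ with the index set in the statement is also sound.
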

\begin{proof}
Suppose $\Gamma = \VarTypeCoeffect{\x_1}{\T_1}{\X_1},\ldots,\VarTypeCoeffect{\x_n}{\T_n}{\X_n}$ with $n \ge 0$, $\VarTypeCoeffect{\x}{\T}{\X} = \VarTypeCoeffect{\x_{n+1}}{\T_{n+1}}{\X_{n+1}}$ and set 
$\Theta = \VarTypeCoeffect{x_1}{\T_1}{\Y_1},\ldots,\VarTypeCoeffect{\x_{n+1}}{\T_{n+1}}{\Y_{n+1}}$, where  for all $i \in 1..n+1$, we have 
\[ \Y_i = \begin{cases}
\bigcup \{ \X_j \mid j\in1..n+1, \X_j\cap\X_{n+1}\ne\emptyset \}  & \X_i\cap\X_{n+1}\ne\emptyset \\
\X_i & \text{otherwise} 
\end{cases}\] 
The inequality $\Theta\hat\subseteq\closure\Delta$ is trivial by definition of $\closure{\_}$.  We know that $\Delta \subseteq \Theta$, so  to get the other direction, we just have to show that $\Theta$ is closed. 
To this end, let $i\in 1..n+1$, $\link_1\in\Y_i$ and $\link_1,\link_2\in\Y_j$ for some $j \in 1..n+1$, then we have to prove that $\link_2\in\Y_i$. 
We distinguish to cases. 
\begin{itemize}
\item If $\X_j\cap\X_{n+1} = \emptyset$, then $\Y_j = \X_j$. 
We observe that $\link_1\notin\X_k$ for any $k \in 1..n+1$ such that $\X_k\cap\X_{n+1} \ne\emptyset$. 
This is obvious for $k = n+1$, as it is against the assumption $\X_j\cap\X_{n+1}=\emptyset$. 
For $k \in 1..n$, 
since $\Gamma$ is closed, by \cref{lem:closed-ctx}, we would bet $\X_j = \X_k$, and so $\X_j\cap\X_{n+1} \ne \emptyset$, which is again a contraddiction. 
This implies that $\X_i\cap\X_{n+1}=\emptyset$ and so $\Y_i = \X_i$. 
Therefore, applying again \cref{lem:closed-ctx}, we get $\Y_i = \X_i = \X_j = \Y_j$, thus $\link_2\in\Y_i$, as needed. 
\item If $\X_j\cap\X_{n+1}\ne\emptyset$, then we have $\Y_j = \bigcup\{\X_k \mid k \in 1..n+1,\X_k\cap\X_{n+1}\ne\emptyset\}$, hence 
$\link_1\in\X_h$ for some $h\in 1..n+1$ such that $\X_h\cap\X_{n+1}\ne\emptyset$. 
By a argument similar to the previous point, we get that  $\X_i\cap\X_{n+1}\ne\emptyset$, hence, by definition, $\Y_i = \Y_j$ that proves the thesis. 
\end{itemize}
\end{proof}

%\brb
%\begin{definition}
%$\Judge:\IsWFExp{\Gamma}{\e}{\T}\prom\Judge':\IsWFExp{\Gamma'}{\e}{\T}$ if
%\begin{enumerate}
%\item $\Gamma'= {\Gamma}$ and $\T = \C$
%\item $\Gamma\prom\Gamma'$ and $\T = \PT$
%\end{enumerate}
%\end{definition}
%\erb

\brb
\begin{lemma}\label[lemma]{lem:InvCtx}[Inversion for context]
If $\IsWFExp{\Gamma}{\Ctx{\e}}{\T}$, then for some $\Gamma',\Delta$,  $\x\not\in\dom{\Gamma}$, $\X$ and $\T'$, $\Gamma = (\X\ctxmul\Delta)\shsum\Gamma'$,
$\IsWFExp{\Gamma'\shsum\VarTypeCoeffect{\x}{\T'}{\X}}{\Ctx{\x}}{\T}$ and
$\IsWFExp{\Delta}{\e}{\T'}$.
\end{lemma}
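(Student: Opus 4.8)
The plan is to prove this by induction on the typing derivation $\Deriv$ of $\IsWFExp{\Gamma}{\Ctx{\e}}{\T}$, with a case analysis guided by the shape of the evaluation context $\ctx$. The base case is $\ctx=\emptyctx$, where $\Ctx{\e}=\e$: here I choose $\x$ fresh for $\Gamma$ and $\e$, and set $\T'=\T$, $\X=\{\res\}$, $\Delta=\Gamma$, and $\Gamma'=\emptyset\ctxmul\Gamma$ (the context with the variables and types of $\Gamma$ but all coeffects $\emptyset$). Then $\IsWFExp{\Delta}{\e}{\T'}$ is exactly the hypothesis; $\IsWFExp{\Gamma'\shsum\VarTypeCoeffect{\x}{\T'}{\X}}{\x}{\T}$, i.e. $\IsWFExp{(\emptyset\ctxmul\Gamma)\shsum\VarTypeCoeffect{\x}{\T}{\{\res\}}}{\x}{\T}$, holds by \refToRule{t-var}; and the decomposition $\Gamma=(\{\res\}\ctxmul\Gamma)\shsum(\emptyset\ctxmul\Gamma)=(\X\ctxmul\Delta)\shsum\Gamma'$ follows because $\{\res\}$ is the multiplicative unit of $\SharingScalar$ (so $\{\res\}\ctxmul\Gamma=\Gamma$ on a closed $\Gamma$) and summing with the all-empty context is the identity for $\shsum$.

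For $\ctx\neq\emptyctx$ the outermost constructor of $\ctx$ fixes the top-level shape of $\Ctx{\e}$, so the last rule of $\Deriv$ is either the corresponding syntax-directed rule or the unique non-syntax-directed rule \refToRule{t-prim}. In the syntax-directed subcase I peel off one layer: the relevant premise types the immediate subcontext $\ctx'[\e]$ with a strictly smaller derivation, so the induction hypothesis gives a decomposition $\Gamma_1=(\X\ctxmul\Delta)\shsum\Gamma_1'$ together with judgments for $\ctx'[\x]$ and $\e$; I then re-apply the same rule with $\ctx'[\x]$ in place of $\ctx'[\e]$ and recombine. For the purely additive constructs (\refToRule{t-field-assign}, \refToRule{t-new}, and all non-hole premises) recombination is just associativity and commutativity of $\shsum$: if $\Gamma=\Gamma_1\shsum\Gamma_{\mathrm{rest}}$, I take $\Gamma'=\Gamma_1'\shsum\Gamma_{\mathrm{rest}}$. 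For the constructs that multiply the hole premise by a scalar $\Y$ (the receiver in \refToRule{t-invk}, with $\Y=\X_0\cup\{\link_0\}$, and the declaration in \refToRule{t-block}, with $\Y=\X\cup\{\link\}$), I push $\Y$ through the decomposition using distributivity $\Y\ctxmul(\Gamma_a\shsum\Gamma_b)=(\Y\ctxmul\Gamma_a)\shsum(\Y\ctxmul\Gamma_b)$ and associativity $\Y\ctxmul(\X\ctxmul\Delta)=(\Y\shmul\X)\ctxmul\Delta$, obtaining the new witnesses $\X''=\Y\shmul\X$ and the appropriate $\Gamma''$. In each case the same distributivity identities show that re-applying the rule to $\ctx'[\x]$ yields precisely the context $\Gamma''\shsum\VarTypeCoeffect{\x}{\T'}{\X''}$, while freshness of $\x$ guarantees $\x\notin\dom{\Gamma}$ and that $\x$ differs from any block-bound variable.

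The remaining subcase, where the last rule is \refToRule{t-prim}, is the one that makes induction on $\ctx$ alone insufficient and forces the induction to be on the derivation. Here the premise types the same $\Ctx{\e}$ in a context $\Gamma_0$ with $\Gamma=\{\link\}\ctxmul\Gamma_0$ ($\link$ fresh). I apply the induction hypothesis to the premise, obtaining $\Gamma_0=(\X\ctxmul\Delta)\shsum\Gamma_0'$ and judgments for $\Ctx{\x}$ and $\e$, and multiply the decomposition by $\{\link\}$: by distributivity and associativity of $\ctxmul$ this gives $\Gamma=((\{\link\}\shmul\X)\ctxmul\Delta)\shsum(\{\link\}\ctxmul\Gamma_0')$, and re-applying \refToRule{t-prim} to the $\Ctx{\x}$ judgment produces exactly $(\{\link\}\ctxmul\Gamma_0')\shsum\VarTypeCoeffect{\x}{\T'}{\{\link\}\shmul\X}$, the judgment $\IsWFExp{\Delta}{\e}{\T'}$ being left untouched. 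I expect this interaction with \refToRule{t-prim} to be the main obstacle: it dictates that the proof be by induction on the typing derivation rather than on $\ctx$, and it is where one must check that every rewriting step is an instance of a module axiom (distributivity of $\ctxmul$ over $\shsum$, associativity $(\Y\shmul\X)\ctxmul\Delta=\Y\ctxmul(\X\ctxmul\Delta)$, and neutrality of $\{\res\}$ and $\emptyset$) that continues to hold in the non-structural module $\shCCtx_\clo$, where each operation silently applies the transitive closure $\clo$.
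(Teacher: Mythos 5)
Your proof is correct, and its per-rule reconstruction step (apply the induction hypothesis to the premise containing the hole, push the rule's scalar through the decomposition via distributivity of $\ctxmul$ over $\shsum$ and associativity $(\Y\shmul\X)\ctxmul\Delta=\Y\ctxmul(\X\ctxmul\Delta)$, then re-apply the same rule to $\Ctx{\x}$) is exactly what the paper does. Where you genuinely diverge is in the organization around the non-syntax-directed rule \refToRule{t-prim}, and hence in the induction measure. You induct on the typing derivation and treat \refToRule{t-prim} as an ordinary case, multiplying the inherited decomposition by the fresh singleton $\{\link\}$ and re-applying the rule. The paper instead inducts on the structure of $\ctx$ and absorbs \refToRule{t-prim} into auxiliary machinery proved beforehand: \refToLemma{nonStruct} extracts from any derivation a syntax-directed subderivation whose context differs from the original by at most one multiplication by a fresh singleton (the relation $\prom$), \refToLemma{inversion} packages this per construct, and each case of the paper's context induction then ends with a split on whether $\T$ is a class type or a primitive type, re-applying \refToRule{t-prim} in the latter case. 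So your remark that the \refToRule{t-prim} interaction \emph{forces} the induction to be on the derivation is slightly too strong: induction on $\ctx$ does go through, provided one first proves a $\prom$-style inversion lemma, which is itself proved by induction on the derivation. The trade-off is that your single induction is self-contained and avoids the $\prom$ bookkeeping, while the paper's factoring buys reuse, since \refToLemma{inversion} is needed again in the subject-reduction proof, so the stripping of \refToRule{t-prim} layers is established once rather than inlined. (A minor cosmetic difference: in the base case you take $\Gamma'=\emptyset\ctxmul\Gamma$ where the paper takes $\Gamma'=\emptyset$; both satisfy the required decomposition, since summing a context whose coeffects are all empty is the identity.)
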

\begin{proof}  
The proof is by induction on $\ctx$. 
We only show some cases, the others are analogous. 
\begin{description}
\item[$\ctx = \emptyctx$] \label{lem:context:hole}
Just take $\Gamma'=\emptyset$,
$\x\not\in\dom{\Gamma}$, $\X=\{\res\}$, $\T'=\T$ and $\Delta=\Gamma$. 

\item [$\ctx = \FieldAssign{\ctx'}{\f}{\e'}$] \label{lem:context:assL}  
By \cref{lem:inversion}(\ref{lem:inversion:ass}) we have a context $\Gamma'$ such that $\Gamma'\prom\Gamma$, $\Gamma_1\shsum\Gamma_2=\Gamma'$, $\IsWFExp{\Gamma_1}{\ctx'[\e]}{\C}$ and $\IsWFExp{\Gamma_2}{\e'}{\T}$. 
By induction hypothesis we know that for some $\Gamma'',\Delta$,  $\x\not\in\dom{\Gamma}$, $\X$ and $\T'$, $\Gamma_1 = (\X\ctxmul\Delta)\shsum\Gamma''$,
$\IsWFExp{\Gamma''\shsum\VarTypeCoeffect{\x}{\T'}{\X}}{\ctx'[\x]}{\C}$ and
$\IsWFExp{\Delta}{\e}{\T'}$.
We can assume $\x\notin\dom{\Gamma}$ (doing a step of renaming if needed). 
We get 
$\Gamma_1\shsum\Gamma_2= (\X\ctxmul\Delta)\shsum\Gamma''\shsum\Gamma_2=\Gamma'$
and 
$\IsWFExp{(\Gamma_2\shsum\Gamma'')\shsum\VarTypeCoeffect{\x}{\T'}{\X}}{\FieldAssign{\ctx'[\x]}{\f}{\e'}}{\T}$. 
We have two cases:
\begin{itemize}
\item $\T=\C$ 
We know that the last rule applied to derive $\IsWFExp{\Gamma}{\Ctx{\e}}{\T}$ is \refToRule{t-assign}, so $\Gamma = \Gamma_1 \ctxsum \Gamma_2 = \Gamma'$, so we have the thesis
\item $\T=\PT$ 
We have two cases. If $\Gamma = \Gamma'$ we have the same situation as above so we have the thesis. If $\Gamma = \{\link\}\ctxmul\Gamma'$ then we have $\Gamma = \{\link\} \ctxmul ((\X\ctxmul\Delta)\shsum\Gamma'' \ctxsum \Gamma_2) = ((\{\link\} \ctxmul \X)\ctxmul\Delta)\shsum\{\link\} \ctxmul(\Gamma'' \ctxsum \Gamma_2)$. We can apply rule \refToRule{t-prim} to $\IsWFExp{(\Gamma_2\shsum\Gamma'')\shsum\VarTypeCoeffect{\x}{\T'}{\X}}{\FieldAssign{\ctx'[\x]}{\f}{\e'}}{\T}$ to obtain $\IsWFExp{\{\link\}\ctxmul((\Gamma_2\shsum\Gamma'')\shsum\VarTypeCoeffect{\x}{\T'}{\X})}{\FieldAssign{\ctx'[\x]}{\f}{\e'}}{\T}$. We know $\{\link\}\ctxmul((\Gamma_2\shsum\Gamma'')\shsum\VarTypeCoeffect{\x}{\T'}{\X}) = \{\link\}\ctxmul(\Gamma_2\shsum\Gamma'')\shsum\VarTypeCoeffect{\x}{\T'}{\{\link\}\ctxmul\X}$.
\end{itemize}

\item [$\ctx=\Block{\T'}{\x}{\ctx'}{\e'}$] \label{lem:context:bl} 
By \cref{lem:inversion}(\ref{lem:inversion:bl}) we have a context $\Gamma'$ such that $\Gamma'\prom\Gamma$,

$\Gamma'=(\X \shsum \{ \link \}) \ctxmul \Gamma_1 \shsum \Gamma_2$ where $ \link$ is fresh and $\IsWFExp{\Gamma_1}{\ctx'[\e]}{\T}$
and $\IsWFExp{\Gamma_2, \VarTypeCoeffect{\x}{\T'}{\X}}{\e'}{\T'}$.

By induction hypothesis we know that for some $\Gamma'',\Delta$,  $\y\not\in\dom{\Gamma}$, $\Y$ and $\T'$, $\Gamma_1 = (\Y\ctxmul\Delta)\shsum\Gamma''$,
$\IsWFExp{\Gamma''\shsum\VarTypeCoeffect{\y}{\T'}{\Y}}{\ctx'[\y]}{\C}$ and
$\IsWFExp{\Delta}{\e}{\T'}$.
We can assume $\y\notin\dom{\Gamma}$ (doing a step of renaming if needed). 
We get 
$(\X \shsum \{ \link \}) \ctxmul \Gamma_1 \shsum \Gamma_2 = (\X \shsum \{ \link \}) \ctxmul ( (\Y\ctxmul\Delta)\shsum\Gamma'') \shsum \Gamma_2 = (((\X \shsum \{ \link \}) \ctxmul \Y)\ctxmul\Delta)\shsum ((\X \shsum \{ \link \})\ctxmul\Gamma'') \shsum \Gamma_2 = \Gamma'$. By rule \refToRule{t-block} we have 
$\IsWFExp{((\X \shsum \{ \link \})\ctxmul\Gamma'')\shsum \Gamma_2 \shsum\VarTypeCoeffect{\y}{\T'}{((\X \shsum \{ \link \}) \ctxmul \Y)}}{\ctx[\y]}{\T}$. 
We have two cases:
\begin{itemize}
\item $\T=\C$ 
We know that the last rule applied to derive $\IsWFExp{\Gamma}{\Ctx{\e}}{\T}$ is \refToRule{t-block}, so $\Gamma = (\X \shsum \{ \link \}) \ctxmul \Gamma_1 \shsum \Gamma_2 = \Gamma'$, so we have the thesis
\item $\T=\PT$ 
We have two cases. If $\Gamma = \Gamma'$ we have the same situation as above so we have the thesis. If $\Gamma = \{\link'\}\ctxmul\Gamma'$ then we have $\Gamma = \{\link'\} \ctxmul ((\X \shsum \{ \link \}) \ctxmul \Gamma_1 \shsum \Gamma_2) =  (\{\link'\} \ctxmul (\X \shsum \{ \link \}) \ctxmul \Gamma_1) \shsum \{\link'\}\ctxmul\Gamma_2 = (((\{\link'\} \ctxmul (\X \shsum \{ \link \}) \ctxmul \Y)\ctxmul\Delta)\shsum((\{\link'\} \ctxmul (\X \shsum \{ \link \}) \ctxmul\Gamma'') \shsum \{\link'\}\ctxmul\Gamma_2$. We can apply rule \refToRule{t-prim} to $\IsWFExp{((\X \shsum \{ \link \})\ctxmul\Gamma'')\shsum \Gamma_2 \shsum\VarTypeCoeffect{\y}{\T'}{((\X \shsum \{ \link \}) \ctxmul \Y)}}{\ctx[\y]}{\T}$ to obtain $\IsWFExp{\{\link'\}\ctxmul(((\X \shsum \{ \link \})\ctxmul\Gamma'')\shsum \Gamma_2 \shsum\VarTypeCoeffect{\y}{\T'}{((\X \shsum \{ \link \}) \ctxmul \Y)})}{\ctx[\y]}{\T}$. We know $\{\link'\}\ctxmul(((\X \shsum \{ \link \})\ctxmul\Gamma'')\shsum \Gamma_2 \shsum\VarTypeCoeffect{\y}{\T'}{((\X \shsum \{ \link \}) \ctxmul \Y)}) = (\{\link'\}\ctxmul(\X \shsum \{ \link \})\ctxmul\Gamma'')\shsum \{\link'\}\ctxmul\Gamma_2 \shsum\VarTypeCoeffect{\y}{\T'}{(\{\link'\}\ctxmul(\X \shsum \{ \link \}) \ctxmul \Y)}$ By $\IsWFExp{(\{\link'\}\ctxmul(\X \shsum \{ \link \})\ctxmul\Gamma'')\shsum \{\link'\}\ctxmul\Gamma_2 \shsum\VarTypeCoeffect{\y}{\T'}{(\{\link'\}\ctxmul(\X \shsum \{ \link \}) \ctxmul \Y)}}{\ctx[\y]}{\T}$ we obtain the thesis.
\end{itemize}
\end{description}  
\end{proof}
\erb
 
\begin{lemma}\label{lem:ctx-subst}
If $\IsWFExp{\Delta}{\e}{\T}$ and $\IsWFExp{\Gamma\shsum\VarTypeCoeffect{\x}{\T}{\X}}{\Ctx{\x}}{\T'}$, with $\x\notin\dom\Gamma$, then 
$\IsWFExp{\Gamma\shsum\X\ctxmul\Delta}{\Ctx{\e}}{\T'}$. 
\end{lemma}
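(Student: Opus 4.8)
The plan is to proceed by structural induction on the evaluation context $\ctx$, using the inversion lemma (\cref{lem:inversion}) to peel off the outermost construct at each step and the module laws of \cref{def:module} to recombine the coeffects. Throughout I use that the hole variable $\x$ occurs only in the hole, so the remaining subterms of $\ctx$ are hole-free and do not mention $\x$.

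For the base case $\ctx=\emptyctx$ we have $\Ctx{\x}=\x$ and $\Ctx{\e}=\e$, so the hypothesis is $\IsWFExp{\Gamma\shsum\VarTypeCoeffect{\x}{\T}{\X}}{\x}{\T'}$. Inverting the variable rule via \cref{lem:inversion} forces $\T'=\T$ and shows that, up to the optional extra factor $\{\link\}\ctxmul(-)$ coming from \refToRule{t-prim}, the coeffect of $\x$ is the multiplicative unit $\{\res\}$ while every coeffect in $\Gamma$ is $\emptyset$. In the plain case $\X=\{\res\}$, hence $\X\ctxmul\Delta=\Delta$, and the goal $\IsWFExp{\Gamma\shsum\Delta}{\e}{\T}$ follows from $\IsWFExp{\Delta}{\e}{\T}$ by weakening with the empty-coeffect variables of $\Gamma$ (a routine admissible property, since the leaf rules carry the $\emptyset\ctxmul\Gamma$ slack). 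In the \refToRule{t-prim} case $\T$ is primitive and $\X=\{\link\}$, so I first apply \refToRule{t-prim} to $\IsWFExp{\Delta}{\e}{\T}$ and then weaken.

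For each inductive shape of $\ctx$ — $\FieldAccess{\ctx'}{\f}$, $\FieldAssign{\ctx'}{\f}{\e'}$, $\FieldAssign{\loc}{\f}{\ctx'}$, $\ConstrCall{\C}{\vs,\ctx',\es}$, the two method-call forms, and $\Block{\T''}{\y}{\ctx'}{\e'}$ — I would invert the corresponding typing rule with \cref{lem:inversion}. This exposes a typing of the sub-context, $\IsWFExp{\Gamma_1}{\ctx'[\x]}{\T_1}$ say, together with typings of the hole-free subterms; since $\x$ occurs only in the hole, we can write $\Gamma_1=\Gamma_1'\shsum\VarTypeCoeffect{\x}{\T}{\X_1}$ with $\x\notin\dom{\Gamma_1'}$. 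Applying the induction hypothesis to $\ctx'$ and $\IsWFExp{\Delta}{\e}{\T}$ yields $\IsWFExp{\Gamma_1'\shsum\X_1\ctxmul\Delta}{\ctx'[\e]}{\T_1}$, and I then reapply the same typing rule to rebuild $\Ctx{\e}$. It remains to check that the resulting context is exactly $\Gamma\shsum\X\ctxmul\Delta$: the outer construct multiplies $\Gamma_1$ by some scalar $\Y$ (the unit $\{\res\}$ for field access, new, and field assignment; the join $\X_\y\cup\{\link\}$ of the bound-variable coeffect with a fresh link for the block; $\X_i\cup\{\link_i\}$ for the receiver or argument of an invocation) and sums the result with the hole-free parts. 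Hence $\X=\Y\shmul\X_1$, and distributivity of $\ctxmul$ over $\shsum$ together with the module associativity $(\Y\shmul\X_1)\ctxmul\Delta=\Y\ctxmul(\X_1\ctxmul\Delta)$ give precisely $\Gamma\shsum\X\ctxmul\Delta$.

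The main obstacle is bookkeeping rather than conceptual. I must carry the $\prom$ slack produced by \cref{lem:inversion} (the optional \refToRule{t-prim} multiplication by a fresh link, relevant when $\Ctx{\x}$ has primitive type) through every inductive case, mirroring the two-case split $\T=\C$ versus $\T=\PT$ already used in the proof of \cref{lem:InvCtx}; and I must keep the freshly generated links genuinely fresh so that scalar multiplication and transitive closure commute with the substitution as the module laws require. Once the algebraic identity above is verified for one representative case, such as the block, the remaining cases are routine instances of the same pattern.
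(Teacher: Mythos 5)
Your skeleton---induction on the structure of $\ctx$ (equivalently, on the typing derivation, which is how the paper phrases its one-line proof), inversion to expose the sub-derivation for the hole, the induction hypothesis, and reapplication of the same rule---is the right one, and your base case is fine. The gap is the pivotal identity ``$\X=\Y\shmul\X_1$'' in the inductive cases: it is false in general. In this non-structural module the sum $\shsum$ with the contexts of the hole-free siblings performs a transitive closure, which can strictly enlarge the coeffect of $\x$ beyond the scalar contribution of the outer construct. Concretely, take $\ctx=\FieldAssign{\emptyctx}{\f}{\e'}$ with $\e'=\Block{\C}{\z}{\y}{\ConstrCall{\C}{\z,\z}}$; the block rule gives $\IsWFExp{\VarTypeCoeffect{\y}{\C}{\{\res,\link_2\}}}{\e'}{\C}$ with $\link_2$ fresh, and the receiver sub-derivation gives $\X_1=\{\res\}$, with $\Y=\{\res\}$ for field assignment. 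Yet the closed context typing $\Ctx{\x}$ is $\VarTypeCoeffect{\x}{\C}{\{\res,\link_2\}},\VarTypeCoeffect{\y}{\C}{\{\res,\link_2\}}$, because the closure merges the two coeffects through their common link $\res$; so $\X=\{\res,\link_2\}$ while $\Y\shmul\X_1=\{\res\}$.

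The equation you actually need---that the rebuilt context $(\Gamma_1'\shsum\X_1\ctxmul\Delta)\shsum\Gamma_2$ coincides with $\Gamma\shsum\X\ctxmul\Delta$---is still true (in the example both sides produce the same closed context), but it is \emph{not} an instance of distributivity and associativity of the module: those axioms cannot bridge the difference between $\X_1$ and its closure-enlargement $\X$. What is missing is a closure-interaction argument showing that multiplying $\Delta$ by the smaller coeffect $\X_1$ and then closing against the sibling contexts yields the same closed context as multiplying by the enlarged $\X$ and closing. This is precisely the kind of reasoning the paper isolates in \cref{lem:closed-ctx,lem:links-closure} and deploys at length in the proofs of \cref{lem:InvCtx} and \cref{lem:substitutionLemma} (compare the careful handling of $\Y$ versus $\Y'$ in the block case of the substitution lemma). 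Until that step is supplied, the inductive cases of your argument do not go through as written.
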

\begin{proof}
By induction on the derivation of $\IsWFExp{\Gamma\shsum\VarTypeCoeffect{\x}{\T}{\X}}{\Ctx{\x}}{\T'}$. 
\end{proof}

%\begin{lemma}\label{lem:extend-links}
%If $\Gamma \shord (\X\ctxmul\Delta)\shsum\Gamma'$ then there exists $\Delta'$ such that 
%$\Delta'\shord\Delta$, $\links\Gamma\subseteq\links{\Delta'}$ and $\Gamma \shord (\X\ctxmul\Delta')\shsum\Gamma'$ and $\dom{\Gamma}\subseteq\dom{\Delta'}$. 
%\end{lemma}
%\begin{proof}
%Denote by $\Gamma_{\setminus\res}$ the context obtained from $\Gamma$ by removing $\res$. 
%It is easy to see that 
%$\X\ctxmul\Gamma_{\setminus\res} = \Gamma_{\setminus\res}$ and 
%$\Gamma \shord \Gamma_{\setminus\res}\shsum(\X\ctxmul\Delta)\shsum\Gamma'$. 
%By linearity of scalar multiplication we get 
%$\Gamma \shord \X\ctxmul(\Gamma_{\setminus\res}\shsum\Delta) \shsum \Gamma'$. 
%We get the thesis by defining $\Delta' = \Gamma_{\setminus\res}\shsum\Delta$ as 
%$\links\Gamma \subseteq \links{\Gamma_{\setminus\res}} \subseteq \links{\Gamma_{\setminus\res}}\cup \links\Delta = \links{\Delta'}$ and $\dom{\Gamma}\subseteq\dom{\Delta'}$. 
%\end{proof}

\begin{lemma}\label[lemma]{lem:closureGroup}
Let $\Gamma$ a closed context and $\Delta = \sum_{\x\in\dom{\Delta}}\VarTypeCoeffect{\x}{\T_{\x}}{\X}$ and such that $\dom{\Delta} \subseteq \dom{\Gamma}$ and 
$\links{\Gamma} \cap \links{\Delta} = \{\res\}$.
We have 2 cases:
\begin{itemize}
\item $\res \notin \X$\\
For all $\y \in \dom{\Gamma}$, if exists $\x \in \dom{\Delta}$ such that $\getCoeff{\Gamma}{\y} \cap \getCoeff{\Gamma}{\x} \neq \emptyset$ then
$\getCoeff{\Gamma \shsum \Delta}{\y} = \X \shsum \sum_{\x \in \Delta}\getCoeff{\Gamma}{\x}$, otherwise 
$\getCoeff{\Gamma \shsum \Delta}{\y} = \getCoeff{\Gamma}{\y}$.
\item $\res \in \X$\\
We define $\Y = \getCoeff{\Gamma}{\z}$ if $\res \in \getCoeff{\Gamma}{\z}$ and $\z \in \dom{\Gamma}$.
For all $\y \in \dom{\Gamma}$, if exists $\x \in \dom{\Delta}$ such that $\getCoeff{\Gamma}{\y} \cap \getCoeff{\Gamma}{\x} \neq \emptyset$ or $\res \in \getCoeff{\Gamma}{\y}$ then
$\getCoeff{\Gamma \shsum \Delta}{\y} = \X \shsum \sum_{\x \in \Delta}\getCoeff{\Gamma}{\x} \shsum \Y$, otherwise 
$\getCoeff{\Gamma \shsum \Delta}{\y} = \getCoeff{\Gamma}{\y}$.
\end{itemize}
\end{lemma}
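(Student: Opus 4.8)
The plan is to unfold the definition $\Gamma \shsum \Delta = \closure{(\Gamma \mathbin{\hat{\cup}} \Delta)}$ and describe explicitly how the transitive closure merges coeffects. First I would compute the pointwise union $\Theta = \Gamma \mathbin{\hat{\cup}} \Delta$: since every variable of $\Delta$ carries the same coeffect $\X$ and $\dom\Delta \subseteq \dom\Gamma$, we get $\getCoeff\Theta\y = \getCoeff\Gamma\y \cup \X$ for $\y \in \dom\Delta$ and $\getCoeff\Theta\y = \getCoeff\Gamma\y$ otherwise. The conceptual tool driving everything is that $\closure{(\cdot)}$ assigns to each variable the union of all coeffects reachable from it by repeatedly following shared links, and that, by \refToLemma{closed-ctx}, in the already closed context $\Gamma$ two distinct classes are disjoint; hence the only new merges are those created by adding $\X$.

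Then I would pin down which classes actually merge. I would assume $\X \ne \emptyset$, which is the case of interest (if $\X=\emptyset$ then $\Delta$ contributes no link and $\Gamma\shsum\Delta=\Gamma$). All variables of $\Delta$ now co-occur the links of $\X$, so their $\Gamma$-classes $\getCoeff\Gamma\x$ are all fused together with $\X$ into a single set, $M = \X \cup \bigcup_{\x \in \dom\Delta}\getCoeff\Gamma\x$. The hypothesis $\links\Gamma \cap \links\Delta = \{\res\}$ is precisely what blocks any further merging: the links of $\X$ are disjoint from those of $\Gamma$ except possibly for $\res$, so no $\Gamma$-class outside the $\getCoeff\Gamma\x$ can be dragged in through a shared link unless that link is $\res$. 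This is where the two cases split. If $\res \notin \X$, then $\X$ meets $\links\Gamma$ nowhere, $M$ is final, and the $\res$-class of $\Gamma$ is left untouched. If $\res \in \X$, then $\X$ additionally shares $\res$ with the $\Gamma$-class $\Y$ of the variables containing $\res$, so $\Y$ fuses into $M$ as well, giving $M = \X \cup \bigcup_{\x}\getCoeff\Gamma\x \cup \Y$.

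To make this rigorous I would mimic the proof of \refToLemma{links-closure}: define the candidate context $\Phi$ assigning $M$ to every $\y$ satisfying the stated condition (connected in $\Gamma$ to some $\Delta$-variable, or, in the second case, with $\res \in \getCoeff\Gamma\y$) and $\getCoeff\Gamma\y$ to every other $\y$, and then prove $\closure\Theta = \Phi$ by two inclusions. For $\closure\Theta \shord \Phi$ it suffices to check $\Theta \shord \Phi$ (immediate from the computation of $\Theta$, since $M \supseteq \getCoeff\Gamma\x$ and $M \supseteq \X$) together with closedness of $\Phi$, the latter following from \refToLemma{closed-ctx} and the fact that $M$ is a union of entire $\Gamma$-classes. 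For $\Phi \shord \closure\Theta$ I would exhibit the explicit chains of shared links showing that every link of $M$ lands in one and the same class of $\closure\Theta$: the links of $\X$ co-occur, at each $\x \in \dom\Delta$, with those of $\getCoeff\Gamma\x$ inside $\getCoeff\Theta\x$, and, when $\res \in \X$, the link $\res$ is shared between $\getCoeff\Theta\x$ and the coeffect $\Y$, bridging $\Y$ into the same class.

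The main obstacle I anticipate is the bookkeeping in the closedness and reverse-inclusion steps, namely proving that the merge stops exactly at $M$ and that no spurious $\Gamma$-class is absorbed. This is governed entirely by $\links\Gamma \cap \links\Delta = \{\res\}$, and the argument must be carried out twice, with the single difference of whether $\res$ furnishes the extra bridge to $\Y$. I would also keep an eye on variables with empty $\Gamma$-coeffect, which are connected to nothing and therefore always fall into the ``otherwise'' branch.
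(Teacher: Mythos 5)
Your proposal cannot be measured against the paper's own argument for a simple reason: in the paper the proof environment for this lemma is \emph{empty} --- the authors state \refToLemma{closureGroup} and give no proof at all. So your attempt is not ``the same approach'' or ``a different approach''; it supplies an argument where the paper supplies none. That said, your strategy is the natural and correct one, and it is exactly the generalization of the technique the paper \emph{does} use for \refToLemma{links-closure} (which is the special case where $\Delta$ consists of a single variable): compute the pointwise union $\Theta=\Gamma\mathbin{\hat\cup}\Delta$, guess the closed context $\Phi$ explicitly, obtain $\closure{\Theta}\shord\Phi$ from minimality of the closure once $\Theta\shord\Phi$ and closedness of $\Phi$ are checked (both resting on \refToLemma{closed-ctx} plus the disjointness hypothesis $\links{\Gamma}\cap\links{\Delta}=\{\res\}$), and obtain $\Phi\shord\closure{\Theta}$ by exhibiting link chains, with $\res$ providing the one extra bridge to the class $\Y$ in the second case. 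All the steps you outline go through.

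Two edge cases deserve to be made explicit, since they are really defects of the statement that your proof quietly repairs. First, your standing assumption $\X\ne\emptyset$ is not merely a convenience: it is needed both for the bridging chains in the inclusion $\Phi\shord\closure{\Theta}$ (an empty $\X$ cannot fuse the classes of two different variables of $\dom{\Delta}$) and for the truth of the lemma itself --- if $\X=\emptyset$ and $\dom{\Delta}$ meets two distinct nonempty $\Gamma$-classes, then $\Gamma\shsum\Delta=\Gamma$, while the stated formula would force those classes to merge, so the lemma as written is false in that case. Second, in the case $\res\in\X$ the set $\Y$ is undefined when no variable of $\Gamma$ carries $\res$ in its coeffect; the proof should stipulate $\Y=\emptyset$ there (your chains then degenerate harmlessly). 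With these two provisos recorded, your plan yields a complete proof of the intended statement.
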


\begin{proof}
\end{proof}

\begin{lemma}\label[lemma]{lem:timesPlus}
Let $\Restr{\Gamma}{\Delta}=\Delta$.
\begin{enumerate}
\item  $\Restr{(\X\ctxmul\Gamma)}{(\X\ctxmul\Delta)}=\X\ctxmul\Delta$

\item  If $\links{\Theta}\cap(\links{\Delta}\cup\links{\Gamma}) = \{\res\}$ or $\links{\Theta}\cap(\links{\Delta}\cup\links{\Gamma}) = \{\res\} \cup \X$ where exists $\x \in \dom{\Delta}$ such that $\getCoeff{\Delta}{\x} = \X$ and $\dom{\Theta}\subseteq\dom{\Delta}$, then $\Restr{(\Gamma\shsum\Theta)}{(\Delta\shsum\Theta)}=\Delta\shsum\Theta$.
\end{enumerate}
\end{lemma}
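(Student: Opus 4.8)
The plan is to prove both parts by unfolding the restriction equality into its pointwise form and then showing that each module operation preserves it class-by-class. Recall that $\Restr{A}{B}=B$ amounts to $\dom{B}\subseteq\dom{A}$ together with $\getCoeff{A}{\x}\cap\links{B}=\getCoeff{B}{\x}$ for every $\x\in\dom{B}$; since both $\ctxmul$ and $\shsum$ preserve domains, the domain inclusions are immediate and the work is entirely in the coeffect equalities. Throughout I would exploit that $\Gamma$ and $\Delta$ are closed, so by \cref{lem:closed-ctx} each partitions its variables into classes whose coeffects are either equal or disjoint, and that the hypothesis $\Restr{\Gamma}{\Delta}=\Delta$ says precisely that the partition of $\Delta$ is the one induced by $\Gamma$ on $\dom{\Delta}$ once we keep only the links in $\links{\Delta}$. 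In particular $\links{\Delta}\subseteq\links{\Gamma}$ and $\res\in\links{\Delta}$.

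For Part 1, I would first observe that the operation $\X\hatshmul(\cdot)$ acts uniformly: it fixes every link different from $\res$ and replaces $\res$ by $\X$ in every nonempty coeffect containing it. Since $\res\in\links{\Delta}$, the hypothesis gives that a variable $\x\in\dom{\Delta}$ has $\res\in\getCoeff{\Gamma}{\x}$ iff $\res\in\getCoeff{\Delta}{\x}$, so the $\res$-class of $\Delta$ is exactly the $\res$-class of $\Gamma$ intersected with $\dom{\Delta}$. I would then split on the class of $\x$. On a class not containing $\res$, coeffects are untouched by $\hatshmul$ and were already closed, so $\getCoeff{\X\ctxmul\Gamma}{\x}\cap\links{(\X\ctxmul\Delta)}=\getCoeff{\X\ctxmul\Delta}{\x}$ reduces directly to the hypothesis. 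On the $\res$-class, using \cref{lem:links-closure} to compute the coeffect after substitution and closure, I would show that the links of $\getCoeff{\X\ctxmul\Gamma}{\x}$ lying in $\links{(\X\ctxmul\Delta)}$ are exactly $\X$ together with the original $\Delta$-links of that class, which is precisely $\getCoeff{\X\ctxmul\Delta}{\x}$; the extra links that $\Gamma$ may have merged into the $\X$-class through variables outside $\dom{\Delta}$ or links outside $\links{\Delta}$ are discarded by the intersection.

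For Part 2, the disjointness hypothesis $\links{\Theta}\cap(\links{\Delta}\cup\links{\Gamma})=\{\res\}$ (case a), respectively $\{\res\}\cup\X$ with $\X=\getCoeff{\Delta}{\x}$ for some $\x$ (case b), together with $\dom{\Theta}\subseteq\dom{\Delta}$, ensures that $\Theta$ meets both $\Gamma$ and $\Delta$ only along the $\res$-class (and, in case b, along the $\X$-class). I would apply \cref{lem:closureGroup} with this class as the distinguished group to obtain closed-form descriptions of $\getCoeff{\Gamma\shsum\Theta}{\y}$ and $\getCoeff{\Delta\shsum\Theta}{\y}$, and then intersect the former with $\links{(\Delta\shsum\Theta)}$. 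The disjointness guarantees that the only links of $\Gamma\shsum\Theta$ falling inside $\links{(\Delta\shsum\Theta)}$, beyond those already visible in $\Delta\shsum\Theta$, come from the merged $\res$/$\X$ class, and there the equality collapses to the hypothesis $\Restr{\Gamma}{\Delta}=\Delta$ and to the fact that the $\Theta$-contribution is identical on both sides.

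I expect the main obstacle to be controlling the \emph{transitive closure}: both the substitution $\res\mapsto\X$ in Part 1 and the pointwise-union-then-close in Part 2 create global merging, so the delicate point is proving that restricting back to $\links{(\X\ctxmul\Delta)}$, respectively $\links{(\Delta\shsum\Theta)}$, retains exactly the links already present on the right-hand side and discards precisely those picked up from variables of $\dom{\Gamma}\setminus\dom{\Delta}$ or from links outside the restriction set. The link-disjointness hypotheses of Part 2 and the observation $\res\in\links{\Delta}\subseteq\links{\Gamma}$ of Part 1 are exactly what make this bookkeeping go through, with \cref{lem:links-closure} and \cref{lem:closureGroup} supplying the explicit coeffect formulas needed to carry out the comparison.
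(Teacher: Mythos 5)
Your overall strategy coincides with the paper's own proof: both unfold the restriction pointwise, both settle Part 1 by splitting on whether $\res$ occurs in the coeffect (the paper's two cases $\res\notin\X_\Gamma$ and $\res\in\X_\Gamma$ are exactly your two classes, and the concluding step is the same computation showing the intersection discards the links picked up from $\dom{\Gamma}\setminus\dom{\Delta}$ or from outside $\links{\Delta}$), and both reduce Part 2 to explicit formulas for the sums obtained from \cref{lem:closureGroup}. In Part 1 you are, if anything, more careful than the paper, which silently identifies $\X\ctxmul\Gamma$ with the pointwise product and never discharges the closure step; your appeal to \cref{lem:links-closure} addresses precisely that.

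Part 2 as you describe it, however, has a concrete gap. First, \cref{lem:closureGroup} is stated only for an added context all of whose variables carry one and the same coeffect; your $\Theta$ need not have this shape, so it cannot be fed to the lemma in one shot. Second, the interaction between $\Theta$ and $\Gamma$ (resp.\ $\Delta$) is not confined to the $\res$-class/$\X$-class: the hypothesis restricts shared \emph{links}, but merging under $\shsum$ is also driven by shared \emph{variables}, and since $\dom{\Theta}\subseteq\dom{\Delta}$, every class of $\Theta$ glues together the $\Gamma$-classes (resp.\ $\Delta$-classes) of its variables. So the claim that the only new links inside $\links{(\Delta\shsum\Theta)}$ arise from the merged $\res$/$\X$ class fails, e.g., when a single fresh-link class of $\Theta$ contains two variables lying in distinct classes of $\Delta$. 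The paper repairs exactly this by decomposing $\Theta$ as $\hat{\cup}_{i}\Theta_i$ into classes of equal coeffect with (essentially) pairwise disjoint links, and proving that the restriction equality is preserved when summing one $\Theta_i$ at a time; each step is then a legitimate instance of \cref{lem:closureGroup}, and the parallel-merging phenomenon your sketch correctly identifies as the heart of the matter is verified classwise, using that $\Restr{\Gamma}{\Delta}=\Delta$ makes $\getCoeff{\Gamma}{\x}\cap\getCoeff{\Gamma}{\y}\neq\emptyset$ equivalent to $\getCoeff{\Delta}{\x}\cap\getCoeff{\Delta}{\y}\neq\emptyset$ on $\dom{\Delta}$. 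With that decomposition and iteration inserted, your argument becomes the paper's.
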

\begin{proof}
\begin{enumerate}
\item Let $\x\in\dom{\Delta}$ and $\getCoeff{\Gamma}{\x}=\X_{\Gamma}$ and $\getCoeff{\Delta}{\x}=\X_{\Delta}$.
By definition of $\Restr{\Gamma}{\Delta}$ and $\Restr{\Gamma}{\Delta}=\Delta$, 
we have that $\X_{\Gamma}\cap\links{\Delta}=\X_{\Delta}$. 
Let $\Y$ be such that $\getCoeff{\X\ctxmul\Gamma}{\x}=\Y$, then $\Y=\X\MulLnk\X_{\Gamma}$. We want to prove that
$\Y\cap\links{\X\MulLnk{\Delta}}=\X\MulLnk\X_{\Delta}$.
Consider two cases: $\res\not\in\X_{\Gamma}$ and $\res\in\X_{\Gamma}$.\\
In the first case $\Y=\X_{\Gamma}$ and $\Y\cap(\links{\X\MulLnk{\Delta}})=\Y\cap\links{\Delta}=\X_{\Delta}$. Moreover, since 
$\res\not\in\X_{\Delta}$ we have $\X\MulLnk\X_{\Delta}=\X_{\Delta}$. Therefore, $\Y\cap(\links{\X\MulLnk{\Delta}})=\X\MulLnk\X_{\Delta}$.\\ 
In the second case $\Y=\X\cup(\X_{\Gamma}-\{\res\})$ and since $\res\in\X_{\Delta}$ we have  $\X\MulLnk\X_{\Delta}=\X\cup(\X_{\Delta}-\{\res\})$. 
From $\X_{\Gamma}\cap\links{\Delta}=\X_{\Delta}$ we get $(\X_{\Gamma}-\{\res\})\cap\links{\Delta}=\X_{\Delta}-\{\res\}$.
Note that $\X\subseteq\links{\X\MulLnk{\Delta}}$. Therefore 
$\Y\cap(\links{\X\MulLnk{\Delta}})=\X\cup(\X_{\Delta}-\{\res\})$ which proves the result.

\item We know $\Theta = \hat{\cup}_{i=0}^{n}\Theta_i$, where for all $i \in [1...n]$, in $\Theta_i$ all variables have the same coeffect $\X_i$ and, for all $j,k \in [1...n]$ and $j\neq k$, $\links{\Theta_j} \cap \links{\Theta_k} = \{\res\}$.
To obtain the thesis therefore we just need to prove that the property holds summing one $\Theta_i$ at a time, since $\Gamma \shsum \Theta$ and $\Delta \shsum \Theta$ can be obtained summing iteratively the $\Theta_i$s. It suffices to prove the thesis only for the first sum.
By $\Restr{\Gamma}{\Delta} = \Delta$ we have that for all $\x \in \dom{\Delta}$, $\getCoeff{\Gamma}{\x} = \Z \cup \getCoeff{\Delta}{\x}$, where $\Z \cap \links{\Delta} = \emptyset$.
By \cref{lem:closureGroup} we have two cases for $\Gamma \shsum \Theta_i$ and $\Delta \shsum \Theta_i$:
\begin{itemize}
\item $\res \notin \X_i$\\
For all $\y \in \dom{\Gamma}$, if exists $\x \in \dom{\Theta_i}$ such that $\getCoeff{\Gamma}{\y} \cap \getCoeff{\Gamma}{\x} \neq \emptyset$ then
$\getCoeff{\Gamma \shsum \Theta_i}{\y} = \X \shsum \sum_{\x \in \Theta_i}\getCoeff{\Gamma}{\x} = \X \shsum \sum_{\x \in \Theta_i}(\getCoeff{\Delta}{\x}\cup \Z_{\x})$, otherwise 
$\getCoeff{\Gamma \shsum\Theta_i}{\y} = \getCoeff{\Gamma}{\y} = \Z_{\y} \cup \getCoeff{\Delta}{\y}$. 
For all $\y \in \dom{\Delta}$, if exists $\x \in \dom{\Theta_i}$ such that $\getCoeff{\Delta}{\y} \cap \getCoeff{\Delta}{\x} \neq \emptyset$ then
$\getCoeff{\Delta \shsum \Theta_i}{\y} = \X \shsum \sum_{\x \in \Theta_i}\getCoeff{\Delta}{\x}$, otherwise 
$\getCoeff{\Delta \shsum\Theta_i}{\y} = \getCoeff{\Delta}{\y}$. 
By $\Restr{\Gamma}{\Delta} = \Delta$ we have that, for all $\x,\y \in \dom{\Delta}$, $\getCoeff{\Gamma}{\x}\cap\getCoeff{\Gamma}{\y}\neq\emptyset$ if and only if $\getCoeff{\Delta}{\x}\cap\getCoeff{\Delta}{\y}\neq\emptyset$.
By these considerations we derive that, for all $\x \in \dom{\Gamma}$, $
\getCoeff{\Gamma \shsum \Theta_i}{\x} \cap \links{\Delta \shsum \Theta_i} = \getCoeff{\Delta \shsum \Theta_i}$, that is, $\Restr{\Gamma \shsum \Theta_i}{\Delta \shsum \Theta_i} = \Delta \shsum \Theta_i$.
\item $\res \in \X_i$\\
We define $\Y = \getCoeff{\Gamma}{\z}$ if $\res \in \getCoeff{\Gamma}{\z}$ and $\z \in \dom{\Gamma}$.
For all $\y \in \dom{\Gamma}$, if exists $\x \in \dom{\Theta_i}$ such that $\getCoeff{\Gamma}{\y} \cap \getCoeff{\Gamma}{\x} \neq \emptyset$ or $\res \in \getCoeff{\Gamma}{\y}$ then
$\getCoeff{\Gamma \shsum \Theta_i}{\y} = \X \cup \sum_{\x \in \Theta_i}\getCoeff{\Gamma}{\x} \cup \Y= \X \cup \sum_{\x \in \Theta_i}(\getCoeff{\Delta}{\x}\cup \Z_{\x}) \cup \Y$, otherwise 
$\getCoeff{\Gamma \shsum\Theta_i}{\y} = \getCoeff{\Gamma}{\y} = \Z_{\y} \cup \getCoeff{\Delta}{\y}$. 
We define $\Y = \getCoeff{\Delta}{\z}$ if $\res \in \getCoeff{\Delta}{\z}$ and $\z \in \dom{\Delta}$.
For all $\y \in \dom{\Delta}$, if exists $\x \in \dom{\Theta_i}$ such that $\getCoeff{\Delta}{\y} \cap \getCoeff{\Delta}{\x} \neq \emptyset$ or $\res \in \getCoeff{\Theta_i}{\y}$ then
$\getCoeff{\Delta \shsum \Theta_i}{\y} = \X \shsum \sum_{\x \in \Theta_i}\getCoeff{\Delta}{\x} \shsum \Y$, otherwise 
$\getCoeff{\Gamma \shsum\Theta_i}{\y} = \getCoeff{\Gamma}{\y}$. By $\Restr{\Gamma}{\Delta} = \Delta$ we have that, for all $\x,\y \in \dom{\Delta}$, $\getCoeff{\Gamma}{\x}\cap\getCoeff{\Gamma}{\y}\neq\emptyset$ if and only if $\getCoeff{\Delta}{\x}\cap\getCoeff{\Delta}{\y}\neq\emptyset$.
By these considerations we derive that, for all $\x \in \dom{\Gamma}$, $
\getCoeff{\Gamma \shsum \Theta_i}{\x} \cap \links{\Delta \shsum \Theta_i} = \getCoeff{\Delta \shsum \Theta_i}$, that is, $\Restr{\Gamma \shsum \Theta_i}{\Delta \shsum \Theta_i} = \Delta \shsum \Theta_i$. 
\end{itemize}

\end{enumerate}
\end{proof}

\begin{lemma}[Substitution]\label[lemma]{lem:substitutionLemma}
If  $\IsWFExp{\Delta}{\e'}{\T'}$ and $\IsWFExp{\Gamma,\VarTypeCoeffect{\x}{\T'}{\X}}{\e}{\T}$ and $\Delta'$ is $\Delta$ with fresh renamed links,
 then $\IsWFExp{\Gamma'}{\Subst{\e}{\e'}{\x}}{\T}$ with $\Gamma'$ such that $\Gamma' \shord \X\shmul\Delta'\shsum\Gamma$. 
\end{lemma}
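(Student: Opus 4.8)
The plan is to argue by induction on the derivation of $\IsWFExp{\Gamma,\VarTypeCoeffect{\x}{\T'}{\X}}{\e}{\T}$, treating the two non-syntax-directed rules together with the syntax-directed ones and exploiting \cref{lem:inversion} to decompose contexts when convenient. Throughout, the algebraic engine will be the module axioms of \cref{def:sharing-coeff} — distributivity $(\X_1\cup\X_2)\ctxmul\Delta=(\X_1\ctxmul\Delta)\shsum(\X_2\ctxmul\Delta)$, associativity $(\X\shmul\Y)\ctxmul\Delta=\X\ctxmul(\Y\ctxmul\Delta)$, the unit law $\{\res\}\ctxmul\Delta=\Delta$, the zero law $\emptyset\ctxmul\Delta=\emptyset$ — together with monotonicity of $\shsum$ and $\ctxmul$ with respect to $\shord$. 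The rule \refToRule{t-prim} is itself an inductive case: its premise types $\e$ with a smaller coeffect $\X_0$ and the conclusion multiplies everything by a fresh $\{\link\}$, so applying the induction hypothesis to the premise and then re-applying \refToRule{t-prim} yields the bound after rewriting $\{\link\}\ctxmul(\X_0\ctxmul\Delta')=(\{\link\}\shmul\X_0)\ctxmul\Delta'$ and pushing $\{\link\}\ctxmul(-)$ through $\shsum$.

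First I would dispatch the base cases. If $\e=\x$, then $\Subst{\e}{\e'}{\x}=\e'$; since the last rule is \refToRule{t-var}, the coeffect of $\x$ is exactly $\{\res\}$, so $\X\ctxmul\Delta'=\{\res\}\ctxmul\Delta'=\Delta'$ by the unit law. Because links are immaterial to derivability (they only record sharing), $\IsWFExp{\Delta}{\e'}{\T'}$ gives $\IsWFExp{\Delta'}{\e'}{\T'}$ after the fresh renaming, and weakening in the empty-coeffect variables of $\Gamma$ places this within $\Delta'\shsum\Gamma$ up to $\shord$. If $\e=\const$ or $\e=\y$ with $\y\ne\x$, then $\x$ is not used, so $\X=\emptyset$, whence $\X\ctxmul\Delta'=\emptyset$ by the zero law, the substitution is vacuous, and the claim is immediate.

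For each compound construct I would invert the typing rule via \cref{lem:inversion} to obtain typings $\IsWFExp{\Gamma_i}{\e_i}{\T_i}$ of the immediate subterms, read off the coeffect of $\x$ in each as $\X_i$ with $\X=\bigcup_i\X_i$, apply the induction hypothesis to every $\e_i$ using a \emph{separate} fresh renaming $\Delta_i'$ of $\Delta$ for each occurrence, and reassemble with the same rule. The resulting context is a linear combination of the $\X_i\ctxmul\Delta_i'$ and the $\Gamma_i$, which I would compare against $\X\ctxmul\Delta'\shsum\Gamma$ using distributivity to split $\X\ctxmul\Delta'$ over the $\X_i$, associativity to absorb the extra scalars introduced by \refToRule{t-invk} and \refToRule{t-block}, and monotonicity to lift the $\shord$-bounds returned by the induction hypotheses; \cref{lem:closureGroup,lem:links-closure} are the tools that control how the transitive closure redistributes links when the pieces are recombined.

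The main obstacle is exactly this recombination. Because the sharing module is non-structural, $\shsum$ and $\ctxmul$ are defined only up to transitive closure, so the coeffect $\X$ of $\x$ need not split cleanly over the subterms — the closure may already have merged $\x$'s links with those of other variables — and substituting $\e'$ at several occurrences introduces several link-disjoint copies $\Delta_i'$ of $\Delta$ whose internal sharing must be reconciled with the single copy $\Delta'$ in the stated bound. This mismatch is precisely why the conclusion is an overapproximation $\Gamma'\shord\X\ctxmul\Delta'\shsum\Gamma$ rather than an equality: the freshness of the renamed links, together with the fact that $\shord$ only demands pointwise containment, provides the slack needed to absorb the surplus links and any additional sharing created by the closure, and the real work of the proof is checking, case by case, that every discrepancy falls on the correct side of $\shord$.
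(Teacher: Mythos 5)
Your skeleton (induction on the typing derivation, the unit law $\{\res\}\ctxmul\Delta'=\Delta'$ for the variable case, re-applying \refToRule{t-prim} after the induction hypothesis) matches the paper's proof, but your inductive step contains a genuine error: you instantiate the induction hypothesis on each subterm $\e_i$ with a \emph{separate} fresh renaming $\Delta_i'$ of $\Delta$. The paper's proof instantiates the induction hypothesis for \emph{all} subterms with one and the same renaming $\Delta'$, and this is not a cosmetic difference. The recombination step needs the module distributivity law $(\X_1\cup\X_2)\ctxmul\Delta' = (\X_1\ctxmul\Delta')\shsum(\X_2\ctxmul\Delta')$, e.g.\ in \refToRule{t-invk} to collapse $\shsum_{i}(\X_i\MulLnk\X_i')\ctxmul\Delta'$ into $(\cup_{i}\X_i\MulLnk\X_i')\ctxmul\Delta'$, and that law only applies when every summand multiplies the \emph{same} context. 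With distinct copies $\Delta_1',\ldots,\Delta_n'$ there is no such collapse; worse, the stated bound fails outright. The context obtained by reassembling the typing rule then contains the pairwise-disjoint fresh links of every $\Delta_i'$, whereas the bound $\X\ctxmul\Delta'\shsum\Gamma$ contains only the links of the single copy $\Delta'$ (plus those of $\Gamma$ and $\X$). Since $\shord$ is pointwise set inclusion, a variable $\y\in\dom{\Delta}$ whose coeffect in your derived context mentions a link of, say, $\Delta_2'$ cannot be below its coeffect in $\X\ctxmul\Delta'\shsum\Gamma$, which never mentions that link.

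This also shows that your closing intuition is backwards: freshness of the renamed links provides no ``slack'' here, because the derived context sits on the \emph{left} of $\shord$, and extra fresh links on the left are precisely what an upper bound cannot absorb. The slack in the lemma runs the other way: the derived context may have \emph{fewer} links than $\X\ctxmul\Delta'\shsum\Gamma$ (this is what actually happens in the \refToRule{t-block} case, where the interaction of the closure with the binder's coeffect is controlled via \cref{lem:links-closure,lem:closed-ctx}). The repair is exactly the paper's choice: fix a single $\Delta'$, fresh for everything in sight, use it in every application of the induction hypothesis—this is sound because it is the same expression $\e'$ being substituted at every occurrence of $\x$, so all copies may (and must) share their links—and then carry out the algebraic computation you describe, which then goes through.
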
 

\begin{proof}
By induction on the derivation of $\IsWFExp{\Gamma, \VarTypeCoeffect{\x}{\T'}{\X}}{\e}{\T}$
\begin{description}
\item  [\refToRule{t-var}] 
We know that $\Gamma = \emptyset$ and $\X = \{\res\}$, hence the thesis follows from the hypothesis $\IsWFExp{\Delta}{\e'}{\T'}$, as 
$\emptyset\shsum \{\res\}\ctxmul\Delta = \Delta$ and $\Subst{\x}{\e'}{\x} = \e'$.

\item [\refToRule{t-invk}] We know that $\e = \MethCallTuple{\e_0}{\m}{\e}{n}$, $\Gamma \shord \shsum_{0\leq i\leq n} \X_{i} \ctxmul \Gamma_i$ and 
$\X \supseteq \cup_{0\leq i\leq n}\X_{i}\MulLnk\X_i'$, where $\IsWFExp{\Gamma_i,\VarTypeCoeffect{\x}{\T'}{\X_i'}}{\e_i}{\C_i}$ ($0\leq i\leq n$) and $\mtype{\C_0}{\m}=\funType{\X_{0},\coeffectType{\T_1}
{\X_1} \ldots \coeffectType{\T_n}{\X_n}}{\T}$. By induction hypothesis we obtain $\IsWFExp{\Gamma_i \shsum (\X_i' \ctxmul \Delta')}{\Subst{\e_i}{\e}
{\x}}{\T_i}$ ($0\leq i\leq n$). Applying rule \refToRule{t-invk} we obtain\\
\centerline{
$
\IsWFExp{\shsum_{0\leq i\leq n} \X_{i}\ctxmul(\Gamma_i\shsum\X'_i\ctxmul\Delta')}{\Subst{\MethCallTuple{\e_0}{\m}{\e}{n}}{\e}{\x}}{\T}
$}
Moreover,  $\shsum_{0\leq i\leq n} \X_{i}\ctxmul(\Gamma_i\shsum\X'_i\ctxmul\Delta')=\shsum_{0\leq i\leq n}(\X_i\ctxmul\Gamma_i)\shsum(\X_i\MulLnk\X_i')\ctxmul\Delta'$ and \\
$\shsum_{0\leq i\leq n}(\X_i\MulLnk\X_i')\ctxmul\Delta'=(\cup_{0\leq i\leq n} \X_i\MulLnk\X_i'  )\ctxmul\Delta'$.

We want to show that $\shsum_{0\leq i\leq n}(\X_i\ctxmul\Gamma_i)  \shsum (\cup_{0\leq i\leq n}\X_i\MulLnk\X_i' \ctxmul \Delta') \shord \X\shmul\Delta'\shsum\Gamma$.
Knowing that $\Gamma \shord \shsum_{0\leq i\leq n} \X_{i} \ctxmul \Gamma_i$ and $\X \supseteq \cup_{0\leq i\leq n}\X_i\MulLnk\X_i'$ we derive 
$\shsum_{0\leq i\leq n}(\X_i\ctxmul\Gamma_i)  \shsum (\cup_{0\leq i\leq n}\X_i\MulLnk\X_i' \ctxmul \Delta') \shord \X\shmul\Delta'\shsum\Gamma $.

\item [\refToRule{t-new}]We know that $\e = \ConstrCall{\C}{\e_1, \ldots, \e_n}$, $\Gamma \shord \shsum_{1\leq i\leq n} \Gamma_i$ and $\X \supseteq \cup_{1\leq i\leq n} \X_i$ and 
$\IsWFExp{\Gamma_i,\VarTypeCoeffect{\x}{\T'}{\X_i}}{\e_i}{\T_i}$ ($1\leq i\leq n$). 
By inductive hypothesis $\IsWFExp{\Gamma_i \shsum (\X_i \ctxmul \Delta')}{\Subst{\e_i}{\e'}{\x}}{\T_i}$, ($1\leq i\leq n$). 
Applying rule \refToRule{t-new} we obtain
$\IsWFExp{\shsum_{1\leq i\leq n}\Gamma_i \shsum (\X_i \ctxmul \Delta')}{\Subst{ \ConstrCall{\C}{\e_1, \ldots, \e_n}}{\e'}{\x}}{\T_1}$. \\

We know $\shsum_{1\leq i\leq n}\Gamma_i \shsum (\X_i \ctxmul \Delta')= \shsum_{1\leq i\leq n}\Gamma_i \shsum\shsum_{1\leq i\leq n}(\X_i \ctxmul \Delta')= \shsum_{1\leq i\leq n}\Gamma_i \shsum(\cup_{1\leq i\leq n}\X_i \ctxmul \Delta')$.

We want to show that $\shsum_{1\leq i\leq n}\Gamma_i \shsum(\cup_{1\leq i\leq n}\X_i \ctxmul \Delta') \shord \X\shmul\Delta'\shsum\Gamma$.
By $\X \supseteq \cup_{1\leq i\leq n}\X_i$ and $\Gamma \shord \shsum_{1\leq i\leq n} \Gamma_i$ we derive $ \X\shmul\Delta'\shsum\Gamma \shord \shsum_{1\leq i\leq n}\Gamma_i \shsum(\cup_{1\leq i\leq n}\X_i \ctxmul \Delta')$.

\item [\refToRule{t-block}]
By assumptions we know that 
$\e = {\Block{\T}{\x}{\e_1}{\e_2}}$, 
$\IsWFExp{\Gamma_1,\VarTypeCoeffect{\x}{\T'}{\X_1}}{\e_1}{\T''}$ and 
$\IsWFExp{\Gamma_2,\VarTypeCoeffect{\x}{\T'}{\X_2},\VarTypeCoeffect{\y}{\T''}{\Y}}{\e_2}{\T}$ and 
\[
\Gamma,\VarTypeCoeffect{\x}{\T'}{\X} = (\Y\cup\{\link\})\ctxmul(\Gamma_1,\VarTypeCoeffect{\x}{\T'}{\X_1}) \shsum (\Gamma_2,\VarTypeCoeffect{\x}{\T'}{\X_2}) 
\]
where $\link$ is fresh. 
Then, in particular, we have 
$\Gamma \shord (\Y \cup \{ \link \}) \ctxmul \Gamma_1 \shsum \Gamma_2$ and 
$\X \supseteq (\Y\cup\{\link\})\MulLnk\X_1\cup\X_2$. 
We want to prove that $\IsWFExp{\Gamma'}{\Subst{\Block{\T}{\x}{\e_1}{\e_2}}{\e'}{\x}}{\T}$ where 
\[ 
\Gamma'= ((\Y\cup\{\link\})\MulLnk\X_1\cup\X_2) \ctxmul \Delta' \shsum (\Y \cup \{ \link \}) \ctxmul \Gamma_1 \shsum \Gamma_2
\] 
then the thesis will follow by subsumption.  
By induction hypothesis, we get 
$\IsWFExp{\Gamma_1\shsum\X_1\ctxmul\Delta'}{\Subst{\e_1}{\e'}{\x}}{\T''}$ and 
$\IsWFExp{(\Gamma_2, \VarTypeCoeffect{\y}{\T''}{\Y})\shsum\X_2\ctxmul\Delta'}{\Subst{\e_2}{\e'}{\x}}{\T}$.  
By definition of $\shsum$ we have 
$(\Gamma_2,\VarTypeCoeffect{\y}{\T''}{\Y})\shsum\X_2\ctxmul\Delta' = \closure{(\Gamma_2\shsum\X_2\ctxmul\Delta',\VarTypeCoeffect{\y}{\T''}{\Y})} = \Theta,\VarTypeCoeffect{\y}{\T''}{\Y'}$ for some $\Theta$ and $\Y'$. 
By \cref{lem:links-closure} we know that $\Y'$ is the union of all coeffects in $\Gamma_2 \shsum \X_2\ctxmul\Delta'$ that are not  disjoint from $\Y$. 

Then, applying rule \refToRule{t-block}, we derive $\IsWFExp{\Gamma''}{\Subst{\Block{\T''}{\y}{\e_1}{\e_2}}{\e}{\x}}{\T}$ with 
\[
\Gamma''={(\Y'\cup\{\link\})\ctxmul(\Gamma_1\shsum\X_1\ctxmul\Delta')\shsum\Gamma_2\shsum\X_2\ctxmul\Delta'}
\] 
Since $\Y\subseteq\Y'$,
we have 
\[
\Gamma''=\Y'\ctxmul(\Gamma_1\shsum\X_1\ctxmul\Delta')\shsum (\Y\cup\{\link\})\ctxmul(\Gamma_1\shsum\X_1\ctxmul\Delta')\shsum\Gamma_2\shsum\X_2\ctxmul\Delta' 
\] 
Since the links in $\Delta'$ are fresh, we have that $\X_2\ctxmul\Delta' = \X_2\MulLnk\Delta'$, that is, 
$\X_2 \cap \links{\Delta'} \subseteq \{\res\}$ so $\X_2 \ctxmul \Delta'$ is equal to $\Delta'$ except that in coeffects $\res$ is replaced with links in $\X_2$. 
Moreover, by \cref{lem:closed-ctx}  $\X_2=\Y$ or $\X_2\cap\Y=\emptyset$. 
Therefore, for all $\z \in \dom{\X_2\ctxmul\Delta'}$, either 
$\Y \subseteq \getCoeff{\X_2\ctxmul\Delta'}{\z}$ or $\Y \cap \getCoeff{\X_2\ctxmul\Delta'}{\z} = \emptyset$. 
Applying \cref{lem:links-closure} we get 
$\Theta = \Gamma_2\shsum\X_2\ctxmul\Delta'$ and $\Y' = \getCoeff{\Theta}{\z}$ if $\Y \subseteq \getCoeff{\Theta}{\z}$ for all $\z \in \dom{\Theta}$. 
For all $\z \in \dom{\Gamma_1\shsum\X_1\ctxmul\Delta'}$ such that $\res \in \getCoeff{\Gamma_1\shsum\X_1\ctxmul\Delta'}{\z}$, we have that
$\Y \subseteq \getCoeff{(\Y\cup\{\link\})\ctxmul(\Gamma_1\shsum\X_1\ctxmul\Delta')}{\z}$. 
Since 
$(\Y\cup\{\link\})\ctxmul(\Gamma_1\shsum\X_1\ctxmul\Delta')\shsum\Gamma_2\shsum\X_2\ctxmul\Delta'$ is closed and
$\Y'$ is the union of all coeffects in $\Gamma_2 \shsum \X_2\ctxmul\Delta'$ that are not  disjoint from $\Y$ 
we get that
$\Y \subseteq \getCoeff{(\Y\cup\{\link\})\ctxmul(\Gamma_1\shsum\X_1\ctxmul\Delta')}{\z}$ implies $\Y' \subseteq \getCoeff{(\Y\cup\{\link\})\ctxmul(\Gamma_1\shsum\X_1\ctxmul\Delta')\shsum\Gamma_2\shsum\X_2\ctxmul\Delta'}{\z}$. 

On the other hand, if $\res \not\in \getCoeff{\Gamma_1\shsum\X_1\ctxmul\Delta'}{\z}$, then
$\getCoeff{\Y'\hatshmul(\Gamma_1\shsum\X_1\ctxmul\Delta')}{\z}=\getCoeff{\Gamma_1\shsum\X_1\ctxmul\Delta'}{\z}$.
Therefore 
\[
\Y'\hatshmul(\Gamma_1\shsum\X_1\ctxmul\Delta') \shord (\Y\cup\{\link\})\ctxmul(\Gamma_1\shsum\X_1\ctxmul\Delta')\shsum\Gamma_2\shsum\X_2\ctxmul\Delta'
\] 
and so $\Gamma''=(\Y\cup\{\link\})\ctxmul(\Gamma_1\shsum\X_1\ctxmul\Delta')\shsum\Gamma_2\shsum\X_2\ctxmul\Delta'$. 

Finally 
\[ 
\begin{array}{lcl}
\Gamma'' & = & (\Y\cup\{\link\})\ctxmul(\Gamma_1\shsum\X_1\ctxmul\Delta')\shsum\Gamma_2\shsum\X_2\ctxmul\Delta'\\
 & = &((\Y\cup\{\link\})\ctxmul\X_1)\ctxmul\Delta'\shsum\X_2\ctxmul\Delta' \shsum (\Y\cup\{\link\})\ctxmul\Gamma_1\shsum\Gamma_2\\
 & = &((\Y\cup\{\link\})\MulLnk\X_1\cup\X_2) \ctxmul \Delta' \shsum (\Y \cup \{ \link \}) \ctxmul \Gamma_1 \shsum \Gamma_2\\
\end{array}
\] 

We want to show that $((\Y\cup\{\link\})\MulLnk\X_1\cup\X_2) \ctxmul \Delta' \shsum (\Y \cup \{ \link \}) \ctxmul \Gamma_1 \shsum \Gamma_2 \shord \X \ctxmul \Delta' \shsum \Gamma$.
Since $\X \supseteq (\Y\cup\{\link\})\MulLnk\X_1\cup\X_2$ and $\Gamma \shord (\Y \cup \{ \link \}) \ctxmul \Gamma_1 \shsum \Gamma_2$
 we derive 
$((\Y\cup\{\link\})\MulLnk\X_1\cup\X_2) \ctxmul \Delta' \shsum (\Y \cup \{ \link \}) \ctxmul \Gamma_1 \shsum \Gamma_2 \shord \X \ctxmul \Delta' \shsum \Gamma$.

\end{description}
\end{proof}

\begin{proof}[Proof of \cref{thm:subj-red}]
\brb
If $\IsWFConf{\Gamma'}{\e}{\mem}{\T}$, we have 
$\Gamma' = \Gamma_1'\shsum\Gamma_2$, 
$\IsWFExp{\Gamma_1'}{\e}{\T}$ and $\IsWFMem{\Gamma_2}{\mem}$.
We know that 
$\Gamma_2 = \Gamma_\mem \shsum \Theta$, where 
$\Gamma_\mem = \VarTypeCoeffect{\x_1}{\T_1}{\{\link_1\}},\ldots,\VarTypeCoeffect{\x_n}{\T_n}{\{\link_n\}}$, 
$\Theta = \sum_{i = 1}^n \link_i \ctxmul \Theta_i$, 
$\IsWFExp{\Theta_i}{\mem(\x_i)}{\T_u}$ and 
$\link_1,\ldots,\link_n$ are fresh links. 
We also know that 
$\Gamma_2 = \Gamma_\mem \shsum \Theta$, where 
$\Gamma_\mem = \VarTypeCoeffect{\x_1}{\T_1}{\{\link_1\}},\ldots,\VarTypeCoeffect{\x_n}{\T_n}{\{\link_n\}}$, 
$\Theta = \sum_{i = 1}^n \link_i \ctxmul \Theta_i$, 
$\IsWFExp{\Theta_i}{\mem(\x_i)}{\T_u}$ and 
$\link_1,\ldots,\link_n$ are fresh links. 
We also know that by \cref{lem:nonStruct} $\IsWFExp{\Gamma_1}{\e}{\T}$ with $\Gamma_1\prom\Gamma_1'$. In the proof below we prove the theorem with as hypothesis $\IsWFConf{\Gamma}{\e}{\mem}{\T}$ where $\Gamma = \Gamma_1\shsum\Gamma_2$ obtaining that $\IsWFConf{\Delta}{\e'}{\amem}{\T}$ and $\Restr{\Gamma\shsum\Delta}{\Gamma}= \Gamma$. We also know that $\Delta = \Gamma_3 \shsum \Gamma_4$ and $\IsWFExp{\Gamma_3}{\e}{\T}$ and $\IsWFMem{\Gamma_4}{\amem}$. If $\Gamma_1=\Gamma_1'$ we have also the thesis. If $\Gamma_1'=\{\link\}\ctxmul\Gamma_1$ then we can apply the same non syntx directed rules applied to $\IsWFExp{\Gamma_1}{\e}{\T}$
obtaining $\IsWFExp{\{\link\}\shmul\Gamma_3}{\e'}{\T}$ and so, since $\res \notin \links{\Gamma_4}$, $\IsWFConf{\{\link\}\shmul\Delta}{\e}{\mem}{\T}$. By \cref{lem:timesPlus} we have that $\Restr{\{\link\}\ctxmul(\Gamma\shsum\Delta)}{\{\link\}\ctxmul\Gamma}=\{\link\}\ctxmul\Gamma$, that is, the thesis.
The proof is by induction on the reduction relation.
\erb
\begin{description}

\item[\refToRule{field-access}] 
	We know that $\reduce{\ExpMem{\FieldAccess{\x}{\f_k}}{\mem}}{\ExpMem{\val_k}{\mem}}$, hence 
	$\e = \FieldAccess{\x}{\f_k}$, $\e' = \val_k$ and $\amem = \mem$.
	We know that $\mem(\x) = \ConstrCall{\C}{\val_1, \ldots, \val_k, \ldots, \val_{m}}$ with k $\in 1..m$.
	Since $\x \in \dom{\mem}$ we know exists an $h \in 1..n$ such that $\x = \x_h$ and so $\IsWFExp{\Theta_h}{\mem(\x)}{\T_h}$. By \cref{lem:inversion}(\ref{lem:inversion:new}) we know that $\Theta_h = \sum_{i=1}^{m}\Theta'_i$ such that $\IsWFExp{\Theta'_i}{\val_i}{\T_i}$. By this consideration we derive that $\IsWFExp{\Theta'_k}{\val_k}{\T_k}$.
	We know that $\Gamma \shord\{\link_h\}\ctxmul\Theta_h \shord \{\link_h\}\MulLnk\Theta_h$. For all $\y \in \dom{\Theta_h}$ we know $\getCoeff{\{\link_h\}\MulLnk\Theta_h}{\y}$ = $(\getCoeff{\Theta_h}{\y}\setminus \{\res\}) \cup \{\link_h\}$ if $\res \in \getCoeff{\Theta_h}{\y}$, $\getCoeff{\{\link_h\}\MulLnk\Theta_h}{\y} = \getCoeff{\Theta_h}{\y}$ otherwise. 
	We know that $\{\res,\link_h\} \subseteq \getCoeff{\Gamma}{\x}$, so, since $\Gamma$ is closed and $\Gamma \shord \{\link_h\}\MulLnk\Theta_h$, for all $\y \in \dom{\{\link_h\}\MulLnk\Theta_h}$, $\link_h \in \getCoeff{\{\link_h\}\MulLnk\Theta_h}{\y}$ implies $\res \in \getCoeff{\Gamma}{\y}$ and so $\getCoeff{\Gamma}{\y} \supseteq (\getCoeff{\Theta_h}{\y} \setminus \{\res\}) \cup \{\link_h\} \cup \{\res\} \supseteq \getCoeff{\Theta_h}{\y}$. By these considerations we derive that $\Gamma \shord \Theta_h$.
	We can apply rule \refToRule{t-conf} obtaining $\IsWFConf{\Gamma_2 \shsum \Theta'_{k}}{\val_k}{\amem}{\T_k}$.
	Since $\Gamma = \Gamma \shsum \Gamma_2 \shsum \Theta'_{k}$ we have $\Restr{(\Gamma \shsum \Gamma_2 \shsum \Theta'_{k})}{\Gamma} = \Restr{\Gamma}{\Gamma} = \Gamma$ we have the thesis.

\item[\refToRule{field-assign}] 
	We know that  $\reduce{\ExpMem{\FieldAssign{\x}{\f_k}{\val}}{\mem}}{\ExpMem{\val}{\UpdateMem{\mem}{\x}{k}{\val}}}$, hence 
	$\e = \FieldAssign{\x}{\f_k}{\val}$, $\e' = \val$ and $\amem = \UpdateMem{\mem}{\x}{k}{\val}$. 
	By \cref{lem:inversion}(\ref{lem:inversion:ass}), we get 
	$\Gamma_1 = \Delta_1\shsum\Delta_2$, 
	$\IsWFExp{\Delta_1}{\x}{\C}$, 
	$\IsWFExp{\Delta_2}{\val}{\T}$, and 
	$\fields{\C} = \Field{S_1}{\f_1},\ldots, \Field{S_m}{\f_m}$ with $k \in 1..m$. 
	Again by \cref{lem:inversion}(\ref{lem:inversion:x}),  
	we get $\Delta_1 = \VarTypeCoeffect{\x}{\C}{\{\res\}}$.  
	We know that $\mem(\z) = \amem(\z)$ for all $\z \in \dom{\mem}\setminus\{\x\}$ and 
	$\mem(\x) = \ConstrCall{\C}{\val_1, \ldots, \val_m}$ and $\amem(\x) = \ConstrCall{\C}{\val_1, \ldots, \val_{k-1},\val,\val_{k+1}, \ldots, \val_m}$. 
	Since $\x\in\dom{\mem}$, there is $h \in 1..n$ such that $\x = \x_h$ and so 
	$\IsWFExp{\Theta_h}{\mem(\x)}{\T_h}$ holds, with $\T_h = \C$. 
	Applying \cref{lem:inversion}(\ref{lem:inversion:new}), we derive that $\Theta_h = \Sigma_{i=1}^{m} \Theta'_i$ and $\IsWFExp{\Theta'_i}{\val_i}{S_i}$, for all $i \in 1..m$.
	By these considerations and applying rule \refToRule{t-new} we obtain $\IsWFExp{\Sigma_{i=1}^{k-1} \Theta'_i \shsum \Delta_2 \shsum \Sigma_{i=k+1}^{m} \Theta'_i}{\ConstrCall{\C}{\val_1, \ldots, \val_{k-1},\val,\val_{k+1}, \ldots, \val_m}}{\C}$. 
	
	Applying rule \refToRule{t-mem} we can type memory $\amem$, deriving $\IsWFMem{\Gamma_{\mem} \shsum (\sum_{i = 1}^{h-1} \{\link_i\} \ctxmul \Theta_i) \shsum (\{\link_h\}\ctxmul(\Sigma_{i=1}^{k-1} \Theta'_i \shsum \Delta_2 \shsum \Sigma_{i=k+1}^{m} \Theta'_i)) \shsum (\sum_{i = h+1}^{n} \{\link_i\} \ctxmul \Theta_i)}
	{\amem}$ with $\getCoeff{\Gamma_{\mem}}{\x} = \{\link_{\x}\}$.
	We know $\Gamma_{\mem} \shsum (\sum_{i = 1}^{h-1} \{\link_i\} \ctxmul \Theta_i) \shsum (\{\link_h\}\ctxmul(\Sigma_{i=1}^{k-1} \Theta'_i \shsum \Delta_2 \shsum \Sigma_{i=k+1}^{m} \Theta'_i)) \shsum (\sum_{i = h+1}^{n} \{\link_i\} \ctxmul \Theta_i) = \Gamma_{\mem} \shsum (\sum_{i = 1}^{h-1} \{\link_i\} \ctxmul \Theta_i) \shsum \{\link_h\}\ctxmul\Sigma_{i=1}^{k-1} \Theta'_i \shsum \{\link_h\}\shmul\Delta_2 \shsum \{\link_h\}\shmul\Sigma_{i=k+1}^{m} \Theta'_i \shsum (\sum_{i = h+1}^{n} \{\link_i\} \ctxmul \Theta_i)$.
	We have two cases for all $\y \in \dom{\Delta_2}$:
	\begin{itemize} 
	\item $\res \in \getCoeff{\Delta_2}{\y}$\\
	We have $\getCoeff{\{\link_h\}\MulLnk\Delta_2}{\y} = (\getCoeff{\Delta_2}{\y}\setminus\{\res\})\cup\{\link_h\}$. We have $\{\res,\link_h\} \subseteq \getCoeff{\Gamma}{\x}$, and, since $\Gamma \shord \Delta_2$ we know $\getCoeff{\Gamma}{\y} \supseteq \getCoeff{\Delta_2}{\y}$ and in particular $\res \in \getCoeff{\Gamma}{\y}$. By the fact that $\Gamma$ is closed we know that $\link_h \in \getCoeff{\Gamma}{\y}$. We can also conclude that $\getCoeff{\Gamma}{\y} \supseteq \getCoeff{\Delta_2}{\y} \cup \{\link_h\} \supseteq  (\getCoeff{\Delta_2}{\y}\setminus\{\res\})\cup\{\link_h\}$.
	\item $\res \notin \getCoeff{\Delta_2}{\y}$\\
	We have $\getCoeff{\{\link_h\}\MulLnk\Delta_2}{\y} = \getCoeff{\Delta_2}{\y}$.
	Since $\Gamma \shord \Delta_2$ we derive  $\getCoeff{\Gamma}{\y} \supseteq \getCoeff{\Delta_2}{\y}$.
	\end{itemize}
	By these two considerations we have that $ \{\link_h\}\MulLnk\Delta_2\shord \Gamma$, so $\{\link_h\}\shmul\Delta_2\shord \Gamma$.

	Since $\Gamma_{\mem} \shord \Gamma$, $(\sum_{i = 1}^{h-1} \{\link_i\} \ctxmul \Theta_i) \shord \Gamma$, $\{\link_h\}\ctxmul\Sigma_{i=1}^{k-1} \Theta'_i\shord \Gamma$, $\{\link_h\}\shmul\Delta_2\shord \Gamma$,$\{\link_h\}\shmul\Sigma_{i=k+1}^{m} \Theta'_i\shord \Gamma$, $(\sum_{i = h+1}^{n} \{\link_i\} \ctxmul \Theta_i)\shord \Gamma$. We derive $\Gamma_{\mem} \shsum (\sum_{i = 1}^{h-1} \{\link_i\} \ctxmul \Theta_i) \shsum \{\link_h\}\ctxmul\Sigma_{i=1}^{k-1} \Theta'_i \shsum \{\link_h\}\shmul\Delta_2 \shsum \{\link_h\}\shmul\Sigma_{i=k+1}^{m} \Theta'_i \shsum (\sum_{i = h+1}^{n} \{\link_i\} \ctxmul \Theta_i)\shord\Gamma$. Applying rule \refToRule{t-conf}, we obtain $\Delta = \Gamma_{\mem} \shsum (\sum_{i = 1}^{h-1} \{\link_i\} \ctxmul \Theta_i) \shsum \{\link_h\}\ctxmul\Sigma_{i=1}^{k-1} \Theta'_i \shsum \{\link_h\}\shmul\Delta_2 \shsum \{\link_h\}\shmul\Sigma_{i=k+1}^{m} \Theta'_i \shsum (\sum_{i = h+1}^{n} \{\link_i\} \ctxmul \Theta_i)\shsum\Delta_2$. We know that $\Gamma\shsum \Delta = \Gamma$.
	By the fact that $\Restr{\Delta\shsum\Gamma}{\Gamma} = \Restr{\Gamma}{\Gamma} = \Gamma$ we obtain the thesis.

\item[\refToRule{new}] 
	We know that $\reduce{\ExpMem{\ConstrCallTuple{\C}{\val}{n}}{\mem}}{\ExpMem{\loc}{\Subst{\mem}{\ConstrCallTuple{\C}{\val}{n}}{\loc}}}$ with $\loc\not\in\dom{\mem}$ hence 
	$\e = \ConstrCallTuple{\C}{\val}{n}$, $\e' = \x$ and $\amem = \Subst{\mem}{\ConstrCallTuple{\C}{\val}{n}}{\loc}$. 
	Since we have $\mem(\y) = \amem(\y)$ for all $\y \in \dom{\mem}$ and $\dom{\amem} = \dom{\mem} \cup \{\x\}$ and $\amem(\x) = \ConstrCallTuple{\C}{\val}{n}$ we can apply rule \refToRule{t-mem}  deriving $\IsWFMem{\Gamma_{\mem} \shsum \VarTypeCoeffect{\x}{\C}{\link_{m+1}} \shsum \Theta \shsum (\link_{m+1} \ctxmul \Gamma_1)}{\amem}$ , where $\IsWFExp{\Gamma_1}{\amem(\x)}{\C}$, $\link_{m+1}$ fresh and $\link_{m+1}\notin\links{\Gamma}$. By rule \refToRule{t-var} we know $\IsWFExp{\VarTypeCoeffect{\x}{\C}{\{\res\}}}{\x}{\C}$. By rule \refToRule{t-conf} we derive $\Delta = \Gamma_{\mem} \shsum \VarTypeCoeffect{\x}{\C}{\{\link_{m+1}\}} \shsum \Theta \shsum (\{\link_{m+1}\} \ctxmul \Gamma_1) \shsum 
\VarTypeCoeffect{\x}{\C}{\{\res\}}$ and
	$\IsWFConf{\Gamma_{\mem} \shsum \VarTypeCoeffect{\x}{\C}{\{\link_{m+1}\}} \shsum \Theta \shsum (\{\link_{m+1}\} \ctxmul \Gamma_1) \shsum 
\VarTypeCoeffect{\x}{\C}{\{\res\}}}{\x}{\amem}{\C}$. We know $\Gamma \shsum \Delta = \Gamma_{\mem} \shsum \VarTypeCoeffect{\x}{\C}{\{\link_{m+1}\}} \shsum \Theta \shsum (\{\link_{m+1}\} \ctxmul \Gamma_1) \shsum \VarTypeCoeffect{\x}{\C}{\{\res\}} \shsum \Gamma = \Gamma \shsum \VarTypeCoeffect{\x}{\C}{\{\res,\link_{m+1}\}} \shsum (\{\link_{m+1}\} \ctxmul \Gamma_1)$.
	We know that $\link_{m+1} \notin \getCoeff{\Gamma}{\y}$ for all $\y \in \dom{\Gamma}$ and we know that, since $\Gamma$ is closed, $\res \in 
	\getCoeff{\Gamma}{\y}$ and $\res \in \getCoeff{\Gamma}{\z}$ implies  $\getCoeff{\Gamma}{\y} = \getCoeff{\Gamma}{\z}$ for all $\y,\z \in 
	\dom{\Gamma}$. We also know that $\x \notin \dom{\Gamma}$, so $\Gamma \shsum \VarTypeCoeffect{\x}{\C}{\{\link_{m+1},\res\}}$ = $
	\closure{(\Gamma,\VarTypeCoeffect{\x}{\C}{\{\link_{m+1},\res\}})} = \Gamma',\VarTypeCoeffect{\x}{\C}{\X}$. By \cref{lem:links-closure} and by the observations above we have 2 cases for all $\y \in \dom{\Gamma}$:
	\begin{itemize}
	\item $\res \notin \getCoeff{\Gamma}{\y}$ implies $\getCoeff{\Gamma'}{\y} = \getCoeff{\Gamma}{\y}$
	\item $\res \in \getCoeff{\Gamma}{\y}$ implies $\getCoeff{\Gamma'}{\y} = \getCoeff{\Gamma}{\y} \cup \{\link_{m+1},\res\} = \getCoeff{\Gamma}{\y} \cup \{\link_{m+1}\}$
	\end{itemize}
	We know $\Gamma_1 \shord \Gamma \shord \Gamma'$ and $(\Gamma',\VarTypeCoeffect{\x}{\C}{\X}) \shsum (\{\link_{m+1}\} \ctxmul \Gamma_1) = (\Gamma',\VarTypeCoeffect{\x}{\C}{\X}) \shsum (\{\link_{m+1}\} \MulLnk \Gamma_1)$.
	
	If $\res \in \getCoeff{\Gamma_1}{\y}$ then $\getCoeff{\{\link_{m+1}\}\MulLnk\Gamma_1}{\y} = (\getCoeff{\Gamma_1}{\y}\setminus \{\res\}) \cup \{\link_{m+1}\}$ and, since $\res \in \getCoeff{\Gamma}{\y}$, $\getCoeff{\Gamma'}{\y} = \getCoeff{\Gamma}{\y} \cup \{\link_{m+1}\} \supseteq \getCoeff{\Gamma_1}{\y} \cup \{\link_{m+1}\}$ for all $\y \in \dom{\Gamma_1}$. 
	if $\res \notin \getCoeff{\Gamma_1}{\y}$ then $\getCoeff{\{\link_{m+1}\}\MulLnk\Gamma_1}{\y} = \getCoeff{\Gamma_1}{\y} \subseteq \getCoeff{\Gamma'}{\y}$ for all $\y \in \dom{\Gamma_1}$.
	By these observations we can conclude that $(\Gamma',\VarTypeCoeffect{\x}{\C}{\X}) \shsum (\{\link_{m+1}\MulLnk\Gamma_1\}) = \Gamma',\VarTypeCoeffect{\x}{\C}{\X} = \Gamma \shsum \Delta$.
	By the fact that $\link_{m+1} \notin \links{\Gamma}$ and $\getCoeff{\Gamma \shsum \Delta}{\z} = \getCoeff{\Gamma}{\z}$ or $\getCoeff{\Gamma \shsum \Delta}{\z} = \getCoeff{\Gamma}{\z} \cup \{\link_{m+1}\}$ for all $\z \in \dom{\Gamma}$ we derive that $\Restr{(\Gamma \shsum \Delta)}{\Gamma} = \Gamma$, that is, the thesis.

\item[\refToRule{ctx}] 
	We have $\e=\Ctx{\e_1}$ and  $\e'=\Ctx{\e'_1}$ and  ${\reduce{\ExpMem{\e_1}{\mem}}{\ExpMem{\e_1'}{\amem}}}$. 
	By \cref{lem:InvCtx}, $\Gamma_1 = \Delta_1\shsum\X\ctxmul\Delta_2$, 
	$\IsWFExp{\Delta_1\shsum\VarTypeCoeffect{\x}{\T'}{\X}}{\Ctx{\x}}{\T}$ and $\IsWFExp{\Delta_2}{\e_1}{\T'}$ and we can impose that if $\link\in\getCoeff{\Delta_1}{\x}$ and  $\link\in\X$ then $\getCoeff{\Delta_1}{\x}= \X$.
	We get $\Gamma = \Gamma_1\shsum\Gamma_2 = \Delta_1\shsum\X\ctxmul\Delta_2\shsum\Gamma_2$ and, 
	since $\res\notin\getCoeff{\Gamma_2}{\y}$ for every $\y\in\dom{\Gamma_2}$ by rule \refToRule{t-mem}
	we have $\Gamma_2 = \X\ctxmul\Gamma_2$, hence 
	$\Gamma = \Delta_1 \shsum\X\ctxmul(\Delta_2\shsum\Gamma_2)$. 
	We set $\Gamma' = \Delta_2\shsum\Gamma_2$. 
	By induction hypothesis, we get $\IsWFConf{\Delta'}{\e_1'}{\amem}{\T'}$ with $\Restr{\Delta'\shsum\Gamma'}{\Gamma'} = \Gamma'$. 
	By rule \refToRule{t-conf}, we know that $\Delta' = \Delta'_1\shsum\Delta'_2$ with 
	$\IsWFExp{\Delta'_1}{\e_1'}{\T'}$ and $\IsWFMem{\Delta'_2}{\amem}$. 
	By \cref{lem:ctx-subst}, we get $\IsWFExp{\Delta_1\shsum\X\ctxmul\Delta'_1}{\Ctx{\e_1'}}{\T}$ and so 
	$\IsWFConf{\Delta_1\shsum\X\ctxmul\Delta'_1\shsum\Delta'_2}{\Ctx{\e_1'}}{\amem}{\T}$. 
	We have $\Delta =  \Delta_1\shsum\X\ctxmul\Delta'_1\shsum\Delta'_2= \Delta_1\shsum\X\ctxmul\Delta'_1\shsum\X\ctxmul\Delta'_2 = \Delta_1\shsum\X\ctxmul\Delta'$. We get the thesis, that is, $\Restr{\Delta\shsum\Gamma}{\Gamma} = \Gamma$, by \cref{lem:timesPlus}, since  
	$\links{\Delta_1}\cap(\links{\X\shmul(\Delta'\shsum\Gamma')}\cup\links{\X\shmul\Gamma'}) = \{\res\}$ or $\links{\Delta_1}\cap(\links{\X\shmul(\Delta'\shsum\Gamma')}\cup\links{\X\shmul\Gamma'}) = \{\res\} \cup \X$ if exists $\x \in \dom{\Delta}$ such that $\getCoeff{\Delta}{\x} = \X$
	$\dom{\Delta_1}\subseteq\dom{\Gamma_{\res}}\subseteq\dom{\X\ctxmul\Gamma'}$.

\item[\refToRule{block}] 
	$\reduce{\ExpMem{\Block{\T}{\loc}{\val}{\e}}{\mem}}{\ExpMem{\Subst{\e}{\val}{\loc}}{\mem}}$
	Applying \cref{lem:inversion}(\ref{lem:inversion:bl}) we obtain $\IsWFExp{\Gamma',\VarTypeCoeffect{\x}{\T}{\X}}{\e}{\T'}$ and $\IsWFExp{\Sigma}{\val}{\T}$ such that $\Gamma_1 = (\X \cup \{\link\}) \ctxmul \Sigma \shsum \Gamma'$. If $\val$ is a reference then by \cref{lem:inversion}(\ref{lem:inversion:x}) we know $\Sigma = \VarTypeCoeffect{\x}{\T}{\{\res\}}$ otherwise if $\val$ is a primitive value then we know by \refToRule{t-primitive} $\Sigma = \emptyset$. Applying \cref{lem:substitutionLemma} we obtain $\IsWFExp{\Gamma''}{\Subst{\e}{\val}{\loc}}{\T'}$ and $\Gamma'' \shord (\X \cup \{\link\}) \ctxmul \Sigma \shsum \Gamma'$. Knowing that $\Gamma'' \shsum \Gamma_2 \shsum \Gamma = \Delta \shsum \Gamma = \Gamma$ we obtain $\Restr{\Gamma}{\Gamma} = \Gamma$,that is, the thesis.

\end{description}
\end{proof}

% !TEX root =main.tex

\section{Proof of Theorem~\ref{thm:subj-red-extended}}

\begin{definition}
$\Gamma\prom\Gamma'$ if  
\begin{enumerate}
\item $\Gamma'= {\Gamma}$ or
\item $\Gamma'= \Sealed{\Gamma}$ or
\item $\Gamma'=\{\link\}\ctxmul\Gamma$ with $\link$ fresh, or
\item $\Gamma'= \{\link\}\ctxmul\Sealed{\Gamma}$ with $\link$ fresh.
\end{enumerate}
\end{definition}

\begin{lemma}\label[lemma]{lem:nonStructModif}
If $\Deriv:\IsWFExp{\Gamma}{{\e}}{\T}$, then there is a subderivation $\Deriv':\IsWFExp{\Gamma'}{{\e}}{\T'}$ of 
$\Deriv$ ending with a syntax-directed rule and $\Gamma'\prom\Gamma$
\end{lemma}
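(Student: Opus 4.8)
The plan is to argue by structural induction on the derivation $\Deriv$, mirroring the proof of \cref{lem:nonStruct} but now accounting for the three non-syntax-directed rules of the extended system, namely \refToRule{t-sub}, \refToRule{t-imm}, and \refToRule{t-prom}. If $\Deriv$ already ends with a syntax-directed rule, I take $\Gamma'=\Gamma$, $\T'=\T$, and $\Deriv'=\Deriv$; then $\Gamma\prom\Gamma$ by clause~(1) of the definition of $\prom$. Otherwise $\Deriv$ ends with one of the three non-syntax-directed rules, and in each case its immediate premise is a judgment $\IsWFExp{\Delta}{\e}{\T''}$ for the \emph{same} expression $\e$, to which the induction hypothesis applies, yielding a syntax-directed subderivation $\Deriv'\colon\IsWFExp{\Gamma'}{\e}{\T'}$ with $\Gamma'\prom\Delta$.

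First I would record, for each non-syntax-directed rule, how its conclusion context $\Gamma$ is obtained from the premise context $\Delta$: for \refToRule{t-sub} we have $\Gamma=\Delta$ (the context is untouched, only the type is weakened); for \refToRule{t-imm} we have $\Gamma=\{\link\}\ctxmul\Delta$ with $\link$ fresh; and for \refToRule{t-prom} we have $\Gamma=\Sealed{\Delta}$. In the language of the four clauses of $\prom$ these are respectively the \emph{identity}, \emph{multiplication by a fresh link} ($M$), and \emph{sealing} ($S$). Thus in every case $\Delta\prom\Gamma$, and it remains to compose the two steps $\Gamma'\prom\Delta$ (from the induction hypothesis) and $\Delta\prom\Gamma$ into a single step $\Gamma'\prom\Gamma$.

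The heart of the argument, and the main obstacle, is therefore to show that $\prom$ is closed under composition, i.e.\ that composing any two of the four operations ($I$, $S$, $M$, $M\circ S$) again lands in that four-element set. This reduces to a handful of algebraic identities about how sealing and multiplication by a fresh link interact: (i) $\{\link\}\ctxmul(\{\link'\}\ctxmul\Gamma)=\{\link'\}\ctxmul\Gamma$ when $\link,\link'$ are fresh, since the inner product already replaces every $\res$, so the outer product acts as the identity (this is $M\circ M=M$, already used in \cref{lem:nonStruct}); (ii) $\Sealed{(\Sealed{\Gamma})}=\Sealed{\Gamma}$, because sealing does not touch coeffects and, as $\seal\leq\imm$, we have $\ModifComb{\seal}{\seal}=\seal$ and $\ModifComb{\imm}{\seal}=\imm$, so the second sealing is idempotent ($S\circ S=S$); and (iii) $\Sealed{(\{\link\}\ctxmul\Gamma)}=\{\link\}\ctxmul\Gamma$, since $\{\link\}\ctxmul\Gamma$ contains no $\res$ and hence sealing is vacuous ($S\circ M=M$). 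The only genuinely new combination, $M\circ S$, is exactly clause~(4) of $\prom$, so it needs no reduction. A short case table over $b\circ a$ with $b\in\{I,S,M\}$ (the operation contributed by the rule ending $\Deriv$) and $a\in\{I,S,M,M\circ S\}$ (the operation from the induction hypothesis) then shows every composite equals one of $I,S,M,M\circ S$, which concludes the induction.

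I expect step~(iii) to be the delicate point, since it is where the interplay between the sharing coeffects and the modifiers is used: one must check both that multiplication by a fresh link eliminates every occurrence of $\res$ (so that the subsequent transitive closure $\closure{\_}$ cannot reintroduce it, as closure only propagates already-present links) and that $\Sealed{\_}$ seals precisely those variables whose coeffect contains $\res$. A secondary subtlety, worth flagging but routine, is that all sealings arising in a valid derivation are defined—in particular no $\readonly$ variable is ever connected to the result when \refToRule{t-prom} fires, because $\ModifComb{\readonly}{\seal}$ is undefined—so the idempotence in~(ii) never meets an undefined combination on its second application.
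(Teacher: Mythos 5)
Your proof is correct and follows exactly the strategy the paper itself uses for the sharing-system analogue of this lemma (\cref{lem:nonStruct}): induction on the derivation, with the only real work being that composing the context transformations of the non-syntax-directed rules stays within the four clauses of $\prom$. The paper actually states \cref{lem:nonStructModif} without proof, and your case table supplies precisely the missing content: the identities $\{\link'\}\ctxmul(\{\link\}\ctxmul\Gamma)=\{\link\}\ctxmul\Gamma$, $\Sealed{(\Sealed{\Gamma})}=\Sealed{\Gamma}$, and $\Sealed{(\{\link\}\ctxmul\Gamma)}=\{\link\}\ctxmul\Gamma$ all check out against the definitions, since multiplication by a fresh link removes every occurrence of $\res$ (and closure cannot reintroduce it), sealing only alters $\res$-connected variables, and $\ModifComb{\modif}{\seal}=\modif$ whenever $\modif\leq\imm$, which covers every modifier produced by a defined first sealing.
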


\brb
\begin{lemma}\label[lemma]{lem:leqCtx}
If for all $\x\in(\dom{\Gamma}\cap\dom{\Delta})$, $\getModif{\Gamma}{\x}\leq\getModif{\Delta}{\x}$ then for all $\x\in(\dom{\Gamma}\cap\dom{\Delta})$, $\getModif{\Sealed{\Gamma}}{\x}\leq\getModif{\Sealed{\Delta}}{\x}$.
\end{lemma}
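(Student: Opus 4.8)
The plan is to prove the statement pointwise. Fix a variable $\x \in \dom{\Gamma}\cap\dom{\Delta}$ and abbreviate $\modif = \getModif{\Gamma}{\x}$ and $\modif' = \getModif{\Delta}{\x}$, so that the hypothesis reads $\modif \le \modif'$ and the goal is $\getModif{\Sealed{\Gamma}}{\x} \le \getModif{\Sealed{\Delta}}{\x}$. By the definition of $\Sealed{\cdot}$ the modifier of $\x$ is left untouched unless $\x$ is connected to the result, in which case it is replaced by $\ModifComb{\modif}{\seal}$; concretely, $\getModif{\Sealed{\Gamma}}{\x}$ equals $\ModifComb{\modif}{\seal}$ if $\res\in\getCoeff{\Gamma}{\x}$ and $\modif$ otherwise, and symmetrically for $\Delta$.

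First I would record the arithmetic of the combination operator seen as the unary map $s(\cdot) := \ModifComb{\cdot}{\seal}$. Spelling out $\ModifComb{\cdot}{\cdot}$ gives $\ModifComb{\mut}{\seal}=\seal$, $\ModifComb{\imm}{\seal}=\imm$, $\ModifComb{\capsule}{\seal}=\capsule$, $\ModifComb{\seal'}{\seal}=\seal'$, while $\ModifComb{\readonly}{\seal}$ is undefined (matching the side condition of \refToRule{t-prom} that forbids $\readonly$ variables connected to the result). From this table I extract the two facts I will use repeatedly: (i) sealing can only \emph{weaken} a modifier, i.e.\ $\ModifComb{\modif}{\seal}\le\modif$ wherever it is defined; and (ii) restricted to the runtime modifiers $\{\mut,\imm,\seal\}$ — on which $\seal$ is the bottom element and $\mut,\imm$ are incomparable — the map $s$ is monotone.

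With these in hand the argument is a four-way case split according to whether $\res$ belongs to $\getCoeff{\Gamma}{\x}$ and to $\getCoeff{\Delta}{\x}$. When neither is sealed the goal is exactly the hypothesis. When $\x$ is sealed only in $\Gamma$, fact (i) closes the case via $\getModif{\Sealed{\Gamma}}{\x}=\ModifComb{\modif}{\seal}\le\modif\le\modif'=\getModif{\Sealed{\Delta}}{\x}$. When $\x$ is sealed in both, the goal $\ModifComb{\modif}{\seal}\le\ModifComb{\modif'}{\seal}$ is precisely monotonicity of $s$.

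The main obstacle — and the step I expect to require care — is that $s$ is \emph{not} monotone on the full modifier poset: since $\capsule\le\mut$ but $\ModifComb{\capsule}{\seal}=\capsule\not\le\seal=\ModifComb{\mut}{\seal}$, the both-sealed case with $\modif=\capsule$, $\modif'=\mut$, together with the remaining "sealed only in $\Delta$" case, cannot be discharged from the modifier ordering alone. To close them I would exploit the fact that in every application of this lemma the two contexts carry the \emph{same} coeffect component — $\Gamma$ and $\Delta$ differ only by a refinement of type modifiers, so $\res\in\getCoeff{\Gamma}{\x}\iff\res\in\getCoeff{\Delta}{\x}$, making the "sealed only in $\Delta$" case vacuous — and that the modifiers of references actually reaching the result in a well-typed runtime context range over $\{\mut,\imm,\seal\}$, where fact (ii) applies. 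Threading this coeffect-coincidence hypothesis through the statement (or strengthening the lemma to carry it explicitly) is where the real content lies; the modifier bookkeeping itself is routine.
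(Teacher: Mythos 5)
Your proposal follows the same skeleton as the paper's proof --- a pointwise case analysis on the modifier of each common variable, driven by the table of $\ModifComb{\modif}{\seal}$ --- but it is considerably more careful than what the paper actually does. The paper's own argument is three lines: it opens by declaring that only type contexts for terms are considered, ``so modifiers can be only $\imm$, $\mut$ and $\seal$'' (your restriction to runtime modifiers, imposed by fiat), and then checks exactly three pairs, $\modif=\modif'$, $\modif=\seal,\modif'=\imm$, and $\modif=\seal,\modif'=\mut$, each time computing the seal-combination \emph{on both sides} ($\seal[\seal']=\seal$, $\imm[\seal']=\imm$, $\mut[\seal']=\seal'$). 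In other words, the paper silently treats every variable as sealed in both contexts or in neither: it never performs your four-way split on $\res$-membership and never confronts your two problem cases. Your diagnosis of those cases is genuine: with $\capsule$ admitted, $\capsule\le\mut$ but $\ModifComb{\capsule}{\seal}=\capsule\not\le\seal=\ModifComb{\mut}{\seal}$; and in the mixed case where $\res\in\getCoeff{\Delta}{\x}$ only, even $\modif=\modif'=\mut$ would require the false $\mut\le\seal$. So the lemma as literally stated is not provable without side conditions, and your proof, unlike the paper's, makes them explicit.

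The one soft spot is your justification for discarding the ``sealed only in $\Delta$'' case. In the lemma's actual use (case \refToRule{ctx} of \cref{thm:subj-red-extended}), $\Gamma$ and $\Delta$ type \emph{different} expressions --- the configuration before and after a reduction step --- so they do not carry the same coeffect component, and the ``refinement of type modifiers only'' reading is not available. What does hold there, and is all you need, is a one-directional fact supplied by the sharing-preservation half of the induction hypothesis, $\Restr{(\Erase{\Gamma}\shsum\Erase{\Delta})}{\Erase{\Gamma}}=\Erase{\Gamma}$: since $\res\in\links{\Erase{\Gamma}}$ by definition, $\res\in\getCoeff{\Delta}{\x}$ for a common $\x$ forces $\res\in\getCoeff{\Gamma}{\x}$ (otherwise the sum would add $\res$ to $\x$'s coeffect and the restriction could not return $\Gamma$). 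Hence a variable sealed in $\Delta$ is already sealed in $\Gamma$, the failing mixed case never arises, and the surviving mixed case is exactly your fact (i). With that substitution --- sharing preservation in place of coeffect equality --- your argument is complete, and strictly more rigorous than the paper's own.
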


\begin{proof}
We consider only type contexts for terms, so modifiers can be only $\imm$,$\mut$ and $\seal$. We also consider $\VarTypeCoeffect{\x}{\TypeMod{\T}{\modif}}{\X} \in \Gamma$ and $\VarTypeCoeffect{\x}{\TypeMod{\T}{\modif'}}{\X'} \in \Delta$. We have 3 cases:

\begin{itemize}
\item $\modif = \modif'$\\ Immediate
\item $\modif=\seal$ and $\modif'=\imm$\\ We know $\seal[\seal']=\seal$ and $\imm[\seal'] = \imm$, so we have the thesis
\item $\modif=\seal$ and $\modif'=\mut$\\ We know $\seal[\seal']=\seal$ and $\mut[\seal'] = \seal'$, so we have the thesis
\end{itemize}
\end{proof}
\erb

\bfd 
\begin{lemma}\label{lem:replace-mod}
Let $\IsWFMem{\Gamma,\Delta}{\mem}$ where, 
$\Delta = \VarTypeCoeffect{\x_1}{\TypeMod{\C_1}{\seal}}{\X},\ldots,\VarTypeCoeffect{\x_n}{\TypeMod{\C_n}{\seal}}{\X}$ and, 
for all $\x\in\dom{\Gamma}$, $\getCoeff{\Gamma}{\x}\cap\X  = \emptyset$. 
Let $\modif\notin\{\readonly,\capsule\}$ be a modifier and 
$\Theta = \VarTypeCoeffect{\x_1}{\TypeMod{\C_1}{\modif}}{\X_1},\ldots,\VarTypeCoeffect{\x_n}{\TypeMod{\C_n}{\modif}}{\X_n}$ be such that 
\begin{itemize}
\item $\modif = \imm$ implies $\X_i = \{\link_i\}$, with $\link_i$ fresh for all $i \in 1..n$. 
\item $\modif \neq \imm$ implies $\X_i=\X$ for all $i \in 1..n$. 
\end{itemize}
Then, $\IsWFMem{\Gamma,\Theta}{\mem}$ holds. 
\end{lemma}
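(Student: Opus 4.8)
The plan is to peel off the single memory rule \refToRule{t-mem} applied at the root of the derivation of $\IsWFMem{\Gamma,\Delta}{\mem}$ and then rebuild an analogous derivation in which the seal $\seal$ carried by $\x_1,\dots,\x_n$ is uniformly recolored to $\modif$. Write $G=\{\x_1,\dots,\x_n\}$ and $R=\dom{\mem}\setminus G$; inverting \refToRule{t-mem} yields $\dom{\Gamma}=R$, fresh singleton links for every reference in $\Gamma_{\!\mem}$, and, for each $\x\in\dom{\mem}$, an in-memory premise $\IsWFInMem{\Gamma_{\x}}{\mem(\x)}{\TypeMod{\C_{\x}}{\modif_{\x}}}$ typing its object, with $\modif_{\x_j}=\seal$ for $j\in 1..n$.

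First I would establish that $G$ is a self-contained island. Since the whole context is closed, \cref{lem:closed-ctx} together with the hypothesis $\getCoeff{\Gamma}{\x}\cap\X=\emptyset$ shows that $G$ is exactly the set of references whose coeffect is $\X$; by \cref{prop:sh-mem} these form exactly one class of $\sharingRelSymbol{\Gamma,\mem}$. Using \cref{lem:mod-deep} (deep propagation of $\seal$/$\imm$/$\mut$) and \cref{lem:sh-mem}, I would then prove: every reference that is a direct field value of a $G$-object is either in $G$ — reached through a $\mut$-declared field, hence itself sealed — or is immutable, reached through an $\imm$-declared field; and, conversely, no reference outside $G$ may hold a $G$-reference as a direct field value, since that would force it into sharing with $G$, contradicting the disjointness of $\X$ from $\Gamma$. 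Consequently $\seal$ occurs in the entire derivation only in $\Gamma_{\!\mem}$ on the entries $\x_1,\dots,\x_n$ and as the modifier of those same references when they are typed by \refToRule{t-ref} as $\mut$-field values inside $G$-objects.

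Next I would reconstruct the derivation by changing, for $j\in 1..n$, the modifier of $\x_j$ from $\seal$ to $\modif$ in $\Gamma_{\!\mem}$ and retyping each $G$-object with head modifier $\modif$ via \refToRule{t-obj}. The side conditions collapse to two combination facts, $\ModifComb{\mut}{\modif}=\modif$ and $\ModifComb{\imm}{\modif}=\imm$: a $\mut$-field of a $G$-object then carries modifier $\modif$ and is retyped by \refToRule{t-ref} (when $\modif\in\{\mut,\seal'\}$) or \refToRule{t-imm-ref} (when $\modif=\imm$), while an $\imm$-field keeps modifier $\imm$ and is left untouched; as the $R$-objects and the immutable fringe are never modified (and no new reachability into them is created, so reusing an existing seal name is harmless), every application of $\shsum$ still joins occurrences of equal type. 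It remains to check coeffects. If $\modif\ne\imm$, the reference rule used is again \refToRule{t-ref} with coeffect $\{\res\}$, so the coeffect computation is identical to the original one (coeffects ignore modifiers); thus $G$ keeps the class $\X$ and $R$ keeps its coeffects, giving $\Gamma,\Theta$ with $\X_i=\X$. If $\modif=\imm$, the $G$-references are instead typed by \refToRule{t-imm-ref} with fresh links, so each $G$-object context now contains no $\res$ and scalar multiplication by $\{\link_i\}$ leaves every coeffect unchanged; the entries $\x_j$ therefore receive fresh coeffects, disjoint from $\links{\Gamma}$ and from one another, which is exactly the $\X_i=\{\link_i\}$ prescribed for $\Theta$, while the $R$-part is untouched because $\X$ was disjoint from $\Gamma$.

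The conceptual crux, and the main obstacle, is the self-containment step: turning the two purely algebraic hypotheses (common coeffect $\X$ on $G$, disjointness of $\X$ from $\Gamma$) into the graph-theoretic fact that $G$ is closed under reachability through non-immutable fields and is never referenced from outside. Everything else is a careful but routine re-bracketing of the \refToRule{t-mem} derivation; the only delicate bookkeeping is the $\modif=\imm$ case, where one must observe that replacing \refToRule{t-ref} by \refToRule{t-imm-ref} severs precisely the links that constituted the class $\X$, substituting fresh ones, so that matching the prescribed singleton coeffects of $\Theta$ reduces to the freedom in choosing fresh links.
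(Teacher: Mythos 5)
The paper never actually proves this lemma---it is stated bare in the appendix, with no proof environment---so there is no official argument to compare against; I can only judge your reconstruction on its merits. Its skeleton is sound and is surely the intended one: invert \refToRule{t-mem}; use the disjointness hypothesis together with \cref{lem:closed-ctx}, \cref{prop:sh-mem} and \cref{lem:mod-deep} to show the sealed group is self-contained (its objects point through $\mut$-declared fields only inside the group, through $\imm$-declared fields only at immutable references, and nothing outside points into it); then replay the derivation with the recolored head modifier, using $\ModifComb{\mut}{\modif}=\modif$ and $\ModifComb{\imm}{\modif}=\imm$. Your $\modif\ne\imm$ case is correct: every rule instance is rebuilt with identical coeffects, so the class $\X$ and the $\Gamma$-part are preserved.

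The genuine gap is in the $\modif=\imm$ case, precisely where you assert that the rebuilt coeffects are ``exactly the $\X_i=\{\link_i\}$ prescribed for $\Theta$''. They are not singletons. In any derivation ending in \refToRule{t-mem} in which $\x_j$ is assigned modifier $\imm$, the final coeffect of $\x_j$ is $\{\link_j\}$ (its entry in $\Gamma_{\!\mem}$) united with one fresh link for \emph{each occurrence of $\x_j$ as a field value} of some object: each such occurrence must now be typed by \refToRule{t-imm-ref} (the $\Vdash$ judgment has no subsumption, and \refToRule{t-ref} excludes $\imm$), contributing a fresh singleton, and the freshness side conditions force all these links and $\link_j$ to be pairwise distinct, so closure merges nothing. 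Hence $\X_j$ is a singleton iff $\x_j$ is not a field value of any object in $\mem$. But by your own first step $G$ is a single sharing class, so as soon as $n\ge 2$ some $\x_j$ is a direct field value of another $G$-object (e.g.\ $\mem=\{\x_1\mapsto\Object{\C}{\x_2},\ \x_2\mapsto\dots\}$ with both sealed), and its rebuilt coeffect has at least two links. So no derivation yields the context $\Gamma,\Theta$ demanded by the statement: what your construction actually establishes is the weaker claim in which each $\X_i$ is a set of fresh links, pairwise disjoint and disjoint from $\links{\Gamma}$ (which is what the subject-reduction proof really needs). A correct write-up must either prove this weakened form and flag the mismatch with the lemma as stated, or explain why singletons are achievable---and they are not. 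A smaller inaccuracy, harmless to the construction: the claim that $\seal$ occurs in the whole derivation only on $\x_1,\dots,\x_n$ does not follow from the hypotheses, since $\Gamma$ may contain references carrying the same seal in a different, coeffect-disjoint class; your rebuild simply leaves those untouched.
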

\efd 

\bfd 
\begin{lemma}\label{lem:sameModif}
Let  $\IsWFMem{\Gamma\shsum\Gamma_\mem}{\mem}$, where
\begin{itemize}
\item $\Gamma_{\!\mem} = \sum_{\z\in\dom\mem}\VarTypeCoeffect{\z}{\TypeMod{\C_z}{\modif_\z}}{\{\link_\z\}}$ with $\link_\z$ fresh for all $\z \in \dom\mem$; 
\item $\Gamma = \sum_{\z\in\dom\mem} \{\link_\z\}\shmul \Gamma_\z$ where 
$\IsWFInMem{\Gamma_\z}{\mem(\z)}{\TypeMod{\C_\z}{\modif_\z}}$ for all $\z \in \dom\mem$; 
\item $\x,\y \in \dom\mem$ and 
$\mem(\x) = \Object{\C_\x}{\val_1,\ldots,\val_m}$,  
$\fields{\C_\x}=\Field{\T_1}{\f_1} \ldots \Field{\T_m}{\f_m}$, 
$\Gamma_\x = \sum_{j = 1}^m \Gamma'_j$, 
$\IsWFInMem{\Gamma'_j}{\val_j}{\Modif{\T_j}{\modif_\x}}$, for all $j \in 1...m$, and 
$\Modif{\T_i}{\modif_\x} = \TypeMod{\C_\y}{\modif_\y}$ and $\Gamma'_i = \VarTypeCoeffect{\y'}{\TypeMod{\C_\y}{\modif_\y}}{\Y}$ for some $i \in 1...m$. 
\end{itemize}
Let $\Delta$ be  such that
$\Delta = \{\link_\x\}\shmul\Delta_\x \shsum \sum_{\z\in\dom\mem\setminus\{\x\}} \{\link_\z\}\shmul\Gamma_\z$ with 
$\Delta_\x = \VarTypeCoeffect{\y}{\TypeMod{\C_\y}{\modif_\y}}{\Y} \shsum \sum_{j = 1}^{i-1} \Gamma'_j \shsum \sum_{j = i+1}^m \Gamma'_j$. 
Then, $\IsWFMem{\Delta\shsum\Gamma_\mem}{\UpdateMem{\mem}{\x}{i}{\y}}$ holds and $\getModif{\Gamma\shsum\Gamma_\mem}{\z} = \getModif{\Delta\shsum\Gamma_\mem}{\z}$ for all $\z \in \dom{\mem}$
\end{lemma}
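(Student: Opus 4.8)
The plan is to prove both claims by surgically re-deriving only the typing of the single object that changes under the update, namely the one stored at $\x$, while reusing verbatim the derivations for every other object. First I would read off the hypotheses: the shape of $\IsWFMem{\Gamma\shsum\Gamma_\mem}{\mem}$ already supplies, for each $\z\in\dom{\mem}$, a judgment $\IsWFInMem{\Gamma_\z}{\mem(\z)}{\TypeMod{\C_\z}{\modif_\z}}$, and inverting \refToRule{t-obj} on $\IsWFInMem{\Gamma_\x}{\Object{\C_\x}{\val_1,\ldots,\val_m}}{\TypeMod{\C_\x}{\modif_\x}}$ gives the field judgments $\IsWFInMem{\Gamma'_j}{\val_j}{\Modif{\T_j}{\modif_\x}}$ with $\Gamma_\x=\sum_{j=1}^m\Gamma'_j$. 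Since the only object of $\UpdateMem{\mem}{\x}{i}{\y}$ that differs from $\mem$ is the one at $\x$, now $\Object{\C_\x}{\val_1,\ldots,\val_{i-1},\y,\val_{i+1},\ldots,\val_m}$, it suffices to establish $\IsWFInMem{\Delta_\x}{\Object{\C_\x}{\val_1,\ldots,\y,\ldots,\val_m}}{\TypeMod{\C_\x}{\modif_\x}}$ and then reassemble with \refToRule{t-mem}, keeping $\Gamma_\mem$ and all $\Gamma_\z$ for $\z\ne\x$ unchanged; by the definition of $\Delta$ this reassembly produces exactly $\Delta\shsum\Gamma_\mem$.

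The crux is re-typing the $i$-th field. By \refToRule{t-obj} I need $\IsWFInMem{\VarTypeCoeffect{\y}{\TypeMod{\C_\y}{\modif_\y}}{\Y}}{\y}{\Modif{\T_i}{\modif_\x}}$, and since $\Modif{\T_i}{\modif_\x}=\TypeMod{\C_\y}{\modif_\y}$ this is a pure reference judgment of the same type and coeffect $\Y$ already used for the old value $\y'$ in $\Gamma'_i=\VarTypeCoeffect{\y'}{\TypeMod{\C_\y}{\modif_\y}}{\Y}$. I would case-split on which rule produced $\Gamma'_i$: if it was \refToRule{t-ref}, then $\modif_\y\in\{\mut,\seal\}$ and $\Y=\{\res\}$, and applying \refToRule{t-ref} to $\y$ yields the identical judgment; if it was \refToRule{t-imm-ref}, then $\modif_\y=\imm$ and $\Y=\{\link\}$ with $\link$ fresh, and applying \refToRule{t-imm-ref} to $\y$ produces a fresh-link singleton which, by freshness, may be chosen to be the same $\Y$. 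Composing this field judgment with the $m-1$ unchanged ones through \refToRule{t-obj} gives the required typing of the updated object, and hence the first claim.

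For the modifier equality I would observe that $\getModif{\cdot}{\z}$ is read off the type assigned to $\z$, and that on $\dom{\mem}=\dom{\Gamma_\mem}$ this type is forced by the common summand $\Gamma_\mem$: because $\shsum$ is defined only when the types of shared variables agree and then returns that common type, every $\z\in\dom{\mem}$ carries modifier $\modif_\z$ in both contexts, so $\getModif{\Gamma\shsum\Gamma_\mem}{\z}=\modif_\z=\getModif{\Delta\shsum\Gamma_\mem}{\z}$; this uses that $\Delta\shsum\Gamma_\mem$ is defined, which is precisely the content of the first claim. I expect the main obstacle to be the bookkeeping around $\Y$ in the immutable case --- justifying that the fresh link demanded by \refToRule{t-imm-ref} may be taken to coincide with the one already sitting in $\Y$ --- together with verifying that the context obtained from the \refToRule{t-mem} reassembly is \emph{syntactically} the announced $\Delta\shsum\Gamma_\mem$, and not merely a context equal to it up to closure.
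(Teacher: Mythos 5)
Your proof is correct; note, however, that the paper states \refToLemma{sameModif} without any proof (it is one of the auxiliary lemmas left unproved in the appendix), so there is no paper argument to compare against — yours is the natural completion of what the paper merely asserts. The key steps all check out: since subsumption is absent from the $\Vdash$ judgment and $\Gamma'_i$ is a single-variable context, the old value $\y'$ must be a reference typed by exactly \refToRule{t-ref} or \refToRule{t-imm-ref}, so your case split is exhaustive; reusing $\Y$ in the \refToRule{t-imm-ref} case is legitimate because the rest of the derivation, hence the set of links it employs, is unchanged, so the link that was fresh for $\y'$ remains fresh for $\y$; and the closure worry you flag at the end is benign, because each object context $\Gamma_\z$ (and $\Delta_\x$) is closed — being a $\shsum$ of singleton contexts — and pointwise replacement of $\res$ by $\link_\z$ preserves closedness, so $\{\link_\z\}\shmul\Gamma_\z$ and $\{\link_\z\}\ctxmul\Gamma_\z$ coincide and the \refToRule{t-mem} reassembly yields syntactically $\Delta\shsum\Gamma_{\!\mem}$. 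Your modifier-equality argument via the common summand $\Gamma_{\!\mem}$ — definedness of $\shsum$ forces shared variables to carry the same type, and the sum keeps that common type, so every $\z\in\dom{\mem}$ gets modifier $\modif_\z$ in both contexts — is also exactly right.
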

\efd 

\begin{proof}[Proof of \cref{thm:subj-red-extended}]
Since the proof for condition $\Restr{(\Gamma'\shsum\Delta')}{\Gamma'} = \Gamma'$, for $\Gamma'=\Erase{\Gamma}$ and $\Delta'=\Erase{\Delta}$ is analogous to the proof for \cref{thm:subj-red} in this proof we focus on the condition for all $\x\in\dom{\Gamma}$, $\getModif{\Gamma}{\x}\leq\getModif{\Delta}{\x}$ .
If $\IsWFConf{\Gamma}{\e}{\mem}{\T}$, we have 
$\Gamma = \Gamma_1\shsum\Gamma_2$, 
$\IsWFExp{\Gamma_1}{\e}{\T}$ and $\IsWFMem{\Gamma_2}{\mem}$. 
We also know that 
$\Gamma_2 = \Gamma_\mem \shsum \Theta$, where 
$\Gamma_\mem = \VarTypeCoeffect{\x_1}{\T_1}{\{\link_1\}},\ldots,\VarTypeCoeffect{\x_n}{\T_n}{\{\link_n\}}$, 
$\Theta = \sum_{i = 1}^n \link_i \ctxmul \Theta_i$, 
$\IsWFExp{\Theta_i}{\mem(\x_i)}{\T_u}$ and 
$\link_1,\ldots,\link_n$ are fresh links. 
The proof is by induction on the reduction relation.
\begin{description}
\brb
\item[\refToRule{field-assign}] 
	 By \cref{lem:nonStructModif} and rule \refToRule{t-assign} we have  $\IsWFExp{\Gamma_1''}{\x}{\T_i}$ and $\IsWFExp{\Gamma_2''}{\val}{\T_i}$ such that $\Gamma_1'' \shsum \Gamma_2''\prom\Gamma_1$ with $\fields{\C}=\Field{\T_1}{\f_1} \ldots \Field{\T_n}{\f_n} i\in 1..n$. We have two interesting cases considering $\val=\y$:
\begin{itemize}
\item $\T_i=\TypeMod{\D}{\imm}$\\
Since $\IsWFExp{\Gamma_2''}{\y}{\TypeMod{\D}{\imm}}$ we know that $\getModif{\Gamma_2''}{\y} \neq \mut$ and so $\getModif{\Gamma}{\y} \neq \mut$. By rule \refToRule{t-var} we have $\IsWFExp{\VarTypeCoeffect{\y}{\TypeMod{\D}{\imm}}{\Y}}{\y}{\TypeMod{\D}{\imm}}$. By applying the same non syntax directed rules applied to 
$\IsWFExp{\Gamma_1'' \shsum \Gamma_2''}{\FieldAssign{\x}{\f_i}{\y}}{\TypeMod{\D}{\imm}}$ we have $\IsWFExp{\VarTypeCoeffect{\y}{\TypeMod{\D}{\imm}}{\Y}}{\y}{\T}$.
We have two cases:
\begin{itemize}
\item $\getModif{\Gamma}{\y} = \imm$\\
	By \cref{lem:sameModif} and applying rule \refToRule{t-conf} we have the thesis
\item $\getModif{\Gamma}{\y} = \seal$\\
	By \cref{lem:replace-mod} applied to variables with coeffect $\getCoeff{\Gamma_2}{\y}$ and with $\modif = \imm$, by \cref{lem:sameModif} and by applying rule \refToRule{t-conf} we have the thesis
\end{itemize}
\item $\T_i=\TypeMod{\D}{\mut}$\\
Since $\IsWFExp{\Gamma_1''}{\x}{\TypeMod{\C}{\mut}}$ and  $\IsWFExp{\Gamma_2''}{\y}{\TypeMod{\D}{\mut}}$ we know that $\getModif{\Gamma_1''}{\x} \neq \mut$ and $\getModif{\Gamma_2''}{\y} \neq \imm$ and so $\getModif{\Gamma}{\x} \neq \imm$ and $\getModif{\Gamma}{\y} \neq \imm$.
We have 4 cases:
\begin{itemize}
\item $\getModif{\Gamma}{\x} = \mut$ and $\getModif{\Gamma}{\y} = \mut$\\
	We can apply the same non syntax directed rules applied to $\IsWFExp{\Gamma_1'' \shsum \Gamma_2''}{\FieldAssign{\x}{\f_i}{\y}}{\TypeMod{\D}{\imm}}$ to $\IsWFExp{\Gamma_2''}{\val}{\T_i}$. By \cref{lem:sameModif} and applying rule \refToRule{t-conf} we have the thesis

\item $\getModif{\Gamma}{\x} = \mut$ and $\getModif{\Gamma}{\y} = \seal$\\
	 We know by \cref{lem:nonStructModif} and rule \refToRule{t-var} that $\IsWFExp{\VarTypeCoeffect{\x}{\T''}{\X''}}{\x}{\T''}$. We can apply rule \refToRule{t-var} and the same non syntax-directed rules applied to $\x$ on $\y$ to obtain $\IsWFExp{\VarTypeCoeffect{\y}{\TypeMod{\D}{\getModif{\Gamma_1'' \shsum \Gamma_2''}{\x}}}{\X'}}{\y}{\TypeMod{\D}{\mut}}$. If we apply the same non syntax-directed rules applied to $\IsWFExp{\Gamma_1'' \shsum \Gamma_2''}{\FieldAssign{\x}{\f_i}{\y}}{\T_i}$ we have $\IsWFExp{\VarTypeCoeffect{\y}{\TypeMod{\D}{\mut}}{\X}}{\y}{\T}$. By \cref{lem:replace-mod} applied to variables with coeffect $\getCoeff{\Gamma_2}{\y}$ and with $\modif = \mut$, by \cref{lem:sameModif} and applying rule \refToRule{t-conf} we have the thesis.

\item $\getModif{\Gamma}{\x} = \seal$ and $\getModif{\Gamma}{\y} = \seal'$\\
 	Reasoning similar to that above. We also apply \cref{lem:replace-mod} with $\modif=\seal$

\item $\getModif{\Gamma}{\x} = \seal$ and $\getModif{\Gamma}{\y} = \mut$\\
	We can apply the same non syntax-directed rules applied to $\IsWFExp{\Gamma_1'' \shsum \Gamma_2''}{\FieldAssign{\x}{\f_i}{\y}}{\T_i}$ to $\IsWFExp{\Gamma_2''}{\y}{\T_i}$ to have $\IsWFExp{\VarTypeCoeffect{\y}{\TypeMod{\D}{\mut}}{\Y}}{\y}{\T}$. By \cref{lem:replace-mod} applied to variables with coeffect $\getCoeff{\Gamma_2}{\x}$ and with $\modif = \mut$, by \cref{lem:sameModif} and applying rule \refToRule{t-conf} we have the thesis.

\end{itemize}
\end{itemize}
\erb

\brb
\item[ \refToRule{block}] 
We have $\Block{\T'}{\x}{\val}{\e}$.
By \cref{lem:nonStructModif} and rule \refToRule{t-block} we have a contexts $(\X\cup\{\link\})\ctxmul\Gamma_1''\ctxsum\Gamma_2'' \prom \Gamma_1$ such that $\IsWFExp{(\X\cup\{\link\})\ctxmul\Gamma_1''\ctxsum\Gamma_2''}{\Block{\T'}{\x}{\val}{\e}}{\T}$, $\IsWFExp{\Gamma_1''}{\val}{\T'}$ and $\IsWFExp{\Gamma_2'',\VarTypeCoeffect{\x}{\T'}{\X}}{\e}{\T}$. By \cref{lem:substitutionLemma} we have that $\IsWFExp{\Delta''}{\Subst{\e'}{\val}{\x}}{\T}$ with $\Delta'' \shord \X\ctxmul\Gamma_1'\ctxsum\Gamma_2''\shord (\X\cup\{\link\})\ctxmul\Gamma_1''\ctxsum\Gamma_2''$. By applying the same non syntax directed rules applied to $\IsWFExp{(\X\cup\{\link\})\ctxmul\Gamma_1''\ctxsum\Gamma_2''}{\Block{\T'}{\x}{\val}{\e}}{\T}$ since $\Delta'' \shord (\X\cup\{\link\})\ctxmul\Gamma_1''\ctxsum\Gamma_2''$ we have the thesis.

\item[\refToRule{field-access}] 
	We know that  $\reduce{\ExpMem{\FieldAccess{\x}{\f}}{\mem}}{\ExpMem{\val}{\mem}}$, hence 
	$\e = \FieldAccess{\x}{\f}$, $\e' = \val$. By \cref{lem:nonStructModif} and rule \refToRule{t-field-access} we have  $\IsWFExp{\Gamma''}{\FieldAccess{\x}{\f_i}}{\Modif{\T_i}{\modif}}$ and $\IsWFExp{\Gamma''}{\x}{\TypeMod{\C}{\modif}}$ such that $\Gamma''\prom\Gamma_1$ and $\T' = \Modif{\T_i}{\modif} \leq \T$ with $\fields{\C}=\Field{\T_1}{\f_1} \ldots \Field{\T_n}{\f_n} i\in 1..n$. We have two interesting cases:
\begin{itemize}
\item $\T_i=\TypeMod{\D}{\imm}$\\
By rule \refToRule{t-var} we can derive $\IsWFExp{\VarTypeCoeffect{\y}{\TypeMod{\D}{\imm}}{\X'}}{\y}{\TypeMod{\D}{\imm}}$. Since $\T=\Modif{\T_i}{\modif}=\TypeMod{\D}{\imm}$ we know that to $\IsWFExp{\Gamma''}{\FieldAccess{\x}{\f_i}}{\TypeMod{\D}{\imm}}$ is not applied rule \refToRule{t-caps}, so, if we apply to $\IsWFExp{\VarTypeCoeffect{\y}{\TypeMod{\D}{\imm}}{\X}}{\y}{\TypeMod{\D}{\imm}}$ the same non syntax-directed rules applied to $\IsWFExp{\Gamma''}{\FieldAccess{\x}{\f_i}}{\TypeMod{\D}{\imm}}$ we obtain $\IsWFExp{\VarTypeCoeffect{\y}{\TypeMod{\D}{\imm}}{\X}}{\y}{\T}$, where $\X=\getCoeff{\Gamma_1}{\x}$. By rule \refToRule{t-mem} and \refToRule{t-obj} we have $\IsWFInMem{\Delta_1\shsum\cdots\shsum\Delta_n}{\Object{\C}{\val_1,\ldots,\val_n}}{\TypeMod{\C}{\modif'}}$ and $\IsWFInMem{\Delta_i}{\val_i}{\Modif{\T_i}{\modif'}}$ where $\mem(\x)=\Object{\C}{\val_1,\ldots,\val_n}$, $\fields{\C}=\Field{\T_1}{\f_1} \ldots \Field{\T_n}{\f_n}$ and exists $i$ such that $\val_i=\y$. Since $\T_i=\TypeMod{\D}{\imm}$ we know $\Modif{\T_i}{\modif}=\TypeMod{\D}{\imm}$ and $\IsWFInMem{\Gamma_i}{\y}{\TypeMod{\D}{\imm}}$.By rule \refToRule{t-imm-ref} we obtain $\Delta_i = \VarTypeCoeffect{\y}{\TypeMod{\D}{\imm}}{\{\link\}}$ with $\link$ fresh. Since memory does not change applying rule \refToRule{t-conf} we obtain the thesis.
\item $\T_i=\TypeMod{\D}{\mut}$\\
 We know by \cref{lem:nonStructModif} and rule \refToRule{t-var} that $\IsWFExp{\VarTypeCoeffect{\x}{\T''}{\X''}}{\x}{\T''}$ and $\Gamma''= \VarTypeCoeffect{\x}{\TypeMod{\C}{\modif'}}{\X'}$. We can apply rule \refToRule{t-var} and the same non syntax-directed rules applied to $\x$ on $\y$ to obtain $\IsWFExp{\VarTypeCoeffect{\y}{\Modif{\T_i}{\modif'}}{\X'}}{\y}{\Modif{\T_i}{\modif}}$. If we apply the same non syntax-directed rules applied to $\IsWFExp{\Gamma''}{\FieldAccess{\x}{\f_i}}{\Modif{\T_i}{\modif}}$ we have $\IsWFExp{\VarTypeCoeffect{\y}{\Modif{\T_i}{\modif''}}{\X}}{\y}{\T}$. We know $\getModif{\Gamma}{\x}=\getModif{\VarTypeCoeffect{\y}{\Modif{\T_i}{\modif''}}{\X}}{\y}=\modif''$ and $\getCoeff{\Gamma_1}{\x}=\X$. By rule \refToRule{t-mem} and \refToRule{t-obj} we have $\IsWFInMem{\Delta_1\shsum\cdots\shsum\Delta_n}{\Object{\C}{\val_1,\ldots,\val_n}}{\TypeMod{\C}{\modif''}}$ and $\IsWFInMem{\Delta_i}{\val_i}{\Modif{\T_i}{\modif''}}$ where $\mem(\x)=\Object{\C}{\val_1,\ldots,\val_n}$, $\fields{\C}=\Field{\T_1}{\f_1} \ldots \Field{\T_n}{\f_n}$ and exists $i$ such that $\val_i=\y$. By rule \refToRule{t-imm-ref} we obtain $\Delta_i = \VarTypeCoeffect{\y}{\Modif{\T_i}{\modif''}}{\{\link\}}$ with $\link$ fresh. Since memory does not change applying rule \refToRule{t-conf} we obtain the thesis.
\end{itemize}
% \item[\refToRule{new}] 

\item[\refToRule{ctx}] 
	We have $\e=\Ctx{\e_1}$ and  $\e'=\Ctx{\e'_1}$ and  ${\reduce{\ExpMem{\e_1}{\mem}}{\ExpMem{\e_1'}{\amem}}}$
	and $\IsWFExp{\Gamma_1}{\Ctx{\e_1}}{\T}$. 

We prove the thesis by induction on evaluation context $\ctx$ and we show only the relevant cases:
\begin{itemize}
\item $\ctx= \emptyctx$
	We know that $\IsWFConf{\Gamma}{\Ctx{\e_1}}{\mem}{\T}$ and that $\Ctx{\e_1}=\e_1$, so $\IsWFConf{\Gamma}{\e_1}{\mem}{\T}$. Applying the primary induction hypothesis to ${\reduce{\ExpMem{\e_1}{\mem}}{\ExpMem{\e_1'}{\amem}}}$ and $\IsWFConf{\Gamma}{\e_1}{\mem}{\T}$ and by knowing that $\Ctx{\e'_1}=\e'_1$ we obtain the thesis.\\

\item $\ctx = \FieldAssign{\ctx'}{\f}{\e'}$ By \cref{lem:nonStructModif} we know that exists a context $\Gamma_1'$ such that $\Gamma_1'\prom\Gamma_1$ and a derivation $\Deriv':\IsWFExp{\Gamma_1'}{\Ctx{\e_1}}{\T'}$ ending with a syntax directed rule. We know that the last applied rule in $\Deriv'$ must be \refToRule{t-assign}, so we know $\Gamma_1' = \Gamma_1''\shsum\Gamma_2''$ such that $\IsWFExp{\Gamma_1''}{\ctx'[\e_1]}{\TypeMod{\C}{\mut}}$ and $\IsWFExp{\Gamma_2''}{\e'}{\T_i}$ with $\fields{\C}=\Field{\T_1}{\f_1} \ldots \Field{\T_n}{\f_n}$ and $i\in 1..n$. We can apply rule \refToRule{t-conf} to obtain $\IsWFConf{\Gamma_1''\shsum\Gamma_2}{\ctx'[\e_1]}{\mem}{\TypeMod{\C}{\mut}}$. By \refToRule{ctx} we have ${\reduce{\ExpMem{\ctx'[\e_1]}{\mem}}{\ExpMem{\ctx'[\e'_1]}{\amem}}}$. By secondary induction hypothesis on this and on $\IsWFConf{\Gamma_1''\shsum\Gamma_2}{\ctx'[\e_1]}{\mem}{\TypeMod{\C}{\mut}}$ we have that $\IsWFConf{\Theta'}{\ctx'[\e'_1]}{\amem}{\TypeMod{\C}{\mut}}$ such that
 \begin{itemize}
 \item $\Restr{(\Erase{\Gamma_1''\shsum\Gamma_2}\shsum\Erase{\Theta'})}{\Erase{\Gamma_1''\shsum\Gamma_2} }= \Erase{\Gamma_1''\shsum\Gamma_2}$, 
 \item for all $\x\in\dom{\Gamma_1''\shsum\Gamma_2}$, $\getModif{\Gamma_1''\shsum\Gamma_2}{\x}\leq\getModif{\Theta'}{\x}$
 \end{itemize}
By rule \refToRule{t-conf} we have $\Theta' = \Delta'_1 \shsum \Gamma'_2$ such that $\IsWFExp{\Delta'_1}{\ctx'[\e'_1]}{\T'}$ and $\IsWFMem{\Gamma'_2}{\amem}$. By rule \refToRule{t-assign} we have $\IsWFExp{\Delta'_1\shsum\Delta_2}{\Ctx{e'_1}}{\T'}$. We can prove that

 \begin{itemize}
 \item for all $\x\in(\dom{\Gamma'_1}\cap\dom{\Delta'_1\shsum\Delta_2})$, $\getModif{\Gamma'_1}{\x}\leq\getModif{\Delta'_1\shsum\Delta_2}{\x}$
 \end{itemize}
Applying the same non syntax directed rules applied to $\IsWFExp{\Gamma_1'}{\Ctx{\e_1}}{\T'}$ and rule \refToRule{t-conf}, by the lemma \cref{lem:leqCtx}  we obtain the thesis.

\item $\ctx=\Block{\T_\x}{\x}{\ctx'}{\e'}$ By \cref{lem:nonStructModif} we know that exists a context $\Gamma_1'$ such that $\Gamma_1'\prom\Gamma_1$ and a derivation $\Deriv':\IsWFExp{\Gamma_1'}{\Ctx{\e_1}}{\T'}$ ending with a syntax directed rule. We know that the last applied rule in $\Deriv'$ must be \refToRule{t-block}, so we know $\Gamma_1' = ((\X\cup\{\link\})\ctxmul\Gamma_1'')\ctxlinsum\Gamma_2''$ such that $\IsWFExp{\Gamma_1''}{\ctx'[\e_1]}{\T_\x}$ and $\IsWFExp{\Gamma_2'',\VarTypeCoeffect{\x}{\T_\x}{\X}}{\e'}{\T_i}$. We can apply rule \refToRule{t-conf} to obtain $\IsWFConf{\Gamma_1''\shsum\Gamma_2}{\ctx'[\e_1]}{\mem}{\T_\x}$. By \refToRule{ctx} we have ${\reduce{\ExpMem{\ctx'[\e_1]}{\mem}}{\ExpMem{\ctx'[\e'_1]}{\amem}}}$. By secondary induction hypothesis on this and on $\IsWFConf{\Gamma_1''\shsum\Gamma_2}{\ctx'[\e_1]}{\mem}{\T_z}$ we have that $\IsWFConf{\Theta'}{\ctx'[\e'_1]}{\amem}{\T_z}$ such that
 \begin{itemize}
 \item $\Restr{(\Erase{\Gamma_1''\shsum\Gamma_2}\shsum\Erase{\Theta'})}{\Erase{\Gamma_1''\shsum\Gamma_2} }= \Erase{\Gamma_1''\shsum\Gamma_2}$, 
 \item for all $\x\in\dom{\Gamma_1''\shsum\Gamma_2}$, $\getModif{\Gamma_1''\shsum\Gamma_2}{\x}\leq\getModif{\Theta'}{\x}$
 \end{itemize}
By rule \refToRule{t-conf} we have $\Theta' = \Delta'_1 \shsum \Gamma'_2$ such that $\IsWFExp{\Delta'_1}{\ctx'[\e'_1]}{\T'}$ and $\IsWFMem{\Gamma'_2}{\amem}$. By rule \refToRule{t-block} we have $\IsWFExp{((\X\cup\{\link\})\ctxmul\Delta'_1)\ctxlinsum\Gamma_2''}{\Ctx{e'_1}}{\T'}$. We can prove that

 \begin{itemize}
 \item for all $\x\in(\dom{\Gamma'_1}\cap\dom{((\X\cup\{\link\})\ctxmul\Delta'_1)\ctxlinsum\Gamma_2''})$, $\getModif{\Gamma'_1}{\x}\leq\getModif{((\X\cup\{\link\})\ctxmul\Delta'_1)\ctxlinsum\Gamma_2''}{\x}$
 \end{itemize}
Applying the same non syntax directed rules applied to $\IsWFExp{\Gamma_1'}{\Ctx{\e_1}}{\T'}$ and rule \refToRule{t-conf}, by the lemma \cref{lem:leqCtx}  we obtain the thesis.
\end{itemize}
\erb
\end{description}
\end{proof}

\section{Module of fixpoints}

\begin{definition}[Idempotent]\label{def:idem} 
A homomorphism \fun{\ip}{\MM}{\MM} is \emph{idempotent} if 
$\ip \circ \ip = \ip$. 
\end{definition} 

Let \fun{\ip}{\MM}{\MM} be an idempotent homomorphism on the \RR-module \MM, 
and consider the  subset 
${\MSet_\ip = \{ \mel \in \MSet \mid \ip(\mel) = \mel \}}$ of fixpoints of $\ip$. 
Define the following operations, for all $\mel,\amel \in \MSet_\ip$ and $\rel \in \RSet$: 
\[
\mel \msum^\ip \amel := \ip(\mel\msum\amel) 
\qquad 
\mzero^\ip := \ip(\mzero) 
\qquad 
\rel\mmul^\ip\mel := \ip(\rel\mmul\mel) 
\]
We can prove the following result 

\begin{proposition}\label{prop:module-fix}
$\MM_\ip = \ple{\MSet_\ip, \mord,\msum^\ip,\mzero^\ip,\mmul^\ip}$ is an \RR-module. 
\end{proposition}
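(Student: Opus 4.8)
The plan is to verify each clause of \refToDefinition{module} for $\MM_\ip=\ple{\MSet_\ip,\mord,\msum^\ip,\mzero^\ip,\mmul^\ip}$ by reducing it, through idempotency of $\ip$ and the homomorphism equation $\ip(\rel\mmul\mel)=\rel\mmul\ip(\mel)$, to the corresponding axiom of the ambient module $\MM$. The only genuinely delicate clause is associativity of $\msum^\ip$, where the \emph{lax} (inequational) half of the homomorphism definition is forced to intervene.

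First I would dispatch well-definedness and isolate two facts used throughout. Since $\ip\circ\ip=\ip$, every value of the form $\ip(u)$ is a fixpoint, so $\msum^\ip$ and $\mmul^\ip$ do land in $\MSet_\ip$; and since every homomorphism satisfies $\ip(\mzero)=\mzero$ (the computation already recorded after \refToDefinition{module}), we have $\mzero^\ip=\mzero\in\MSet_\ip$. The two facts are: (i) if $x\in\MSet_\ip$ then $\rel\mmul x$ is again a fixpoint, because $\ip(\rel\mmul x)=\rel\mmul\ip(x)=\rel\mmul x$; and (ii) for fixpoints $x,y$ the lax inequality specialises to $x\msum y=\ip(x)\msum\ip(y)\mord\ip(x\msum y)$, i.e. $\ip$ behaves as a \emph{closure} on sums of fixpoints. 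Fact (ii) is the key consequence I will exploit.

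The routine axioms then follow as genuine equalities. Commutativity of $\msum^\ip$ and neutrality of $\mzero^\ip$ come directly from the same facts in $\MM$ together with $\ip(\mzero)=\mzero$ and $\ip(x)=x$ on fixpoints (e.g. $x\msum^\ip\mzero^\ip=\ip(x\msum\mzero)=\ip(x)=x$). For the scalar laws I would use (i) to turn each nested $\ip$ into a trivial one: for right distributivity, $(\rel\rsum\arel)\mmul^\ip x=\ip(\rel\mmul x\msum\arel\mmul x)$, and symmetrically the right-hand side is $\ip(\ip(\rel\mmul x)\msum\ip(\arel\mmul x))=\ip(\rel\mmul x\msum\arel\mmul x)$ because $\rel\mmul x,\arel\mmul x$ are fixpoints; for left distributivity, for the associativity $(\rel\rmul\arel)\mmul^\ip x=\rel\mmul^\ip(\arel\mmul^\ip x)$, and for the units, one pulls the scalar through $\ip$ via the homomorphism equation and collapses a double $\ip$ by idempotency, e.g. $\rel\mmul^\ip(x\msum^\ip y)=\rel\mmul\ip(x\msum y)=\ip(\rel\mmul x\msum\rel\mmul y)=(\rel\mmul^\ip x)\msum^\ip(\rel\mmul^\ip y)$. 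Monotonicity of $\msum^\ip$ and $\mmul^\ip$ is inherited from monotonicity of $\ip$, $\msum$ and $\mmul$.

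The crux, and the step I expect to be the main obstacle, is associativity of $\msum^\ip$, namely $\ip(\ip(x\msum y)\msum z)=\ip(x\msum\ip(y\msum z))$ for $x,y,z\in\MSet_\ip$. Here $x\msum y$ is \emph{not} a fixpoint, so idempotency alone cannot collapse the inner $\ip$. My plan is to squeeze both sides against $\ip(x\msum y\msum z)$: from $x\msum y\mord\ip(x\msum y)$ of fact (ii) and monotonicity of $\msum$ and $\ip$ one gets $\ip(x\msum y\msum z)\mord\ip(\ip(x\msum y)\msum z)$, while laxness applied to the pair $x\msum y$ and $z$ gives $\ip(x\msum y)\msum z=\ip(x\msum y)\msum\ip(z)\mord\ip(x\msum y\msum z)$, whence $\ip(\ip(x\msum y)\msum z)\mord\ip(x\msum y\msum z)$ after applying monotone $\ip$ and idempotency; the grouping $x\msum\ip(y\msum z)$ is handled identically using commutativity. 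This pins both expressions to $\ip(x\msum y\msum z)$ in the preorder $\mord$. The subtle point is that laxness yields only these two-sided $\mord$-estimates, so one obtains the associativity \emph{equation} exactly when $\mord$-equivalence collapses to equality; this holds in the instances of interest, where $\mord$ is a genuine partial order (for the sharing module $\shord$ is inclusion on closed contexts, which is antisymmetric) and $\ip=\closure{\_}$ is an honest closure operator. Turning the lax two-sided estimate into the required monoid equality is therefore the heart of the argument.
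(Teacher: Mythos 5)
Your proof is correct and takes essentially the same route as the paper's: well-definedness and $\mzero^\ip=\mzero$ via idempotency, the strict equation $\ip(\rel\mmul\mel)=\rel\mmul\ip(\mel)$ together with idempotency for the scalar laws, and a two-sided $\mord$-squeeze through $\ip(\mel\msum\amel\msum\bmel)$ for associativity of $\msum^\ip$. Your closing caveat is apt rather than a defect of your own argument: the paper's proof of associativity likewise establishes only the two inequalities $\mord$ in both directions, so converting that order-equivalence into the monoid \emph{equality} implicitly relies on $\mord$ being antisymmetric on the elements involved (as it is in the intended instance, where the sharing preorder is pointwise inclusion of closed coeffect contexts) --- your remark is a sharper reading of the same proof, not a divergence from it.
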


\begin{proof}
All the operations are trivially well-defined as $\ip$ is idempotent. 
Moreover, note that, since $\ip$ is an homomorphism, we have $\mzero^\ip = \ip(\mzero) = \mzero$. 
We check the module axioms. 
\begin{align*}
&\text{associativity of }\msum^\ip & 
  \mel\msum^\ip(\amel\msum^\ip\bmel) &= \ip(\mel\msum \ip(\amel\msum\bmel)) = \ip(\ip(\mel)\msum \ip(\amel\msum\bmel))\\ 
&&&\mord \ip(\ip(\mel\msum\amel\msum\bmel)) = \ip(\mel\msum\amel\msum\bmel) \\
&&&= \ip(\ip(\mel)\msum \ip(\amel) \msum \bmel) \mord \ip(\ip(\mel\msum\amel)\msum \bmel) \\
&&&= (\mel\msum^\ip\amel)\msum^\ip \bmel  \\ 
&&(\mel\msum^\ip\amel)\msum^\ip\bmel &= \ip(\ip(\mel\msum\amel)\msum\bmel) = \ip(\ip(\mel\msum\amel)\msum \ip(\bmel)) \\
&&&\mord \ip(\ip(\mel\msum\amel\msum\bmel)) = \ip(\mel\msum\amel\msum\bmel) \\
&&&= \ip(\mel\msum \ip(\amel) \msum \ip(\bmel)) \mord \ip(\mel\msum \ip(\amel\msum\bmel)) \\
&&&=\mel\msum^\ip(\amel\msum^\ip\bmel)  \\ 
&\text{commutativity of }\msum^\ip 
  &\mel\msum^\ip\amel &=\ip(\mel\msum\amel) = \ip(\amel\msum\mel) = \amel\msum^\ip\mel \\ 
&\text{neutrality of }\mzero^\ip 
  &\mel\msum^\ip\mzero^\ip &= \ip(\mel\msum \ip(\mzero)) = \ip(\mel\msum\mzero) = \ip(\mel) = \mel \\
&\text{linearity of }\mmul^\ip 
  &\rel\mmul^\ip (\mel\msum^\ip\amel) &= \ip(\rel\mmul \ip(\mel\msum\amel)) = \ip(\ip(\rel\mmul(\mel\msum\amel))) \\
&&&= \ip(\rel\mmul(\mel\msum\amel)) = \ip(\rel\mmul\mel\msum\rel\mmul\amel) \\
&&&= \ip(\rel\mmul \ip(\mel) \msum \rel \mmul \ip(\amel)) = \ip(\ip(\rel\mmul\mel)\msum \ip(\rel\mmul\amel)) \\ 
&&&= \rel\mmul^\ip \mel \msum^\ip \rel \mmul^\ip \amel  \\ 
&&\rel\mmul^\ip\mzero^\ip &= \ip(\rel\mmul \ip(\mzero)) = \ip(\rel\mmul\mzero) = \ip(\mzero) = \mzero^\ip \\
&&(\rel\rsum\arel)\mmul^\ip\mel &= \ip((\rel\rsum\arel)\mmul\mel) = \ip(\rel\mmul\mel\msum\arel\mmul\mel) \\
&&&= \ip(\rel\mmul \ip(\mel) \msum \arel\mmul \ip(\mel)) = \ip(\ip(\rel\mmul\mel)\msum \ip(\arel\mmul\mel)) \\
&&&= \rel\mmul^\ip\mel \msum^\ip \arel\mmul^\ip\mel \\
&&\rzero\mmul^\ip\mel &=\ip(\rzero\mmul\mel) = \ip(\mzero) = \mzero^\ip \\
&&(\rel\rmul\arel)\mmul^\ip\mel &= \ip((\rel\rmul\arel)\mmul\mel) = \ip(\rel\mmul(\arel\mmul\mel))  \\
&&&= \ip(\rel\mmul(\arel\mmul \ip(\mel)) = \ip(\rel\mmul \ip(\arel\mmul\mel)) \\
&&&= \rel\mmul^\ip(\arel\mmul^\ip\mel) \\
&&\rone\mmul^\ip\mel &= \ip(\rone\mmul\mel) = \ip(\mel) = \mel 
\end{align*}
\end{proof}

\end{document}